\documentclass[journal,onecolumn]{IEEEtran}
\usepackage{amsmath,amsfonts,amssymb,amsthm}
\usepackage{array}
\usepackage{booktabs}
\usepackage{graphicx}
\usepackage{url}
\usepackage{cite}
\usepackage{braket}
\usepackage{tabularx}
\usepackage{ragged2e}

\makeatletter
\@ifundefined{llbracket}{%
  \DeclareRobustCommand{\llbracket}{\lbrack\!\lbrack}%
  \DeclareRobustCommand{\rrbracket}{\rbrack\!\rbrack}%
}{}
\makeatother

\newtheorem{theorem}{Theorem}[section]
\newtheorem{lemma}[theorem]{Lemma}
\newtheorem{proposition}[theorem]{Proposition}
\newtheorem{corollary}[theorem]{Corollary}
\newtheorem{definition}[theorem]{Definition}
\newtheorem{remark}[theorem]{Remark}
\newtheorem{example}[theorem]{Example}

\DeclareMathOperator{\Tr}{Tr}
\DeclareMathOperator{\rank}{rank}
\DeclareMathOperator{\wt}{wt}
\DeclareMathOperator{\Imag}{Im}
\DeclareMathOperator{\Fix}{Fix}
\newcommand{\F}{\mathbb{F}}
\newcommand{\Fqm}{\F_{q^m}}
\newcommand{\ptr}{\perp_{\mathrm{Tr}}}

\newcommand{\Kker}{\mathcal{K}}

\newcommand{\pstd}{\perp_{\mathrm{std}}}
\newcommand{\cD}{\mathcal{D}}

\begin{document}

\title{Singleton-Optimal Rank-Metric CSS Codes:\\
Equality Structure and Exact Projected-Recovery Radii}

\author{Myeongjun~Kim, Suseong Lee, Jaeho Jeon, and Young-Sik~Kim%
\thanks{This work was partly supported by Institute of Information \&
Communications Technology Planning \& Evaluation (IITP) grant funded by the Korea government (MSIT) (RS-2024-00399401, Development of Quantum-Safe Infrastructure Migration and Quantum Security Verification Technologies, 50\%) and Institute of Information \& Communications Technology Planning \& Evaluation (IITP) grant funded by the Korea government (MSIT) (RS-2026-25530181, Development of Inter-Domain Quantum Security System Combining PQC and QKD, 50\%).}
\thanks{The authors are with DGIST, Daegu, Republic of Korea.
M. Kim is with the Department of the Interdisciplinary Studies of
Artificial Intelligence.
S. Lee and J. Jeon are with the Department of Electrical Engineering
and Computer Sciences.
Y.-S. Kim is with the Department of the Interdisciplinary Studies of
Artificial Intelligence and the Department of Electrical Engineering
and Computer Sciences
(e-mail: sanmaru98u@dgist.ac.kr; mercury@dgist.ac.kr;
dgwogh@dgist.ac.kr; ysk@dgist.ac.kr).}%
}

\maketitle

\begin{abstract}
Correlated faults from shared control can have dense physical support yet low rank over a base field, motivating rank-metric quantum codes for stacked architectures. We prove an asymmetric rank-metric Singleton bound for Calderbank--Shor--Steane codes, including degenerate codes, and characterize equality through commuting maximum-rank-distance matrix codes. Optimal pairs exist for every admissible parameter triple and are necessarily pure. Comparison with erasure bounds establishes an exact advantage in stacked rank distance for general stabilizer codes at fixed physical resources on certain tall layouts. We determine whether relaxing the recovery target enlarges the worst-case correctable radius. For Singleton-optimal pairs with positive logical dimension, the measured syndrome determines the projected error modulo stabilizers exactly at radii strictly below half the sector rank distance, on every layout and for every nonzero projector. The projected syndrome is recoverable at arbitrary radii exactly when the complementary check space is invariant under the adjoint projector; otherwise, the same radius limit applies. For nontrivial idempotents, this invariance requires more rows than columns or sector distance one. For trace-self-adjoint projectors, linear ambient projected-syndrome interfaces in both sectors exist exactly when the code splits as a tensor product across the projector. Designs placing the two check spaces in complementary projector images are necessarily one-sided whenever they encode logical information. Two explicit families realize the extreme cases of the equality structure, including a two-sided family with an efficient certifying decoder attaining the optimal unique-decoding radius.
\end{abstract}

\begin{IEEEkeywords}
Quantum error correction, stabilizer codes, CSS codes, rank-metric codes, maximum rank distance
codes, quantum Singleton bound, correlated noise.
\end{IEEEkeywords}

\section{Introduction}
\label{sec:introduction}

\IEEEPARstart{T}{opological} codes are attractive for their local structure and high thresholds
\cite{fowler2012}, but the dominant failure mechanisms of evolving hardware increasingly depart
from independent, geometrically local Pauli noise: shared control, common-mode dephasing, crosstalk
and link-level failures can affect many carriers at once while remaining confined to a
low-dimensional error subspace. The multi-qubit bursts observed under ionizing radiation
\cite{mcewen2022resolving} are spatially extended and not in general of low rank; the geometry addressed here is the common-mode fault of a shared control
acting through one fixed transversal pattern on a subset of carriers, as in stacked and modular
cell architectures \cite{delfosse2024stacked}, dense in support but of rank one over the base
field. Rank-metric codes \cite{gabidulin1985,silva2009} are the natural response; Delfosse and Z\'emor employ them against circuit faults in stacked quantum memories \cite{delfosse2024stacked}, identifying a fault-tolerant extraction circuit and an efficient decoder as key prerequisites.

Two technical obstructions shape the theory. First, commutation for CSS constructions
\cite{calderbank1996,steane1996} is governed by the trace pairing
$\langle x,y\rangle_{\Tr}=\sum_j\Tr_{\Fqm/\F_q}(x_jy_j)$, and complementarity of two projectors as
operators, $e_1e_2=0$, does not by itself make their images trace-orthogonal: for the basis
$(1,\omega)$ of $\F_4$ the idempotent $\mathrm{diag}(1,0)$ satisfies $e_1e_2=0$ while
$\Tr(1\cdot\omega)=1$. Trace self-adjointness resolves this (Theorem~\ref{thm:css-admissibility}).
Second, the $\F_q$-linear operators through which a rank-metric decoder acts are not
$\Fqm$-linear, so in general $e\star(HE^{\top})\neq H(e\star E)^{\top}$ and an interface between
the measured syndrome and the projected one is needed.

The interface makes the following question precise, and the paper is organized around it:
\emph{does a weaker recovery target enlarge the worst-case correctable radius?} A decoder acting on
one projected branch does not need the error, nor even the full syndrome; it needs the projected
syndrome, or at most the projected error modulo stabilizer labels. Both targets are strictly weaker
than the error itself, and both are what a projector-based post-processing layer would supply. For codes attaining the bound proved below we answer this exactly, and the two targets behave
differently: the projected error modulo stabilizer labels never extends the radius beyond half the
sector distance, on any layout and for any nonzero projector, whereas the projected syndrome is
recoverable at every radius precisely in one invariance case, which for a nontrivial idempotent
requires more rows than columns or a sector distance equal to one.

\subsection{Main results}
\label{subsec:contributions}

All terms are defined in Sections~\ref{sec:preliminaries} and \ref{sec:interfaces}. Throughout,
labels are $a\times n$ matrices over $\F_q$, $M:=\max(a,n)$, $L:=\min(a,n)$, $K$ is the logical
dimension, $d^{R}_X$ and $d^{R}_Z$ are the two sector rank distances, and $D$ is the stacked rank
distance of a general stabilizer code on the same layout.

\emph{Theorem A (bounds; Theorem~\ref{thm:rank-singleton}, Proposition~\ref{prop:erasure-bounds},
Proposition~\ref{prop:css-specific}, Corollary~\ref{cor:css-penalty-fixed}).} Every CSS pair with
$K>0$ obeys $K\le an-M(d^{R}_X+d^{R}_Z-2)$, degenerate codes included. Every stabilizer code on the
layout obeys $K\le\min\{an-2n(D-1),\,(a-2\iota)(n-2\kappa)\}$ with $D-1=2\kappa+\iota$ and
$\iota\in\{0,1\}$. The CSS bound does not extend to that generality: at a fixed layout, a fixed
number of physical qudits and a fixed logical dimension, a code outside the class can have a
strictly larger stacked rank distance.

\emph{Theorem B (equality structure; Theorems~\ref{thm:singleton-equality-mrd},
\ref{thm:optimal-existence}, Corollary~\ref{cor:purity}).} Equality holds exactly for a
CSS-compatible pair of MRD matrix codes of dimensions $M(d^{R}_Z-1)$ and $M(d^{R}_X-1)$; the
triples $(K,d^{R}_X,d^{R}_Z)$ so attained are exactly
$\bigl(M(L-u-v),u+1,v+1\bigr)$ with $u+v\le L-1$, for every prime power $q$ and every layout; and
every optimal pair is pure, so that a degenerate code never attains the bound.

\emph{Theorem C (exact radii; Theorem~\ref{thm:radius-dichotomy},
Corollary~\ref{cor:optimal-rigidity}).} Let a pair be Singleton-optimal with $K>0$, on any layout,
and let $e$ be any nonzero $\F_q$-linear map acting componentwise, neither idempotency nor
self-adjointness being assumed. Then the projected error modulo stabilizers is recoverable from the
measured syndrome on the rank-$t$ ball if and only if $2t<d^{R}_X$, and never on the whole label
space; the projected syndrome is recoverable at every radius if $e^{\dagger}\star S_Z\subseteq S_Z$
and otherwise again exactly when $2t<d^{R}_X$. For $m\le n$, $d^{R}_X\ge2$ and a nontrivial
idempotent the invariant case cannot occur, so the two targets then have the same exact radius; it does
occur at $d^{R}_X=1$, and on tall layouts (Proposition~\ref{prop:stacking-sharpness}), and there
the projected syndrome is recoverable at every radius while the projected error modulo stabilizer
labels remains unrecoverable.

\emph{Theorem D (linear interfaces and their price;
Theorems~\ref{thm:invariant-splitting}, \ref{thm:quotient-characterization},
\ref{thm:no-go}, Corollary~\ref{cor:interface-cost}).} For an arbitrary $\F_q$-linear $e$ the
ambient quotient-valued interface exists exactly when $e\star S_Z^{\ptr}\subseteq S_X$. For a
trace-self-adjoint idempotent this is equivalent to both check spans being invariant and the
projected branch carrying no logical qudit; starting from a pair that is invariant but has a
nontrivial projected branch, adding $X$-checks to reach the interface costs exactly that branch
logical dimension; and a linear ambient projected-syndrome interface in both sectors exists if and
only if the code is a tensor product across the projector. If the two sectors occupy complementary projector images then
$\min(d^{R}_X,d^{R}_Z)=1$ for every code of positive logical dimension.

To clarify the scope and delineate the novelty of this work from existing literature, we summarize the foundational ingredients alongside our new contributions at the outset. Existing ingredients are the
Hamming-metric asymmetric Singleton bound for degenerate CSS codes and its equality correspondence
with nested maximum-distance-separable pairs \cite{sarvepalli2009asymmetric,ezerman2013aqmds}, the
rigidity of the idealizers of MRD codes \cite{lunardon2018kernels,csajbok2020idealizers}, relative
generalized matrix weights \cite{martinezpenas2018rgmw}, Gabidulin codes and their duality
\cite{gabidulin1985,delsarte1978}, and the two quantum rank-metric families
\cite{delfosse2024stacked,nizuka2026}. The primary novel contributions of this work are the rank-metric bound with its equality analysis, the realization of every admissible parameter triple, the purity of optimal pairs, the exact radii of the two projected targets, the characterization and logical cost of linear interfaces, and the exact fixed-resource optima.

Table~\ref{tab:interfaces} collects the six interface conditions, their exact criteria and their
values on Singleton-optimal pairs. Two explicit families realize the two extreme cases of the equality
structure: the quantum Gabidulin codes of \cite{delfosse2024stacked}, for which we determine both
sector distances exactly and supply a certifying decoder at the optimal unique-decoding radius; and
an idempotent-generated
Gabidulin family, which realizes the one-sided case $d^{R}_Z=1$ with exact parameters on every
admissible instance.

We emphasize that the aforementioned statements characterize the information-theoretic recoverability and the structural properties of post-processing maps, rather than computational complexity lower bounds.

\subsection{Organization and scope}
\label{subsec:scope}

Section~\ref{sec:preliminaries} fixes the matrix-label model, Section~\ref{sec:singleton} proves
the bound and compares it with the erasure bounds for an arbitrary stabilizer code, and
Section~\ref{sec:equality} determines the equality structure. Section~\ref{sec:interfaces}
develops the interface criteria and the exact radii; Section~\ref{sec:linear} treats linear
interfaces, their logical cost and one-sidedness. Section~\ref{sec:witnesses} gives the two witness
families and the decoders, and Section~\ref{sec:discussion} the position relative to prior work, the limitations and the open
problems. All results are unconditional; the finite computations that
accompany them are listed in Section~\ref{subsec:verification}.

\begin{table*}[t]
\centering
\caption{The six interface conditions for the $X$ sector, their exact criteria for an arbitrary
$\F_q$-linear componentwise map $e$ with trace adjoint $e^{\dagger}$, and their values on a
Singleton-optimal pair with $K>0$ and $e\neq0$; at $e=0$ both relative distances are $+\infty$ and
all six conditions hold. Here $H$ spans $S_Z$, $\cD_X:=\ker\sigma_H=S_Z^{\ptr}$, and $B_t$ is the
ball of labels of rank at most $t$. All one-way implications of \eqref{eq:interface-hierarchy} are strict, with witnesses
Example~\ref{ex:hierarchy-strict}, Propositions~\ref{prop:separation} and
\ref{prop:stacking-sharpness}; the $Z$-sector statements follow by sector interchange.}
\label{tab:interfaces}
\small
\begin{tabular}{@{}p{0.055\textwidth}p{0.30\textwidth}p{0.30\textwidth}p{0.27\textwidth}@{}}
\toprule
& recovery target, map class, domain & exact criterion & value on a Singleton-optimal pair\\
\midrule
$\mathrm{I}_1$ & projected syndrome, arbitrary map, whole label space &
$e^{\dagger}\star S_Z\subseteq S_Z$ (Theorem~\ref{thm:projected-recoverability}) &
fails for every nontrivial idempotent when $m\le n$ and $d^{R}_X\ge2$
(Corollary~\ref{cor:optimal-rigidity}); can hold at $d^{R}_X=1$ and for $a>n$
(Remark~\ref{rem:dichotomy}, Proposition~\ref{prop:stacking-sharpness})\\
$\mathrm{I}_2$ & projected syndrome, $\F_q$-linear map, rank-$t$ ball &
equivalent to $\mathrm{I}_1$ (Corollary~\ref{cor:linearity-dichotomy}) & as for $\mathrm{I}_1$\\
$\mathrm{I}_3$ & projected error modulo $S_X$, arbitrary map, whole label space &
$e\star\cD_X\subseteq S_X$ (Theorem~\ref{thm:quotient-interface}); for $e=e^{\dagger}=e^2$ this is
equivalent to both spans being invariant with $K_{\Imag(e)}=0$
(Theorem~\ref{thm:quotient-characterization}) & fails for every nonzero $e$
(Theorem~\ref{thm:radius-dichotomy})\\
$\mathrm{I}_4$ & projected error modulo $S_X$, $\F_q$-linear map, rank-$t$ ball &
equivalent to $\mathrm{I}_3$ (Corollary~\ref{cor:quotient-linearity}) & as for $\mathrm{I}_3$\\
$\mathrm{I}_5$ & projected syndrome, arbitrary map, rank-$t$ ball &
$2t<\delta_{\mathrm{syn}}(e)$ (Lemma~\ref{lem:relative-criterion}) &
$\delta_{\mathrm{syn}}(e)=+\infty$ if $e^{\dagger}\star S_Z\subseteq S_Z$, else $d^{R}_X$
(Theorem~\ref{thm:radius-dichotomy})\\
$\mathrm{I}_6$ & projected error modulo $S_X$, arbitrary map, rank-$t$ ball &
$2t<\delta_{\mathrm{quot}}(e)$ (Lemma~\ref{lem:relative-criterion}) &
$\delta_{\mathrm{quot}}(e)=d^{R}_X$ on every layout (Theorem~\ref{thm:radius-dichotomy})\\
\bottomrule
\end{tabular}
\end{table*}

\section{Preliminaries}
\label{sec:preliminaries}

Let $q$ be a prime power and $m\ge1$ an extension degree. Vectors are length-$n$ row vectors over
$\Fqm$, and all linear subspaces are $\F_q$-linear unless stated otherwise. The two-sided witness
family of Section~\ref{subsec:qgab} is stated for $q=2$, so that each $\F_{2^m}$ symbol
corresponds to $m$ physical qubits, while the idempotent-generated family of
Section~\ref{subsec:igg} and all algebraic definitions are stated for a general prime power
\cite{ashikhmin2001,ketkar2006}.

\subsection{Trace pairing and trace-dual spaces}
\label{subsec:trace-pairing}

The field trace $\Tr:=\Tr_{\Fqm/\F_q}$ induces a nondegenerate $\F_q$-bilinear form. For
$x,z\in\Fqm^n$ define
\begin{equation}
\label{eq:trace-inner-product}
\langle x,z\rangle_{\Tr}:=\sum_{j=1}^{n}\Tr(x_jz_j),
\end{equation}
and for an $\F_q$-linear subspace $C\subseteq\Fqm^n$ let
$C^{\ptr}:=\{y:\langle c,y\rangle_{\Tr}=0\text{ for all }c\in C\}$. Since
\eqref{eq:trace-inner-product} is nondegenerate on $\Fqm^n$ regarded as an $nm$-dimensional
$\F_q$-vector space,
\begin{equation}
\label{eq:trace-dual-dimension}
\dim_{\F_q}(C^{\ptr})=nm-\dim_{\F_q}(C),\qquad (C^{\ptr})^{\ptr}=C,
\end{equation}
for every $\F_q$-linear subspace $C$ \cite{ashikhmin2001,ketkar2006}; consequently $(\cdot)^{\ptr}$
is inclusion-reversing and
\begin{equation}
\label{eq:dual-intersection}
(C_1\cap C_2)^{\ptr}=C_1^{\ptr}+C_2^{\ptr}.
\end{equation}
For an $\F_q$-linear map $e:\Fqm\to\Fqm$ we write the componentwise extension as
$e\star(v_1,\dots,v_n):=(e(v_1),\dots,e(v_n))$, where $\star$ denotes componentwise action and not
multiplication. The trace adjoint $e^{\dagger}$ is the unique $\F_q$-linear map satisfying
\begin{equation}
\label{eq:trace-adjoint}
\Tr\bigl(e(x)\,y\bigr)=\Tr\bigl(x\,e^{\dagger}(y)\bigr)\quad\text{for all }x,y\in\Fqm,
\end{equation}
and $e$ is \emph{self-adjoint} if $e=e^{\dagger}$. We write
$\Fix(e^{\dagger}):=\{h\in\Fqm:e^{\dagger}(h)=h\}$ and extend the notation componentwise. In a basis
$B=(b_1,\dots,b_m)$ with trace Gram matrix $G_B:=(\Tr(b_ib_j))_{i,j}$, the matrix of $e^{\dagger}$
is $G_B^{-1}M_e^{\top}G_B$, so that self-adjointness reads $M_e=G_B^{-1}M_e^{\top}G_B$; if a
self-dual basis exists then $G_B=I$ and this reduces to matrix symmetry.

\subsection{Pauli labels and trace-CSS commutation}
\label{subsec:nonbinary-css}

Pauli operators on $n$ $\Fqm$-valued symbols are indexed by pairs $(x,z)\in\Fqm^n\times\Fqm^n$
\cite{ashikhmin2001,ketkar2006,gottesman1997}. For a CSS construction let $C_X,C_Z\subseteq\Fqm^n$
denote the $\F_q$-linear spans of the implemented $X$- and $Z$-type check labels. The commutation
condition is
\begin{equation}
\label{eq:trace-css-commutation}
\langle x,z\rangle_{\Tr}=0\ \ \text{for all }x\in C_X,\ z\in C_Z,\quad\text{i.e.}\quad
C_Z\subseteq C_X^{\ptr}.
\end{equation}
Given a check label $h$ and an error label $E$ in a fixed Pauli sector, the measurable trace
syndrome is
\begin{equation}
\label{eq:trace-syndrome-single}
s(h;E):=\langle h,E\rangle_{\Tr}\in\F_q,
\end{equation}
and for a family $H=(h^{(1)},\dots,h^{(s)})$ we write
$\sigma_H(E):=\bigl(s(h^{(i)};E)\bigr)_{i=1}^{s}\in\F_q^{s}$.

\subsection{Physical embedding and rank invariance}
\label{subsec:physical-embedding}

Realizing \eqref{eq:trace-css-commutation} on physical qudits requires a linear identification of
labels with Pauli supports, chosen separately for each sector. Let $B$ be an ordered $\F_q$-basis
with coordinate map $[\cdot]_B$ and Gram matrix $G_B$, symmetric and invertible by nondegeneracy.

\begin{proposition}[Sector-split embedding and logical count]
\label{prop:sector-split-embedding}
Define componentwise $\varphi_X(x):=\bigl(G_B[x_j]_B\bigr)_j$ and
$\varphi_Z(z):=\bigl([z_j]_B\bigr)_j$, and realize labels on $nm$ physical $\F_q$-qudits by
$X(x)\mapsto X^{\varphi_X(x)}$ and $Z(z)\mapsto Z^{\varphi_Z(z)}$. Then
\begin{equation}
\label{eq:exponent-identity}
\varphi_X(x)\cdot\varphi_Z(z)=\sum_{j=1}^{n}\Tr(x_jz_j)=\langle x,z\rangle_{\Tr}.
\end{equation}
Consequently: (i) for prime $q$ the realized operators commute if and only if
$\langle x,z\rangle_{\Tr}=0$, so every trace-orthogonal pair $(S_X,S_Z)$ defines an abelian physical
stabilizer group, and the measured parity of any realized check against any error equals the
corresponding trace syndrome; (ii) for a general prime power $q=p^{\nu}$ trace orthogonality implies
commutation, the converse being weaker because the physical phase is $\Tr_{\F_q/\F_p}$ of
\eqref{eq:exponent-identity}; (iii) the mirrored convention
$(\varphi_X,\varphi_Z)=\bigl([\cdot]_B,\,G_B[\cdot]_B\bigr)$ has the same properties; and (iv) for
every prime power $q$ the realized group of a trace-orthogonal pair with $\F_q$-linear spans
$S_X,S_Z$ is scalar-free and elementary abelian of order $q^{\dim_{\F_q}S_X+\dim_{\F_q}S_Z}$, so
that the code space has dimension $q^{K_q}$ with
\begin{equation}
\label{eq:logical-count}
K_q=N-\dim_{\F_q}S_X-\dim_{\F_q}S_Z,\qquad N=nm.
\end{equation}
At most one sector may carry the Gram factor, since applying it to both yields the exponent
$\sum_j[x_j]_B^{\top}G_B^2[z_j]_B$, which differs from the trace pairing whenever $G_B^2\neq G_B$,
as is the case for the polynomial basis of $\F_{2^5}=\F_2[x]/(x^5+x^2+1)$ used in
Example~\ref{ex:primary-instance}.
\end{proposition}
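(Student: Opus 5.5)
The plan is to reduce everything to the single coordinate identity \eqref{eq:exponent-identity} and then read off (i)--(iv) from standard Pauli-group facts.

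\textbf{The identity.} For a single symbol, $[x]_B^{\top}G_B[z]_B=\sum_{i,k}x_i z_k\Tr(b_ib_k)$, where $x=\sum_i x_ib_i$ and $z=\sum_k z_kb_k$ with coefficients in $\F_q$. By $\F_q$-bilinearity of $(x,z)\mapsto\Tr(xz)$ this equals $\Tr(xz)$. Since $G_B$ is symmetric, $(G_B[x]_B)\cdot[z]_B=[x]_B^{\top}G_B[z]_B$. Summing over $j$ gives \eqref{eq:exponent-identity}. The mirrored convention in (iii) gives $[x]_B\cdot G_B[z]_B$, which is the same scalar, so (iii) follows from the same computation. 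The closing remark is also direct: if both sectors carry $G_B$, the exponent is $[x]^{\top}G_B^2[z]$. This agrees with $[x]^{\top}G_B[z]$ for all $x,z$ only if $G_B^2=G_B$, and since $G_B$ is invertible that forces $G_B=I$. For the specific polynomial basis of $\F_{2^5}$ one simply computes $G_B$ and checks $G_B\ne I$; this is a finite check that I would record rather than argue.

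\textbf{Commutation, (i) and (ii).} On $\F_q$-qudits, $q=p^{\nu}$, the standard Weyl relation is
\[
Z^{b}X^{a}=\omega^{\Tr_{\F_q/\F_p}(a\cdot b)}X^{a}Z^{b},\qquad \omega=e^{2\pi i/p}.
\]
Trace orthogonality makes $a\cdot b=0$, so the operators commute; this proves (ii). For prime $q$ the map $\Tr_{\F_q/\F_p}$ is the identity, so the operators commute exactly when $a\cdot b=0$. By \eqref{eq:exponent-identity} this is $\langle x,z\rangle_{\Tr}=0$. The measured eigenvalue phase of a check against an error is $\omega^{a\cdot b}$, which identifies the measured parity with $s(h;E)$ and gives (i). For general $q$ I would instead read the syndrome in $\F_q$ via the usual $\F_q$-valued symplectic form, but the statement only requires the inclusion in (ii).

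\textbf{The group order and logical count, (iv).} I expect this to be the main point needing care. The maps $\varphi_X$ and $\varphi_Z$ are $\F_q$-linear injections, since $G_B$ and $[\cdot]_B$ are invertible. Hence the exponent spaces $A=\varphi_X(S_X)$ and $B'=\varphi_Z(S_Z)$ have $\F_q$-dimensions $\dim S_X$ and $\dim S_Z$. The realized group is generated by $X^{a}$ for $a\in A$ and $Z^{b}$ for $b\in B'$. These commute pairwise by (ii), and each generator has order $p$ (or $1$). Because $a\mapsto X^a$ is a homomorphism from $(\F_q^{N},+)$, every element can be written as $X^{a}Z^{b}$ times a scalar. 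Since the generators commute, no scalar ever appears: any product reorders to $X^{\sum a}Z^{\sum b}$ exactly. The operators $X^aZ^b$ for distinct pairs $(a,b)$ are linearly independent, being the Weyl basis. So the group is isomorphic to $A\oplus B'$, which is elementary abelian of order $q^{\dim S_X+\dim S_Z}$ and contains no nontrivial scalar, in particular not $-I$ or $\omega I$. The standard stabilizer count then applies: for an abelian Pauli subgroup $\mathcal S$ with $\mathcal S\cap\mathbb{C}I=\{I\}$, the projector $|\mathcal S|^{-1}\sum_{g\in\mathcal S}g$ has trace $q^N/|\mathcal S|$, because every nonidentity Weyl operator is traceless. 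This yields code dimension $q^{N-\dim S_X-\dim S_Z}$ and hence \eqref{eq:logical-count}.
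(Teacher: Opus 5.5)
Your proposal is correct and follows essentially the same route as the paper. It derives the coordinate identity from bilinearity of the trace form and symmetry of $G_B$, reads (i)--(iii) off the absolute-trace symplectic phase, and proves (iv) from scalar-freeness plus the standard stabilizer count, which the paper delegates to Lemma~\ref{lem:fp-accounting}; your trace computation for the group-averaging projector simply makes that cited count self-contained.
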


\begin{proof}
For a single symbol coordinate, bilinearity of the trace form gives
$[x]_B^{\top}G_B[z]_B=\sum_{i,\ell}x_iz_{\ell}\Tr(b_ib_{\ell})=\Tr(xz)$; summing over $j$ yields
\eqref{eq:exponent-identity}. Two generalized Pauli strings commute exactly when the $\F_p$-valued
symplectic phase vanishes; for prime $q=p$ that phase is \eqref{eq:exponent-identity} itself,
giving (i), and for $q=p^{\nu}$ it is given by $\Tr_{\F_q/\F_p}$ of this symplectic phase, so $\Tr_{\F_q/\F_p}(0)=0$ gives (ii). Claim (iii) follows from the symmetry of $G_B$, and (iv) is
Lemma~\ref{lem:fp-accounting}, which for prime $q$ reduces to the standard count for a scalar-free
abelian subgroup of the generalized Pauli group \cite{ashikhmin2001,ketkar2006,gottesman1997}.
\end{proof}

\begin{corollary}[Embedding invariance of rank-metric quantities]
\label{cor:rank-invariance}
Let $\Phi_B(E)\in\F_q^{m\times n}$ be the coordinate matrix of a label $E$, with columns
$[E_j]_B$. If the physical coordinates of every symbol are transformed by one common invertible
$M\in\mathrm{GL}_m(\F_q)$, then $\Phi_B(E)\mapsto M\Phi_B(E)$ and rank is preserved; hence $K_q$,
$d^{R}_X$, $d^{R}_Z$, $d_R(S_Z)$, $d_R(\Kker)$ and rank-$t$ burst correctability agree under both
conventions of Proposition~\ref{prop:sector-split-embedding}, which differ per symbol by left
multiplication with $G_B$, whereas Hamming functionals such as physical check weights and physical
distances depend on the convention. The requirement that $M$ be uniform across all carriers cannot be relaxed: for $q=2$, $m=n=2$ the
label with both coordinates equal to $1$ has rank one, and applying different invertible maps to
the two carriers can raise its rank to two, so rank-metric quantities are invariants of the
aligned layout and not of the code alone.
\end{corollary}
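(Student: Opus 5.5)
The plan is to reduce everything to one elementary fact: if every symbol's physical coordinates are transformed by the same $M\in\mathrm{GL}_m(\F_q)$, the coordinate matrix is multiplied on the left by $M$. By linearity of the coordinate map, symbol $j$ of $E$ is sent to $M[E_j]_B$. Stacking these columns gives $\Phi_B(E)\mapsto M\Phi_B(E)$. Since $M$ is invertible, $\rank(M\Phi_B(E))=\rank\Phi_B(E)$ for every label $E$. So the rank weight of every label, and of every difference of labels, is unchanged by any such uniform transformation.

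Next I will apply this to the two conventions of Proposition~\ref{prop:sector-split-embedding}. In the $X$ sector, $\varphi_X$ and $[\cdot]_B$ differ on each symbol by left multiplication with the same invertible symmetric matrix $G_B$; the $Z$ sector is handled in the same way. Taking $M=G_B$, the rank weight of every label is identical under both conventions.

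Every listed quantity is then handled in turn.
\begin{itemize}
\item $K_q$ is given by \eqref{eq:logical-count} from $\F_q$-dimensions of the spans. These are preserved because the transformation is an $\F_q$-linear bijection, and trace orthogonality is preserved by \eqref{eq:exponent-identity}.
\item $d^{R}_X$, $d^{R}_Z$, $d_R(S_Z)$ and $d_R(\Kker)$ are minima of rank weight over sets of labels defined intrinsically at the label level. Those sets do not move, and the rank of each element is preserved.
\item Rank-$t$ burst correctability depends only on which error labels lie in the rank-$t$ ball and on the trace syndromes. Both are convention-independent, the syndromes again by \eqref{eq:exponent-identity}.
\item Hamming functionals count nonzero entries of $G_B[x_j]_B$ versus $[x_j]_B$, which can differ. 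A single example suffices, e.g.\ a non-diagonal $G_B$ such as the polynomial basis of $\F_{2^5}$.
\end{itemize}

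For the final claim I will give the explicit counterexample with $q=2$, $m=n=2$ and $E=(1,1)$, taking $B=(1,\omega)$ for $\F_4$. Then $\Phi_B(E)$ has two equal columns $(1,0)^{\top}$, so its rank is one. Apply the identity on carrier 1 and the swap $\begin{pmatrix}0&1\\1&0\end{pmatrix}$ on carrier 2. The columns become $(1,0)^{\top}$ and $(0,1)^{\top}$, so the rank is two. Hence rank is not invariant under non-uniform maps.

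I expect no step to be a real obstacle. The only point needing care is to state precisely that the distances are defined at the label level, so that only the rank functional changes with the coordinates. Once that is fixed, invariance of each quantity follows immediately from $\rank(M\Phi)=\rank\Phi$.
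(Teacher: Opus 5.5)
Your proposal is correct and follows the paper's own argument. Everything reduces to $\rank(M\Phi_B(E))=\rank\Phi_B(E)$ for invertible $M$, applied with $M=G_B$ to compare the two conventions; the listed quantities are defined at the label level, while Hamming counts are not preserved; and your explicit $\F_4$ counterexample, with a swap on one carrier, correctly realizes the final claim. One small precision: for the Hamming part the relevant property of $G_B$ is being non-monomial rather than non-diagonal, since a non-identity permutation matrix preserves Hamming weight. The polynomial basis of $\F_{2^5}$ is non-monomial, because its Gram column at $1$ is $(\Tr(1),\Tr(x),\Tr(x^2),\Tr(x^3),\Tr(x^4))^{\top}=(1,0,0,1,0)^{\top}$.
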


\begin{proof}
$\rank(M\Phi)=\rank(\Phi)$ for invertible $M$; the listed rank quantities are defined through ranks
of coordinate matrices, while the Hamming functionals count nonzero physical coordinates, which $M$
need not preserve.
\end{proof}

When $q$ is not prime the physical outcome is the absolute trace of the pairing;
Appendix~\ref{app:nonprime} shows that the $\F_q$-valued syndrome is nevertheless measurable for
every prime power, at a $\nu$-fold increase in measured generators and with no change to the
stabilizer group. The next lemma converts trace-dual membership into a statement about physical
Pauli operators.

\begin{lemma}[Physical normalizer of a base-field-linear check space]
\label{lem:physical-normalizer}
Let $q=p^{\nu}$ with $p$ prime, adopt the embedding of
Proposition~\ref{prop:sector-split-embedding}, and let $S\subseteq\Fqm^n$ be an $\F_q$-linear space
of $Z$-type check labels. Write $G(S)$ for the group generated by the realized operators
$\{Z^{\varphi_Z(h)}:h\in S\}$. Then a realized $X$-type Pauli $X^{\varphi_X(E)}$ commutes with every
element of $G(S)$ if and only if $E\in S^{\ptr}$, and symmetrically with the two sectors exchanged.
Consequently, for every prime power $q$, the normalizer of the realized CSS stabilizer group is
$S_Z^{\ptr}$ in the $X$ sector and $S_X^{\ptr}$ in the $Z$ sector.
\end{lemma}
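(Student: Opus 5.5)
The plan is to reduce the statement to the $\F_p$-symplectic commutation rule for generalized Pauli operators on $nm$ physical $\F_q$-qudits, and then to use the $\F_q$-linearity of $S$ to upgrade the $\F_p$-valued condition to an $\F_q$-valued one.

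\textbf{Step 1 (one generator).} Fix $h\in S$ and a label $E$. By the commutation rule recalled in the proof of Proposition~\ref{prop:sector-split-embedding}, the operators $X^{\varphi_X(E)}$ and $Z^{\varphi_Z(h)}$ commute if and only if
\[
\Tr_{\F_q/\F_p}\bigl(\varphi_X(E)\cdot\varphi_Z(h)\bigr)=0 .
\]
By \eqref{eq:exponent-identity}, together with the symmetry $\Tr(E_jh_j)=\Tr(h_jE_j)$, this condition reads $\Tr_{\F_q/\F_p}\bigl(\langle h,E\rangle_{\Tr}\bigr)=0$.

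\textbf{Step 2 (the whole group).} Commuting with every element of $G(S)$ is equivalent to commuting with each generator $Z^{\varphi_Z(h)}$, $h\in S$. So the condition on $E$ is that $\Tr_{\F_q/\F_p}\bigl(\langle h,E\rangle_{\Tr}\bigr)=0$ for all $h\in S$.

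\textbf{Step 3 (key step: removing the absolute trace).} This is only an obstacle for non-prime $q$. Since $S$ is $\F_q$-linear and $\varphi_Z$ is $\F_q$-linear, $\lambda h\in S$ for every $\lambda\in\F_q$. Applying Step 2 to $\lambda h$ gives
\[
\Tr_{\F_q/\F_p}\bigl(\lambda\,\langle h,E\rangle_{\Tr}\bigr)=0\quad\text{for all }\lambda\in\F_q .
\]
The form $(\lambda,\mu)\mapsto\Tr_{\F_q/\F_p}(\lambda\mu)$ is nondegenerate on $\F_q$, so this forces $\langle h,E\rangle_{\Tr}=0$. The converse is immediate, because $\Tr_{\F_q/\F_p}(0)=0$. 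Hence $X^{\varphi_X(E)}$ commutes with $G(S)$ if and only if $E\in S^{\ptr}$.

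\textbf{Step 4 (sector exchange).} The same argument applies with the roles of $X$ and $Z$ swapped. The Gram factor has already been absorbed into the identity \eqref{eq:exponent-identity}, and that identity is symmetric under the exchange.

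\textbf{Step 5 (normalizer).} For the final claim, note that the normalizer of an abelian, scalar-free Pauli subgroup consists of the Paulis commuting with all of its elements. The realized CSS stabilizer group is of this kind by Proposition~\ref{prop:sector-split-embedding}(iv).

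Any Pauli can be written, up to phase, as a product of an $X$-part and a $Z$-part. Its commutation phase with a $Z$-type stabilizer involves only its $X$-part, and its commutation phase with an $X$-type stabilizer involves only its $Z$-part. The conditions therefore decouple.

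Applying Step 3 once with $S=S_Z$ (for the $X$-part) and once with $S=S_X$ (for the $Z$-part) gives the $X$-sector normalizer $S_Z^{\ptr}$ and the $Z$-sector normalizer $S_X^{\ptr}$.

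The only point needing care is that arbitrary physical $X$-strings are realized labels. This holds because $\varphi_X$ is a bijection $\Fqm^n\to\F_q^{nm}$: the Gram matrix $G_B$ is invertible and $[\cdot]_B$ is a coordinate isomorphism.
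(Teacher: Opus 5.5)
Your proof is correct and follows the paper's argument: sufficiency holds because a vanishing trace pairing kills the $\Tr_{\F_q/\F_p}$-valued phase, and necessity comes from replacing $h$ by $\lambda h\in S$ and using nondegeneracy of $(\lambda,\mu)\mapsto\Tr_{\F_q/\F_p}(\lambda\mu)$. The normalizer claim then follows by decoupling the $X$- and $Z$-parts across the two check sectors; you handle prime and nonprime $q$ together rather than case by case, and you spell out two points the paper leaves implicit (normalizer equals centralizer for a scalar-free group, and bijectivity of $\varphi_X$), but the route is the same.
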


\begin{proof}
Sufficiency is Proposition~\ref{prop:sector-split-embedding}(ii): if $\langle h,E\rangle_{\Tr}=0$
the physical phase vanishes for every generator and hence on all of $G(S)$. For necessity with
prime $q$, the commutation phase of $X^{\varphi_X(E)}$ against $Z^{\varphi_Z(h)}$ is
$\langle h,E\rangle_{\Tr}$ itself by Proposition~\ref{prop:sector-split-embedding}(i), so
commutation with every generator forces $E\in S^{\ptr}$. For $q=p^{\nu}$ with $\nu>1$ the phase is
$\Tr_{\F_q/\F_p}\bigl(\langle h,E\rangle_{\Tr}\bigr)$ by
Proposition~\ref{prop:sector-split-embedding}(ii), which a single generator does not invert; but
$\lambda h\in S$ for every $\lambda\in\F_q$, so commutation with every element of $G(S)$ gives
$\Tr_{\F_q/\F_p}\bigl(\lambda\langle h,E\rangle_{\Tr}\bigr)=0$ for all $\lambda$, whence
$\langle h,E\rangle_{\Tr}=0$ by nondegeneracy of \eqref{eq:absolute-trace-nondeg};
Appendix~\ref{app:nonprime} records the measurement protocol and the group accounting. The final assertion follows by
applying the equivalence to each sector separately, the $X$-part of a label being paired only with
$Z$-checks and the $Z$-part only with $X$-checks.
\end{proof}

We adopt throughout the convention of Proposition~\ref{prop:sector-split-embedding}: $Z$-labels
in raw coordinates and $X$-labels in Gram-weighted coordinates, so a weight-one physical $Z$-Pauli
at $(j,\ell)$ carries the label $b_{\ell}$ and each single-symbol $X$-check acts on
$\{(j,\ell):\Tr(b_{\ell}u)\neq0\}$ when its label at carrier $j$ is $u$, since
$(G_B[u]_B)_{\ell}=\Tr(b_{\ell}u)$.

\subsection{Rank metric and code notation}
\label{subsec:rank-metric}

For $E\in\Fqm^n$ the rank weight is $\wt_R(E):=\rank_{\F_q}\Phi_B(E)$, independent of $B$
\cite{gabidulin1985}. For an $\F_q$-linear code $C$, in either ambient space $\Fqm^n$ or
$\F_q^{a\times n}$, we set $d_R(C):=\min\{\wt_R(c):0\neq c\in C\}$, \emph{with the convention
$d_R(\{0\}):=+\infty$}. Writing $B_{\tau}$ for the ball of elements of rank at most $\tau$ in the
ambient space of $C$, every hypothesis of the form $d_R(C)>\tau$ in this paper is used only through
the equivalence
\[
d_R(C)>\tau\iff C\cap B_{\tau}=\{0\},
\]
which under that convention holds for every $C$, so such hypotheses are vacuously satisfied by
$C=\{0\}$. We write
$\llbracket N,K,d^{R}_X/d^{R}_Z\rrbracket_q$ for a CSS code on $N$ physical $\F_q$-qudits
encoding $K$ logical qudits, where $d^{R}_X$ and $d^{R}_Z$ are the minimum rank weights of
nontrivial $X$- and $Z$-logical representatives; the superscript $R$ is retained throughout, since
the bracket notation is usually read as a Hamming distance and the two are related only by
$\wt_H\ge\wt_R$. When $m$ divides $K$, we write $K_{q^m}:=K/m$, a normalization of the logical
dimension that does not assume an $\Fqm$-module structure. The rank weight is subadditive,
\begin{equation}
\label{eq:rank-subadditive}
\wt_R(E+E')\le\wt_R(E)+\wt_R(E')\qquad(E,E'\in\Fqm^n),
\end{equation}
because the column space of $\Phi_B(E+E')$ is contained in the sum of the column spaces of $\Phi_B(E)$ and
$\Phi_B(E')$. We record two facts used throughout.

\begin{lemma}[Rank-one generators of a branch space]
\label{lem:rankone-span}
For any nonzero $\F_q$-subspace $V\subseteq\Fqm$, the rank-one vectors $\{c\,b:b\in V,\
c\in\F_q^n\}$ span $V^n$ over $\F_q$. Consequently no proper $\F_q$-subspace of $V^n$ contains every
rank-one vector of $V^n$.
\end{lemma}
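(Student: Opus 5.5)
The plan is to show directly that every element of $V^n$ is an $\F_q$-linear combination of rank-one vectors of the stated form, and then deduce the second claim by contraposition. Fix an $\F_q$-basis $(v_1,\dots,v_k)$ of $V$, where $k=\dim_{\F_q}V\ge1$ because $V\neq\{0\}$, and write $\epsilon_j\in\F_q^n$ for the $j$-th standard unit vector. An arbitrary $y=(y_1,\dots,y_n)\in V^n$ has coordinates $y_j=\sum_{i}\lambda_{ij}v_i$ with $\lambda_{ij}\in\F_q$. I will use the decomposition
\[
y=\sum_{j=1}^{n}y_j\,\epsilon_j=\sum_{i=1}^{k}\Bigl(\sum_{j=1}^{n}\lambda_{ij}\epsilon_j\Bigr)v_i
=\sum_{i=1}^{k}c^{(i)}v_i,\qquad c^{(i)}:=(\lambda_{i1},\dots,\lambda_{in})\in\F_q^n .
\]
Each summand $c^{(i)}v_i$ has the form $c\,b$ with $c\in\F_q^n$ and $b=v_i\in V$. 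Its coordinate matrix $\Phi_B(c^{(i)}v_i)=[v_i]_B\,(c^{(i)})^{\top}$ is an outer product, so its rank is at most one, and it is exactly one when $c^{(i)}\neq0$. Any zero summands can simply be discarded. It remains to check that the elements $c\,b$ really lie in $V^n$. This holds because $V$ is closed under multiplication by $\F_q$: the $j$-th coordinate of $c\,b$ is $c_jb$, and $c_jb\in V$. So the $\F_q$-span of the rank-one vectors is contained in $V^n$ and also contains every $y\in V^n$, which shows that the two are equal.

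For the consequence, suppose an $\F_q$-subspace $W\subseteq V^n$ contains every rank-one vector of $V^n$. Then $W$ contains their $\F_q$-span, and by the first part that span is all of $V^n$, so $W=V^n$ and $W$ is not proper. There is one subtlety here: the rank-one vectors of $V^n$ might a priori include vectors not of the form $c\,b$. This does not matter, because the first part already shows that the vectors of the form $c\,b$, which are rank-one elements of $V^n$, suffice to span.

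I expect no real obstacle in this argument. The only point needing care is the rank-one claim for $c\,b$, which reduces to the outer-product factorisation of its coordinate matrix. That factorisation uses the $\F_q$-linearity of $[\cdot]_B$, namely $[c_jb]_B=c_j[b]_B$ for $c_j\in\F_q$. With that in hand, the claim follows directly from the definition $\wt_R(E)=\rank_{\F_q}\Phi_B(E)$.
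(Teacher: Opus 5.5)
Your proposal is correct and essentially matches the paper's proof: both fix an $\F_q$-basis of $V$ and write elements of $V^n$ in terms of rank-one vectors built from it. The paper uses the $dn$ vectors $\varepsilon_j b_i$ with a dimension count to get a basis. You instead group coordinates by basis element to write $y=\sum_i c^{(i)}v_i$ directly, which is the same idea and equally valid.
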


\begin{proof}
Fix an $\F_q$-basis $b_1,\dots,b_d$ of $V$. For each $i$ and each unit vector
$\varepsilon_j\in\F_q^n$, the vector $\varepsilon_jb_i$ has $b_i$ in coordinate $j$ and $0$
elsewhere; these $dn$ vectors are $\F_q$-independent and $\dim_{\F_q}V^n=dn$, so they form a basis.
\end{proof}

\begin{lemma}[Rank non-expansion under projection]
\label{lem:rank-non-expansion}
For $\F_q$-linear $e$ and every $E\in\Fqm^n$, $\wt_R(e\star E)\le\wt_R(E)$.
\end{lemma}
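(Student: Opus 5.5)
The plan is to translate the componentwise action of $e$ into matrix multiplication on coordinate matrices, and then use the fact that multiplying on the left cannot increase rank.

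Fix an $\F_q$-basis $B=(b_1,\dots,b_m)$ of $\Fqm$ and let $M_e\in\F_q^{m\times m}$ be the matrix of $e$ in this basis. By definition, $[e(v)]_B=M_e[v]_B$ for every $v\in\Fqm$. The $j$-th column of $\Phi_B(e\star E)$ is $[e(E_j)]_B=M_e[E_j]_B$, so
\[
\Phi_B(e\star E)=M_e\,\Phi_B(E).
\]
This is the same mechanism as in Corollary~\ref{cor:rank-invariance}, except that $M_e$ need not be invertible. The key point is that $e$ acts identically on every carrier, which is what lets a single matrix act on the left. Corollary~\ref{cor:rank-invariance} shows that carrier-dependent maps behave differently.

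From this identity the conclusion follows from $\rank(AB)\le\rank(B)$. The column space of $M_e\Phi_B(E)$ is the image under $M_e$ of the column space of $\Phi_B(E)$, so its dimension is at most $\rank\Phi_B(E)$. Since the rank weight does not depend on $B$, this gives $\wt_R(e\star E)\le\wt_R(E)$.

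There is no real obstacle here. The only thing to check is the bookkeeping: the coordinate matrix is formed by columns indexed by the $n$ carriers, so $e$ acts on the left rather than on the right.
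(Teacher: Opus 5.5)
Your proof is correct and is the coordinate form of the paper's argument. The paper works basis-free: with $W=\mathrm{span}_{\F_q}\{E_1,\dots,E_n\}$, the entries of $e\star E$ lie in $e(W)$, and $\dim e(W)\le\dim W$. That is exactly your observation that the column space of $M_e\Phi_B(E)$ is the image under $M_e$ of the column space of $\Phi_B(E)$.
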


\begin{proof}
Let $W=\mathrm{span}_{\F_q}\{E_1,\dots,E_n\}$. Since $e$ is $\F_q$-linear,
$\dim_{\F_q}e(W)\le\dim_{\F_q}W$, and the coordinates of $e\star E$ lie in $e(W)$; hence
$\wt_R(e\star E)\le\dim_{\F_q}e(W)\le\dim_{\F_q}W=\wt_R(E)$.
\end{proof}

\subsection{Gabidulin codes and the two regimes}
\label{subsec:gabidulin-regimes}

Rank-metric codes over $\Fqm$ are described by $\F_q$-linearized polynomials
$f(x)=\sum_{i=0}^{k-1}f_ix^{q^i}$ \cite{gabidulin1985}; $\mathcal L_{<a}$ denotes the $\F_q$-space
of those of $q$-degree less than $a$, so $\dim_{\F_q}\mathcal L_{<a}=am$ and
$\mathcal L_{<m}=\mathrm{End}_{\F_q}(\Fqm)$. A Gabidulin code of length $n$ evaluates $f$ at
$\F_q$-independent points, $\mathrm{ev}(f):=(f(\alpha_1),\dots,f(\alpha_n))$; for $n\le m$ it
attains $d_R=n-k+1$ and decodes uniquely up to $t=\lfloor(d_R-1)/2\rfloor$ \cite{gabidulin1985},
while for $n>m$ no $n$ independent points exist and the worst-case guarantees do not apply. We use
throughout the rank-metric Singleton bound in the matrix form
\begin{equation}
\label{eq:matrix-singleton}
\dim_{\F_q}C\le\max(a,n)\bigl(\min(a,n)-d_R(C)+1\bigr)
\end{equation}
for every nonzero $\F_q$-linear $C\subseteq\F_q^{a\times n}$, the zero code being excluded by the
convention $d_R(\{0\})=+\infty$,
and we use the fact that the Delsarte dual of an MRD code is MRD
\cite{gabidulin1985,delsarte1978}.

\subsection{Matrix-label CSS pairs and the stacked rank distance}
\label{subsec:matrix-css}

All bounds below are stated for pairs of matrix codes, the ambient label space and a projected
branch being the two instances. Fix a layout, that is, integers $a,n\ge1$, and write
$M:=\max(a,n)$, $L:=\min(a,n)$. A \emph{CSS pair} on the $a\times n$ layout is a pair of
$\F_q$-subspaces $S_X,S_Z\subseteq\F_q^{a\times n}$ with $S_Z\subseteq S_X^{\pstd}$, where
$\langle X,Y\rangle_{\mathrm{std}}:=\mathrm{tr}(X^{\top}Y)$; its logical dimension is
$K:=an-\dim_{\F_q}S_X-\dim_{\F_q}S_Z$ and its \emph{sector rank distances} are
\begin{align*}
d^{R}_X&:=\min\{\rank\ell:\ell\in S_Z^{\pstd}\setminus S_X\},\\
d^{R}_Z&:=\min\{\rank\ell:\ell\in S_X^{\pstd}\setminus S_Z\},
\end{align*}
the minimum over an empty set being $+\infty$ as in Section~\ref{subsec:rank-metric}. For labels in
$\Fqm^n$ the coordinate map $\Phi_B$ realizes a trace-orthogonal pair as a pair in
$\F_q^{m\times n}$, so that $a=m$, rows being indexed by intra-carrier position and columns by
carrier as in Proposition~\ref{prop:sector-split-embedding}; by
Lemma~\ref{lem:physical-normalizer} the sector rank distances are then the minimum rank weights of
nontrivial $X$- and $Z$-logical representatives. A \emph{rank-one burst} is a label of rank weight
one, that is, of the form $c\,b$ with $c\in\F_q^n$ and $b\in\Fqm$: a common-mode pattern applied
through one fixed intra-carrier direction.

The trace pairing carries a Gram factor, which the following lemma eliminates, thereby allowing
subsequent statements to be established for the standard pairing.

\begin{lemma}[Gram reduction]
\label{lem:gram-reduction}
Let $G\in\mathrm{GL}_a(\F_q)$ be symmetric and let
$\langle A,B\rangle_G:=\mathrm{tr}(A^{\top}GB)$ on $\F_q^{a\times n}$, so that
$\langle A,B\rangle_G=\langle GA,B\rangle_{\mathrm{std}}$ with
$\langle X,Y\rangle_{\mathrm{std}}:=\mathrm{tr}(X^{\top}Y)$. Let
$S_X,S_Z\subseteq\F_q^{a\times n}$ be $\F_q$-subspaces with $S_Z\subseteq S_X^{\perp_G}$ and define
the sector rank distances of the pair by
$d^{R}_X:=\min\{\rank\ell:\ell\in S_Z^{\perp_G}\setminus S_X\}$ and
$d^{R}_Z:=\min\{\rank\ell:\ell\in S_X^{\perp_G}\setminus S_Z\}$. Then
$S_Z\subseteq(GS_X)^{\pstd}$, and the pairs $(S_X,S_Z)$ under $\langle\cdot,\cdot\rangle_G$ and
$(GS_X,S_Z)$ under $\langle\cdot,\cdot\rangle_{\mathrm{std}}$ have the same logical dimension
$K=an-\dim S_X-\dim S_Z$ and the same sector rank distances.
\end{lemma}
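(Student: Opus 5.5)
The plan is to verify the claims of the Gram reduction lemma one at a time, using only the identity $\langle A,B\rangle_G=\langle GA,B\rangle_{\mathrm{std}}$ together with the invertibility and symmetry of $G$.

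\textbf{Step 1: the identity itself.} Compute
\[
\mathrm{tr}(A^{\top}GB)=\mathrm{tr}\bigl((G^{\top}A)^{\top}B\bigr)=\mathrm{tr}\bigl((GA)^{\top}B\bigr),
\]
using $G^{\top}=G$. So for every $\F_q$-subspace $C$ we have $C^{\perp_G}=(GC)^{\pstd}$. Symmetry of $G$ also shows that $\langle\cdot,\cdot\rangle_G$ is a symmetric form, so $C^{\perp_G}=\{y:\langle y,c\rangle_G=0\ \forall c\in C\}$ as well. This means there is no left/right ambiguity in the definition of $S_Z^{\perp_G}$.

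\textbf{Step 2: orthogonality and logical dimension.} Applying Step 1 with $C=S_X$ gives $S_Z\subseteq S_X^{\perp_G}=(GS_X)^{\pstd}$, which is the first claim. Left multiplication by $G$ is an $\F_q$-linear bijection of $\F_q^{a\times n}$, so $\dim GS_X=\dim S_X$, and the logical dimensions $an-\dim S_X-\dim S_Z$ coincide.

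\textbf{Step 3: the $Z$-sector distance.} In the new pair, $d^R_Z$ is the minimum rank over $(GS_X)^{\pstd}\setminus S_Z$. By Step 1 this set equals $S_X^{\perp_G}\setminus S_Z$, the same set of matrices as before, so $d^R_Z$ is literally unchanged.

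\textbf{Step 4: the $X$-sector distance, the only nontrivial point.} In the new pair, $d^R_X$ is the minimum rank over $S_Z^{\pstd}\setminus GS_X$. Using symmetry of $G$,
\[
\langle A,B\rangle_{\mathrm{std}}=\langle G^{-1}A,B\rangle_G=\langle B,G^{-1}A\rangle_G .
\]
Hence $A\in S_Z^{\pstd}$ if and only if $G^{-1}A\in S_Z^{\perp_G}$, that is, $S_Z^{\pstd}=G\,S_Z^{\perp_G}$. Consequently
\[
S_Z^{\pstd}\setminus GS_X=G\bigl(S_Z^{\perp_G}\setminus S_X\bigr),
\]
because $G$ acts bijectively. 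Since $\rank(G\ell)=\rank\ell$ for invertible $G$ (compare Corollary~\ref{cor:rank-invariance}), the minimum ranks agree, with the empty-set convention $+\infty$ carried over because the bijection maps empty sets to empty sets.

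The main obstacle is simply tracking that the two sectors transform differently: one set is unchanged, the other is translated by $G$. Symmetry of $G$ is exactly what makes both descriptions of $\perp_G$ agree.
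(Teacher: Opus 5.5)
Your proof is correct and follows essentially the same route as the paper's: both derive $S_X^{\perp_G}=(GS_X)^{\pstd}$ and $S_Z^{\pstd}=G\,S_Z^{\perp_G}$ from the identity $\langle A,B\rangle_G=\langle GA,B\rangle_{\mathrm{std}}$. Both then observe that the $Z$-logical sets coincide, while left multiplication by $G$ carries the $X$-logical set bijectively and rank-preservingly onto its counterpart. Your explicit remark on where the symmetry of $G$ enters, and on the empty-set convention, is a harmless refinement of the same argument.
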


\begin{proof}
From $\langle A,B\rangle_G=\langle GA,B\rangle_{\mathrm{std}}$ we obtain
$S_X^{\perp_G}=(GS_X)^{\pstd}$ and $S_Z^{\perp_G}=G^{-1}\bigl(S_Z^{\pstd}\bigr)$. Left
multiplication by $G$ is an $\F_q$-linear bijection of $\F_q^{a\times n}$ preserving rank, so
$\dim GS_X=\dim S_X$ and the logical dimensions agree. The $Z$-logical sets
$S_X^{\perp_G}\setminus S_Z$ and $(GS_X)^{\pstd}\setminus S_Z$ coincide, so $d^{R}_Z$ is unchanged;
and $G$ maps $S_Z^{\perp_G}\setminus S_X$ bijectively onto $S_Z^{\pstd}\setminus GS_X$ preserving
rank, so $d^{R}_X$ is unchanged.
\end{proof}

Finally, we fix the parameter against which the CSS bound of Section~\ref{sec:singleton} is
compared. We realize $an$ $\F_q$-qudits as an $a\times n$ array as described above. A Pauli label is a pair
$(A\mid B)\in\F_q^{a\times n}\times\F_q^{a\times n}$, and $\rank(A\mid B)$ denotes the rank of the
$a\times2n$ matrix formed by juxtaposition; the \emph{stacked rank distance} $D$ of a stabilizer
code is the minimum of that quantity over $N(S)\setminus S$, as in
\cite{delfosse2024stacked,nizuka2026}. For a CSS pair one has $D=\min(d^{R}_X,d^{R}_Z)$: every
label of $N(S)\setminus S$ has $X$-part outside $S_X$ or $Z$-part outside $S_Z$, and
$\rank(A\mid B)\ge\max(\rank A,\rank B)$, while a minimum-rank $X$-logical with vanishing $Z$-part
attains the minimum.

\section{An Asymmetric Rank-Metric Singleton Bound}
\label{sec:singleton}

This section proves the bound, records the erasure bounds that hold for an arbitrary stabilizer
code on the same layout, separates the two effects by which the CSS bound improves on their row
term, and shows that the CSS bound itself does not extend beyond the class.

\begin{theorem}[Asymmetric rank-metric Singleton bound]
\label{thm:rank-singleton}
Let $S_X,S_Z\subseteq\F_q^{a\times n}$ be $\F_q$-subspaces with $S_Z\subseteq S_X^{\pstd}$, let
$K:=an-\dim_{\F_q}S_X-\dim_{\F_q}S_Z>0$, and let $d^{R}_X,d^{R}_Z$ be the sector rank distances.
Then
\begin{equation}
\label{eq:rank-singleton}
K\ \le\ an-\max(a,n)\bigl(d^{R}_X+d^{R}_Z-2\bigr),
\end{equation}
equivalently, $d^{R}_X$ and $d^{R}_Z$ being integers,
\begin{equation}
\label{eq:rank-singleton-int}
d^{R}_X+d^{R}_Z\ \le\ \min(a,n)+2-\Bigl\lceil\frac{K}{\max(a,n)}\Bigr\rceil .
\end{equation}
Assume $a\le n$, the case $a>n$ following by transposition. For a row set
$A\subseteq\{1,\dots,a\}$ write $U_A$ for the subspace of matrices vanishing outside the rows in
$A$, and $\pi_A$ for the restriction to the rows in $A$, so that $\ker\pi_A=U_{A^{c}}$. Equality
holds in \eqref{eq:rank-singleton} if and only if, for some---equivalently, for every---pair of
disjoint row sets $A,B$ with $|A|=d^{R}_X-1$ and $|B|=d^{R}_Z-1$,
\begin{equation}
\label{eq:singleton-equality}
S_X\cap\ker\pi_B=S_X\cap U_A\qquad\text{and}\qquad S_Z\cap\ker\pi_A=S_Z\cap U_B .
\end{equation}
By Lemma~\ref{lem:gram-reduction} the bound applies verbatim to trace-orthogonal pairs in $\Fqm^n$,
with $a=m$, and to their restrictions to a branch $\Imag(e)^n$, with $a=\dim_{\F_q}\Imag(e)$.
\end{theorem}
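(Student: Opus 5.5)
The plan is a counting argument in the style of the Hamming-metric asymmetric Singleton proof, with rows of the $a\times n$ array playing the role of coordinates. Assume $a\le n$, so $M=n$ and $L=a$. The case $a>n$ reduces to this one by transposing both check spaces: transposition preserves rank and the standard pairing $\mathrm{tr}(X^{\top}Y)$, so the logical dimension and both sector distances are unchanged. The trace-orthogonal statements at the end then follow from Lemma~\ref{lem:gram-reduction} applied to $\Phi_B$ (with $a=m$), or to the restriction to $\Imag(e)^n$ (with $a=\dim\Imag(e)$). The integer form \eqref{eq:rank-singleton-int} follows from \eqref{eq:rank-singleton} by dividing by $M$ and taking the ceiling.

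\emph{Key observation.} For any row set $A$, every matrix in $U_A$ has rank at most $|A|$. Hence $|A|\le d^{R}_X-1$ forces $S_Z^{\pstd}\cap U_A\subseteq S_X$. Since $S_X\subseteq S_Z^{\pstd}$, this gives
\[
S_Z^{\pstd}\cap U_A=S_X\cap U_A .
\]
Symmetrically, $|B|\le d^{R}_Z-1$ gives $S_X^{\pstd}\cap U_B=S_Z\cap U_B$.

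\emph{Counting the left-hand sides.} Inside $U_A\cong\F_q^{|A|\times n}$, the space $S_Z^{\pstd}\cap U_A$ is the annihilator of $\pi_A(S_Z)$. Using $\ker\pi_A=U_{A^c}$, its dimension is
\[
n|A|-\dim S_Z+\dim(S_Z\cap U_{A^c}).
\]
Equating this with $\dim(S_X\cap U_A)$, doing the same for $B$, and adding the two identities yields
\[
n(|A|+|B|)+\bigl[\dim(S_Z\cap U_{A^c})-\dim(S_Z\cap U_B)\bigr]+\bigl[\dim(S_X\cap U_{B^c})-\dim(S_X\cap U_A)\bigr]=\dim S_X+\dim S_Z=an-K .
\]
If $A$ and $B$ are disjoint, then $U_B\subseteq U_{A^c}$ and $U_A\subseteq U_{B^c}$, so both brackets are nonnegative. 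Taking $|A|=d^{R}_X-1$ and $|B|=d^{R}_Z-1$ gives $K\le an-n(d^{R}_X+d^{R}_Z-2)$.

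\emph{Equality.} Equality holds if and only if both brackets vanish. Since $U_A\subseteq U_{B^c}$ and $U_B\subseteq U_{A^c}$, vanishing of the brackets is exactly \eqref{eq:singleton-equality}, because $\ker\pi_B=U_{B^c}$ and $\ker\pi_A=U_{A^c}$. The left side of the summed identity does not depend on which disjoint $A,B$ of the prescribed sizes are chosen. So if \eqref{eq:singleton-equality} holds for one such pair, the bound is tight, and then it must hold for every such pair. This gives the "some, equivalently every" clause.

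\emph{Main obstacle: existence of disjoint $A,B$.} The argument needs $d^{R}_X+d^{R}_Z-2\le a$, including when one distance is $+\infty$. This is where degeneracy and $K>0$ enter, and I expect it to be the delicate step.
\begin{itemize}
\item First show $d^{R}_X\le a$. Otherwise take $A$ to be all rows: then $S_Z^{\pstd}\subseteq S_X$. Comparing dimensions gives $an-\dim S_Z\le\dim S_X$, i.e.\ $K\le0$, a contradiction. Likewise $d^{R}_Z\le a$, so both distances are finite.
\item Next suppose $d^{R}_X+d^{R}_Z-2>a$. Take $|A|=d^{R}_X-1$ and $B=A^{c}$, so $|B|=a-d^{R}_X+1<d^{R}_Z-1$ and the key observation still applies to $B$. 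The displayed identity then reads $na\le an-K$, again contradicting $K>0$.
\end{itemize}
With disjoint row sets of the prescribed sizes available, the counting argument above applies and completes the proof.
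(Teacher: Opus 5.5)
Your proposal is correct and follows the paper's proof essentially step for step. It uses the same ingredients: low-rank row supports contain no logical labels, the annihilator count combined with rank--nullity via $\ker\pi_A=U_{A^{c}}$, nonnegativity of the two defect terms for disjoint supports, the choice-independent equality analysis, and transposition for $a>n$. The differences are cosmetic. You package the paper's Steps~2--3 as one exact identity with explicit defect brackets, and you split the existence of disjoint supports, including the case of an infinite distance, into two bullets, where the paper's Step~4 settles it in a single line.
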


\begin{proof}
Assume $a\le n$. For $A\subseteq\{1,\dots,a\}$ the restriction $\pi_A:U_A\to\F_q^{|A|\times n}$ is
an isomorphism and $\langle M,N\rangle_{\mathrm{std}}=\langle\pi_AM,\pi_AN\rangle_{\mathrm{std}}$
for $M\in U_A$, the rows outside $A$ contributing nothing.

\emph{Step 1: low-rank supports carry no logical labels.} Let $|A|=u\le d^{R}_X-1$. Every
$M\in U_A$ has $\rank M\le u<d^{R}_X$, while every element of $S_Z^{\pstd}\setminus S_X$ has rank at
least $d^{R}_X$; so $S_Z^{\pstd}\cap U_A\subseteq S_X$, and the reverse inclusion is
$S_X\subseteq S_Z^{\pstd}$. Hence $S_Z^{\pstd}\cap U_A=S_X\cap U_A$.

\emph{Step 2: the dimension of that intersection.} For $M\in U_A$, membership in $S_Z^{\pstd}$ is
equivalent to $\pi_AM\in(\pi_AS_Z)^{\pstd}$ inside $\F_q^{|A|\times n}$, so
$\dim(S_Z^{\pstd}\cap U_A)=un-\dim\pi_A(S_Z)$ and Step~1 gives
\begin{equation}
\label{eq:step2}
\begin{split}
\dim(S_X\cap U_A)&=un-\dim\pi_A(S_Z),\\
\dim(S_Z\cap U_B)&=vn-\dim\pi_B(S_X),
\end{split}
\end{equation}
the second for $B$ with $|B|=v\le d^{R}_Z-1$.

\emph{Step 3: rank--nullity on disjoint supports.} Let $A\cap B=\emptyset$, so
$U_A\subseteq\ker\pi_B$ and $U_B\subseteq\ker\pi_A$. Then
\begin{align}
\dim S_X&=\dim\pi_B(S_X)+\dim(S_X\cap\ker\pi_B)\nonumber\\
&\ge\dim\pi_B(S_X)+\dim(S_X\cap U_A),\label{eq:step3a}\\
\dim S_Z&=\dim\pi_A(S_Z)+\dim(S_Z\cap\ker\pi_A)\nonumber\\
&\ge\dim\pi_A(S_Z)+\dim(S_Z\cap U_B).\label{eq:step3b}
\end{align}
Adding \eqref{eq:step3a} and \eqref{eq:step3b} and inserting \eqref{eq:step2}, the two projection dimensions cancel and
\begin{equation}
\label{eq:step3}
\dim S_X+\dim S_Z\ \ge\ n\,(u+v),
\end{equation}
that is, $K\le n(a-u-v)$ for all disjoint $A,B$ with $|A|=u\le d^{R}_X-1$ and
$|B|=v\le d^{R}_Z-1$.

\emph{Step 4: choice of the two supports.} If $(d^{R}_X-1)+(d^{R}_Z-1)\ge a$ one may choose
disjoint $A,B$ with $|A|\le d^{R}_X-1$, $|B|\le d^{R}_Z-1$ and $|A|+|B|=a$, and Step~3 gives
$K\le0$, contradicting $K>0$. Hence $u:=d^{R}_X-1$ and $v:=d^{R}_Z-1$ are admissible, and Step~3
yields $K\le n(a-u-v)=an-\max(a,n)(d^{R}_X+d^{R}_Z-2)$ since $\max(a,n)=n$; dividing by $n$ and
using integrality gives \eqref{eq:rank-singleton-int}.

\emph{Transposition.} The map $M\mapsto M^{\top}$ is an $\F_q$-linear rank-preserving bijection
preserving $\langle\cdot,\cdot\rangle_{\mathrm{std}}$, so it carries $(S_X,S_Z)$ to a pair in
$\F_q^{n\times a}$ with the same $K$ and sector distances. For $a>n$, applying the bound established for the case $a\le n$ yields $K\le a\bigl(n-d^{R}_X-d^{R}_Z+2\bigr)$, which matches \eqref{eq:rank-singleton} with $\max(a,n)=a$.

\emph{The equality condition.} The right-hand side of \eqref{eq:rank-singleton} is $n(a-u-v)$ and
does not depend on the chosen supports, so equality forces $\dim S_X+\dim S_Z=n(u+v)$ and hence
equality in \eqref{eq:step3}, therefore in \eqref{eq:step3a} and \eqref{eq:step3b} separately, for
every admissible disjoint pair $(A,B)$. Since $U_A\subseteq\ker\pi_B$ and $U_B\subseteq\ker\pi_A$,
that is exactly \eqref{eq:singleton-equality}; conversely \eqref{eq:singleton-equality} for one
admissible pair turns \eqref{eq:step3a} and \eqref{eq:step3b} into equalities, hence
\eqref{eq:step3} into an equality, so that $\dim S_X+\dim S_Z=n(u+v)$ and $K=n(a-u-v)$. Necessity
therefore yields the condition for every admissible pair and sufficiency requires it for one.
\end{proof}

\subsection{Comparison with the erasure bounds}
\label{subsec:erasure}

Theorem~\ref{thm:rank-singleton} should be read against what erasure alone yields for an arbitrary
stabilizer code on the layout. Row erasure yields one such bound; carrier erasure yields a
second, which is the stronger of the two on layouts with many more rows than columns. Both are
special cases of a single statement.

\begin{proposition}[Erasure Singleton bounds for general stabilizer codes]
\label{prop:erasure-bounds}
Let $Q$ be a stabilizer code on the $a\times n$ layout with $K>0$ logical qudits and stacked rank
distance $D$, and write $D-1=2\kappa+\iota$ with $\iota\in\{0,1\}$. Then $a\ge2(D-1)$,
$n\ge2\kappa+1$, and
\begin{align}
K&\le an-2n(D-1),\label{eq:row-erasure}\\
K&\le (a-2\iota)(n-2\kappa).\label{eq:col-erasure}
\end{align}
More generally, for all nonnegative integers $r_1,r_2,c_1,c_2$ with $r_i+2c_i\le D-1$,
$r_1+r_2\le a$ and $c_1+c_2\le n$,
\begin{equation}
\label{eq:erasure-general}
K\le(a-r_1-r_2)(n-c_1-c_2),
\end{equation}
and the minimum of the right-hand side of \eqref{eq:erasure-general} over all admissible choices
is the smaller of \eqref{eq:row-erasure} and \eqref{eq:col-erasure}. The column bound
\eqref{eq:col-erasure} is strictly smaller than the row bound \eqref{eq:row-erasure} if and only
if $\kappa\ge1$ and $a>2n+2\iota$. In particular, on the $a\times2$ layout every stabilizer code
with $K>0$ has $D\le2$, for every $a$.
\end{proposition}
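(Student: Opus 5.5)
The plan is the standard erasure argument: bound the rank of any Pauli supported on a union of rows and columns, then apply the quantum Singleton (erasure) inequality to two disjoint such regions. The general bound \eqref{eq:erasure-general} comes first, and the specialisations and the comparison follow by elementary optimisation over the integers $r_i,c_i$.

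\emph{Step 1: a rank count for row--column regions.} Let $R\subseteq\{1,\dots,a\}$ be a set of $r$ rows and $C\subseteq\{1,\dots,n\}$ a set of $c$ columns. Consider the region $\Omega(R,C)$ of all cells lying in a row of $R$ or a column of $C$. Take a Pauli label $(A\mid B)$ supported in $\Omega(R,C)$, and split each of $A$ and $B$ into its part on the rows of $R$ plus its part on the columns of $C$ outside those rows.
\begin{itemize}
\item The row parts together contribute at most $r$ to $\rank(A\mid B)$, since their rows all lie in $R$.
\item Each column $j\in C$ appears twice in the $a\times2n$ juxtaposition, once in $A$ and once in $B$. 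The column parts therefore contribute at most $2c$.
\end{itemize}
By subadditivity, $\rank(A\mid B)\le r+2c$. Hence if $r+2c\le D-1$, no element of $N(S)\setminus S$ is supported in $\Omega(R,C)$. By the cleaning lemma, equivalently the erasure criterion for stabilizer codes, $\Omega(R,C)$ is then a correctable erasure region. Any subset of a correctable region is again correctable.

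\emph{Step 2: the erasure inequality.} Take disjoint row sets $R_1,R_2$ and disjoint column sets $C_1,C_2$ with $r_i+2c_i\le D-1$. Set $\Omega_1:=\Omega(R_1,C_1)$ and $\Omega_2:=\Omega(R_2,C_2)\setminus\Omega_1$. These are disjoint correctable regions.

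I would now invoke the standard quantum Singleton lemma: if two disjoint regions are each correctable, then $K$ is at most the number of qudits outside their union. This follows from the entropy argument (decoupling of the reference from each region plus subadditivity), or for stabilizer codes from the cleaning lemma applied to a symplectic logical basis. The complement of $\Omega_1\cup\Omega_2$ consists of exactly $(a-r_1-r_2)(n-c_1-c_2)$ cells, which gives \eqref{eq:erasure-general}. I expect this step to be the main obstacle: the lemma must be stated carefully for $\F_q$-qudits, with $K$ counted in $\F_q$-qudits, and the overlap between $\Omega(R_2,C_2)$ and $\Omega_1$ must be handled by the subset-of-correctable remark above.

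\emph{Step 3: the necessary conditions and the two specialisations.}
\begin{itemize}
\item If $a<2(D-1)$, choose $r_1,r_2\le D-1$ with $r_1+r_2=a$ and $c_i=0$. This gives $K\le0$, a contradiction, so $a\ge2(D-1)$.
\item If $n\le2\kappa$, choose $c_1,c_2\le\kappa$ with $c_1+c_2=n$ and $r_i=0$. This again gives $K\le0$, so $n\ge2\kappa+1$.
\item Taking $(r_i,c_i)=(D-1,0)$ gives \eqref{eq:row-erasure}.
\item Taking $(r_i,c_i)=(\iota,\kappa)$ gives \eqref{eq:col-erasure}; this choice is admissible because $2\iota\le a$ follows from $a\ge2(D-1)$ when $D\ge2$, and $\iota=0$ otherwise.
\end{itemize}

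\emph{Step 4: minimisation and comparison.} For fixed $c_1,c_2$ the product is minimised by taking $r_i$ maximal, namely $r_i=D-1-2c_i$ (the cap $r_1+r_2\le a$ is harmless by Step 3). With $c:=c_1+c_2\in[0,2\kappa]$ the bound becomes
\[
f(c)=\bigl(a-2(D-1)+2c\bigr)(n-c),
\]
a concave quadratic in $c$. Its minimum over the interval is therefore attained at an endpoint, $c=0$ or $c=2\kappa$, which are exactly \eqref{eq:row-erasure} and \eqref{eq:col-erasure}.

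For the comparison, expand $f(2\kappa)-f(0)$ using $D-1=2\kappa+\iota$:
\[
f(2\kappa)-f(0)=(a-2\iota)(n-2\kappa)-n(a-4\kappa-2\iota)=2\kappa\bigl(2n+2\iota-a\bigr).
\]
This is negative exactly when $\kappa\ge1$ and $a>2n+2\iota$.

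Finally, on the $a\times2$ layout the condition $n\ge2\kappa+1$ forces $\kappa=0$, hence $D-1=\iota\le1$ and $D\le2$.
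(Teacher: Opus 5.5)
Your proposal is correct and follows essentially the same route as the paper: the same rank count $\rank(A\mid B)\le|R|+2|C|$ for row--carrier regions, two disjoint correctable erasures fed into the entropic quantum Singleton argument, the same feasibility choices, and a concave-quadratic endpoint optimization that you parametrize by $c=c_1+c_2$ where the paper uses $s=j_1+j_2=2\kappa-c$. The differences are cosmetic. You take $\Omega_2=\Omega(R_2,C_2)\setminus\Omega_1$ instead of the paper's symmetric carve-out, and you should note that admissibility of $(r_i,c_i)=(\iota,\kappa)$ also needs $2\kappa\le n$, which your bound $n\ge2\kappa+1$ already supplies.
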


\begin{proof}
\emph{Step 1: correctable erasures.} Let $R\subseteq\{1,\dots,a\}$, $C\subseteq\{1,\dots,n\}$ and
let $P$ be a Pauli supported on the qudit set $(R\times\{1,\dots,n\})\cup(\{1,\dots,a\}\times C)$.
In its label $(A\mid B)$ every row outside $R$ vanishes outside the $2|C|$ columns of $(A\mid B)$
indexed by the carriers in $C$, so those rows span a space of dimension at most $2|C|$, while the
rows in $R$ contribute at most $|R|$; hence $\rank(A\mid B)\le|R|+2|C|$. If $|R|+2|C|\le D-1$, no
element of $N(S)\setminus S$ is supported on that qudit set, and its erasure is correctable
\cite{gottesman1997}.

\emph{Step 2: two disjoint erasures.} Let $r_1,r_2,c_1,c_2$ be as in the statement, choose disjoint
row sets $R_1,R_2$ with $|R_i|=r_i$ and disjoint carrier sets $C_1,C_2$ with $|C_i|=c_i$, and put
\[
E_i:=\bigl(R_i\times\{1,\dots,n\}\bigr)\cup
\bigl((\{1,\dots,a\}\setminus(R_1\cup R_2))\times C_i\bigr).
\]
The sets $E_1,E_2$ are disjoint, and each is contained in a set of the form of Step~1 with
$|R|+2|C|=r_i+2c_i\le D-1$, hence correctable. The complement consists of the
$(a-r_1-r_2)(n-c_1-c_2)$ qudits $(i,j)$ with $i\notin R_1\cup R_2$ and $j\notin C_1\cup C_2$, which we denote by $C^{\ast}$. Let $\mathsf R$ be a reference system maximally entangled with the code space, so
that $S(\mathsf R)=K\log q$. Correctability of the erasure $E_i$ is equivalent to
$I(\mathsf R{:}E_i)=0$ \cite{cerf1997,grassl2022entropic}, that is, to
$S(\mathsf RE_i)=S(\mathsf R)+S(E_i)$. Purity of the state on $\mathsf RE_1E_2C^{\ast}$ gives
$S(\mathsf RE_1)=S(E_2C^{\ast})\le S(E_2)+S(C^{\ast})$ and
$S(\mathsf RE_2)=S(E_1C^{\ast})\le S(E_1)+S(C^{\ast})$. Adding the two resulting inequalities and
canceling $S(E_1)+S(E_2)$ yields $2S(\mathsf R)\le2S(C^{\ast})$, so
$K\log q=S(\mathsf R)\le S(C^{\ast})\le|C^{\ast}|\log q$, which is \eqref{eq:erasure-general}.

\emph{Step 3: feasibility.} If $a<2(D-1)$, take $r_1=\lceil a/2\rceil$, $r_2=\lfloor a/2\rfloor$
and $c_1=c_2=0$; both $r_i\le D-1$, and \eqref{eq:erasure-general} gives $K\le0$, contradicting
$K>0$. If $n\le2\kappa$, take $c_1=\lceil n/2\rceil$, $c_2=\lfloor n/2\rfloor$ and $r_1=r_2=0$;
both $c_i\le\kappa$, so $r_i+2c_i\le2\kappa\le D-1$, and \eqref{eq:erasure-general} again gives
$K\le0$. Hence $a\ge2(D-1)$ and $n\ge2\kappa+1$.

\emph{Step 4: optimization.} The right-hand side of \eqref{eq:erasure-general} is nonincreasing in
$r_1+r_2$ and in $c_1+c_2$, so its minimum is attained at choices with $r_i+2c_i=D-1$, which force
$r_i=\iota+2j_i$ and $c_i=\kappa-j_i$ with $0\le j_i\le\kappa$. All such choices are admissible:
$r_1+r_2\le2\iota+4\kappa=2(D-1)\le a$ by Step~3 and $c_1+c_2\le2\kappa\le n-1$. With
$s:=j_1+j_2\in\{0,\dots,2\kappa\}$ the right-hand side equals
\[
f(s):=(a-2\iota-2s)(n-2\kappa+s),
\]
a concave quadratic in $s$, whose minimum over $\{0,\dots,2\kappa\}$ is therefore attained at an
endpoint: $f(0)=(a-2\iota)(n-2\kappa)$ is \eqref{eq:col-erasure} and
$f(2\kappa)=n\bigl(a-2(D-1)\bigr)$ is \eqref{eq:row-erasure}. Finally $f(0)<f(2\kappa)$ if and only
if $2\kappa(a-2\iota)>4\kappa n$, that is, if and only if $\kappa\ge1$ and $a>2n+2\iota$. For
$n=2$ the inequality $n\ge2\kappa+1$ forces $\kappa=0$, hence $D=1+\iota\le2$.
\end{proof}

\begin{remark}[Where Theorem~\ref{thm:rank-singleton} improves on the erasure bounds]
\label{rem:bound-regime}
Steps~1--3 of the proof of Theorem~\ref{thm:rank-singleton} use no relation between $a$ and $n$
and already give
\begin{equation}
\label{eq:rank-singleton-untransposed}
K\ \le\ an-n\bigl(d^{R}_X+d^{R}_Z-2\bigr),
\end{equation}
which for $d^{R}_X=d^{R}_Z$ is \eqref{eq:row-erasure}; The transposition step then replaces the factor $n$ by
$M=\max(a,n)$ by transposing the labels, a step available because both the rank metric and the
standard pairing are transpose invariant and with no counterpart in the Hamming metric. Two distinct mechanisms distinguish \eqref{eq:rank-singleton} from \eqref{eq:row-erasure} and warrant clear distinction. Writing $D:=\min(d^{R}_X,d^{R}_Z)$ and
$\Delta:=|d^{R}_X-d^{R}_Z|$, so that $d^{R}_X+d^{R}_Z-2=2(D-1)+\Delta$, the right-hand sides
satisfy
\begin{equation}
\label{eq:bound-gap}
\begin{split}
&\bigl[an-2n(D-1)\bigr]-\bigl[an-M(d^{R}_X+d^{R}_Z-2)\bigr]\\
&\qquad=2(M-n)(D-1)+M\Delta .
\end{split}
\end{equation}
The term $M\Delta$ is the gain from using the two sector distances separately, with no counterpart
for a general stabilizer code, whose single parameter $D$ does not see the two sectors; it is
positive already for $a\le n$, as at $a=n=5$, $d^{R}_X=3$, $d^{R}_Z=2$, where
\eqref{eq:row-erasure} gives $K\le15$ and \eqref{eq:rank-singleton} gives $K\le10$. The term
$2(M-n)(D-1)$ is the transpose-induced gain, positive exactly when $a>n$ and $D\ge2$, and part of
it is shared by every stabilizer code: for $a>2n+2\iota$ and $D\ge3$ the column bound \eqref{eq:col-erasure} is the sharper general
statement, and in the symmetric case $d^{R}_X=d^{R}_Z=D$ the part of the gain that is specific to
the CSS class is
\begin{equation}
\label{eq:bound-gap-css}
\begin{split}
&(a-2\iota)(n-2\kappa)-\bigl[an-2a(D-1)\bigr]\\
&\qquad=2a(\kappa+\iota)-2\iota(n-2\kappa),
\end{split}
\end{equation}
which equals $a(D-1)$ for odd $D$, against $2(a-n)(D-1)$ in \eqref{eq:bound-gap}. For $n<a\le2n+2\iota$, and for $D\le2$ on every layout, the row bound is the binding one and \eqref{eq:bound-gap} fully accounts for the gap; the two erasure bounds take the same value exactly when $\kappa=0$ or $a=2n+2\iota$, and otherwise the column bound is strictly weaker, as at $a=5$, $n=3$, $D=3$, where they evaluate to $3$ and $5$, respectively. In all cases \eqref{eq:rank-singleton} implies both
erasure bounds for a CSS pair, the difference at $d^{R}_X=d^{R}_Z=D$ being
\eqref{eq:bound-gap-css} for $a\ge n$ and $2\kappa(2n-a+2\iota)$ for $a<n$, both nonnegative.

The smallest instance of the transpose-induced gain is $a=3$, $n=2$, $d^{R}_X=d^{R}_Z=2$, where
both erasure bounds allow $K\le2$ whereas \eqref{eq:rank-singleton} gives $K\le0$; an exhaustive
enumeration over $\F_2^{3\times2}$ confirms it and finds every optimal pair to have the structure
of Theorem~\ref{thm:singleton-equality-mrd}.
\end{remark}

\begin{proposition}[The bound \eqref{eq:rank-singleton} fails for general stabilizer codes]
\label{prop:css-specific}
The following three statements hold.
\begin{enumerate}
\item[(i)] On the $3\times2$ layout over $\F_2$ there is a stabilizer code with $K=2$ and $D=2$,
generated by the four operators below, each written row by row with two carriers per row:
\[
\begin{array}{ll}
\mathrm{IY}\ \mathrm{ZX}\ \mathrm{XX}, &
\mathrm{XI}\ \mathrm{XX}\ \mathrm{YI},\\
\mathrm{ZY}\ \mathrm{XZ}\ \mathrm{IX}, &
\mathrm{YZ}\ \mathrm{XY}\ \mathrm{IY}.
\end{array}
\]
They commute pairwise and are independent, $|N(S)|=2^{8}$, and the code attains
\eqref{eq:row-erasure} and \eqref{eq:col-erasure} with equality, $6-2\cdot2\cdot1=2=K$. By
Theorem~\ref{thm:rank-singleton} every CSS pair on that layout with $K>0$ has
$\min(d^{R}_X,d^{R}_Z)=1$, hence $D=1$.
\item[(ii)] The codes of \cite{nizuka2026} occupy the $2m\times m$ layout with $K=2m(m-k)$ and
$D=k+1$ for $1\le k<m$, in the orientation in which the $2m$ coordinates of the Hermitian
self-orthogonal Gabidulin code are the layers, that is, the rows of the label matrix, and the $m$
symbols carried by each coordinate are the carriers. They therefore attain \eqref{eq:row-erasure}
with equality, whereas $an-\max(a,n)(2D-2)=2m(m-2k)<K$ for every $k\ge1$. For the instance $m=2$,
$k=1$ the four generators listed in \cite{nizuka2026} are pairwise commuting and independent, and
direct computation gives $|N(S)|=2^{12}$, $K=4$ and $D=2$ on the $4\times2$ layout, against a
value of $0$ for the right-hand side of \eqref{eq:rank-singleton}; for $m=3$, $k=2$ the
construction gives $|N(S)|=2^{24}$, $K=6$ and $D=3$ on the $6\times3$ layout. The stacked rank is
not transpose invariant, and the layout orientation is critical: when evaluated on the $m\times2m$ layout, the same labels of the instance $m=2$, $k=1$ have $D=1$.
\item[(iii)] Consequently a stabilizer code whose parameters violate
$K\le an-\max(a,n)(2D-2)$ is not CSS, and it is not carried to a CSS code on the same layout by
any relabeling that preserves $K$ and the \emph{stacked} rank $\rank(A\mid B)$ of every label:
the image would be a CSS pair with $d^{R}_X,d^{R}_Z\ge D$ on the same layout, which
contradicts Theorem~\ref{thm:rank-singleton}. The class of admissible relabelings must be defined rigorously: a change of the intra-carrier basis preserves the rank of each sector label separately (Corollary~\ref{cor:rank-invariance}) but
not in general the stacked rank, acting in the embedding of
Proposition~\ref{prop:sector-split-embedding} as $(A,B)\mapsto(T^{-\top}A,\,TB)$ with different
left factors on the two sectors. For $q=2$ and the $3\times3$ matrix $T=\mathrm{diag}(T_0,1)$ with
$T_0=\bigl(\begin{smallmatrix}1&1\\0&1\end{smallmatrix}\bigr)$, the label
$A=B=(1,0,0)^{\top}$ has $\rank(A\mid B)=1$ while $\rank(T^{-\top}A\mid TB)=2$, and applying this
carrier-uniform change to the code of (i) leaves both sector ranks unchanged while lowering its
stacked rank distance from $2$ to $1$. Exhaustively, of the $168$ elements of
$\mathrm{GL}_3(\F_2)$ exactly the six with $T^{-\top}=T$, the permutation matrices, preserve $D=2$
for that code; the remaining $162$ lower it to $1$. The stacked rank distance of a non-CSS code is therefore a property of the code \emph{together
with} the intra-carrier basis, whereas the sector rank distances are not; the row-wise conjugation
by a Clifford applied identically to every row does preserve it, acting on the joint label by one
invertible right multiplication \cite{gottesman1997,ketkar2006}.
\end{enumerate}
\end{proposition}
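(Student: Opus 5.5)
The proposition makes three claims. Parts (i) and (ii) are finite verifications backed by a general parameter count. Part (iii) is a logical consequence of Theorem~\ref{thm:rank-singleton} together with an explicit counterexample for basis changes. I would treat them in that order.

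\emph{Part (i).} Write the four generators as labels $(A\mid B)\in\F_2^{3\times2}\times\F_2^{3\times2}$, reading X as $(1,0)$, Z as $(0,1)$ and Y as $(1,1)$ in each qubit.
\begin{itemize}
\item Commutation: compute the six symplectic products $\mathrm{tr}(A_i^{\top}B_j)+\mathrm{tr}(A_j^{\top}B_i)$ and check that each vanishes mod $2$.
\item Independence: check that the $4\times12$ binary matrix of generators has rank $4$. This gives $|S|=2^4$, $|N(S)|=2^{12-4}=2^8$ and $K=6-4=2$.
\item Distance $D=2$: enumerate the $2^8-2^4=240$ elements of $N(S)\setminus S$ and show that none has $\rank(A\mid B)\le1$. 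A rank-one label is $(c\,u^{\top}\mid c\,w^{\top})$ with $c\in\F_2^3$ nonzero and $(u,w)\in\F_2^{4}\setminus\{0\}$, so there are only $7\cdot15$ candidates. It suffices to show that none of them lies in $N(S)\setminus S$; I would run this finite check by machine.
\item Exhibit one element of $N(S)\setminus S$ of rank $2$.
\end{itemize}
The equalities with \eqref{eq:row-erasure} and \eqref{eq:col-erasure} are then arithmetic: $D-1=1$ gives $\kappa=0$, $\iota=1$, so $6-2\cdot2\cdot1=2$ and $(3-2)\cdot2=2$. The CSS statement is immediate: on the $3\times2$ layout, \eqref{eq:rank-singleton-int} gives $d^R_X+d^R_Z\le 2+2-\lceil K/3\rceil\le3$ when $K>0$, so $\min(d^R_X,d^R_Z)=1$. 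Then $D=\min(d^R_X,d^R_Z)=1$ by the CSS identity of Section~\ref{subsec:matrix-css}.

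\emph{Part (ii).} I would take the parameters $K=2m(m-k)$ and $D=k+1$ from \cite{nizuka2026}, in the stated orientation $a=2m$, $n=m$.
\begin{itemize}
\item Row-erasure equality: $an-2n(D-1)=2m^2-2mk=K$.
\item Violation of \eqref{eq:rank-singleton}: with $\max(a,n)=2m$, the right-hand side is $2m^2-2m\cdot2k=2m(m-2k)$, which is strictly less than $2m(m-k)$ for every $k\ge1$.
\item Instances: for $m=2$ and $m=3$, redo the part (i) verification on the listed generators (commutation, rank, normalizer size, minimal stacked rank).
\item Orientation claim: transpose the $m=2$ labels to the $2\times4$ layout and find an element of $N(S)\setminus S$ of stacked rank $1$.
\end{itemize}

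\emph{Part (iii).} Suppose a relabeling $\Psi$ preserves $K$ and the stacked rank of every label, and sends the code to a CSS pair on the same layout. Then $\Psi$ preserves $N(S)\setminus S$ setwise, since these are symplectic/stabilizer data carried along by the relabeling. So the image has stacked distance $D$, and $\min(d^R_X,d^R_Z)=D$ gives $d^R_X,d^R_Z\ge D$. Substituting into \eqref{eq:rank-singleton} contradicts the assumed violation.

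For the basis-change remark:
\begin{itemize}
\item First verify the action $(A,B)\mapsto(T^{-\top}A,TB)$ from Proposition~\ref{prop:sector-split-embedding}. Transforming $Z$-coordinates by $T$ forces $X$-coordinates to transform by $T^{-\top}$ to keep the exponent pairing.
\item For the explicit $T$: $T_0^{-1}=T_0$ over $\F_2$, so $T^{-\top}(1,0,0)^{\top}=(1,1,0)^{\top}$ while $T(1,0,0)^{\top}=(1,0,0)^{\top}$. The juxtaposition therefore has rank $2$.
\item Exhibit a label of the code in (i) whose image has rank one, which lowers $D$ to $1$.
\end{itemize}
The counts $6$ versus $162$ over $\mathrm{GL}_3(\F_2)$ are exhaustive computations. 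The six matrices with $T^{-\top}=T$ are exactly the orthogonal ones, which over $\F_2$ in dimension $3$ are the permutation matrices; for these the two sectors transform identically, so ranks are preserved. The Clifford statement follows because a row-uniform local Clifford acts on each row of $(A\mid B)$ by the same invertible $2n\times2n$ right factor, which preserves rank.

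The main obstacle is not conceptual. It is the reliance on exhaustive finite computations: the $162$ count and the normalizer distances for $m=3$, where the stacked-rank minimum runs over $2^{24}$ labels. I would organize the rank-one search by enumerating rank-one labels $(c\,u^{\top}\mid c\,w^{\top})$, which is far smaller than enumerating $N(S)$, and test membership in $N(S)\setminus S$.
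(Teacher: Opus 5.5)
Your proposal is correct and follows essentially the same route as the paper. It uses finite machine verification of the generators, normalizer sizes and minimal stacked ranks in (i)--(ii), including the $7\cdot15=105$ rank-one candidates; elementary arithmetic for the erasure and Singleton identities; Theorem~\ref{thm:rank-singleton} for the CSS and relabeling assertions; and exhaustive computation over $\mathrm{GL}_3(\F_2)$. Your added observation explains why those six matrices preserve $D$: $T^{-\top}=T$ makes $T$ orthogonal, hence a permutation matrix over $\F_2$ in dimension three, and then $(TA\mid TB)=T(A\mid B)$. That the remaining $162$ lower $D$ to $1$ still rests on the exhaustive check, exactly as in the paper.
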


\begin{proof}
(i) Commutation and independence of the four generators are a direct computation, whence
$|N(S)|=2^{2\cdot6-4}=2^{8}$ and $K=6-4=2$. The condition $D\ge2$ is equivalent to asserting that no label of stacked rank one belongs to $N(S)\setminus S$; the labels of stacked rank one are the $7\cdot15=105$ matrices
$uc^{\top}$ with $0\neq u\in\F_2^{3}$ and $0\neq c\in\F_2^{4}$, and direct verification against the
four generators confirms that none lies in $N(S)\setminus S$. Since there exists a label of stacked rank two in
$N(S)\setminus S$, we obtain $D=2$. The CSS statement is Theorem~\ref{thm:rank-singleton} at $a=3$,
$n=2$: a CSS pair with $K>0$ and $d^{R}_X,d^{R}_Z\ge2$ would give $K\le6-3\cdot2=0$.

(ii) The identity $2m(m-k)=2m\cdot m-2m\bigl((k+1)-1\bigr)$ follows by elementary algebra, and
$2m\cdot m-\max(2m,m)\bigl(2(k+1)-2\bigr)=2m(m-2k)$. The parameters of the family for general
$(m,k)$ are those reported in \cite{nizuka2026}; the two instances quoted were recomputed from the
construction described there, by forming the Hermitian self-orthogonal Gabidulin code over
$\F_{2^{2m}}$ in a self-dual basis and applying the symplectic realization map, and by enumerating
all labels of stacked rank at most $D-1$ for the resulting binary stabilizer group. The
orientation claim for $m=2$, $k=1$ is the same enumeration performed on the transposed labels.

(iii) The first assertion is Theorem~\ref{thm:rank-singleton} applied to the image pair, a
relabeling preserving the stacked rank leaving $D$, and hence the sector distances of a CSS image,
at least $D$. The displayed $3\times3$ computation is straightforward, and the resulting effect on the code in (i), together with the exhaustive count over $\mathrm{GL}_3(\F_2)$, is verified computationally.
\end{proof}

The comparison becomes exact when the layout and the logical dimension are held fixed. The
following lemma supplies the general-stabilizer side on tall layouts.

\begin{lemma}[Row-block stacking]
\label{lem:row-stacking}
Let $Q_0$ be a stabilizer code on the $a_0\times n$ layout with $K_0>0$ logical qudits and stacked
rank distance $D_0$, and let $\ell\ge1$. Placing $\ell$ independent copies of $Q_0$ on disjoint
blocks of $a_0$ rows gives a stabilizer code on the $\ell a_0\times n$ layout with $K=\ell K_0$ and
stacked rank distance $D=D_0$.
\end{lemma}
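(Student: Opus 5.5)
The plan is to build the stacked code as a tensor product of the $\ell$ copies and to control its normalizer through a block decomposition of labels. Order the rows as blocks $R_1,\dots,R_\ell$ of $a_0$ rows each. Write any label $(A\mid B)$ on the $\ell a_0\times n$ layout as $\bigl((A_1\mid B_1),\dots,(A_\ell\mid B_\ell)\bigr)$, where $(A_i\mid B_i)\in\F_q^{a_0\times 2n}$ is the restriction to block $R_i$. The stabilizer group is $S=S_0^{(1)}\times\cdots\times S_0^{(\ell)}$, with copy $i$ acting on $R_i$ only. Commutation and independence are blockwise, so $S$ is abelian and scalar-free, its rank is $\ell$ times that of $S_0$, and the qudit count is $\ell a_0n$. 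Hence $K=\ell a_0 n-\ell\,\mathrm{rk}(S_0)=\ell K_0$.

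Next I determine the normalizer. A label commutes with every generator of copy $i$ exactly when its block $(A_i\mid B_i)$ commutes with them, because the symplectic form is a sum over qudits and the generators of copy $i$ vanish off $R_i$. Therefore $N(S)=N(S_0)^{\times\ell}$. An element of $N(S)$ lies outside $S$ if and only if at least one block $(A_i\mid B_i)$ lies in $N(S_0)\setminus S_0$.

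For the lower bound $D\ge D_0$, take $P\in N(S)\setminus S$ and an index $i$ with $(A_i\mid B_i)\in N(S_0)\setminus S_0$. The rows of $(A_i\mid B_i)$ are a subset of the rows of $(A\mid B)$, so
\[
\rank(A\mid B)\ \ge\ \rank(A_i\mid B_i)\ \ge\ D_0 .
\]
For the upper bound, choose $P_0\in N(S_0)\setminus S_0$ with $\rank P_0=D_0$ and place it in block $1$, with zeros in every other block. This label lies in $N(S)\setminus S$, and since the zero rows add nothing, its stacked rank is $D_0$. Together these give $D=D_0$.

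I expect no step to be a real obstacle. The one point to check carefully is that stacked rank is computed row-wise on the $a\times 2n$ juxtaposition, so that restricting to a row block is monotone and padding with zero rows preserves rank. This holds because both the $X$ part and the $Z$ part of a row lie in that same row of $(A\mid B)$. The argument never uses $K_0>0$ beyond guaranteeing that $N(S_0)\setminus S_0$ is nonempty, which is what makes $D_0$ finite.
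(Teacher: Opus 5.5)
Your proposal is correct and follows essentially the same route as the paper's proof. Both arguments use the direct-product structure on disjoint qudits to get $K=\ell K_0$ and $N(S)=\prod_i N(S_0^{(i)})$, bound $\rank(A\mid B)$ below by the rank of a block lying in $N(S_0)\setminus S_0$ (a row submatrix), and attain $D_0$ by padding a minimum-rank logical with zero rows.
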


\begin{proof}
The blocks occupy disjoint qudits, so the stabilizer group is the direct product of the blockwise
groups, $\dim S=\ell\dim S_0$ and $K=\ell K_0$; moreover a label lies in $N(S)$ if and only if each
of its block restrictions lies in the corresponding $N(S_i)$, and in $S$ if and only if each lies
in $S_i$. Hence a label of $N(S)\setminus S$ has, for some $i$, a block restriction in
$N(S_i)\setminus S_i$, whose stacked rank is at least $D_0$; deleting rows cannot increase rank, so
that restriction is a submatrix of the full label and $\rank(A\mid B)\ge D_0$. Conversely,
extending a minimum-rank element of $N(S_1)\setminus S_1$ by zero rows gives an element of
$N(S)\setminus S$ of stacked rank exactly $D_0$.
\end{proof}

\begin{corollary}[Exact optima at a fixed layout and logical dimension]
\label{cor:css-penalty-fixed}
Let $q=2$, let $n\ge2$ and $1\le k<n$, let $\ell\ge1$, and consider the $a\times n$ layout with
$a=2\ell n$, $N=an$ physical qubits and logical dimension $K=a(n-k)$. Then
\begin{enumerate}
\item[(i)] the largest stacked rank distance of a stabilizer code with these parameters is exactly
$k+1$;
\item[(ii)] the largest stacked rank distance of a CSS pair with these parameters is exactly
$\lfloor k/2\rfloor+1$;
\item[(iii)] for even $k=2\kappa$ and $\ell\ge2$ the optimal stabilizer codes attain the column
bound \eqref{eq:col-erasure} with equality, in the regime $a>2n$ where that bound is strictly
stronger than the row bound \eqref{eq:row-erasure}; the smallest instance is
$(a,n,K,D)=(12,3,12,3)$, where \eqref{eq:row-erasure} allows $K\le24$ and \eqref{eq:col-erasure}
allows $K\le12$.
\end{enumerate}
At a fixed layout, a fixed number of physical qubits and a fixed logical dimension, departing from the CSS
framework therefore permits a stacked rank distance larger by exactly $\lceil k/2\rceil$.
\end{corollary}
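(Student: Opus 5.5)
Each part splits into an upper bound taken from results already proved and a matching construction.

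For (i), the upper bound comes from Proposition~\ref{prop:erasure-bounds}. With $a=2\ell n\ge 2n$ the row bound \eqref{eq:row-erasure} reads $K\le an-2n(D-1)$. Substituting $K=a(n-k)=an-2\ell nk$ gives $2n(D-1)\le 2\ell n k$, i.e.\ $D\le \ell k+1$. This is too weak when $\ell\ge2$, so the row bound alone does not suffice. I would instead use \eqref{eq:erasure-general}, or the column bound \eqref{eq:col-erasure}. Write $D-1=2\kappa+\iota$; then \eqref{eq:col-erasure} gives $a(n-k)\le(a-2\iota)(n-2\kappa)$. Suppose $D\ge k+2$, so $2\kappa+\iota\ge k+1$. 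If $\iota=0$, then $2\kappa\ge k+1>k$, and $a(n-k)>a(n-2\kappa)$ is a contradiction. If $\iota=1$, then $2\kappa\ge k$, and $(a-2)(n-2\kappa)\le (a-2)(n-k)<a(n-k)$, again a contradiction. Hence $D\le k+1$.

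For the matching lower bound, I would take the $2n\times n$ building block from Proposition~\ref{prop:css-specific}(ii), i.e.\ the family of \cite{nizuka2026} with $m=n$. It has $K_0=2n(n-k)$ and $D_0=k+1$ for $1\le k<n$. Lemma~\ref{lem:row-stacking} with $\ell$ copies then gives the $2\ell n\times n$ layout with $K=\ell\cdot 2n(n-k)=a(n-k)$ and $D=k+1$. The main dependency is that this family's parameters hold for general $(m,k)$; I take that as reported in \cite{nizuka2026}, as the paper does.

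For (ii), the upper bound comes from Theorem~\ref{thm:rank-singleton} with $M=a$ (since $a>n$). Putting $d^{R}_X,d^{R}_Z\ge D$ into \eqref{eq:rank-singleton} gives $a(n-k)\le an-a(2D-2)$, i.e.\ $2(D-1)\le k$, so $D\le\lfloor k/2\rfloor+1$. For achievability I would use Theorem~\ref{thm:optimal-existence}. The triple $(M(L-u-v),u+1,v+1)$ with $M=a$, $L=n$ needs $u+v=k$ to give $K=a(n-k)$, and the constraint $u+v\le L-1$ holds because $k<n$. Take $u=\lfloor k/2\rfloor$ and $v=\lceil k/2\rceil$. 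Then $\min(d^{R}_X,d^{R}_Z)=\lfloor k/2\rfloor+1$, and $D=\min(d^{R}_X,d^{R}_Z)$ for CSS pairs, as recorded in Section~\ref{subsec:matrix-css}. That theorem is stated for every prime power, so $q=2$ is covered. The hardest step is therefore the existence of Singleton-optimal CSS pairs; it rests on the later Theorem~\ref{thm:optimal-existence}, which supplies commuting MRD pairs of dimensions $a\,v$ and $a\,u$ in $\F_2^{a\times n}$. These are built from a Gabidulin code and its Delsarte dual; for a primary pair (dimensions $au$ and $a(n-u)$) the inclusion $S_Z\subseteq S_X^{\pstd}$ is automatic. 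If the later result were unavailable, I would construct the pair directly: take nested Gabidulin codes $\mathcal G_{u}\subseteq\mathcal G_{n-v}$ over $\F_{2^a}$ of length $n\le a$, set $S_X=\mathcal G_u$ and $S_Z=\mathcal G_{n-v}^{\perp}$ (the Delsarte dual), and apply Lemma~\ref{lem:gram-reduction} to pass to the standard pairing. The dimensions are then $au$ and $av$, which attains equality.

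For (iii), let $k=2\kappa$, so the optimum from (i) is $D=k+1$, with $\iota=0$ and the same $\kappa$. The column bound \eqref{eq:col-erasure} then reads $(a)(n-2\kappa)=a(n-k)=K$, which is attained with equality. By Proposition~\ref{prop:erasure-bounds}, this bound is strictly stronger than the row bound if and only if $\kappa\ge1$ and $a>2n$. Both hold, since $a=2\ell n$ with $\ell\ge2$ and $k\ge2$. For the instance $n=3$, $k=2$, $\ell=2$ we get $a=12$, $K=12$, $D=3$; the row bound gives $36-2\cdot3\cdot2=24$ and the column bound gives $12$. This is the smallest instance because even $k\ge2$ with $k<n$ forces $n\ge3$, and $\ell\ge2$ is required. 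Finally, $(k+1)-(\lfloor k/2\rfloor+1)=\lceil k/2\rceil$.
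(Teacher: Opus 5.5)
Your proof is correct and follows the paper's argument essentially step for step. For (i), you use the column bound \eqref{eq:col-erasure} with a parity case split for the upper bound (your contrapositive form even avoids the paper's use of $n-2\kappa>0$), and stacking via Lemma~\ref{lem:row-stacking} of the codes of \cite{nizuka2026} for attainability; for (ii), Theorem~\ref{thm:rank-singleton} with $M=a$ together with Theorem~\ref{thm:optimal-existence} at $(u,v)=(\lfloor k/2\rfloor,\lceil k/2\rceil)$; and for (iii), the $\iota=0$ specialization. Your fallback Gabidulin construction is just the transposed form of \eqref{eq:construction} and is not needed.
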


\begin{proof}
(i) \emph{Upper bound.} Let $D'$ be the stacked rank distance of such a code and write
$D'-1=2\kappa'+\iota'$ with $\iota'\in\{0,1\}$. Proposition~\ref{prop:erasure-bounds} gives
$n\ge2\kappa'+1$ and $K\le(a-2\iota')(n-2\kappa')$. For $\iota'=0$ this reads
$a(n-k)\le a(n-2\kappa')$, so $2\kappa'\le k$ and $D'=2\kappa'+1\le k+1$. For $\iota'=1$ it reads
$a(n-k)\le(a-2)(n-2\kappa')$, that is $a(2\kappa'-k)+2(n-2\kappa')\le0$; since $n-2\kappa'>0$ this
forces $2\kappa'<k$, hence $D'=2\kappa'+2\le k+1$. \emph{Attainability.} By
Proposition~\ref{prop:css-specific}(ii) the codes of \cite{nizuka2026} occupy the $2n\times n$
layout with $K_0=2n(n-k)$ and $D_0=k+1$; stacking $\ell$ of them on disjoint row blocks gives, by
Lemma~\ref{lem:row-stacking}, a stabilizer code on the $a\times n$ layout with
$K=\ell\cdot2n(n-k)=a(n-k)$ and $D=k+1$.

(ii) Since $a=2\ell n>n$, Theorem~\ref{thm:rank-singleton} reads
$a(n-k)=K\le an-a(d^{R}_X+d^{R}_Z-2)$, whence $d^{R}_X+d^{R}_Z\le k+2$ and
$\min(d^{R}_X,d^{R}_Z)\le\lfloor(k+2)/2\rfloor=\lfloor k/2\rfloor+1$. Conversely put
$u:=\lfloor k/2\rfloor$ and $v:=\lceil k/2\rceil$, so that $u+v=k\le n-1=L-1$;
Theorem~\ref{thm:optimal-existence}, whose proof is independent of the present corollary, then
provides a CSS pair on the $a\times n$ layout with $K=M(L-u-v)=a(n-k)$, $d^{R}_X=u+1$ and
$d^{R}_Z=v+1$, whose stacked rank distance is $\min(d^{R}_X,d^{R}_Z)=\lfloor k/2\rfloor+1$ by
Section~\ref{subsec:matrix-css}.

(iii) At $k=2\kappa$ one has $D=k+1$ odd, so $\iota=0$ and \eqref{eq:col-erasure} reads
$K\le a(n-2\kappa)=a(n-k)$, an equality for the codes of~(i). That the column bound is then
strictly stronger than the row bound exactly for $a>2n$ is Proposition~\ref{prop:erasure-bounds},
and $a=2\ell n>2n$ for $\ell\ge2$. The displayed instance is $n=3$, $k=2$, $\ell=2$, the base code
being the $6\times3$ instance with $K_0=6$ and $D_0=3$ of
Proposition~\ref{prop:css-specific}(ii).
\end{proof}

The statement is a separation in the distance--dimension trade-off and not a claim of hardware
advantage: fixing $N$, $K$ and the layout does not fix the check weights, the extraction depth or
the gate error rates.
The contrast with the Hamming metric concerns the \emph{form} of the bound: there the asymmetric
bound carries no factor depending on the orientation of the layout
\cite{sarvepalli2009asymmetric,ezerman2013aqmds}, whereas here the erasure bounds of
Proposition~\ref{prop:erasure-bounds} are not transpose symmetric, a carrier erasure costing a
factor two in stacked rank because every carrier contributes two columns to $(A\mid B)$, while a
CSS pair is transpose symmetric and incurs the penalty $\max(a,n)(d^{R}_X+d^{R}_Z-2)$ in both
orientations. In the symmetric case $d^{R}_X=d^{R}_Z=D$ the resulting cost of CSS structure is
$2(a-n)(D-1)$ for $n<a\le2n+2\iota$ and \eqref{eq:bound-gap-css} for $a>2n+2\iota$, in particular
$a(D-1)$ for odd $D$, and Proposition~\ref{prop:css-specific} shows that it is incurred already on
the $3\times2$ and $2m\times m$ layouts, on which the row bound is the binding erasure bound.

\section{The Structure of Singleton-Optimal Pairs}
\label{sec:equality}

Equality in \eqref{eq:rank-singleton} reflects a fundamental structural rigidity: it implies that both check spaces must be maximum-rank-distance, ensures achievability for every admissible parameter triple, and guarantees
purity. The three statements of this section are used throughout
Sections~\ref{sec:interfaces}--\ref{sec:witnesses}.

\begin{theorem}[Singleton equality forces MRD check spaces]
\label{thm:singleton-equality-mrd}
Let $S_X,S_Z\subseteq\F_q^{a\times n}$ be as in Theorem~\ref{thm:rank-singleton}, with $K>0$, and
write $M:=\max(a,n)$ and $L:=\min(a,n)$.
\begin{enumerate}
\item[(i)] \emph{(Necessity.)} If equality holds in \eqref{eq:rank-singleton}, then, with
$u:=d^{R}_X-1$ and $v:=d^{R}_Z-1$,
\begin{equation}
\label{eq:mrd-equality}
\dim_{\F_q}S_X=Mv,\qquad \dim_{\F_q}S_Z=Mu,
\end{equation}
and each nonzero check space is maximum-rank-distance, with
\begin{equation}
\label{eq:mrd-distances}
d_R(S_X)=L-v+1,\qquad d_R(S_Z)=L-u+1;
\end{equation}
the case $v=0$ reads $S_X=\{0\}$ and $u=0$ reads $S_Z=\{0\}$.
\item[(ii)] \emph{(Sufficiency.)} Conversely, let $u,v\ge0$ be integers with $u+v<L$ and suppose
that $\dim_{\F_q}S_X=Mv$, $\dim_{\F_q}S_Z=Mu$, and that each nonzero check space is
maximum-rank-distance, so that \eqref{eq:mrd-distances} holds. Then $d^{R}_X=u+1$, $d^{R}_Z=v+1$,
$K=M(L-u-v)>0$, and equality holds in \eqref{eq:rank-singleton}.
\end{enumerate}
\end{theorem}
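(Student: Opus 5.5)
By the transposition argument of Theorem~\ref{thm:rank-singleton} we may assume $a\le n$, so $M=n$ and $L=a$.

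\emph{Necessity.} Assume equality and put $u:=d^{R}_X-1$, $v:=d^{R}_Z-1$. Since $K>0$, Step~4 of that proof gives $u+v<a$, so disjoint row sets $A,B$ with $|A|=u$ and $|B|=v$ exist. In the equality case \eqref{eq:step3a} and \eqref{eq:step3b} are tight. Combined with \eqref{eq:step2}, this gives
\[
\dim S_X=\dim\pi_B(S_X)+un-\dim\pi_A(S_Z).
\]
I will not try to separate the two sectors from this relation directly. Instead I use that $d_R(S_X)$ and $d_R(S_Z)$ are controlled by the logical distances of the opposite sector. Every nonzero element of $S_Z$ lies in $S_X^{\pstd}$. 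If it also lay outside... it lies in $S_Z$, so this gives no direct control. The cleaner route is a second counting argument.

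\emph{Step (a): $\pi_B$ is injective on $S_X$ and $\pi_A$ is injective on $S_Z$.} Here $\pi_B$ restricts to the $v$ rows of $B$, and we need $S_X\cap\ker\pi_B=\{0\}$. By \eqref{eq:singleton-equality}, $S_X\cap\ker\pi_B=S_X\cap U_A$. Applying this with $A$ replaced by any set $A'$ of size $u$ disjoint from $B$, the intersection $S_X\cap\ker\pi_B$ lies in $\bigcap_{A'}U_{A'}$. If $u<a-v$ there are at least two such choices of $A'$, and the intersection over all of them is $\{0\}$. The boundary case $u=a-v$ is excluded because $K>0$. Hence $\dim S_X\le vn$, and symmetrically $\dim S_Z\le un$. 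Since equality forces $\dim S_X+\dim S_Z=n(u+v)$, both inequalities are equalities, which is \eqref{eq:mrd-equality}.

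\emph{Step (b): MRD distances.} Injectivity of $\pi_B$ on $S_X$ holds for every $v$-row set $B$, but rank requires more than row sets. I apply the argument in a changed row basis: for any $P\in\mathrm{GL}_a(\F_q)$, the pair $(PS_X,P^{-\top}S_Z)$ is again a CSS pair with the same $K$ and the same sector distances, so it is also optimal. Therefore, for every $v$-dimensional row functional space $W$, the restriction of $S_X$ to $W$ is injective. Now suppose a nonzero $X\in S_X$ had $\rank X\le a-v$. Its row space has dimension at most $a-v$, so there is a $v$-dimensional $W$ annihilating it, and $X$ maps to zero, a contradiction. Hence $d_R(S_X)\ge a-v+1$. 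The matrix Singleton bound \eqref{eq:matrix-singleton} gives $\dim S_X\le n(a-d_R(S_X)+1)$, which forces equality. The same argument applies to $S_Z$, and the cases $u=0$ or $v=0$ give zero spaces.

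\emph{Sufficiency.} Assume $u+v<a$, $\dim S_X=nv$, $\dim S_Z=nu$, and that both spaces are MRD. Then $K=n(a-u-v)>0$. By the Delsarte duality recalled in Section~\ref{subsec:gabidulin-regimes}, $S_Z^{\pstd}$ is MRD of dimension $n(a-u)$, so $d_R(S_Z^{\pstd})=u+1$. Since $S_Z^{\pstd}\setminus S_X\subseteq S_Z^{\pstd}\setminus\{0\}$, we get $d^{R}_X\ge u+1$, and symmetrically $d^{R}_Z\ge v+1$. Theorem~\ref{thm:rank-singleton} gives $d^{R}_X+d^{R}_Z\le a+2-(a-u-v)=u+v+2$, forcing $d^{R}_X=u+1$, $d^{R}_Z=v+1$, and equality in \eqref{eq:rank-singleton}.

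\emph{Main obstacle.} The delicate step is (a)–(b): the equality condition \eqref{eq:singleton-equality} must be transported to arbitrary row bases via $\mathrm{GL}_a$ invariance. That invariance holds because left multiplication by $P$ on $S_X$ and by $P^{-\top}$ on $S_Z$ preserves the standard pairing, so $(PS_X)^{\pstd}=P^{-\top}S_X^{\pstd}$, and it also preserves rank. The intersection argument over choices of $A'$ additionally needs $u\ge1$ and $a-v>u$; when $u=0$ it is immediate, since $U_\emptyset=\{0\}$.
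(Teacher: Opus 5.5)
Your proof is correct and, for necessity, follows the paper's argument essentially step for step. You intersect \eqref{eq:singleton-equality} over all admissible $A$ to make $\pi_B$ injective on $S_X$, squeeze the dimensions, then use the action $(TS_X,T^{-\top}S_Z)$ of $\mathrm{GL}_a(\F_q)$ to turn row-set injectivity into $d_R(S_X)\ge L-v+1$, and close with \eqref{eq:matrix-singleton}.

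For sufficiency your route differs slightly but is equally valid. You squeeze $d^{R}_X\ge d_R(S_Z^{\pstd})=u+1$ and $d^{R}_Z\ge v+1$ against the already-proved bound \eqref{eq:rank-singleton-int}. The paper instead places a minimum-rank element of $S_Z^{\pstd}$ outside $S_X$ directly, via $d_R(S_X)=L-v+1>u+1$. You should state the $u=0$ case separately, where $S_Z^{\pstd}$ is the whole space and Delsarte duality is not needed, and delete the abandoned sentences at the start of the necessity part.
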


\begin{proof}
By the transposition step of Theorem~\ref{thm:rank-singleton} assume $a\le n$, so $M=n$, $L=a$;
transposition preserves \eqref{eq:mrd-equality} and \eqref{eq:mrd-distances}.

\emph{Necessity of \eqref{eq:mrd-equality}.} Equality means $\dim S_X+\dim S_Z=n(u+v)$ and, by
Theorem~\ref{thm:rank-singleton}, \eqref{eq:singleton-equality} holds for every disjoint pair
$A,B$ with $|A|=u$, $|B|=v$. Fix $B$; since $K>0$ we have $u+v<a$, so $a-v>u$ and every index
outside $B$ is avoided by some admissible $A$, whence $\bigcap_AU_A=\{0\}$ and intersecting
\eqref{eq:singleton-equality} over all such $A$ gives $S_X\cap\ker\pi_B=\{0\}$, so $\pi_B$ is
injective on $S_X$ and $\dim S_X\le vn$. By symmetry, $\dim S_Z\le un$, and adding this to
$\dim S_X+\dim S_Z=n(u+v)$ forces both to be equalities.

\emph{Necessity of \eqref{eq:mrd-distances}.} For $T\in\mathrm{GL}_a(\F_q)$ the assignment $(S_X,S_Z)\mapsto(TS_X,T^{-\top}S_Z)$ preserves the
standard pairing, since $\mathrm{tr}((TX)^{\top}T^{-\top}Z)=\mathrm{tr}(X^{\top}Z)$, preserves
rank and hence $K$, and preserves both sector distances, because
$(T^{-\top}S_Z)^{\pstd}=T\,S_Z^{\pstd}$ and $(TS_X)^{\pstd}=T^{-\top}S_X^{\pstd}$ carry the two
logical sets to $T(S_Z^{\pstd}\setminus S_X)$ and $T^{-\top}(S_X^{\pstd}\setminus S_Z)$. The
transformed pair therefore also attains equality, so $TS_X\cap\ker\pi_B=\{0\}$ for every $T$ and
every $B$ with $|B|=v$.

If some nonzero $X\in S_X$ had $\rank X\le a-v$, its column space, of dimension at most $a-v$,
could be carried by some $T\in\mathrm{GL}_a(\F_q)$ into the span of the coordinate vectors indexed
by $\{1,\dots,a\}\setminus B$; then $0\neq TX\in TS_X\cap\ker\pi_B$, a contradiction. Hence
$d_R(S_X)\ge a-v+1$ whenever $S_X\neq\{0\}$. Combining with \eqref{eq:matrix-singleton}, which gives
$nv=\dim S_X\le n\bigl(a-d_R(S_X)+1\bigr)$ and hence $d_R(S_X)\le a-v+1$, yields
$d_R(S_X)=a-v+1=L-v+1$ and equality in \eqref{eq:matrix-singleton}, that is, $S_X$ is MRD. By
symmetry, $d_R(S_Z)=L-u+1$ and the MRD property of $S_Z$ follows. If $v=0$ then
\eqref{eq:mrd-equality} reads $\dim S_X=0$, and dually for $u=0$.

\emph{Sufficiency.} Assume the hypotheses of (ii). Then $K=an-M(u+v)=M(L-u-v)$, which is positive
because $u+v<L$. If $u>0$ then $S_Z$ is MRD of dimension $Mu$, so the Delsarte dual of an MRD code
being MRD \cite{gabidulin1985,delsarte1978}, $S_Z^{\pstd}$ is MRD of dimension $M(L-u)$ with
$d_R(S_Z^{\pstd})=u+1$; if $u=0$ then $S_Z=\{0\}$, $S_Z^{\pstd}$ is the whole space and
$d_R(S_Z^{\pstd})=1=u+1$. If $v>0$ then $S_X$ is MRD with $d_R(S_X)=L-v+1>u+1$, the inequality
because $u+v<L$, so no minimum-rank element of $S_Z^{\pstd}$ lies in $S_X$ and $d^{R}_X=u+1$; if $v=0$ then $S_X=\{0\}$ and again $d^{R}_X=u+1$. Symmetrically, $d^{R}_Z=v+1$, whence
$an-M(d^{R}_X+d^{R}_Z-2)=an-M(u+v)=K$.
\end{proof}

Theorem~\ref{thm:singleton-equality-mrd} characterizes optimality but leaves open which parameter
triples occur. They all do, over every field and on every layout, and the witnesses are nested
Gabidulin codes expanded in a pair of trace-dual bases.

\begin{theorem}[The Singleton-optimal parameter set]
\label{thm:optimal-existence}
Let $q$ be a prime power, let $a,n\ge1$, $M:=\max(a,n)$, $L:=\min(a,n)$, and let $u,v\ge0$ be
integers. The following are equivalent.
\begin{enumerate}
\item[(i)] $u+v\le L-1$.
\item[(ii)] There is a CSS pair $S_Z\subseteq S_X^{\pstd}$ in $\F_q^{a\times n}$ with $K>0$,
$d^{R}_X=u+1$ and $d^{R}_Z=v+1$ attaining equality in \eqref{eq:rank-singleton}.
\end{enumerate}
Every such pair has $K=M(L-u-v)$, so the set of triples $(K,d^{R}_X,d^{R}_Z)$ attained with
equality is exactly $\bigl\{\bigl(M(L-u-v),\,u+1,\,v+1\bigr):u,v\ge0,\ u+v\le L-1\bigr\}$. A pair
may be taken $\F_{q^{M}}$-linear along the longer side: for $a\le n$ let
$\boldsymbol\alpha\in\F_{q^n}^{\,a}$ have $\F_q$-independent entries, let
$0\neq\boldsymbol h\in\F_{q^n}^{\,a}$ satisfy $\sum_ih_i\alpha_i^{q^{j}}=0$ for $0\le j\le a-2$,
put $\boldsymbol\gamma:=\boldsymbol h^{\,q^{-(a-v-1)}}$ componentwise, and let $B$ be an
$\F_q$-basis of $\F_{q^n}$ with trace-dual basis $B^{\ast}$. Then the matrix codes
\begin{equation}
\label{eq:construction}
\begin{split}
S_X&:=\bigl\{[\mathrm{coord}_B f(\alpha_i)]_{i\le a}:f\in\mathcal L_{<v}\bigr\},\\
S_Z&:=\bigl\{[\mathrm{coord}_{B^{\ast}}g(\gamma_i)]_{i\le a}:g\in\mathcal L_{<u}\bigr\},
\end{split}
\end{equation}
whose $i$-th rows are the coordinate vectors of the $i$-th code symbols, form such a pair, with
$S_X=\{0\}$ when $v=0$ and $S_Z=\{0\}$ when $u=0$; for $a>n$, the corresponding pair is obtained by matrix transposition. Consequently, trace-orthogonal Singleton-optimal check spans in $\Fqm^n$ exist for every prime power $q$, every $m\le n$ and every admissible $(u,v)$.
\end{theorem}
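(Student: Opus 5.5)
The equivalence splits into two directions, and I would take the easy one first. For (ii)$\Rightarrow$(i), Theorem~\ref{thm:singleton-equality-mrd}(i) applied to an optimal pair gives $\dim S_X=Mv$ and $\dim S_Z=Mu$. Hence $K=an-M(u+v)=M(L-u-v)$, and $K>0$ forces $u+v\le L-1$. The same computation shows that every optimal pair has $K=M(L-u-v)$, which identifies the displayed set of triples once (i)$\Rightarrow$(ii) is established.

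For (i)$\Rightarrow$(ii) I reduce to $a\le n$, since transposition preserves rank, the standard pairing, $K$ and both sector distances, exactly as in the transposition step of Theorem~\ref{thm:rank-singleton}. The plan is to verify that the pair \eqref{eq:construction} meets the hypotheses of Theorem~\ref{thm:singleton-equality-mrd}(ii), which then delivers $d^{R}_X=u+1$, $d^{R}_Z=v+1$ and equality in \eqref{eq:rank-singleton}. This needs three facts.

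\emph{MRD property and dimensions.} Evaluation of $\mathcal L_{<v}$ at the $a\le n$ independent points $\alpha_i\in\F_{q^n}$ is injective, because a nonzero $f$ of $q$-degree $<v\le a-1$ has a root space of dimension $<v$. It is therefore a Gabidulin code of dimension $nv$ and rank distance $a-v+1$, and expansion in $B$ preserves rank, so $S_X$ is MRD in $\F_q^{a\times n}$. The same argument applies to $S_Z$ once I know that $\gamma_1,\dots,\gamma_a$ are $\F_q$-independent. Since Frobenius is additive and bijective, this is equivalent to independence of the $h_i$. For that I use the standard Gabidulin duality: a dependence $\sum_i c_ih_i=0$ with $c\in\F_q^a$ nonzero would give the system $\sum_i h_i\alpha_i^{q^j}=0$, $0\le j\le a-2$, a solution space of dimension at least two, contradicting the invertibility of the $a\times a$ Moore matrix of the $\alpha_i$. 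I expect this to be the fiddliest point, and I would argue it via the Moore matrix after a change of basis that makes one $h_i$ vanish.

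\emph{Trace orthogonality.} Because $B$ and $B^{\ast}$ are trace-dual, the standard pairing of the two coordinate matrices equals $\sum_i\Tr\bigl(f(\alpha_i)g(\gamma_i)\bigr)$. By bilinearity it suffices to treat monomials $f=x^{q^j}$ with $j<v$ and $g=cx^{q^l}$ with $l<u$. Put $s:=a-v-1-l$, so that $\gamma_i^{q^{l+s}}=h_i$. Using the Frobenius invariance of the trace,
\[
\sum_i\Tr\bigl(c\,\alpha_i^{q^j}\gamma_i^{q^l}\bigr)=\Tr\Bigl(c^{q^s}\sum_i h_i\alpha_i^{q^{j+s}}\Bigr).
\]
The exponent satisfies $1\le a-v-u\le j+s\le a-2$, because $l\le u-1$, $j\le v-1$ and $u+v\le a-1$, so the inner sum vanishes by the defining equations of $\boldsymbol h$. (If $s$ is negative one uses $q^{s}$ modulo $n$; the vanishing needs only $0\le j+s\le a-2$.) This gives $S_Z\subseteq S_X^{\pstd}$.

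\emph{Degenerate cases.} When $u=0$ or $v=0$ the corresponding space is $\{0\}$, which Theorem~\ref{thm:singleton-equality-mrd}(ii) explicitly allows.

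For the final assertion about $\Fqm^n$, I take $a=m\le n$ and transpose nothing. Transporting the standard-pairing pair back through $\Phi_B$, using Lemma~\ref{lem:gram-reduction} with the Gram factor $G_B$ absorbed into $S_X$, produces a trace-orthogonal pair of spans in $\Fqm^n$ with the same $K$ and the same sector distances. That pair is therefore Singleton-optimal for every admissible $(u,v)$.
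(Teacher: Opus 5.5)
Your proposal is correct and follows the paper's proof. It uses the same transposition reduction, the same check of the hypotheses of Theorem~\ref{thm:singleton-equality-mrd}(ii), the same Frobenius-shift identity for orthogonality, and the same Gram reduction for the final assertion; for orthogonality you test $\mathrm{Gab}(\boldsymbol\gamma,u)$ against $\mathrm{Gab}(\boldsymbol\alpha,v)$ monomial by monomial, whereas the paper identifies the whole dual $\mathrm{Gab}(\boldsymbol\alpha,v)^{\perp}=\mathrm{Gab}(\boldsymbol\gamma,a-v)$.

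The one real variation is the $\F_q$-independence of the $h_i$, which the paper reads off Delsarte duality: the dual of $\mathrm{Gab}(\boldsymbol\alpha,a-1)$ is MRD of distance $a$, so $\boldsymbol h$ has rank weight $a$. Your Moore-matrix route also works, but not as you first phrase it (an $a\times a$ Moore matrix, or a two-dimensional solution space). Stated correctly: choose $T\in\mathrm{GL}_a(\F_q)$ whose inverse has first column $c$, so that $(T^{-\top}\boldsymbol h)_1=\sum_ic_ih_i=0$. Then $T\boldsymbol\alpha$ and $T^{-\top}\boldsymbol h$ satisfy the same $a-1$ equations, and the invertible $(a-1)\times(a-1)$ Moore matrix of the last $a-1$ entries of $T\boldsymbol\alpha$ forces $\boldsymbol h=0$, a contradiction.
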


\begin{proof}
(ii)$\Rightarrow$(i): equality in \eqref{eq:rank-singleton} with $K>0$ reads $M(L-u-v)=K>0$ by
Theorem~\ref{thm:singleton-equality-mrd}.

(i)$\Rightarrow$(ii): assume $a\le n$ and $u,v\ge1$, the cases $u=0$ or $v=0$ being obtained by
replacing the corresponding space by $\{0\}$ and the case $a>n$ by transposition, which preserves
rank, the standard pairing, $K$ and both sector distances
(Theorem~\ref{thm:rank-singleton}). Identify $\F_q^{a\times n}$ with $\F_{q^n}^{\,a}$ by reading
the $i$-th row of a matrix as the $B$-coordinate vector of a symbol $x_i\in\F_{q^n}$; then
$\rank X=\dim_{\F_q}\langle x_1,\dots,x_a\rangle_{\F_q}$, and the same holds for the basis
$B^{\ast}$, rank being independent of the basis (Corollary~\ref{cor:rank-invariance}).

Write $\mathrm{Gab}(\boldsymbol\alpha,k):=\{(f(\alpha_i))_i:f\in\mathcal L_{<k}\}$, of
$\F_q$-dimension $nk$ and minimum rank distance $a-k+1$ \cite{gabidulin1985}. The dot product
$x\cdot y:=\sum_ix_iy_i$ on $\F_{q^n}^{\,a}$ satisfies
$\langle X,Y\rangle_{\mathrm{std}}=\sum_i\Tr(x_iy_i)=\Tr(x\cdot y)$ when $X$ carries
$B$-coordinates and $Y$ carries $B^{\ast}$-coordinates, by the definition of the trace-dual basis;
so a dot-orthogonal pair of $\F_{q^n}$-linear codes yields a pair of matrix codes orthogonal for
$\langle\cdot,\cdot\rangle_{\mathrm{std}}$.

\emph{The vector $\boldsymbol h$ and its shifts.} The $a-1$ conditions
$\sum_ih_i\alpha_i^{q^{j}}=0$, $0\le j\le a-2$, signify that $\boldsymbol h$ is dot-orthogonal to
$\mathrm{Gab}(\boldsymbol\alpha,a-1)$, whose dot-orthogonal space has $\F_{q^n}$-dimension one. That
space is the Delsarte dual of an MRD code of $\F_q$-dimension $n(a-1)$ and minimum rank distance
$2$, hence is MRD of $\F_q$-dimension $n$ and minimum rank distance $a$
\cite{gabidulin1985,delsarte1978}; its nonzero elements therefore have rank weight $a$, so the
entries of $\boldsymbol h$ are $\F_q$-independent and the $a$ Frobenius shifts
$\boldsymbol h^{\,q^{t}}$, $t\in\mathbb Z/n\mathbb Z$ read modulo the order of the Frobenius, form
an $\F_{q^n}$-basis of $\F_{q^n}^{\,a}$ when $t$ ranges over any $a$ consecutive values, the
corresponding matrix being a Moore matrix of a vector with $\F_q$-independent entries
\cite{gabidulin1985}. For $v-a+1\le t\le0$ and $0\le j\le v-1$ one has $t\le j\le t+a-2$, so
\[
\sum_ih_i^{q^{t}}\alpha_i^{q^{j}}
=\Bigl(\sum_ih_i\alpha_i^{q^{j-t}}\Bigr)^{q^{t}}=0 ,
\]
and the $a-v$ vectors $\boldsymbol h^{\,q^{t}}$, $v-a+1\le t\le0$, are dot-orthogonal to
$\mathrm{Gab}(\boldsymbol\alpha,v)$ and $\F_{q^n}$-independent. Their span is
$\mathrm{Gab}(\boldsymbol\gamma,a-v)$ with $\boldsymbol\gamma$ as stated, and comparing dimensions
with $\dim_{\F_q}\mathrm{Gab}(\boldsymbol\alpha,v)^{\perp}=n(a-v)$ gives
$\mathrm{Gab}(\boldsymbol\alpha,v)^{\perp}=\mathrm{Gab}(\boldsymbol\gamma,a-v)$.

\emph{Conclusion.} Since $u+v\le a-1$ we have $u\le a-v$, so
$S_Z=\mathrm{Gab}(\boldsymbol\gamma,u)\subseteq\mathrm{Gab}(\boldsymbol\gamma,a-v)$ is
dot-orthogonal to $S_X=\mathrm{Gab}(\boldsymbol\alpha,v)$ and hence
$S_Z\subseteq S_X^{\pstd}$ in the chosen coordinates. Both are MRD with
$\dim_{\F_q}S_X=nv=Mv$, $\dim_{\F_q}S_Z=nu=Mu$, $d_R(S_X)=a-v+1=L-v+1$ and $d_R(S_Z)=L-u+1$, so
Theorem~\ref{thm:singleton-equality-mrd}(ii) gives $d^{R}_X=u+1$, $d^{R}_Z=v+1$,
$K=M(L-u-v)>0$ and equality in \eqref{eq:rank-singleton}. For the last assertion take $a=m$ and
apply Lemma~\ref{lem:gram-reduction} to the pair $(G^{-1}S_X,S_Z)$, where $G$ is the trace Gram
matrix of an $\F_q$-basis of $\Fqm$: it is trace-orthogonal in $\Fqm^n$ and has the same logical
dimension and the same sector rank distances.
\end{proof}

\begin{corollary}[Singleton-optimal pairs are pure]
\label{cor:purity}
If a CSS pair with $K>0$ attains equality in \eqref{eq:rank-singleton}, then
$d_R(S_Z^{\pstd})=d^{R}_X$ and $d_R(S_X^{\pstd})=d^{R}_Z$. In particular a degenerate code, one
with a stabilizer label of rank weight below its sector distance, never attains the bound, although
the bound is valid for degenerate codes.
\end{corollary}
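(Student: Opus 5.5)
The plan is to read off purity directly from the MRD structure given by Theorem~\ref{thm:singleton-equality-mrd}(i), together with the sufficiency argument already carried out in part~(ii). Set $u:=d^{R}_X-1$ and $v:=d^{R}_Z-1$. Equality and $K>0$ give $u+v<L$, $\dim S_X=Mv$ and $\dim S_Z=Mu$. They also give that each nonzero check space is MRD with $d_R(S_X)=L-v+1$ and $d_R(S_Z)=L-u+1$. By transposition, as in Theorem~\ref{thm:rank-singleton}, we may assume $a\le n$.

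\textbf{First claim, $d_R(S_Z^{\pstd})=d^{R}_X$.} This is a two-sided comparison.

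\emph{Lower bound.} Every nonzero $\ell\in S_Z^{\pstd}$ either lies in $S_X$ or lies in $S_Z^{\pstd}\setminus S_X$.
\begin{itemize}
\item If $\ell\in S_Z^{\pstd}\setminus S_X$, then $\rank\ell\ge d^{R}_X$ by definition.
\item If $\ell\in S_X$, there are two subcases. When $v=0$ we have $S_X=\{0\}$, so this case does not arise. When $v\ge1$, $\rank\ell\ge L-v+1>u+1=d^{R}_X$, using $u+v<L$.
\end{itemize}
Hence $d_R(S_Z^{\pstd})\ge d^{R}_X$.

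\emph{Upper bound.} Since $K>0$, the set $S_Z^{\pstd}\setminus S_X$ is nonempty, because $\dim S_Z^{\pstd}=an-\dim S_Z>\dim S_X$. So $d^{R}_X$ is a finite minimum attained by some nonzero element of $S_Z^{\pstd}$, and therefore $d_R(S_Z^{\pstd})\le d^{R}_X$.

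An alternative route uses Delsarte duality. $S_Z^{\pstd}$ is MRD of dimension $M(L-u)$, so its distance is $u+1$, or the whole space when $u=0$. Either route suffices; I would use the elementary one, since it avoids the duality fact.

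\textbf{Second claim, $d_R(S_X^{\pstd})=d^{R}_Z$.} This follows by interchanging the roles of the sectors, with $S_Z$ MRD of distance $L-u+1>v+1$.

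\textbf{Degenerate codes.} Suppose a code had a nonzero stabilizer label of rank below its sector distance, say $0\ne s\in S_X$ with $\rank s<d^{R}_X$. Since $S_X\subseteq S_Z^{\pstd}$, this gives $d_R(S_Z^{\pstd})<d^{R}_X$, contradicting the first claim; the same argument applies in the $Z$ sector. So such a code cannot attain equality, although Theorem~\ref{thm:rank-singleton} still applies to it.

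\textbf{Main subtlety.} The only delicate point is the boundary case $v=0$ or $u=0$, where the distance of the zero code is $+\infty$ and the inequality $L-v+1>u+1$ must not be invoked; I handle it separately as above.
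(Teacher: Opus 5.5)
Your proof is correct, and it reaches the first identity by a different route from the paper's.

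The paper works only with the $Z$-check space. By Theorem~\ref{thm:singleton-equality-mrd}(i), $S_Z$ is either $\{0\}$ or MRD of dimension $Mu$. Delsarte duality then makes $S_Z^{\pstd}$ MRD of dimension $M(L-u)$ and distance $u+1=d^{R}_X$, with no reference to $S_X$ at all.

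You instead work only with the $X$-check space. The inequality $d_R(S_X)=L-v+1>u+1$ shows that no nonzero stabilizer label can compete with the logical minimum. Hence $d_R(S_Z^{\pstd})$ equals the minimum over $S_Z^{\pstd}\setminus S_X$, and the nonemptiness you derive from $K>0$ closes the argument.

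Your route avoids Delsarte duality. It also makes purity visible at once, since non-degeneracy in the $X$ sector is exactly the inequality $d_R(S_X)>d^{R}_X$ you prove in the lower-bound step. The paper's route additionally shows that $S_Z^{\pstd}$ is MRD, a fact Theorem~\ref{thm:radius-dichotomy} relies on when it cites this corollary.

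Your argument recovers that fact too, still without duality. Since $an=ML$, you have $\dim S_Z^{\pstd}=M(L-u)$. Together with your value $d_R(S_Z^{\pstd})=u+1$, this gives equality in \eqref{eq:matrix-singleton}.

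The reduction to $a\le n$ is unnecessary, since nothing in your argument depends on the orientation of the layout.
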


\begin{proof}
Write $u=d^{R}_X-1$. By Theorem~\ref{thm:singleton-equality-mrd}(i), $S_Z$ is either $\{0\}$ or
MRD of $\F_q$-dimension $Mu$ with $d_R(S_Z)=L-u+1$. In the first case $S_Z^{\pstd}$ is the whole
space, whose minimum rank weight is $1=u+1$. In the second, the Delsarte dual of an MRD code is
MRD \cite{gabidulin1985,delsarte1978}, so $S_Z^{\pstd}$ is MRD of $\F_q$-dimension $M(L-u)$ and minimum rank distance $u+1$. In either case, $d_R(S_Z^{\pstd})=u+1=d^{R}_X$, and since
$d^{R}_X$ is a minimum of $\rank$ over the subset $S_Z^{\pstd}\setminus S_X$ of $S_Z^{\pstd}\setminus\{0\}$, the
two values coincide. Symmetrically, the second identity holds, and a degenerate pair has $d_R(S_X)<d^{R}_X\le d_R(S_Z^{\pstd})$ with $S_X\subseteq S_Z^{\pstd}$, a contradiction.
\end{proof}

Purity is therefore a consequence of optimality and not a hypothesis. In the Hamming metric, the analogous correspondence between optimal asymmetric CSS codes and nested maximum-distance-separable pairs is usually stated for pure codes \cite{sarvepalli2009asymmetric,ezerman2013aqmds}, and the classification of the attainable
parameters there is conditional on the classical MDS conjecture, whereas Theorem~\ref{thm:optimal-existence} is unconditional, MRD codes of every admissible dimension existing over every finite field.

The next three lemmas are the rigidity properties of MRD codes that
Section~\ref{sec:interfaces} uses. The first two are stated for $a\le n$, where a full-rank
codeword is available; the third holds on every layout.

\begin{lemma}[MRD codes contain a full-rank codeword]
\label{lem:mrd-fullrank}
Let $C\subseteq\F_q^{a\times n}$ with $a\le n$ be $\F_q$-linear, nonzero and MRD, so that
$\dim_{\F_q}C=n\bigl(a-d_R(C)+1\bigr)$. Then $C$ contains a matrix of rank $a$.
\end{lemma}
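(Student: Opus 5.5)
We would prove the existence of a rank-$a$ codeword by a counting-free dimension argument using the ``systematic'' structure of MRD codes, in the spirit of Step~2 of Theorem~\ref{thm:rank-singleton}. Write $d:=d_R(C)$ and $k:=a-d+1\ge1$, so $\dim C=nk$. If $d=a$ there is nothing to prove, since every nonzero codeword has rank at least $d=a$, hence exactly $a$. So assume $d<a$, i.e.\ $k\ge2$.

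The key step is to show that, for any row set $A$ with $|A|=k$, the restriction $\pi_A$ maps $C$ isomorphically onto $\F_q^{k\times n}$. Its kernel consists of codewords supported on the remaining $a-k=d-1$ rows. Such codewords have rank at most $d-1<d$, so they vanish. Hence $\pi_A$ is injective, and comparing dimensions ($nk=kn$) shows it is bijective. Applying the same argument after left multiplication by any $T\in\mathrm{GL}_a(\F_q)$ shows more: for every $k$-dimensional subspace of row functionals $W\subseteq(\F_q^a)^{\ast}$, the map $X\mapsto WX$ (stacking $k$ independent row combinations) is a bijection $C\to\F_q^{k\times n}$.

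Next, choose the codeword. Take $A=\{1,\dots,k\}$ and pick $X\in C$ with $\pi_AX=\bigl(I_k\mid 0\bigr)$, where this matrix is the $k\times n$ block with an identity in the first $k$ columns; this is possible because $k\le a\le n$. Then $\rank X\ge k$. We want to raise the rank to $a$, and the plan is induction on the rank. Suppose the maximal rank $r$ of a codeword satisfies $r<a$, and let $X$ attain it, with column space $V$ of dimension $r$. Choose $k$ row functionals, $k-1$ of them vanishing on a codimension chosen so that their combination with $V$ is controlled. More concretely, pick a functional $\phi$ vanishing on $V$ (possible since $r<a$), and complete it to $W=\langle\phi,\psi_2,\dots,\psi_k\rangle$ of dimension $k$. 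By the bijection, there is $Y\in C$ with $\phi Y$ equal to a prescribed nonzero row $y$, and $\psi_iY$ prescribed arbitrarily. Then consider $X+\lambda Y$ for $\lambda\in\F_q$. A cleaner route is this: choose $Y$ with $\phi Y\neq0$, so $\mathrm{col}(Y)\not\subseteq V$, and with $\psi_iY$ chosen so that $Y$ does not destroy $X$'s rank.

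The main obstacle is precisely this rank-augmentation step over small fields, where $X+\lambda Y$ for the $q$ available scalars may fail to increase the rank. What we would try first is a dimension argument instead. Let $r<a$ be the maximal rank. The set of codewords whose column space lies in a fixed hyperplane $\ker\phi$ is exactly $\ker(X\mapsto\phi X)$ restricted to $C$. By the bijection onto $\F_q^{k\times n}$, this set has dimension $nk-n=n(k-1)$. Every codeword lies in some such kernel, so $C$ is the union over the $(q^a-1)/(q-1)$ hyperplanes of subspaces of codimension $n$. A union-bound count gives
\[
q^{nk}\le \frac{q^a-1}{q-1}\bigl(q^{n(k-1)}-1\bigr)+1,
\]
which is violated since $q^n\ge q^a>(q^a-1)/(q-1)$ when $a\le n$. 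This contradiction yields a rank-$a$ codeword, and this counting argument would replace the induction entirely.
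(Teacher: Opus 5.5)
Your final counting argument is correct and is essentially the paper's proof. Both proofs cover the codewords of rank less than $a$ by the $(q^a-1)/(q-1)$ subspaces of codewords whose column space lies in a fixed hyperplane, and then beat $|C|=q^{n(a-d+1)}$ using $(q^a-1)/(q-1)<q^a\le q^n$.

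The one difference is how that subspace is sized. You get its dimension exactly, $n(a-d)$, from your bijection $X\mapsto WX$ onto $\F_q^{k\times n}$. The paper instead bounds it by $n(a-d)$, applying \eqref{eq:matrix-singleton} to its image in $\F_q^{(a-1)\times n}$.

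The abandoned rank-augmentation paragraph and the ``counting-free'' label at the start should simply be deleted, since the final argument does not use them.
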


\begin{proof}
Write $d:=d_R(C)\le a$ and let $H\subseteq\F_q^{a}$ be a hyperplane. Every $T\in\mathrm{GL}_a(\F_q)$
carrying $H$ to the span of the first $a-1$ coordinate vectors maps
$C_H:=\{X\in C:\mathrm{colspace}(X)\subseteq H\}$ into the matrices with vanishing last row,
preserving rank; the image is an $\F_q$-subspace of $\F_q^{(a-1)\times n}$ of minimum rank distance
at least $d$, so $\dim_{\F_q}C_H\le n(a-d)$ by \eqref{eq:matrix-singleton}, the right-hand side
being $0$ when $d=a$. A matrix of rank less than $a$ has its column space inside some hyperplane,
and there are $(q^{a}-1)/(q-1)$ of them, so the number of codewords of rank less than $a$ is at
most $\frac{q^{a}-1}{q-1}\,q^{n(a-d)}<q^{a}q^{n(a-d)}\le q^{n(a-d+1)}=|C|$, using $a\le n$. Some
codeword therefore has rank $a$.
\end{proof}

\begin{lemma}[Non-invariance of MRD codes with $a\le n$ under nontrivial idempotents]
\label{lem:mrd-noninvariance}
Let $a\le n$ and let $C\subseteq\F_q^{a\times n}$ be a nonzero $\F_q$-linear MRD code with
$2\le d_R(C)\le a$. Then $PC\not\subseteq C$ for every idempotent $P\in\F_q^{a\times a}$ with
$P\notin\{0,I_a\}$. Equivalently, for labels in $\Fqm^{n}$ with $m\le n$ and an $\F_q$-linear
idempotent $e$ on $\Fqm$ with $1\le\rank(e)\le m-1$, self-adjoint or not, one has
$e\star C\not\subseteq C$ for every $\F_q$-subspace $C\subseteq\Fqm^{n}$ that is MRD as an
$m\times n$ matrix code with $2\le d_R(C)\le m$.
\end{lemma}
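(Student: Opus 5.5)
The plan is to argue by contradiction: assume $PC\subseteq C$ and use the idempotent to split $C$ into two pieces with complementary column spaces. Then bound each piece by the shortened Singleton estimate from the proof of Lemma~\ref{lem:mrd-fullrank}, and show their dimensions cannot add up to the MRD dimension $n(a-d+1)$, where $d:=d_R(C)$. Lemma~\ref{lem:mrd-fullrank} itself is not needed; only its counting estimate is used.

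\emph{Step 1 (splitting).} Suppose $PC\subseteq C$ and put $r:=\rank P$, so $1\le r\le a-1$. Since $C$ is a subspace, $(I-P)C=\{X-PX:X\in C\}\subseteq C$ as well. Every $X\in C$ decomposes as $X=PX+(I-P)X$. The column space of $PX$ lies in $\Imag P$ and that of $(I-P)X$ lies in $\ker P=\Imag(I-P)$. Since $\F_q^{a}=\Imag P\oplus\ker P$, we get a direct sum
\[
C=PC\oplus(I-P)C,
\]
with $PC\subseteq C_{V}:=\{X\in C:\mathrm{colspace}(X)\subseteq V\}$ for $V=\Imag P$, and similarly for $V=\ker P$.

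\emph{Step 2 (shortened bounds).} Let $V\subseteq\F_q^{a}$ have dimension $s$. Choose $T\in\mathrm{GL}_a(\F_q)$ carrying $V$ onto the span of the first $s$ coordinate vectors. Then $T$ maps $C_V$ rank-preservingly into matrices supported on $s$ rows, that is, into a code in $\F_q^{s\times n}$ with minimum rank distance at least $d$. Because $s\le a\le n$, the bound \eqref{eq:matrix-singleton} gives $\dim C_V\le n(s-d+1)$ when $s\ge d$. When $s<d$ every nonzero element would have rank at most $s<d$, so $C_V=\{0\}$. In both cases $\dim C_V\le n\max(s-d+1,0)$.

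\emph{Step 3 (contradiction).} Step~1 gives
\[
n(a-d+1)=\dim C\le n\max(r-d+1,0)+n\max(a-r-d+1,0).
\]
We consider the cases.
\begin{itemize}
\item If both maxima are positive, the right side is $n(a-2d+2)$, forcing $d\le1$. This contradicts $d\ge2$.
\item If exactly one is positive, say $r<d\le a-r$, the right side is $n(a-r-d+1)<n(a-d+1)$ because $r\ge1$. This is a contradiction; the symmetric case uses $a-r\ge1$.
\item If both vanish, then $C=\{0\}$, contradicting $C\neq\{0\}$.
\end{itemize}
The only delicate point is checking that the shortened bound still applies with $s\le a\le n$. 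This holds because the orientation in \eqref{eq:matrix-singleton} uses $\max(s,n)=n$.

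\emph{Step 4 (equivalent formulation).} For labels in $\Fqm^{n}$, fix a basis and take coordinate matrices via $\Phi_B$. Then $e\star$ acts as left multiplication by the $m\times m$ matrix $P=M_e$. This matrix is idempotent with $P\notin\{0,I_m\}$ exactly when $1\le\rank e\le m-1$, and self-adjointness plays no role. Applying the matrix statement with $a=m\le n$ gives $e\star C\not\subseteq C$.
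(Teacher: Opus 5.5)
Your proof is correct, and it takes a genuinely different route from the paper's.

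The paper argues through the Delsarte dual. Invariance of $C$ under $P$ and $I_a-P$ makes $C^{\pstd}$ invariant under $P^{\top}$ and $(I_a-P)^{\top}$. Lemma~\ref{lem:mrd-fullrank} supplies codewords $X_0\in C$ and $Y_0\in C^{\pstd}$ of rank $a$. Applying the projector $Q\in\{P,I_a-P\}$ of smaller rank $s\le\lfloor a/2\rfloor$ then yields nonzero codewords of rank $s$ in both codes, so that $a+2=d_R(C)+d_R(C^{\pstd})\le 2s\le a$.

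You never leave $C$. Invariance splits $C$ as $PC\oplus(I_a-P)C$, two shortened subcodes whose column spaces lie in $\Imag P$ and $\ker P$. The shortened form of \eqref{eq:matrix-singleton} then keeps their total dimension strictly below the MRD dimension $n(a-d+1)$ once $d\ge2$ and $1\le\rank P\le a-1$.

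\textbf{What your route buys.} It needs only \eqref{eq:matrix-singleton}: neither Delsarte duality nor the existence of full-rank codewords. It is in effect the hyperplane-shortening estimate behind Lemma~\ref{lem:mrd-fullrank}, applied once to two complementary subspaces rather than counted over all hyperplanes. It also makes visible exactly where $a\le n$ enters: each shortened code on $s\le a$ rows still carries the factor $\max(s,n)=n$, the same factor as in the MRD dimension of $C$.

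\textbf{What the paper's route buys.} It produces explicit low-rank witnesses $QX_0\in C$ and $Q^{\top}Y_0\in C^{\pstd}$. This shows that invariance would push $d_R(C)$ and $d_R(C^{\pstd})$ simultaneously down to at most $\min(\rank P,\,a-\rank P)$, whereas your argument yields only a dimension deficit.
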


\begin{proof}
Suppose $PC\subseteq C$. Then $(I_a-P)C\subseteq C$ as well, and for $Y\in C^{\pstd}$ and $X\in C$
one has $\langle P^{\top}Y,X\rangle_{\mathrm{std}}=\mathrm{tr}(Y^{\top}PX)
=\langle Y,PX\rangle_{\mathrm{std}}=0$, so $P^{\top}C^{\pstd}\subseteq C^{\pstd}$ and likewise for
$(I_a-P)^{\top}$. Since $\F_q^{a}=\Imag P\oplus\Imag(I_a-P)$ we have
$\rank P+\rank(I_a-P)=a$; let $Q\in\{P,I_a-P\}$ have the smaller rank $s$, so that
$1\le s\le\lfloor a/2\rfloor$, the lower bound because $P\notin\{0,I_a\}$.

Write $d:=d_R(C)$, so $\dim_{\F_q}C=n(a-d+1)$ and, by Delsarte duality
\cite{gabidulin1985,delsarte1978}, $C^{\pstd}$ is MRD of dimension $n(d-1)>0$ with
$d_R(C)+d_R(C^{\pstd})=a+2$. By Lemma~\ref{lem:mrd-fullrank}, where
the hypothesis $a\le n$ is invoked, there are $X_0\in C$ and $Y_0\in C^{\pstd}$ of rank $a$.
The column space of $X_0$ is all of $\F_q^{a}$, so $\rank(QX_0)=\dim Q(\F_q^{a})=s$, and likewise
$\rank(Q^{\top}Y_0)=\rank Q^{\top}=s$. As $QX_0\in C$ and $Q^{\top}Y_0\in C^{\pstd}$ are nonzero,
$d_R(C)\le s$ and $d_R(C^{\pstd})\le s$, whence $a+2\le2s\le a$, a contradiction.

For the second formulation, fix an $\F_q$-basis of $\Fqm$ and let $P$ be the matrix of $e$ acting
on coordinate columns; the componentwise action $e\star\cdot$ is left multiplication by $P$, which
is completely equivalent to the matrix statement, rank and the MRD property being
independent of the basis by Corollary~\ref{cor:rank-invariance}.
\end{proof}

\begin{lemma}[MRD codes are spanned by their minimum-rank codewords]
\label{lem:mrd-generation}
Let $C\subseteq\F_q^{a\times n}$ be $\F_q$-linear, nonzero and MRD on an arbitrary layout, and
put $d:=d_R(C)$. Then $C=\mathrm{span}_{\F_q}\{X\in C:\rank X=d\}$.
\end{lemma}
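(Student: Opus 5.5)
We reduce to $a\le n$ and prove a shortening identity for MRD codes, from which generation follows by a downward induction on the dimension of an ambient column space. Since transposition is an $\F_q$-linear rank-preserving bijection, it maps MRD codes to MRD codes with the same $d$ and maps the span of minimum-rank codewords onto the corresponding span. We may therefore assume $a\le n$, so that $\dim_{\F_q}C=n(a-d+1)$.

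For a subspace $W\subseteq\F_q^{a}$ of dimension $w$, put $C_W:=\{X\in C:\mathrm{colspace}(X)\subseteq W\}$. The key claim is
\[
\dim_{\F_q}C_W=\max\bigl(0,\,n(w-d+1)\bigr)\qquad\text{for every }W.
\]
\emph{Upper bound.} As in the proof of Lemma~\ref{lem:mrd-fullrank}, choose $T\in\mathrm{GL}_a(\F_q)$ carrying $W$ onto the span of the first $w$ coordinate vectors. Then $TC_W$ is identified, rank-preservingly, with a subspace of $\F_q^{w\times n}$ of minimum rank distance at least $d$. If $w<d$ this subspace is $\{0\}$. Otherwise \eqref{eq:matrix-singleton} with $\max(w,n)=n$ gives $\dim C_W\le n(w-d+1)$.

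\emph{Lower bound.} In the same coordinates, $TC_W=TC\cap\ker\pi_{R}$, where $R$ is the set of the last $a-w$ rows. Rank--nullity gives $\dim C_W\ge\dim C-n(a-w)=n(w-d+1)$.

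At $w=d$ the claim says $C_W$ has dimension $n>0$, and each nonzero element has rank at most $w=d$. Since $d$ is the minimum distance, every nonzero element of $C_W$ has rank exactly $d$, so $C_W$ lies in the span $C'$ of the minimum-rank codewords.

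We then show by downward induction on $w$, from $w=a$ (where $C_{\F_q^a}=C$) to $w=d$, that $C_W\subseteq C'$ for every $W$ of dimension $w$. The base of this induction is the case $w=d$ just established, and the conclusion at $w=a$ is the lemma. For the inductive step, take $W$ with $w>d$ and two distinct hyperplanes $W_1,W_2$ of $W$. Then $C_{W_1}\cap C_{W_2}=C_{W_1\cap W_2}$, and $\dim(W_1\cap W_2)=w-2$. The dimension formula applies to all three subspaces, including the edge case $w-2=d-1$, where it gives $0$. Hence
\[
\dim(C_{W_1}+C_{W_2})=2n(w-d)-n(w-d-1)=n(w-d+1)=\dim C_W .
\]
Since $C_{W_1}+C_{W_2}\subseteq C_W$, this gives $C_W=C_{W_1}+C_{W_2}$, and both summands lie in $C'$ by the inductive hypothesis at dimension $w-1$. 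If $d=a$ the induction is empty, and the statement is the $w=a=d$ case directly.

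The main obstacle is the exact dimension formula for $C_W$, specifically the lower bound, which requires realizing $C_W$ as the kernel of a row restriction after a change of basis. It must also remain valid in the degenerate range $w=d-1$ used at the last inductive step. Both points reduce to the change-of-basis device already used in Lemma~\ref{lem:mrd-fullrank} together with \eqref{eq:matrix-singleton}.
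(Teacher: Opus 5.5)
Your proof is correct. It differs from the paper's in how the minimum-rank pieces are assembled into $C$.

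The paper also reduces to $a\le n$ by transposition. It then decomposes $C$ in a single step. It fixes a coordinate row set $A$ with $|A|=d-1$, and for each of the $a-d+1$ rows $j\notin A$ it takes $C_j:=C\cap U_j$, the codewords supported on rows $A\cup\{j\}$. This is exactly your $C_W$ for a coordinate $d$-space $W$, so no change of basis is needed. It shows $\dim C_j=n$: the upper bound comes from injectivity of the projection onto row $j$, and the lower bound from the same rank--nullity count you use. The sum $\sum_{j\notin A}C_j$ is direct, because only $C_{j_0}$ can have a nonzero row $j_0$. The total dimension $n(a-d+1)=\dim C$ then forces $C=\bigoplus_{j\notin A}C_j$.

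You instead prove the exact shortening formula $\dim C_W=\max\bigl(0,n(w-d+1)\bigr)$ for every subspace $W\subseteq\F_q^{a}$. You then climb from $w=d$ to $w=a$ using $C_W=C_{W_1}+C_{W_2}$ for two distinct hyperplanes of $W$. This step is sound, including the edge case $w-2=d-1$.

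The paper's route is shorter and needs no induction. It also gives an explicit decomposition of $C$ into $a-d+1$ subspaces of dimension $n$, each consisting of $0$ and rank-$d$ codewords. Your route is coordinate-free and proves more: every shortened subcode $C_W$ with $w\ge d$ is itself MRD in $\F_q^{w\times n}$. Hence the generation property holds for every $C_W$ at once, not just for $C$.

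One wording slip: your induction runs upward, with base $w=d$ and a step from $w-1$ to $w$, not downward as you announce. The argument as actually written is fine.
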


\begin{proof}
Assume first $a\le n$ and fix a row set $A$ with $|A|=d-1$, possible since $d\le a$. For $j\notin A$, let $U_j$ be the matrices vanishing outside $A\cup\{j\}$ and $C_j:=C\cap U_j$. Every nonzero element of $C_j$ has rank at most $d$, since it is supported on the $d$ rows in $A\cup\{j\}$, and at least $d$, since it lies in $C$; hence it has rank exactly $d$. The projection of $C_j$ onto row $j$ is injective, a kernel element being
supported on $A$ and of rank at most $d-1$, so $\dim C_j\le n$, while
\[
\dim C_j\ge\dim C+\dim U_j-an=n(a-d+1)+dn-an=n ,
\]
whence $\dim C_j=n$. The sum over $j\notin A$ is direct: if $\sum_jX_j=0$ with $X_j\in C_j$, then
for each $j_0\notin A$ only $X_{j_0}$ has a possibly nonzero row $j_0$, which therefore vanishes,
giving $X_{j_0}=0$. Hence $\dim\bigoplus_{j\notin A}C_j=n(a-d+1)=\dim C$, so $C$ is the span of the
$C_j$, each generated by codewords of rank $d$. For $a>n$, we apply the result established for $a\le n$ to $C^{\top}\subseteq\F_q^{n\times a}$, which is $\F_q$-linear, nonzero, and MRD with the same minimum rank distance, since rank and \eqref{eq:matrix-singleton} are transpose-invariant; transposing back a spanning set of
minimum-rank codewords of $C^{\top}$ yields one for $C$.
\end{proof}

\begin{remark}
\label{rem:idealiser}
Lemma~\ref{lem:mrd-noninvariance} assumes neither a square layout nor self-adjointness. It is
consistent with the classical fact that for $a\le n$ and $d_R(C)>1$ the middle nucleus of an MRD
code $C$, the idealizer acting on its shorter side, is a field
\cite[Th.~5.4]{lunardon2018kernels}, hence contains no idempotent other than $0$ and the identity,
while the corresponding statement for the right nucleus carries further hypotheses
\cite{lunardon2018kernels,csajbok2020idealizers}. Although these algebraic foundations are known in the literature, the proof presented above is self-contained---relying solely on \eqref{eq:matrix-singleton}, Delsarte duality, and Lemma~\ref{lem:mrd-fullrank}---and our primary contribution lies in leveraging this obstruction to establish Theorem~\ref{thm:radius-dichotomy} and Corollary~\ref{cor:optimal-rigidity}. The hypothesis
$a\le n$ cannot be dropped: Proposition~\ref{prop:stacking-sharpness} exhibits a Singleton-optimal
pair on a $6\times3$ layout with an invariant nontrivial idempotent.
\end{remark}

\section{Projected-Recovery Interfaces and Their Exact Radii}
\label{sec:interfaces}

A rank-metric decoder acts through $\F_q$-linear operators on $\Fqm$, which are in general not
$\Fqm$-linear, so $e\star(HE^{\top})\neq H(e\star E)^{\top}$: a decoder designed for a projected
error component cannot in general be supplied by projecting a globally measured syndrome. This section establishes the precise conditions under which a measured check family recovers the projected syndrome or the projected error modulo stabilizer labels, and characterizes the exact recovery radius for each target on Singleton-optimal pairs.

\begin{definition}[Quotient-valued interface]
\label{def:quotient-interface}
Let $T\subseteq\Fqm^n$ be an $\F_q$-subspace of stabilizer labels (acting trivially on the code space) and $\mathcal E\subseteq\Fqm^n$ an error set. The family $H$ admits a \emph{$T$-valued projected interface} on $\mathcal E$ if there is $g:\F_q^s\to\Fqm^n/T$ with $e\star E+T=g\bigl(\sigma_H(E)\bigr)$ for every $E\in\mathcal E$.
\end{definition}

\begin{definition}[The six interface conditions]
\label{def:interfaces}
Let $(S_X,S_Z)$ be trace-orthogonal $\F_q$-linear check spans of a CSS code, let $H$ be the
measured family spanning $S_Z$ with syndrome map $\sigma_H$, and let $e$ be an $\F_q$-linear map on
$\Fqm$ acting componentwise. In the $X$ sector consider the conditions
\begin{itemize}
\item[$(\mathrm I_1)$] the measured syndrome determines the $e$-projected syndrome on the whole
label space;
\item[$(\mathrm I_2)$] it determines it through an $\F_q$-linear map on the rank-$t$ ball, for
some $t\ge1$;
\item[$(\mathrm I_3)$] the measured syndrome determines the projected error modulo $S_X$ on the
whole label space;
\item[$(\mathrm I_4)$] it determines it through an $\F_q$-linear map on the rank-$t$ ball, for
some $t\ge1$;
\item[$(\mathrm I_5)$] it determines the $e$-projected syndrome through an arbitrary function on
the rank-$t$ ball, for some $t\ge1$;
\item[$(\mathrm I_6)$] it determines the projected error modulo $S_X$ through an arbitrary
function on the rank-$t$ ball, for some $t\ge1$;
\end{itemize}
together with the same conditions with the two Pauli sectors exchanged. We write
$\mathrm I_{5,X}(t)$ for $\mathrm I_5$ at the fixed radius $t$ in the $X$ sector, and similarly
for the other conditions and for the $Z$ sector.
The results of this section give
\begin{equation}
\label{eq:interface-hierarchy}
\mathrm I_3\Leftrightarrow\mathrm I_4\Rightarrow\mathrm I_1\Leftrightarrow\mathrm I_2
\Rightarrow\mathrm I_5,\qquad \mathrm I_4\Rightarrow\mathrm I_6\Rightarrow\mathrm I_5,
\end{equation}
where the two equivalences are established in Theorem~\ref{thm:projected-recoverability} with
Corollary~\ref{cor:linearity-dichotomy} and Theorem~\ref{thm:quotient-interface} with
Corollary~\ref{cor:quotient-linearity}, while the implications $\mathrm I_3\Rightarrow\mathrm I_1$ and
$\mathrm I_6\Rightarrow\mathrm I_5$ follow from Lemma~\ref{lem:interface-hierarchy}. All one-way
implications are strict; the witnesses are listed in Table~\ref{tab:interfaces}.
\end{definition}

\subsection{Ambient and relative factorization}
\label{subsec:relative}

The branch decoders of Section~\ref{sec:witnesses} act not on $\Fqm^n$ but on a componentwise
subspace $W=U^n$; the ambient criterion is the case $U=\Fqm$, which we state as part of the same
theorem.

\begin{theorem}[Relative projected-syndrome factorization]
\label{thm:projected-recoverability}
Let $H=(h^{(1)},\dots,h^{(s)})$ be a family of check labels in $\Fqm^n$, let
$S_H:=\mathrm{span}_{\F_q}\{h^{(i)}\}$, and let $\sigma_H$ be the measured trace-syndrome map. Let
$e:\Fqm\to\Fqm$ be $\F_q$-linear with trace adjoint $e^{\dagger}$, let $U\subseteq\Fqm$ be an
$\F_q$-subspace and $W:=U^n$. Then $W^{\ptr}=(U^{\ptr})^n$, and the following are equivalent.
\begin{enumerate}
\item[(a)] There is a function $f:\F_q^s\to\F_q^s$ with
$\sigma_H(e\star E)=f\bigl(\sigma_H(E)\bigr)$ for every $E\in W$.
\item[(b)] There is a matrix $A\in\F_q^{s\times s}$ with $\sigma_H(e\star E)=A\,\sigma_H(E)$ for
every $E\in W$.
\item[(c)] $\ker\sigma_H\cap W\subseteq\ker\bigl(\sigma_H\circ(e\star\cdot)\bigr)$.
\item[(d)] $e^{\dagger}\star S_H\subseteq S_H+W^{\ptr}$.
\end{enumerate}
Moreover $A$ may be taken to be the identity, so that every measured parity coincides with the
corresponding projected parity, if and only if
\begin{equation}
\label{eq:relative-alignment}
e^{\dagger}\star h^{(i)}-h^{(i)}\in W^{\ptr}\qquad(i=1,\dots,s),
\end{equation}
in which case $H$ is called \emph{$W$-relatively aligned}. For $U=\Fqm$ one has $W^{\ptr}=\{0\}$,
so that (d) becomes the \emph{ambient criterion} $e^{\dagger}\star S_H\subseteq S_H$ and
\eqref{eq:relative-alignment} becomes $h^{(i)}\in\Fix(e^{\dagger})$ for every $i$.
\end{theorem}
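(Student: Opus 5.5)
First establish $W^{\ptr}=(U^{\ptr})^n$. Because the trace pairing on $\Fqm^n$ is a sum of per-coordinate terms and $W=U^n$ is a product, $y\in W^{\ptr}$ exactly when $\Tr(u\,y_j)=0$ for every $j$ and every $u\in U$. To see this, test against the vectors $u\varepsilon_j$, as in Lemma~\ref{lem:rankone-span}. The inclusion $(U^{\ptr})^n\subseteq W^{\ptr}$ is immediate.

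The next ingredient is one identity, obtained from \eqref{eq:trace-adjoint} coordinatewise:
\[
s(h;e\star E)=\sum_j\Tr\bigl(e(E_j)h_j\bigr)=\sum_j\Tr\bigl(E_j\,e^{\dagger}(h_j)\bigr)=s(e^{\dagger}\star h;E).
\]
This identity says that $\sigma_H\circ(e\star\cdot)=\sigma_{e^{\dagger}\star H}$ is again a linear syndrome map, namely the one of the family $e^{\dagger}\star H$.

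I will then prove the cycle (b)$\Rightarrow$(a)$\Rightarrow$(c)$\Rightarrow$(d)$\Rightarrow$(b).
\begin{enumerate}
\item[(b)$\Rightarrow$(a).] This is trivial.
\item[(a)$\Rightarrow$(c).] Apply (a) at $E$ and at $0$. If $\sigma_H(E)=0$ then $\sigma_H(e\star E)=f(0)=\sigma_H(e\star 0)=0$.
\item[(c)$\Rightarrow$(d).] This is the main step, and it is a duality argument inside $W$. Restrict both linear maps to $W$. By (c), every functional $E\mapsto s(e^{\dagger}\star h^{(i)};E)$ on $W$ vanishes on $\ker\sigma_H\cap W$. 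By standard linear algebra it therefore factors through $\sigma_H|_W$, i.e., it is an $\F_q$-combination of the functionals $s(h^{(k)};\cdot)|_W$. So $e^{\dagger}\star h^{(i)}-\sum_k c_{ik}h^{(k)}$ pairs to zero with all of $W$, which means it lies in $W^{\ptr}$. This gives (d).
\item[(d)$\Rightarrow$(b).] Write $e^{\dagger}\star h^{(i)}=\sum_k A_{ik}h^{(k)}+w_i$ with $w_i\in W^{\ptr}$. Pair with $E\in W$: the $w_i$ terms vanish, and the identity above yields $\sigma_H(e\star E)=A\sigma_H(E)$.
\end{enumerate}
The step needing care is (c)$\Rightarrow$(d). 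One must make sure the factorization happens on $W$ rather than on the ambient space, since that is exactly what introduces the $W^{\ptr}$ slack. One must also handle $H$ possibly being linearly dependent; this is harmless, because the functional-factorization lemma does not need independence.

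For the alignment statement, $A=I$ means $s(e^{\dagger}\star h^{(i)}-h^{(i)};E)=0$ for all $E\in W$ and all $i$. That is precisely \eqref{eq:relative-alignment}, and the argument runs in both directions via the same identity.

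For the ambient specialization $U=\Fqm$, nondegeneracy \eqref{eq:trace-dual-dimension} gives $W^{\ptr}=\{0\}$. Then (d) reads $e^{\dagger}\star S_H\subseteq S_H$ (note $e^{\dagger}\star$ is $\F_q$-linear, so it suffices to check on generators), and \eqref{eq:relative-alignment} reads $e^{\dagger}\star h^{(i)}=h^{(i)}$, i.e.\ $h^{(i)}\in\Fix(e^{\dagger})$ componentwise.
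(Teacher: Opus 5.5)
Your proof is correct and uses the same ingredients as the paper: the componentwise adjoint identity \eqref{eq:adjoint-star}, the test vectors $u\varepsilon_j$ for $W^{\ptr}=(U^{\ptr})^n$, and factorization through a kernel inclusion. The difference is only in organization. The paper proves (c)$\Rightarrow$(b) directly and obtains (c)$\Leftrightarrow$(d) by dualizing subspaces via \eqref{eq:dual-intersection}; you instead close a single cycle by factoring functionals restricted to $W$, and you read off $A$ explicitly from the decomposition in (d), which is the same duality phrased at the level of functionals.
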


\begin{proof}
We use the componentwise adjoint identity: for $\F_q$-linear $e$ and any $h,E\in\Fqm^n$,
\begin{equation}
\label{eq:adjoint-star}
\begin{split}
\langle h,e\star E\rangle_{\Tr}
&=\sum_j\Tr\bigl(h_j\,e(E_j)\bigr)\\
&=\sum_j\Tr\bigl(e^{\dagger}(h_j)\,E_j\bigr)
=\langle e^{\dagger}\star h,E\rangle_{\Tr},
\end{split}
\end{equation}
which is \eqref{eq:trace-adjoint} applied in each coordinate and summed.

\emph{The identity $W^{\ptr}=(U^{\ptr})^n$.} If $y_j\in U^{\ptr}$ for every $j$ then
$\langle w,y\rangle_{\Tr}=0$ on $W$; conversely $w=u\varepsilon_j$ gives $\Tr(uy_j)=0$ for all
$u\in U$.

(b)$\Rightarrow$(a) is immediate, and (a)$\Rightarrow$(c) follows from $f(0)=0$.

(c)$\Rightarrow$(b). Both $\sigma_H|_W$ and $\bigl(\sigma_H\circ(e\star\cdot)\bigr)|_W$ are
$\F_q$-linear, and by (c) the kernel of the first lies in that of the second, so the second factors
through the first as an $\F_q$-linear $A_0:\sigma_H(W)\to\F_q^s$; any $\F_q$-linear extension $A$
of $A_0$ to $\F_q^s$ gives (b).

(c)$\Leftrightarrow$(d). We have $\ker\sigma_H=S_H^{\ptr}$ and, by \eqref{eq:adjoint-star},
\[
\ker\bigl(\sigma_H\circ(e\star\cdot)\bigr)
=\{E:\langle e^{\dagger}\!\star h^{(i)},E\rangle_{\Tr}\!=0\,\forall i\}
=\bigl(e^{\dagger}\!\star S_H\bigr)^{\ptr}\!\!,
\]
where $e^{\dagger}\star S_H$ is the $\F_q$-span of the $e^{\dagger}\star h^{(i)}$, a subspace
because $e^{\dagger}\star\cdot$ is $\F_q$-linear. Thus (c) reads
$S_H^{\ptr}\cap W\subseteq(e^{\dagger}\star S_H)^{\ptr}$. Applying the inclusion-reversing
involution $(\cdot)^{\ptr}$ and using \eqref{eq:dual-intersection} together with
$(W^{\ptr})^{\ptr}=W$ gives $e^{\dagger}\star S_H\subseteq S_H+W^{\ptr}$.

\emph{The case $A=I$.} The identity $\sigma_H(e\star E)=\sigma_H(E)$ on $W$ reads
$\langle e^{\dagger}\star h^{(i)}-h^{(i)},E\rangle_{\Tr}=0$ for all $E\in W$ and every $i$, which is
\eqref{eq:relative-alignment}; the converse holds by bilinearity. For $U=\Fqm$ nondegeneracy of the
trace form turns \eqref{eq:relative-alignment} into $e^{\dagger}\star h^{(i)}=h^{(i)}$.
\end{proof}

The relative criterion exhibits no gap between arbitrary and linear recovery maps: on a subspace,
factorization of one linear map through another is automatically linear. The correction term
$W^{\ptr}$ in (d) cannot be omitted, as demonstrated by the counterexample $q=2$, $m=2$, $n=1$, $U=\langle1\rangle_{\F_2}$, $e=0$,
$h=1$: $s(h;E)=\Tr(E)=0$ on $W$, so the relative identity holds with $A=I$, yet
$e^{\dagger}\star h-h=1\in W^{\ptr}$ is nonzero.

\subsection{Bounded-rank compatibility and certified decoding}
\label{subsec:rank-restricted}

A bounded-rank decoder need act correctly only on errors of rank at most $t$. The exact criterion
for that weaker demand replaces the kernel of the syndrome map by its intersection with a rank
ball. The following lemma decomposes a low-rank kernel element into a difference of two vectors
within smaller rank balls, a property applied repeatedly below.

\begin{lemma}[Rank halving inside a componentwise subspace]
\label{lem:rank-halving}
Let $U\subseteq\Fqm$ be an $\F_q$-subspace, $W:=U^n$, $B^W_{\tau}:=\{E\in W:\wt_R(E)\le\tau\}$ and
$t\ge1$. Every $X\in B^W_{2t}$ admits $E,E'\in B^W_t$ with $E-E'=X$.
\end{lemma}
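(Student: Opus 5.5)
The plan is to split $X$ by a decomposition of its column space. Write $X=(X_1,\dots,X_n)\in W$ with $\wt_R(X)=r\le 2t$. Put $V:=\mathrm{span}_{\F_q}\{X_1,\dots,X_n\}\subseteq U$, so that $\dim_{\F_q}V=r$. The aim is to produce $E,E'$ whose coordinates lie in two complementary subspaces of $V$; membership in $W$ will then be automatic, because $V\subseteq U$.

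First, the case $r\le t$ is immediate: take $E:=X$ and $E':=0$. Both lie in $W$ and have rank weight at most $t$.

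Now assume $t<r\le 2t$. Choose a decomposition $V=V_1\oplus V_2$ of $\F_q$-subspaces with $\dim V_1=t$ and $\dim V_2=r-t\le t$. Let $\pi_1,\pi_2:V\to V$ be the two $\F_q$-linear projections attached to this decomposition, so that $\pi_1+\pi_2=\mathrm{id}_V$. Define $E:=\pi_1\star X$ and $E':=-\pi_2\star X$ componentwise. Then $E-E'=(\pi_1+\pi_2)\star X=X$. The coordinates of $E$ lie in $V_1\subseteq U$, and those of $E'$ lie in $V_2\subseteq U$, so both are in $W$. Their rank weights are at most $\dim V_1=t$ and $\dim V_2\le t$ respectively, since $\wt_R$ equals the $\F_q$-dimension of the span of the coordinates. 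One could equally extend $\pi_i$ to all of $\Fqm$ and invoke Lemma~\ref{lem:rank-non-expansion}, but the direct dimension count already suffices.

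Equivalently, in matrix form: write $\Phi_B(X)=PQ$ with $P$ of size $m\times r$ with columns forming a basis of $V$, and $Q$ of size $r\times n$. Splitting the columns of $P$ and the rows of $Q$ into blocks of sizes $t$ and $r-t$ gives a sum of two matrices of rank at most $t$ each, and their column spaces remain inside $V\subseteq U$.

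I do not expect any real obstacle here. The only point requiring care is that $E$ and $E'$ must stay in $W=U^n$, and not merely in $\Fqm^n$. This is exactly why the split is taken inside $V\subseteq U$ rather than along an arbitrary decomposition of $\Fqm$.
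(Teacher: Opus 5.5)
Your proof is correct and follows essentially the same route as the paper: both split $X$ along a decomposition of the column space $V\subseteq U$ (equivalently, a rank decomposition of $\Phi_B(X)$) into two pieces of rank at most $t$, and membership in $W$ holds because both pieces have coordinates in $V\subseteq U$. The only difference is that the paper uses the balanced split of sizes $\lceil w/2\rceil$ and $\lfloor w/2\rfloor$, whereas you use $t$ and $r-t$ (or $X$ and $0$ when $r\le t$), which is immaterial.
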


\begin{proof}
Let $w:=\wt_R(X)\le2t$ and choose a rank decomposition $\Phi_B(X)=\sum_{i=1}^{w}u_iv_i^{\top}$ in
which $u_1,\dots,u_w\in\F_q^m$ is a basis of the column space of $\Phi_B(X)$. Since $X\in W=U^n$,
every column of $\Phi_B(X)$ lies in $[U]_B$, hence so does its column space and hence every $u_i$.
Setting
\begin{equation}
\label{eq:rank-halving}
\begin{split}
E&:=\Phi_B^{-1}\Bigl(\sum_{i\le\lceil w/2\rceil}u_iv_i^{\top}\Bigr),\\
E'&:=\Phi_B^{-1}\Bigl(-\!\!\sum_{i>\lceil w/2\rceil}u_iv_i^{\top}\Bigr),
\end{split}
\end{equation}
both lie in $W$ with $\wt_R(E)\le\lceil w/2\rceil\le t$, $\wt_R(E')\le\lfloor w/2\rfloor\le t$ and
$E-E'=X$.
\end{proof}

The same lemma yields the exact condition under which a bounded-rank syndrome decoding problem has
at most one solution, which we record here because the criterion of
Theorem~\ref{thm:rank-restricted} refers to it.

\begin{definition}[Certifying radius-$t$ decoder]
\label{def:certifying-decoder}
Let $U\subseteq\Fqm$ be an $\F_q$-subspace, $W:=U^n$, let $\sigma:W\to\F_q^{s}$ be $\F_q$-linear
with kernel $\Kker:=\ker\sigma$, and let $t\ge1$. A map
$\mathrm{Dec}_t:\F_q^{s}\to W\cup\{\bot\}$ is a \emph{certifying radius-$t$ decoder} for $\sigma$
if it returns the unique $X\in W$ with $\wt_R(X)\le t$ and $\sigma(X)=y$ when exactly one such $X$
exists, and $\bot$ otherwise. Such a map exists and is unique, and $\mathrm{Dec}_t(0)=0$ if and
only if $d_R(\Kker)>t$. On any candidate produced by an algorithm three conditions are checked:
membership $X\in W$, the rank bound $\wt_R(X)\le t$, and syndrome consistency $\sigma(X)=y$
against the full record. These certify a valid radius-$t$ preimage but do not by themselves
exclude a second one; under the hypothesis of Theorem~\ref{thm:exact-unique}, uniqueness is guaranteed
a priori, so the three checks suffice. An algorithm may produce a candidate under a promise; it is
this verification that certifies the output.
\end{definition}

\begin{theorem}[Exact deterministic uniqueness]
\label{thm:exact-unique}
In the notation of Definition~\ref{def:certifying-decoder}, every syndrome has at most one preimage
in $W$ of rank at most $t$ if and only if $d_R(\Kker)>2t$, that is, if and only if
$\Kker\cap B^{W}_{2t}=\{0\}$. Under that hypothesis the certifying decoder returns
$\mathrm{Dec}_t(\sigma(E))=E$ for every $E\in W$ with $\wt_R(E)\le t$.
\end{theorem}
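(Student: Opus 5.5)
The plan is the standard linear-code uniqueness argument, using Lemma~\ref{lem:rank-halving} for the converse direction. Both the uniqueness property and the condition $d_R(\Kker)>2t$ concern only $\Kker$, $W$, and rank balls. The equivalence $d_R(\Kker)>2t\iff\Kker\cap B^{W}_{2t}=\{0\}$ is the convention-level equivalence of Section~\ref{subsec:rank-metric}, applied to $\Kker\subseteq W$; the convention $d_R(\{0\})=+\infty$ covers $\Kker=\{0\}$.

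\emph{Sufficiency.} Suppose $\Kker\cap B^W_{2t}=\{0\}$, and let $E,E'\in B^W_t$ satisfy $\sigma(E)=\sigma(E')$. Then $E-E'\in\Kker$ by linearity of $\sigma$, and $E-E'\in W$ because $W$ is a subspace. By \eqref{eq:rank-subadditive}, together with $\wt_R(-E')=\wt_R(E')$, we have $\wt_R(E-E')\le2t$. Hence $E-E'=0$.

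\emph{Necessity.} Suppose some nonzero $X\in\Kker$ has $\wt_R(X)\le2t$. Lemma~\ref{lem:rank-halving} gives $E,E'\in B^W_t$ with $E-E'=X$. Then $E\ne E'$ and $\sigma(E)=\sigma(E')$, so the syndrome $\sigma(E)$ has two distinct preimages of rank at most $t$ in $W$. This is the only step that needs more than bookkeeping. The point to check is that the halving stays inside $W=U^n$ and uses the hypothesis $t\ge1$; both are already built into Lemma~\ref{lem:rank-halving}.

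\emph{Decoder correctness.} Under the hypothesis, let $E\in W$ with $\wt_R(E)\le t$ and put $y:=\sigma(E)$. Then $E$ is a preimage of $y$ of rank at most $t$, and by uniqueness it is the only one. Definition~\ref{def:certifying-decoder} therefore returns $\mathrm{Dec}_t(y)=E$. The main obstacle is the necessity direction, and it is entirely discharged by Lemma~\ref{lem:rank-halving}; the rest is routine linearity.
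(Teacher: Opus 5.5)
Your proposal is correct and matches the paper's proof step for step: subadditivity of the rank weight gives sufficiency, Lemma~\ref{lem:rank-halving} gives necessity, and Definition~\ref{def:certifying-decoder} gives the decoder claim. Your use of the convention $d_R(\{0\})=+\infty$ plays the same role as the paper's separate treatment of the case $\Kker=\{0\}$.
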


\begin{proof}
Two errors $E,E'\in W$ satisfy $\sigma(E)=\sigma(E')$ if and only if $E-E'\in\Kker$, by
$\F_q$-linearity of $\sigma$, and then $\wt_R(E-E')\le\wt_R(E)+\wt_R(E')\le2t$ by
\eqref{eq:rank-subadditive}.

If $\Kker=\{0\}$ then $\sigma$ is injective on $W$, so every syndrome has at most one preimage of
any rank, while $d_R(\Kker)=+\infty>2t$; both sides of the equivalence hold. Assume henceforth
$\Kker\neq\{0\}$. ($\Leftarrow$) If $E,E'$ have rank at most $t$ and $\sigma(E)=\sigma(E')$ then
$E-E'\in\Kker$ with $\wt_R(E-E')\le2t<d_R(\Kker)$, so $E=E'$. ($\Rightarrow$) If
$d_R(\Kker)\le2t$, choose a nonzero $X\in\Kker$ with $\wt_R(X)\le2t$ and apply
Lemma~\ref{lem:rank-halving} to write $X=E-E'$ with $E,E'\in B^{W}_{t}$; then
$\sigma(E)=\sigma(E')$ while $E\neq E'$. The final assertion is
Definition~\ref{def:certifying-decoder} applied to the syndrome $\sigma(E)$, whose preimage of rank
at most $t$ in $W$ is then unique.
\end{proof}

\begin{theorem}[Rank-restricted factorization]
\label{thm:rank-restricted}
Let $H$ be a family of check labels, $e:\Fqm\to\Fqm$ be $\F_q$-linear, $U\subseteq\Fqm$ an
$\F_q$-subspace, $W:=U^n$ and $t\ge1$. The following are equivalent.
\begin{enumerate}
\item[(a)] There is $f_t:\F_q^s\to\F_q^s$ with $\sigma_H(e\star E)=f_t\bigl(\sigma_H(E)\bigr)$ for
every $E\in B^W_t$.
\item[(b)] $\ker\sigma_H\cap B^W_{2t}\subseteq\ker\bigl(\sigma_H\circ(e\star\cdot)\bigr)$.
\item[(c)] Every $X\in S_H^{\ptr}\cap W$ with $\wt_R(X)\le2t$ lies in
$\bigl(e^{\dagger}\star S_H\bigr)^{\ptr}$.
\end{enumerate}
Moreover, if $d_R\bigl(S_H^{\ptr}\cap W\bigr)>2t$ then (b) holds for every $\F_q$-linear $e$ and
every family $H$; in that case $\sigma_H$ is injective on $B^W_t$ and one may take
$f_t=\sigma_H\circ(e\star\cdot)\circ\mathrm{Dec}_t$ on $\sigma_H(B^W_t)$, where $\mathrm{Dec}_t$ is
a certifying radius-$t$ decoder in the sense of Definition~\ref{def:certifying-decoder}, at the
cost of one bounded-rank decoding per syndrome.
\end{theorem}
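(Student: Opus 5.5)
The plan is to prove the cycle (a)$\Rightarrow$(b)$\Rightarrow$(a) directly, then identify (b) with (c) through the adjoint identity \eqref{eq:adjoint-star}, and finally derive the ``moreover'' clause from Theorem~\ref{thm:exact-unique}. Throughout, $\sigma_H$ and $\sigma_H\circ(e\star\cdot)$ are $\F_q$-linear, and $B^W_\tau$ is closed under negation.

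\emph{(a)$\Rightarrow$(b).} Let $X\in\ker\sigma_H\cap B^W_{2t}$. By Lemma~\ref{lem:rank-halving} we can write $X=E-E'$ with $E,E'\in B^W_t$. Since $\sigma_H(X)=0$, linearity gives $\sigma_H(E)=\sigma_H(E')$. Then (a) yields $\sigma_H(e\star E)=f_t(\sigma_H(E))=f_t(\sigma_H(E'))=\sigma_H(e\star E')$. As $e\star\cdot$ is $\F_q$-linear, this says $\sigma_H(e\star X)=0$.

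\emph{(b)$\Rightarrow$(a).} Define $f_t$ on $\sigma_H(B^W_t)$ by $f_t(\sigma_H(E)):=\sigma_H(e\star E)$, and extend it arbitrarily elsewhere. To check that this is well defined, take $E,E'\in B^W_t$ with the same syndrome. Their difference lies in $\ker\sigma_H\cap W$, and by \eqref{eq:rank-subadditive} it has rank at most $2t$. Hypothesis (b) then gives $\sigma_H(e\star(E-E'))=0$, so the two candidate values agree.

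\emph{(b)$\Leftrightarrow$(c).} We have $\ker\sigma_H=S_H^{\ptr}$. By \eqref{eq:adjoint-star}, $\ker(\sigma_H\circ(e\star\cdot))=(e^{\dagger}\star S_H)^{\ptr}$, exactly as in the proof of Theorem~\ref{thm:projected-recoverability}. Also $\ker\sigma_H\cap B^W_{2t}$ is the set of $X\in S_H^{\ptr}\cap W$ with $\wt_R(X)\le 2t$. Condition (c) is therefore a restatement of (b).

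\emph{Moreover.} Suppose $d_R(S_H^{\ptr}\cap W)>2t$. Then $\ker\sigma_H\cap B^W_{2t}=\{0\}$, since that set is contained in $S_H^{\ptr}\cap W$ and no nonzero element of the latter has rank at most $2t$. Hence (b) holds trivially for every $e$ and every $H$. Apply Theorem~\ref{thm:exact-unique} to $\sigma=\sigma_H|_W$, whose kernel is $\Kker=S_H^{\ptr}\cap W$. It gives injectivity of $\sigma_H$ on $B^W_t$ and $\mathrm{Dec}_t(\sigma_H(E))=E$ for every $E\in B^W_t$. Consequently $\sigma_H\circ(e\star\cdot)\circ\mathrm{Dec}_t$ agrees with the required $f_t$ on $\sigma_H(B^W_t)$.

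The only step that is more than bookkeeping is (a)$\Rightarrow$(b). There the halving must stay inside $W$, and Lemma~\ref{lem:rank-halving} is designed to guarantee exactly that. No other obstacle is expected.
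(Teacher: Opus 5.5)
Your proposal is correct and follows the paper's proof essentially step for step. You use Lemma~\ref{lem:rank-halving} inside $W$ for (a)$\Rightarrow$(b), well-definedness via \eqref{eq:rank-subadditive} for (b)$\Rightarrow$(a), the kernel identification from \eqref{eq:adjoint-star} for (b)$\Leftrightarrow$(c), and Theorem~\ref{thm:exact-unique} applied to $\sigma_H|_W$ for the minimum-rank certificate.
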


\begin{proof}
(a)$\Rightarrow$(b). Let $X\in\ker\sigma_H\cap B^W_{2t}$ and split it by
Lemma~\ref{lem:rank-halving} as $X=E-E'$ with $E,E'\in B^W_t$. As $\sigma_H$ is $\F_q$-linear and
$\sigma_H(X)=0$, we get $\sigma_H(E)=\sigma_H(E')$, so (a) gives
$\sigma_H(e\star E)=\sigma_H(e\star E')$, that is $\sigma_H(e\star X)=0$.

(b)$\Rightarrow$(a). Define $f_t(y):=\sigma_H(e\star E)$ for any $E\in B^W_t$ with
$\sigma_H(E)=y$, and $f_t(y):=0$ otherwise. If $E,E'\in B^W_t$ share a syndrome then
$X:=E-E'\in W$ has $\sigma_H(X)=0$ and, by \eqref{eq:rank-subadditive}, $\wt_R(X)\le2t$; hence (b)
gives $\sigma_H(e\star E)=\sigma_H(e\star E')$ and $f_t$ is well defined.

(b)$\Leftrightarrow$(c) is the identification $\ker\sigma_H=S_H^{\ptr}$ together with
$\ker\bigl(\sigma_H\circ(e\star\cdot)\bigr)=\bigl(e^{\dagger}\star S_H\bigr)^{\ptr}$ from
\eqref{eq:adjoint-star}.

\emph{The minimum-rank certificate.} The hypothesis $d_R(S_H^{\ptr}\cap W)>2t$ gives
$\ker\sigma_H\cap B^W_{2t}=\{0\}$, which is contained in every subspace; injectivity on $B^W_t$ and
$\mathrm{Dec}_t(\sigma_H(E))=E$ for $E\in B^W_t$ follow from Theorem~\ref{thm:exact-unique} and
Definition~\ref{def:certifying-decoder}.
\end{proof}

\begin{corollary}[Linearity dichotomy]
\label{cor:linearity-dichotomy}
Let $t\ge1$, $U\subseteq\Fqm$ an $\F_q$-subspace and $W=U^n$. If some $A\in\F_q^{s\times s}$
satisfies $\sigma_H(e\star E)=A\,\sigma_H(E)$ for every $E\in W$ of rank at most $t$, then the same
identity holds for every $E\in W$, so $e^{\dagger}\star S_H\subseteq S_H+W^{\ptr}$; and if $A=I$
then $H$ is $W$-relatively aligned. For $U=\Fqm$ these read $e^{\dagger}\star S_H\subseteq S_H$ and
$e^{\dagger}\star h^{(i)}=h^{(i)}$, that is, every check is a fixed point of the \emph{adjoint}
$e^{\dagger}$; if in addition $e$ is a trace-self-adjoint idempotent, then
$\Fix(e^{\dagger})=\Fix(e)=\Imag(e)$ and this is $e$-alignment in the sense of
Definition~\ref{def:aligned-check}. Consequently any interface beyond criterion (d) of
Theorem~\ref{thm:projected-recoverability} must use a nonlinear recovery map, for which the
minimum-rank certificate of Theorem~\ref{thm:rank-restricted} is a sufficient but not a necessary
condition.
\end{corollary}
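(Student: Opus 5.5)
The plan is to reduce the bounded-radius linear statement to the whole-space factorization of Theorem~\ref{thm:projected-recoverability}, using Lemma~\ref{lem:rankone-span} to pass from rank-one labels to all of $W$.

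\emph{Step 1 (extend the linear identity to all of $W$).} Assume $\sigma_H(e\star E)=A\,\sigma_H(E)$ holds for every $E\in W$ with $\wt_R(E)\le t$. Since $t\ge1$, this ball contains every rank-one label $c\,b$ with $c\in\F_q^n$ and $b\in U$. Both sides are $\F_q$-linear in $E$: the map $e\star\cdot$ is $\F_q$-linear, $\sigma_H$ is linear, and $A$ is a fixed matrix. By Lemma~\ref{lem:rankone-span} applied to $V=U$, these rank-one labels span $W=U^n$ (for $U=\{0\}$ there is nothing to prove). Hence the identity holds on all of $W$.

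\emph{Step 2 (apply the whole-space criterion).} Condition (b) of Theorem~\ref{thm:projected-recoverability} now holds with this $A$, so (d) yields $e^{\dagger}\star S_H\subseteq S_H+W^{\ptr}$. If $A=I$, the final clause of that theorem gives \eqref{eq:relative-alignment}, i.e.\ $W$-relative alignment. For $U=\Fqm$ we have $W^{\ptr}=\{0\}$, so these become $e^{\dagger}\star S_H\subseteq S_H$ and $e^{\dagger}\star h^{(i)}=h^{(i)}$.

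\emph{Step 3 (self-adjoint idempotents).} If $e=e^{\dagger}=e^2$, then $\Fix(e^{\dagger})=\Fix(e)$. Moreover $\Fix(e)=\Imag(e)$ because $e$ is idempotent: $e(x)=x$ gives $x\in\Imag(e)$, and $y=e(x)$ gives $e(y)=e^2(x)=y$. So every check lies in $\Imag(e)^n$, which is the alignment condition of the later definition.

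\emph{Step 4 (the consequence).} Any interface valid on a ball that does not satisfy criterion (d) cannot be given by a matrix $A$, by Steps 1--2, and so must be nonlinear. Sufficiency of the minimum-rank certificate is the last clause of Theorem~\ref{thm:rank-restricted}. To show it is not necessary, I would use a pair where (d) holds but $d_R(S_H^{\ptr}\cap W)\le2t$, for instance $e=\mathrm{id}$ with a kernel containing low-rank elements; there the identity map works while the certificate fails.

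The main care point is Step 1: the hypothesis $t\ge1$ is exactly what places the rank-one generators inside the ball. Everything else is bookkeeping through the earlier results.
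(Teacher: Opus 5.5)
Your Steps 1--3 and the contrapositive in Step 4 are exactly the paper's proof. The difference $\sigma_H\circ(e\star\cdot)-A\circ\sigma_H$ is $\F_q$-linear and vanishes on the rank-one elements of $U^n$, which span $W$ by Lemma~\ref{lem:rankone-span}. Hence condition (b) of Theorem~\ref{thm:projected-recoverability} holds, (b)$\Rightarrow$(d) gives the containment, and the case $A=I$ and the specialization $U=\Fqm$ follow as you say.

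The one thing to fix is your witness for the ``not necessary'' clause. The paper's proof does not argue this clause separately; it only invokes the contrapositive.

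With $e=\mathrm{id}$ one has $e^{\dagger}\star S_H=S_H$, so criterion (d) holds and the interface is the linear identity map. This shows only that the certificate is not necessary for \emph{some} interface to exist. It does not address the nonlinear interfaces beyond criterion (d), which is what the clause is about.

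An in-scope witness needs (d) to fail while condition (b) of Theorem~\ref{thm:rank-restricted} holds with $\ker\sigma_H\cap B^W_{2t}\neq\{0\}$. The first pair of Example~\ref{ex:nonoptimal-pair} does this at $t=1$, with $P=\mathrm{diag}(1,0,0)$ and $\cD_1=\langle E_{23},I_3\rangle_{\F_2}$:
the rank-one kernel element $E_{23}$ violates the certificate at $2t=2$, but $PE_{23}=0$;
the only elements of $\cD_1$ sent outside $\cD_1$ are $I_3$ and $I_3+E_{23}$, both of rank three, so $\mathrm I_5$ holds at $t=1$;
and $PI_3=E_{11}\notin\cD_1$ shows that (d) fails, so the recovery map there is necessarily nonlinear.
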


\begin{proof}
$L:=\sigma_H\circ(e\star\cdot)-A\circ\sigma_H$ is $\F_q$-linear and vanishes on every $E\in W$ of
rank at most $t$, in particular on every rank-one element of $W=U^n$. By
Lemma~\ref{lem:rankone-span} with $V=U$ those span $W$, so $L|_W\equiv0$, which is condition (b) of
Theorem~\ref{thm:projected-recoverability}; implication (b)$\Rightarrow$(d) gives the containment.
If $A=I$ then $L|_W\equiv0$ is \eqref{eq:relative-alignment}. Taking $U=\Fqm$ gives
$W^{\ptr}=\{0\}$. The last assertion is the contrapositive.
\end{proof}

\subsection{Recovering the projected error modulo stabilizer labels}
\label{subsec:quotient}

Theorems~\ref{thm:projected-recoverability} and \ref{thm:rank-restricted} concern the projected
syndrome, whereas a decoder must produce a correction, needed only modulo the stabilizer. That
demand is weaker than exact recovery of the projected error but, as
Lemma~\ref{lem:interface-hierarchy} records, strictly stronger than recovery of its syndrome; its
exact criterion has the same two-level structure.

\begin{theorem}[Quotient-valued factorization]
\label{thm:quotient-interface}
Let $H$, $S_H$, $e$, $e^{\dagger}$ be as in Theorem~\ref{thm:projected-recoverability}, let
$U\subseteq\Fqm$ be an $\F_q$-subspace, $W:=U^n$, and let $T\subseteq\Fqm^n$ be an $\F_q$-subspace.
\begin{enumerate}
\item[(a)] $H$ admits a $T$-valued projected interface on $\Fqm^n$ if and only if
$e\star\ker\sigma_H\subseteq T$, equivalently if and only if
\begin{equation}
\label{eq:quotient-criterion}
e^{\dagger}\star T^{\ptr}\subseteq S_H,
\end{equation}
and $g$ may then be taken $\F_q$-linear.
\item[(b)] For $t\ge1$, $H$ admits a $T$-valued projected interface on $B^W_t$ if and only if
$e\star\bigl(\ker\sigma_H\cap B^W_{2t}\bigr)\subseteq T$.
\item[(c)] Taking $T=\{0\}$ in (a) gives $\Imag(e^{\dagger})^n\subseteq S_H$, so exact recovery of
the projected error is strictly stronger than recovery of its syndrome; and the minimum-rank
certificate of Theorem~\ref{thm:rank-restricted} implies (b) for every $T$.
\end{enumerate}
\end{theorem}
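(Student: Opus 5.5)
The plan is to follow the template of Theorems~\ref{thm:projected-recoverability} and \ref{thm:rank-restricted}, working with the quotient $\Fqm^n/T$ and the linear maps $\sigma_H$ and $E\mapsto e\star E$.

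\emph{Part (a).} For necessity, suppose $g$ exists and let $X\in\ker\sigma_H$. Comparing $E=X$ with $E=0$, both of which have syndrome $0$, gives $e\star X+T=g(0)=e\star 0+T=T$, so $e\star X\in T$. For sufficiency, assume $e\star\ker\sigma_H\subseteq T$. Then the linear map $E\mapsto e\star E+T$ vanishes on $\ker\sigma_H$, so it factors through $\sigma_H$. This gives an $\F_q$-linear map on $\Imag\sigma_H$, which I would extend linearly to all of $\F_q^s$, exactly as in the step (c)$\Rightarrow$(b) of Theorem~\ref{thm:projected-recoverability}.

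For the dual form, I would use $\ker\sigma_H=S_H^{\ptr}$ and the adjoint identity \eqref{eq:adjoint-star}. These show that $e\star S_H^{\ptr}\subseteq T$ holds iff $\langle y,e\star X\rangle_{\Tr}=0$ for all $y\in T^{\ptr}$ and $X\in S_H^{\ptr}$. That in turn says $\langle e^{\dagger}\star y,X\rangle_{\Tr}=0$, i.e.\ $e^{\dagger}\star T^{\ptr}\subseteq(S_H^{\ptr})^{\ptr}=S_H$ by \eqref{eq:trace-dual-dimension}. The one detail to verify here is the reverse direction: from $e\star X\in(T^{\ptr})^{\ptr}=T$, again by the double-dual identity.

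\emph{Part (b).} I would copy the proof of Theorem~\ref{thm:rank-restricted}. If the interface exists and $X\in\ker\sigma_H\cap B^W_{2t}$, split $X=E-E'$ with $E,E'\in B^W_t$ by Lemma~\ref{lem:rank-halving}. The two pieces share a syndrome, so $e\star E+T=e\star E'+T$, and linearity of $e\star\cdot$ gives $e\star X\in T$. Conversely, define $g(y):=e\star E+T$ for any $E\in B^W_t$ with $\sigma_H(E)=y$, and $g(y):=T$ if no such $E$ exists. To see $g$ is well defined, note that two such $E,E'$ differ by an element of $\ker\sigma_H\cap W$ of rank at most $2t$, by \eqref{eq:rank-subadditive}, and the hypothesis then puts $e\star(E-E')$ in $T$.

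\emph{Part (c).} With $T=\{0\}$ we have $T^{\ptr}=\Fqm^n$, so \eqref{eq:quotient-criterion} reads $e^{\dagger}\star\Fqm^n=\Imag(e^{\dagger})^n\subseteq S_H$. Exact recovery of $e\star E$ determines $\sigma_H(e\star E)$, so it implies projected-syndrome recovery. For strictness I would use a small example: take $e=\mathrm{id}$ and a proper nonzero $S_H$. The ambient syndrome criterion $e^{\dagger}\star S_H\subseteq S_H$ holds trivially, but $\Imag(e^{\dagger})^n=\Fqm^n\not\subseteq S_H$. Finally, under the minimum-rank certificate $d_R(S_H^{\ptr}\cap W)>2t$, the set $\ker\sigma_H\cap B^W_{2t}$ is $\{0\}$. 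Its image under $e\star\cdot$ is therefore $\{0\}\subseteq T$ for every $T$, which is the criterion of (b).

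There is no real obstacle here. The most care is needed in the duality step of (a), which relies on the double-dual identity \eqref{eq:trace-dual-dimension} holding for arbitrary $\F_q$-subspaces $T$.
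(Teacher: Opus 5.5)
Your proposal is correct and follows the paper's proof essentially step for step. For (a) you factor $E\mapsto e\star E+T$ through $\sigma_H$ and dualize with the adjoint identity; for (b) you use rank halving together with subadditivity; and for (c) you use the fact that the certificate forces $\ker\sigma_H\cap B^W_{2t}=\{0\}$. The differences are cosmetic: you run the duality step elementwise rather than through $\varphi(U_1)^{\perp}=(\varphi^{\dagger}\star\cdot)^{-1}(U_1^{\perp})$, and you give an explicit strictness witness ($e=\mathrm{id}$, $S_H\subsetneq\Fqm^n$) that the paper leaves implicit.
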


\begin{proof}
(a) If $g$ exists and $\sigma_H(E)=\sigma_H(E')$ then $e\star(E-E')\in T$; taking
$E'=0$ and letting $E$ range over $\ker\sigma_H$ gives $e\star\ker\sigma_H\subseteq T$.

\emph{Sufficiency and linearity.} The map $E\mapsto e\star E+T$ is $\F_q$-linear and, under the
hypothesis, vanishes on $\ker\sigma_H$, so it factors through
$\Fqm^n/\ker\sigma_H\cong\sigma_H(\Fqm^n)$ as an $\F_q$-linear map $g_0$; any $\F_q$-linear
extension of $g_0$ to $\F_q^s$ is the required $g$.

\emph{Duality.} For componentwise $\F_q$-linear $\varphi$ and any $\F_q$-subspace
$U_1\subseteq\Fqm^n$, \eqref{eq:adjoint-star} gives
$\varphi(U_1)^{\ptr}=(\varphi^{\dagger}\star\cdot)^{-1}\bigl(U_1^{\ptr}\bigr)$; since
$(\cdot)^{\ptr}$ is inclusion-reversing and involutive by \eqref{eq:trace-dual-dimension},
$\varphi(U_1)\subseteq T$ is equivalent to
$\varphi^{\dagger}\star T^{\ptr}\subseteq U_1^{\ptr}$. With $\varphi=e\star\cdot$ and
$U_1=\ker\sigma_H=S_H^{\ptr}$, so that $U_1^{\ptr}=S_H$, this turns
$e\star\ker\sigma_H\subseteq T$ into \eqref{eq:quotient-criterion}.

(b) Let $X\in\ker\sigma_H\cap B^W_{2t}$ and split it by
Lemma~\ref{lem:rank-halving} into $E,E'\in B^W_t$ with $E-E'=X$. Then $\sigma_H(E)=\sigma_H(E')$, so
$e\star X=e\star E-e\star E'\in T$.

\emph{Sufficiency.} Define $g$ on $\sigma_H(B^W_t)$ by $g(y):=e\star E+T$ for any $E\in B^W_t$ with
$\sigma_H(E)=y$, and arbitrarily elsewhere. If $E,E'\in B^W_t$ share a syndrome then
$X:=E-E'\in W$ satisfies $\sigma_H(X)=0$ and $\wt_R(X)\le2t$ by \eqref{eq:rank-subadditive}, so
$e\star X\in T$ and $g$ is well defined.

(c) For $T=\{0\}$ one has $T^{\ptr}=\Fqm^n$ and \eqref{eq:quotient-criterion} reads
$\Imag(e^{\dagger})^n\subseteq S_H$. The certificate gives $\ker\sigma_H\cap B^W_{2t}=\{0\}$, whose
image is $\{0\}\subseteq T$.
\end{proof}

\begin{corollary}[Quotient-valued linearity dichotomy]
\label{cor:quotient-linearity}
Let $t\ge1$, let $U\subseteq\Fqm$ be an $\F_q$-subspace, $W:=U^n$, and $T\subseteq\Fqm^n$ an
$\F_q$-subspace. If some $\F_q$-linear $g:\F_q^s\to\Fqm^n/T$ satisfies
$e\star E+T=g\bigl(\sigma_H(E)\bigr)$ for every $E\in W$ of rank at most $t$, then the same identity
holds for every $E\in W$. In particular, on the ambient label space $W=\Fqm^n$ a linear
bounded-rank quotient-valued interface satisfies the ambient criterion
$e\star\ker\sigma_H\subseteq T$ of Theorem~\ref{thm:quotient-interface}(a), for an arbitrary
$\F_q$-linear $e$, and the interfaces left outside that criterion are exactly the \emph{nonlinear}
bounded-rank quotient-valued ones. At $T=S_X$ the criterion is in addition equivalent to the
structural description of Theorem~\ref{thm:quotient-characterization} when $e$ is a
trace-self-adjoint idempotent, but not for a general $e$: in the matrix form of
Section~\ref{subsec:matrix-css} with $q=2$, $a=3$, $n=1$, $S_X=\langle(1,0,0)^{\top}\rangle$ and
$S_Z=\langle(0,1,0)^{\top}\rangle$, the idempotent $P$ with first row $(1,1,0)$ and zero elsewhere
has $P^{\top}\neq P$ and $P\ker\sigma_H=S_X$, yet $PS_Z=S_X\not\subseteq S_Z$.
\end{corollary}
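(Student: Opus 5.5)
The first claim reduces to the rank-one spanning argument already used in Corollary~\ref{cor:linearity-dichotomy}. Given an $\F_q$-linear $g:\F_q^s\to\Fqm^n/T$, consider the map $L:W\to\Fqm^n/T$ defined by $L(E):=(e\star E+T)-g(\sigma_H(E))$. It is $\F_q$-linear, because $E\mapsto e\star E+T$ is $\F_q$-linear (a composition of the componentwise action with the quotient map), and $g\circ\sigma_H$ is a composition of linear maps. By hypothesis $L$ vanishes on every element of $W$ of rank at most $t$. Since $t\ge1$, this includes every rank-one vector $c\,b$ with $b\in U$ and $c\in\F_q^n$. By Lemma~\ref{lem:rankone-span} with $V=U$ these vectors span $W=U^n$, so $L|_W\equiv0$. (If $U=\{0\}$ there is nothing to prove.) This gives the identity on all of $W$.

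For the ambient statement I take $U=\Fqm$, so $W=\Fqm^n$. The identity now holds on the whole label space, which means $g$ is a $T$-valued projected interface on $\Fqm^n$ in the sense of Definition~\ref{def:quotient-interface}. Theorem~\ref{thm:quotient-interface}(a) then yields $e\star\ker\sigma_H\subseteq T$. Directly: for $E\in\ker\sigma_H$ we get $e\star E+T=g(0)=T$ by linearity. Nothing here uses idempotency or self-adjointness of $e$. The statement that the interfaces outside the criterion are exactly the nonlinear ones is the contrapositive: a bounded-rank interface that violates the ambient criterion cannot be linear. Conversely, any interface satisfying the criterion can be taken linear by part (a).

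The last assertion has two halves. The positive half, equivalence with the structural description for trace-self-adjoint idempotents, is deferred to Theorem~\ref{thm:quotient-characterization} and needs nothing further here. The negative half is the explicit counterexample, which I would verify in the matrix model with the standard pairing; Lemma~\ref{lem:gram-reduction} and $n=1$ make that harmless.

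The setting is $q=2$, $a=3$, $n=1$, so labels are vectors in $\F_2^3$. Take $S_X=\langle\varepsilon_1\rangle$ and $S_Z=\langle\varepsilon_2\rangle$. These are orthogonal, and $\ker\sigma_H=S_Z^{\pstd}=\langle\varepsilon_1,\varepsilon_3\rangle$. The matrix $P$ has rows $(1,1,0),0,0$, so $P^2=P$ since its $(1,1)$ entry is $1$, and $P^{\top}\neq P$. Then $P\varepsilon_1=\varepsilon_1$ and $P\varepsilon_3=0$, so $P\ker\sigma_H=S_X$ and the criterion holds. However, $P\varepsilon_2=\varepsilon_1\notin S_Z$, so $PS_Z\not\subseteq S_Z$ and the span-invariance part of the structural description fails.

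The main obstacle is only bookkeeping: checking that the quotient-valued map is genuinely $\F_q$-linear on $W$. This holds because $T$ is a subspace and $g$ is assumed linear, so the spanning argument transfers verbatim from the syndrome-valued case.
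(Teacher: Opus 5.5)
Your proof is correct and follows the paper's argument. The $\F_q$-linear difference map $L$ vanishes on the rank-one labels of $W=U^n$, which span $W$ by Lemma~\ref{lem:rankone-span}; on the ambient space, linearity of $g$ gives $g(0)=T$, hence $e\star\ker\sigma_H\subseteq T$. Your explicit check of the $3\times1$ counterexample ($P^2=P$, $P\langle\varepsilon_1,\varepsilon_3\rangle=S_X$, $P\varepsilon_2=\varepsilon_1\notin S_Z$) supplies a verification the paper states without proof.
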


\begin{proof}
The map $L(E):=\bigl(e\star E+T\bigr)-g\bigl(\sigma_H(E)\bigr)$ is $\F_q$-linear on $W$, being a
difference of composites of $\F_q$-linear maps, and vanishes on every $E\in W$ of rank at most $t$,
in particular on every rank-one element of $W=U^n$; those span $W$ by
Lemma~\ref{lem:rankone-span}, so $L|_W\equiv0$. For $W=\Fqm^n$ and $E\in\ker\sigma_H$, linearity of
$g$ gives $g(0)=T$, whence $e\star E\in T$.
\end{proof}

\begin{lemma}[The interface hierarchy]
\label{lem:interface-hierarchy}
Let $(S_X,S_Z)$ be trace-orthogonal, let $H$ span $S_Z$ and let $T\subseteq\ker\sigma_H$, as holds
for $T=S_X$ by \eqref{eq:trace-css-commutation}. Then a $T$-valued projected interface on an error
set determines the projected syndrome there. Consequently
$\mathrm{I}_3\Rightarrow\mathrm{I}_1$ and $\mathrm{I}_6\Rightarrow\mathrm{I}_5$ in the notation
of Definition~\ref{def:interfaces}: recovering the projected error modulo the check space of
the corrected sector is strictly stronger than recovering the projected syndrome and strictly
weaker than recovering the projected error itself
(Theorem~\ref{thm:quotient-interface}(c)).
\end{lemma}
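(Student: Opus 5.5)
The plan is to prove the first assertion directly from the definitions and then read off the two implications. The strictness claims come from earlier results.

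Suppose $g$ is a $T$-valued projected interface on an error set $\mathcal E$. Thus $e\star E+T=g(\sigma_H(E))$ for every $E\in\mathcal E$. Since $T\subseteq\ker\sigma_H$ and $\sigma_H$ is $\F_q$-linear, $\sigma_H$ is constant on each coset $e\star E+T$. It therefore descends to a well-defined $\F_q$-linear map $\bar\sigma_H:\Fqm^n/T\to\F_q^s$, given by $\bar\sigma_H(Y+T):=\sigma_H(Y)$. We set $f:=\bar\sigma_H\circ g$. Then $\sigma_H(e\star E)=f(\sigma_H(E))$ for every $E\in\mathcal E$, so the measured syndrome determines the projected syndrome on $\mathcal E$. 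If $g$ is linear, so is $f$.

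For $T=S_X$ the hypothesis $S_X\subseteq\ker\sigma_H=S_Z^{\ptr}$ is exactly \eqref{eq:trace-css-commutation}, because $H$ spans $S_Z$.

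Taking $\mathcal E=\Fqm^n$ gives $\mathrm I_3\Rightarrow\mathrm I_1$. Taking $\mathcal E=B_t$, the rank-$t$ ball, at each fixed radius gives $\mathrm I_6(t)\Rightarrow\mathrm I_5(t)$, hence $\mathrm I_6\Rightarrow\mathrm I_5$.

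For the comparison with the error itself, the quotient target is weaker. An interface with $T=\{0\}$ yields one for any $T$ by composing with the quotient map. By Theorem~\ref{thm:quotient-interface}(c), the $T=\{0\}$ ambient criterion $\Imag(e^{\dagger})^n\subseteq S_H$ is stronger than the $T=S_X$ criterion $e^{\dagger}\star S_X^{\ptr}\subseteq S_Z$.

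Strictness is the only step needing a witness. For the gap between the quotient target and the syndrome target, I would use the later Proposition~\ref{prop:stacking-sharpness}, or the $d^{R}_X=1$ case. There $e^{\dagger}\star S_Z\subseteq S_Z$ holds while $e\star\cD_X\not\subseteq S_X$, which the paper lists in Table~\ref{tab:interfaces}. For the gap between the error and the quotient, I would take any pair with $e\star\cD_X\subseteq S_X$ but $\Imag(e^{\dagger})^n\not\subseteq S_Z$, for instance $e=0$ with $S_Z\neq\Fqm^n$.

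The main obstacle is only the bookkeeping of these witnesses. The implications themselves follow at once from factoring $\sigma_H$ through the quotient by $T$.
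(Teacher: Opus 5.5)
Your main argument is exactly the paper's proof. Since $T\subseteq\ker\sigma_H$, $\sigma_H$ descends to $\bar\sigma_H$ on $\Fqm^n/T$, and $\bar\sigma_H\circ g$ returns the projected syndrome; specializing to $\mathcal E=\Fqm^n$ and $\mathcal E=B_t$ gives the two implications. Your witnesses for the gap between the quotient target and the syndrome target are also acceptable; the paper uses Example~\ref{ex:hierarchy-strict}. One caveat: to separate $\mathrm I_5$ from $\mathrm I_6$ under the ``for some $t\ge1$'' reading, you need $\delta_{\mathrm{quot}}(e)\le2$. The $d^{R}_X=1$ endpoint, or a stacked base pair with $d^{R}_X\le2$, supplies this. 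A stacked base pair with $d^{R}_X\ge3$ does not, since there $\mathrm I_6(1)$ holds.

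The step that fails is your witness for the gap between the error target and the quotient target. At $e=0$ one has $e^{\dagger}=0$, so $\Imag(e^{\dagger})^n=\{0\}\subseteq S_Z$ for every $S_Z$. The projected error $e\star E=0$ is then trivially recoverable, and $e=0$ separates nothing. Your general condition, $e\star\cD_X\subseteq S_X$ but $\Imag(e^{\dagger})^n\not\subseteq S_Z$, is the right one; you just need a different instance. Take a nontrivial trace-self-adjoint idempotent $e$ (Lemma~\ref{lem:idempotent-existence}), set $S_X:=\Imag(e)^n$, and let $S_Z$ be any proper subspace of $\Imag(\mathrm{id}-e)^n$. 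By Theorem~\ref{thm:css-admissibility}, $S_Z^{\ptr}\supseteq\Imag(e)^n$, so $e\star S_Z^{\ptr}=\Imag(e)^n=S_X$. The quotient interface therefore exists, with $K_q>0$. On the other hand, $\Imag(e^{\dagger})^n=\Imag(e)^n$ meets $S_Z\subseteq\Imag(\mathrm{id}-e)^n$ only in $0$, so the projected error itself is not recoverable.
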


\begin{proof}
Since $T\subseteq\ker\sigma_H$, the map $\sigma_H$ is constant on cosets of $T$ and descends to
$\bar\sigma_H$ with $\bar\sigma_H(y+T)=\sigma_H(y)$; composing with $g$ expresses
$\sigma_H(e\star E)$ as a function of $\sigma_H(E)$. The strictness of these implications is
witnessed by Example~\ref{ex:hierarchy-strict} and Proposition~\ref{prop:separation}.
\end{proof}

\begin{example}[$\mathrm{I}_5$ does not imply $\mathrm{I}_6$]
\label{ex:hierarchy-strict}
Let $q=2$, $m=3$, $n=1$, let $(b_1,b_2,b_3)$ be a self-dual basis of $\F_8/\F_2$
\cite{seroussi1980}, and set $S_X:=\langle b_2\rangle$, $S_Z:=\langle b_3\rangle$,
$e(z):=\Tr(b_1z)\,b_1$. Then $(S_X,S_Z)$ is trace-orthogonal with $K_q=1$, $e$ is a rank-one
trace-self-adjoint idempotent, and self-duality gives
$\sigma_H(e\star E)=\Tr(b_1E)\Tr(b_3b_1)=0$ for every $E$, so the zero map is an $\F_2$-linear
ambient interface and $\mathrm{I}_1,\mathrm{I}_2,\mathrm{I}_5$ hold. But $E=0$ and $E'=b_1$ share a
syndrome and have rank weight at most one, while $e\star E+S_X=S_X$ and $e\star E'+S_X=b_1+S_X$
differ since $b_1\notin\langle b_2\rangle$; so $\mathrm{I}_6$, hence
$\mathrm{I}_3,\mathrm{I}_4$, fail. Both spans are $e$-invariant, $e\star b_2=e\star b_3=0$, yet the
branch carries $K_{\Imag(e)}=1>0$, so
Theorem~\ref{thm:quotient-characterization} does not apply.
\end{example}

The example also separates the two linear classes. Throughout this comparison $e$ is a
trace-self-adjoint idempotent, as Theorem~\ref{thm:quotient-characterization} requires. Write
$\mathcal L_{\mathrm{syn}}$ for the pairs
whose check spans are both $e$-invariant, which by
Theorem~\ref{thm:projected-recoverability} and Corollary~\ref{cor:linearity-dichotomy} is exactly
the class satisfying $\mathrm{I}_1$ in both sectors, and $\mathcal L_{\mathrm{quot}}$ for those
satisfying $\mathrm{I}_3$. By Theorem~\ref{thm:quotient-characterization},
\begin{equation}
\label{eq:linear-class-inclusion}
\mathcal L_{\mathrm{quot}}
=\{(S_X,S_Z)\in\mathcal L_{\mathrm{syn}}:K_{\Imag(e)}=0\}
\subsetneq\mathcal L_{\mathrm{syn}},
\end{equation}
the inclusion being strict by Example~\ref{ex:hierarchy-strict}. Treating these two classes interchangeably would obscure the necessity of the second condition in Theorem~\ref{thm:quotient-characterization} and overlook the dimensional cost characterized in Corollary~\ref{cor:interface-cost}.

\subsection{The two relative distances and the exact radii}
\label{subsec:radii}

The conditions $\mathrm I_5$ and $\mathrm I_6$ are governed by two invariants, which are exact on
Singleton-optimal pairs.

\begin{definition}[Relative projection distances]
\label{def:relative-distances}
Let $e$ be $\F_q$-linear acting componentwise, let $H$ span $S_Z$, so that
$\cD:=\ker\sigma_H=S_Z^{\ptr}$, and let $T:=S_X$. Put
\begin{equation}
\label{eq:relative-distances}
\begin{split}
\delta_{\mathrm{syn}}(e)&:=\min\{\wt_R(x):x\in \cD,\ e\star x\notin \cD\},\\
\delta_{\mathrm{quot}}(e)&:=\min\{\wt_R(x):x\in \cD,\ e\star x\notin T\},
\end{split}
\end{equation}
with $\min\emptyset:=+\infty$, and define the mirrored quantities in the $Z$ sector by sector
interchange.
\end{definition}

\begin{remark}[Relation to relative generalized matrix weights]
\label{rem:rgmw}
Writing $\Kker_{\mathrm{syn}}:=\cD\cap(e\star\cdot)^{-1}(\cD)$ and
$\Kker_{\mathrm{quot}}:=\cD\cap(e\star\cdot)^{-1}(T)$, both $\F_q$-subspaces of $\cD$,
\eqref{eq:relative-distances} reads
$\delta_{\mathrm{syn}}(e)=\min\{\wt_R(x):x\in\cD\setminus\Kker_{\mathrm{syn}}\}$ and
$\delta_{\mathrm{quot}}(e)=\min\{\wt_R(x):x\in\cD\setminus\Kker_{\mathrm{quot}}\}$. For a nested
pair of matrix codes $C_2\subsetneq C_1$ the quantity $\min\{\rank x:x\in C_1\setminus C_2\}$ is
the first relative generalized matrix weight $d_{M,1}(C_1,C_2)$ of
\cite{martinezpenas2018rgmw}, so that both invariants are relative generalized matrix weights
whenever the relevant subspace is proper, the convention $\min\emptyset=+\infty$ covering the
degenerate cases. Rather than introducing the abstract notion of a relative distance, our contribution lies in identifying the two specific nested pairs induced by the projector and quantum syndrome semantics, along with their exact evaluation in Theorem~\ref{thm:radius-dichotomy}.
\end{remark}

\begin{lemma}[The two invariants govern the bounded-rank interfaces]
\label{lem:relative-criterion}
For every $t\ge1$ one has $\mathrm{I}_{5,X}(t)$ if and only if $2t<\delta_{\mathrm{syn}}(e)$, and
$\mathrm{I}_{6,X}(t)$ if and only if $2t<\delta_{\mathrm{quot}}(e)$. Moreover
$d_R(\cD)\le\delta_{\mathrm{quot}}(e)\le\delta_{\mathrm{syn}}(e)$.
\end{lemma}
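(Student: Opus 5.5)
The plan is to reduce both equivalences to the rank-restricted criteria already proved: Theorem~\ref{thm:rank-restricted} for the projected syndrome and Theorem~\ref{thm:quotient-interface}(b) for the quotient target. Both are applied with $U=\Fqm$, so $W=\Fqm^n$ and $B^W_{2t}$ is the ambient rank-$2t$ ball. We take $H$ spanning $S_Z$, so that $\ker\sigma_H=S_Z^{\ptr}=\cD$ and $T=S_X$.

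\emph{The syndrome target.} By Theorem~\ref{thm:rank-restricted}, $\mathrm{I}_{5,X}(t)$ is equivalent to
\[
\ker\sigma_H\cap B_{2t}\subseteq\ker\bigl(\sigma_H\circ(e\star\cdot)\bigr),
\]
that is, every $x\in\cD$ with $\wt_R(x)\le 2t$ satisfies $e\star x\in\cD$. This holds if and only if no element of the set $\{x\in\cD:e\star x\notin\cD\}$ has rank at most $2t$. By the definition of $\delta_{\mathrm{syn}}(e)$ as the minimum rank over that set, this says exactly that $\delta_{\mathrm{syn}}(e)>2t$. The convention $\min\emptyset=+\infty$ covers the case where every $t$ works.

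\emph{The quotient target.} Theorem~\ref{thm:quotient-interface}(b) with $T=S_X$ says that $\mathrm{I}_{6,X}(t)$ holds if and only if $e\star(\cD\cap B_{2t})\subseteq S_X$. The same rephrasing as above turns this into $2t<\delta_{\mathrm{quot}}(e)$. Both equivalences are therefore one-line consequences of the earlier theorems, with no further work needed.

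\emph{The inequality chain.} For $\delta_{\mathrm{quot}}\le\delta_{\mathrm{syn}}$, I would use $S_X\subseteq S_Z^{\ptr}=\cD$, which is the commutation condition \eqref{eq:trace-css-commutation}. It gives $e\star x\notin\cD\Rightarrow e\star x\notin S_X$, so the set defining $\delta_{\mathrm{syn}}$ is contained in the set defining $\delta_{\mathrm{quot}}$, and a minimum over a larger set is no larger. For $d_R(\cD)\le\delta_{\mathrm{quot}}$, note that every $x$ in the defining set is nonzero, because $e\star 0=0\in S_X$. It is also in $\cD$, so its rank is at least $d_R(\cD)$. If the set is empty, $\delta_{\mathrm{quot}}=+\infty$ and the bound is trivial.

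The only point needing care is the bookkeeping of the $+\infty$ conventions: the case $\cD=\{0\}$ and the empty sets. There is no real obstacle, since the substance sits entirely in the rank-halving step already built into the cited theorems.
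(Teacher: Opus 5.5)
Your proposal is correct and follows the paper's proof essentially verbatim: both equivalences come from criterion (b) of Theorem~\ref{thm:rank-restricted} and of Theorem~\ref{thm:quotient-interface} at $W=\Fqm^n$, and the inequality chain comes from $S_X\subseteq\cD$ together with the fact that $e\star 0=0\in S_X$ forces $x\neq0$. Your explicit handling of the $+\infty$ conventions (empty candidate sets, $\cD=\{0\}$) is a small refinement that the paper leaves implicit.
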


\begin{proof}
Criterion (b) of Theorem~\ref{thm:rank-restricted} at $W=\Fqm^n$ reads
$e\star(\cD\cap B_{2t})\subseteq \cD$, that is, no $x\in \cD$ with $\wt_R(x)\le2t$ has
$e\star x\notin \cD$, which is $2t<\delta_{\mathrm{syn}}(e)$. Criterion (b) of
Theorem~\ref{thm:quotient-interface} at $T=S_X$ and $W=\Fqm^n$ reads
$e\star(\cD\cap B_{2t})\subseteq S_X$, which is $2t<\delta_{\mathrm{quot}}(e)$. Since
$S_X\subseteq \cD$ by CSS commutation, $e\star x\notin \cD$ implies $e\star x\notin S_X$, whence
$\delta_{\mathrm{quot}}\le\delta_{\mathrm{syn}}$; and $e\star x\notin S_X$ forces $x\neq0$, whence
$\delta_{\mathrm{quot}}\ge d_R(\cD)$.
\end{proof}

\begin{theorem}[Radius dichotomy on every layout]
\label{thm:radius-dichotomy}
Let $(S_X,S_Z)$ be trace-orthogonal check spans attaining equality in
\eqref{eq:rank-singleton} with $K_q>0$, on an arbitrary layout, and let $e$ be any nonzero
$\F_q$-linear map on $\Fqm$ acting componentwise; neither idempotency nor trace-self-adjointness
is assumed. Write $\cD_X:=S_Z^{\ptr}=\ker\sigma_H$. Then
\begin{enumerate}
\item[(i)] $\delta_{\mathrm{quot}}(e)=d^{R}_X$;
\item[(ii)] $\delta_{\mathrm{syn}}(e)=+\infty$ if $e\star\cD_X\subseteq\cD_X$, equivalently if
$e^{\dagger}\star S_Z\subseteq S_Z$, and $\delta_{\mathrm{syn}}(e)=d^{R}_X$ otherwise.
\end{enumerate}
Consequently, for every $t\ge1$, $\mathrm I_{6,X}(t)$ holds if and only if $2t<d^{R}_X$, while
$\mathrm I_{5,X}(t)$ holds if and only if $e^{\dagger}\star S_Z\subseteq S_Z$ or $2t<d^{R}_X$; and
$\mathrm I_{3,X}$, hence $\mathrm I_{4,X}$, fails for every nonzero $e$. The mirrored statements
hold in the $Z$ sector.
\end{theorem}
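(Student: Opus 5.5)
The plan is to combine the purity of optimal pairs with the generation of MRD codes by their minimum-rank codewords. Throughout, I write $u:=d^{R}_X-1$ and $v:=d^{R}_Z-1$, and pass to the matrix form with the standard pairing via Lemma~\ref{lem:gram-reduction}. That reduction replaces $S_X$ by $GS_X$, leaves $\cD_X=S_Z^{\ptr}$ as a set, and preserves rank, so every quantity in \eqref{eq:relative-distances} is unchanged. The componentwise map $e\star\cdot$ becomes left multiplication by the matrix of $e$.

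\emph{Lower bounds.} By Corollary~\ref{cor:purity}, $d_R(\cD_X)=d^{R}_X$. Lemma~\ref{lem:relative-criterion} then gives
\[
d^{R}_X\le\delta_{\mathrm{quot}}(e)\le\delta_{\mathrm{syn}}(e).
\]
This direction holds for every $e$ and needs nothing further.

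\emph{Upper bound for (i).} This is the main step. I aim to exhibit $x\in\cD_X$ with $\wt_R(x)=u+1$ and $e\star x\notin S_X$. Suppose instead that $e\star x\in S_X$ for every minimum-rank $x\in\cD_X$.
\begin{itemize}
\item Lemma~\ref{lem:rank-non-expansion} gives $\wt_R(e\star x)\le u+1$.
\item Since $K>0$ we have $u+v\le L-1$, so $u+1\le L-v<L-v+1=d_R(S_X)$ by Theorem~\ref{thm:singleton-equality-mrd}(i). If $S_X=\{0\}$ the conclusion below is immediate.
\item Hence $e\star x\in S_X$ forces $e\star x=0$ for every minimum-rank $x$.
\end{itemize}
Next I show these minimum-rank $x$ span $\cD_X$. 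If $S_Z\neq\{0\}$, then $\cD_X$ is MRD by Delsarte duality, and Lemma~\ref{lem:mrd-generation} gives the spanning. If $S_Z=\{0\}$, then $\cD_X$ is the whole space, $u=0$, and Lemma~\ref{lem:rankone-span} gives it. In both cases $e\star\cD_X=\{0\}$.

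Dualizing as in the duality step of Theorem~\ref{thm:quotient-interface}(a), with $T=\{0\}$, yields $\Imag(e^{\dagger})^n\subseteq S_Z$. Because $e\neq0$, also $e^{\dagger}\neq0$, so $S_Z$ contains a nonzero rank-one label $c\,b$ with $b\in\Imag(e^{\dagger})$. This is impossible in both cases:
\begin{itemize}
\item if $S_Z=\{0\}$, it contradicts $S_Z=\{0\}$ directly;
\item otherwise $d_R(S_Z)=L-u+1\ge2$ because $u\le L-1$.
\end{itemize}
Therefore $\delta_{\mathrm{quot}}(e)\le u+1$, which proves (i). The same contradiction shows $e\star\cD_X\not\subseteq S_X$, so $\mathrm I_{3,X}$ fails by Theorem~\ref{thm:quotient-interface}(a), and $\mathrm I_{4,X}$ fails by Corollary~\ref{cor:quotient-linearity}.

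\emph{Part (ii).}
\begin{itemize}
\item If $e\star\cD_X\subseteq\cD_X$, the set in the definition of $\delta_{\mathrm{syn}}$ is empty, so $\delta_{\mathrm{syn}}(e)=+\infty$.
\item The equivalence with $e^{\dagger}\star S_Z\subseteq S_Z$ is the duality identity from the proof of Theorem~\ref{thm:quotient-interface}, applied with $U_1=T=\cD_X$ and using $\cD_X^{\ptr}=S_Z$.
\item Otherwise some $x\in\cD_X$ has $e\star x\notin\cD_X$. Since $\cD_X$ is spanned by its minimum-rank elements (as above), some minimum-rank $x$ already has $e\star x\notin\cD_X$; if not, linearity would give invariance. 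Hence $\delta_{\mathrm{syn}}(e)\le d^{R}_X$, and with the lower bound, $\delta_{\mathrm{syn}}(e)=d^{R}_X$.
\end{itemize}

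\emph{Consequences.} The radius statements for $\mathrm I_{5,X}(t)$ and $\mathrm I_{6,X}(t)$ follow from Lemma~\ref{lem:relative-criterion}. The $Z$-sector statements follow by sector interchange.

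The step I expect to be delicate is the upper bound in (i): showing that $e\star\cD_X\subseteq S_X$ cannot occur. That is where the rank gap $u+1<d_R(S_X)$, the spanning lemma, and the rank-one obstruction inside $S_Z$ must all be combined.
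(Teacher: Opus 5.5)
Your proof is correct and follows the paper's argument essentially step for step. The shared steps are: purity with Lemma~\ref{lem:relative-criterion} for the lower bounds; the gap $d^{R}_X<d_R(S_X)$, rank non-expansion and the minimum-rank spanning of $\cD_X$ to force $e\star\cD_X=\{0\}$; dualization to $\Imag(e^{\dagger})^n\subseteq S_Z$, contradicting $d_R(S_Z)\ge2$; and the same spanning argument for (ii). Your use of Lemma~\ref{lem:rankone-span} when $S_Z=\{0\}$, and your direct refutation of $e\star\cD_X\subseteq S_X$ for $\mathrm I_{3,X}$ (the paper argues via $\delta_{\mathrm{quot}}(e)<+\infty$ instead), are only cosmetic variants.

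One small slip: under Lemma~\ref{lem:gram-reduction} the standard-pairing kernel of the reduced pair is $S_Z^{\pstd}=G\,\cD_X$, not $\cD_X$ itself. This affects nothing, because you only import rank-invariant facts from the reduction and otherwise argue directly with the trace pairing and $e^{\dagger}$, exactly as the paper does.
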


\begin{proof}
Write $u:=d^{R}_X-1$ and $v:=d^{R}_Z-1$, so that $u+v\le L-1$ because $K_q>0$. By
Theorem~\ref{thm:singleton-equality-mrd}(i) the space $S_Z$ is either $\{0\}$ or MRD with
$d_R(S_Z)=L-u+1\ge2$, and $S_X$ is either $\{0\}$ or MRD with
$d_R(S_X)=L-v+1>u+1=d^{R}_X$; by Corollary~\ref{cor:purity} and Lemma~\ref{lem:gram-reduction} the
space $\cD_X$ is MRD with $d_R(\cD_X)=d^{R}_X$, left multiplication by the inverse trace Gram
matrix preserving rank. By Lemma~\ref{lem:mrd-generation}, valid on every layout, $\cD_X$ is
spanned over $\F_q$ by its elements of rank weight $d^{R}_X$.

(i) By Lemma~\ref{lem:relative-criterion}, $\delta_{\mathrm{quot}}(e)\ge d_R(\cD_X)=d^{R}_X$.
Suppose $\delta_{\mathrm{quot}}(e)>d^{R}_X$. Then $e\star x\in S_X$ for every $x\in\cD_X$ of rank
weight $d^{R}_X$. Since $\wt_R(e\star x)\le\wt_R(x)=d^{R}_X<d_R(S_X)$ by
Lemma~\ref{lem:rank-non-expansion}, with the convention $d_R(\{0\})=+\infty$ covering
$S_X=\{0\}$, this forces $e\star x=0$; since these elements span $\cD_X$ and $e\star\cdot$ is
$\F_q$-linear, $e\star\cD_X=\{0\}$, that is, $\cD_X\subseteq\ker(e)^n$. Taking trace duals and
using $\ker(e)^{\ptr}=\Imag(e^{\dagger})$, which holds because the trace form is nondegenerate and
$\langle e\star x,y\rangle_{\Tr}=\langle x,e^{\dagger}\star y\rangle_{\Tr}$ by
\eqref{eq:adjoint-star}, we get
$S_Z=\cD_X^{\ptr}\supseteq\bigl(\ker(e)^n\bigr)^{\ptr}=\Imag(e^{\dagger})^n$. The adjoint of a
nonzero map is nonzero, so $\Imag(e^{\dagger})\neq\{0\}$ and, by Lemma~\ref{lem:rankone-span},
$S_Z$ contains a nonzero label of rank weight one. That contradicts $d_R(S_Z)\ge2$, the case
$S_Z=\{0\}$ being excluded as well. Hence $\delta_{\mathrm{quot}}(e)=d^{R}_X$.

(ii) If $e\star\cD_X\subseteq\cD_X$, the candidate set in \eqref{eq:relative-distances} is empty,
yielding a minimum of $+\infty$ by convention. Otherwise, $\delta_{\mathrm{syn}}(e)\ge\delta_{\mathrm{quot}}(e)=d^{R}_X$ by Lemma~\ref{lem:relative-criterion}; and if $\delta_{\mathrm{syn}}(e)>d^{R}_X$ then every element of $\cD_X$ of rank weight $d^{R}_X$ would have its image in $\cD_X$, whence $e\star\cD_X\subseteq\cD_X$
by the spanning property, the excluded case. The stated equivalence is the adjoint identity: for
$x\in\cD_X$ and $z\in S_Z$ one has $\langle e\star x,z\rangle_{\Tr}=\langle x,e^{\dagger}\star
z\rangle_{\Tr}$, so $e\star\cD_X\subseteq\cD_X=S_Z^{\ptr}$ holds if and only if
$e^{\dagger}\star S_Z\subseteq\cD_X^{\ptr}=S_Z$.

The characterizations of $\mathrm I_{5,X}(t)$ and $\mathrm I_{6,X}(t)$ follow from
Lemma~\ref{lem:relative-criterion}. Finally $\mathrm I_{3,X}$ would give a $T$-valued interface on
the whole label space, hence on $B_t$ for every $t$, so $\mathrm I_{6,X}(t)$ for every $t$, which
contradicts $\delta_{\mathrm{quot}}(e)=d^{R}_X<+\infty$; and
$\mathrm I_{4,X}\Leftrightarrow\mathrm I_{3,X}$ by Corollary~\ref{cor:quotient-linearity}.
\end{proof}

\begin{corollary}[Rigidity and exact radii for $m\le n$]
\label{cor:optimal-rigidity}
Let $m\le n$, let $S_X,S_Z\subseteq\Fqm^n$ be trace-orthogonal check spans with $K_q>0$ attaining
equality in \eqref{eq:rank-singleton} at $a=m$, and let $e$ be any $\F_q$-linear idempotent on
$\Fqm$ with $1\le\rank(e)\le m-1$, not necessarily trace-self-adjoint. Then:
\begin{enumerate}
\item[(i)] if $d^{R}_X\ge2$, then $\cD_X$ is invariant under neither $e$ nor $e^{\dagger}$, and
$S_Z$ is invariant under neither; hence the measured $Z$-check family admits no ambient
projected-syndrome interface at $e$, linear or not
(Theorem~\ref{thm:projected-recoverability} at $U=\Fqm$),
$\delta_{\mathrm{syn}}(e)=\delta_{\mathrm{quot}}(e)=d^{R}_X$, and for every integer $t\ge1$
\begin{equation}
\label{eq:exact-radius}
\mathrm I_{5,X}(t)\iff\mathrm I_{6,X}(t)\iff 2t<d^{R}_X;
\end{equation}
\item[(ii)] the mirrored statement holds in the $Z$ sector when $d^{R}_Z\ge2$;
\item[(iii)] if $d^{R}_X\ge2$ and $d^{R}_Z\ge2$, then neither check space is invariant under $e$
or $e^{\dagger}$, so for a trace-self-adjoint $e$ the code splits along no nontrivial projector
(Theorem~\ref{thm:invariant-splitting}) and no ambient quotient-valued interface exists in either
sector; both sectors admit an unrestricted bounded-rank interface exactly at the radii
$1\le t\le t_{\max}$ with $t_{\max}=\lfloor(\min(d^{R}_X,d^{R}_Z)-1)/2\rfloor$, and
$t_{\max}\ge1$ if and only if $\min(d^{R}_X,d^{R}_Z)\ge3$.
\end{enumerate}
\end{corollary}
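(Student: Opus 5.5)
The plan is to pass to the matrix picture and let Lemma~\ref{lem:mrd-noninvariance} supply the non-invariance, with Theorem~\ref{thm:radius-dichotomy} doing the rest. First, by Lemma~\ref{lem:gram-reduction} and Corollary~\ref{cor:rank-invariance}, fix an $\F_q$-basis $B$ of $\Fqm$. In the coordinate matrices $\Phi_B(\cdot)\in\F_q^{m\times n}$, the componentwise action $e\star\cdot$ becomes left multiplication by the idempotent matrix $P=M_e$. The map $e^{\dagger}\star\cdot$ becomes left multiplication by $G_B^{-1}P^{\top}G_B$, which is again an idempotent of the same rank, since it is conjugate to $P^{\top}$. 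Both therefore lie outside $\{0,I_m\}$ because $1\le\rank(e)\le m-1$.

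By Theorem~\ref{thm:singleton-equality-mrd}(i) and Corollary~\ref{cor:purity}, the space $\cD_X=S_Z^{\ptr}$ is MRD with $d_R(\cD_X)=d^{R}_X$. When $d^{R}_X\ge2$ we have $u\ge1$, so $S_Z$ is nonzero and MRD with $d_R(S_Z)=L-u+1=m-u+1\ge2$. Here $u+v\le m-1$ gives $d_R(S_Z)\le m$, and $d_R(\cD_X)=u+1\le m$. Both are $m\times n$ MRD codes with $m\le n$ and distance in $[2,m]$, so Lemma~\ref{lem:mrd-noninvariance} (second formulation) applied to $e$ and to $e^{\dagger}$ shows that neither $\cD_X$ nor $S_Z$ is invariant under either map.

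For (i), non-invariance of $S_Z$ under $e^{\dagger}$ violates the ambient criterion of Theorem~\ref{thm:projected-recoverability} at $U=\Fqm$, so no interface exists, whether linear or arbitrary. Theorem~\ref{thm:radius-dichotomy}, which applies since $e\neq0$, then gives $\delta_{\mathrm{quot}}(e)=d^{R}_X$ and, because $e^{\dagger}\star S_Z\not\subseteq S_Z$, also $\delta_{\mathrm{syn}}(e)=d^{R}_X$. Lemma~\ref{lem:relative-criterion} converts these equalities into \eqref{eq:exact-radius}. Part (ii) follows by sector interchange, applying the same argument to $S_X$ and $\cD_Z=S_X^{\ptr}$ with $d^{R}_Z\ge2$.

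For (iii), combining (i) and (ii) gives non-invariance of both $S_X$ and $S_Z$ under $e$ and $e^{\dagger}$. For trace-self-adjoint $e$, Theorem~\ref{thm:invariant-splitting} (stated later) characterizes splitting through invariance of both check spaces, so no splitting occurs. Theorem~\ref{thm:radius-dichotomy} already rules out $\mathrm I_3$ in each sector. The unrestricted bounded-rank interface $\mathrm I_6$ holds in both sectors exactly when $2t<d^{R}_X$ and $2t<d^{R}_Z$, that is, for $t\le\lfloor(\min(d^{R}_X,d^{R}_Z)-1)/2\rfloor$, and this bound is at least $1$ if and only if the minimum is at least $3$.

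The main point needing care is checking that the hypotheses of Lemma~\ref{lem:mrd-noninvariance} hold for each of the four spaces, in particular $d_R\le m$ and nonzeroness. I also need to confirm that $e^{\dagger}$ is a nontrivial idempotent, which I will verify through $(e^{\dagger})^2=(e^2)^{\dagger}=e^{\dagger}$ together with $\rank e^{\dagger}=\rank e$.
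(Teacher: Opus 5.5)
Your proposal is correct and follows the paper's proof essentially step for step. You use Lemma~\ref{lem:mrd-noninvariance} for $e$ and for $e^{\dagger}$, which is again a nontrivial idempotent of the same rank, then Theorem~\ref{thm:radius-dichotomy}, Lemma~\ref{lem:relative-criterion} and Theorem~\ref{thm:invariant-splitting}. The only variation is that you apply the lemma to $S_Z$ directly, whereas the paper applies it to $\cD_X$ and transfers non-invariance to $S_Z$ through the adjoint equivalence in Theorem~\ref{thm:radius-dichotomy}(ii). Both routes are valid, since $S_Z$ is MRD with $d_R(S_Z)=m-u+1\in[2,m]$.

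One bookkeeping slip, harmless for the conclusion: $d_R(S_Z)\le m$ comes from $u\ge1$, while $u+v\le m-1$ is what gives $d_R(S_Z)\ge2$ and $d_R(\cD_X)\le m$.
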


\begin{proof}
(i) By Theorem~\ref{thm:singleton-equality-mrd} and Corollary~\ref{cor:purity} the space $\cD_X$ is
MRD with $2\le d_R(\cD_X)=d^{R}_X=u+1\le m$, the upper bound holding because $u\le L-1=m-1$. The trace
adjoint $e^{\dagger}$ is again an idempotent of the same rank, its matrix being
$G_B^{-1}M_e^{\top}G_B$, so Lemma~\ref{lem:mrd-noninvariance}, applied to $\cD_X$ with $P$ the
matrix of $e$ and then of $e^{\dagger}$, excludes both invariances; the equivalence with the
invariance of $S_Z$ follows from Theorem~\ref{thm:radius-dichotomy}(ii). The absence of an ambient
projected-syndrome interface follows from criterion~(d) of Theorem~\ref{thm:projected-recoverability} at
$U=\Fqm$, which reads $e^{\dagger}\star S_Z\subseteq S_Z$, and the two values of the relative
distances are established in Theorem~\ref{thm:radius-dichotomy}; \eqref{eq:exact-radius} then follows from
Lemma~\ref{lem:relative-criterion}. (ii) Claim~(ii) follows by sector interchange. (iii) When $d^{R}_Z\ge2$ we have $v\ge1$, so $S_X$ is MRD with $2\le d_R(S_X)=m-v+1\le m$ and Lemma~\ref{lem:mrd-noninvariance} applies to it as well; the splitting statement follows from Theorem~\ref{thm:invariant-splitting}, the absence of an ambient quotient-valued interface is guaranteed by Theorem~\ref{thm:radius-dichotomy}, and the range of admissible radii follows from \eqref{eq:exact-radius} in the two sectors.
\end{proof}

\begin{remark}[Scope and Boundaries of the Dichotomy]
\label{rem:dichotomy}
Theorem~\ref{thm:radius-dichotomy} removes the hypotheses $m\le n$, $d^{R}_Z\ge2$ and idempotency
from Corollary~\ref{cor:optimal-rigidity}. For $m\le n$ and $d^{R}_X\ge2$ the invariant case never occurs
at a nontrivial idempotent, by Lemma~\ref{lem:mrd-noninvariance}; both hypotheses are needed. At
the one-sided endpoint $u=0$, where $S_Z=\{0\}$ and $d^{R}_X=1$, the space $\cD_X$ is the whole
label space, so every $e$ leaves it invariant and $\delta_{\mathrm{syn}}(e)=+\infty$ while
$\delta_{\mathrm{quot}}(e)=1$; and for $a>n$ the invariant case occurs at sector distance two. The Singleton-optimal $6\times3$ pair of
Proposition~\ref{prop:stacking-sharpness}, two stacked copies of
$\mathrm{QGab}(\boldsymbol\alpha,1,1)$ over $\F_8$, has exactly $72=(8+1)\cdot8$ nontrivial
idempotents of $\F_2^{6\times6}$ leaving $\cD_X$ invariant, all of rank three, matching the
rank-one idempotents of $M_2(\F_8)$; for those, $\delta_{\mathrm{syn}}=+\infty$ while
$\delta_{\mathrm{quot}}=2$, so $\mathrm I_1,\mathrm I_2,\mathrm I_5$ hold at every radius while
$\mathrm I_6$ requires $2t<d^{R}_X$ and $\mathrm I_3,\mathrm I_4$ fail. Singleton optimality cannot be dropped either, by
Example~\ref{ex:nonoptimal-pair}.

At $\min(d^{R}_X,d^{R}_Z)=2$ part~(iii) of Corollary~\ref{cor:optimal-rigidity} leaves no
admissible radius, so Singleton optimality with two-sided protection does not by itself place a
code in the unrestricted bounded-rank classes; the smallest instance is
$\mathrm{QGab}(\boldsymbol\alpha,1,1)$ over $\F_8$, namely
$\llbracket9,3,2^{R}/2^{R}\rrbracket_2$, whereas the family of
Proposition~\ref{prop:separation} has $d^{R}_X=d^{R}_Z=r+1\ge3$ and $t_{\max}=\lfloor
r/2\rfloor\ge1$.
\end{remark}

\begin{remark}[Interpretation and Scope of the Quotient-Valued Target]
\label{rem:target-meaning}
The target of $\mathrm I_3$ and $\mathrm I_6$ is the label-valued function
$F_e(E):=e\star E+S_X$, which descends to the stabilizer class $E+S_X$ if and only if
$e\star S_X\subseteq S_X$. On the whole label space that invariance is implied by $\mathrm I_3$
itself, since $S_X\subseteq\cD_X$ and the criterion of Theorem~\ref{thm:quotient-interface}(a)
reads $e\star\cD_X\subseteq S_X$; on a rank ball it is not, and for a Singleton-optimal two-sided pair
with $m\le n$ it fails at every nontrivial idempotent by
Corollary~\ref{cor:optimal-rigidity}; on tall layouts it can hold, as it does for the block
projectors of Proposition~\ref{prop:stacking-sharpness}. For
$\mathrm{QGab}(\boldsymbol\alpha,1,1)$ over $\F_8$ with $\boldsymbol\alpha=(3,5,7)$ and
$e(z)=\Tr(z)\cdot1$ in the coordinates of Example~\ref{ex:separation}, the stabilizer label
$s=\boldsymbol\alpha\in S_X$ has $e\star s=(1,1,1)\notin S_X$, so the stabilizer-equivalent errors
$E=0$ and $E=s$ have different targets. Theorem~\ref{thm:radius-dichotomy} is accordingly a
statement about inferring the function $F_e$ of a rank-bounded label from the measured syndrome,
which is what a projector-based post-processing layer computes, and not by itself one about
recovering an independent logical subsystem: applying $e\star E$ as a correction leaves the residual component $(\mathrm{id}-e)\star E$, which must be addressed via a distinct recovery mechanism.
\end{remark}

\begin{example}[Ordinary parameters do not determine the relative distances]
\label{ex:nonoptimal-pair}
Fix a self-dual normal basis of $\F_8/\F_2$, identifying $\F_8^{3}$ with $\F_2^{3\times3}$ and the
trace pairing with the standard one (Lemma~\ref{lem:gram-reduction}), write $E_{ij}$ for the matrix
units, and let $e$ act by left multiplication by the symmetric, hence trace-self-adjoint,
idempotent $P=\mathrm{diag}(1,0,0)$. With $A:=E_{23}$ and $B:=E_{11}+E_{31}$, define two CSS pairs
by $S_X^{(i)}:=\{0\}$ and $S_Z^{(i)}:=\cD_i^{\ptr}$, where $\cD_1:=\langle A,I_3\rangle_{\F_2}$
and $\cD_2:=\langle A,B\rangle_{\F_2}$, so that $\ker\sigma_H=\cD_i$. Both have
$\dim_{\F_2}S_X=0$, $\dim_{\F_2}S_Z=7$, $K=2$ and $d^{R}_X=d^{R}_Z=1$, since $\rank A=1$ and
$E_{11}\notin S_Z^{(i)}$, and neither is Singleton-optimal, \eqref{eq:rank-singleton} allowing
$K\le9$. Their relative distances differ: in the first, $PA=0$ while $I_3$ and $I_3+A$ have rank
three and are sent to $E_{11}\notin\cD_1$, giving
$\delta_{\mathrm{quot}}(e)=\delta_{\mathrm{syn}}(e)=3$; in the second, $\rank B=1$ and
$PB=E_{11}\notin\cD_2$, giving $\delta_{\mathrm{quot}}(e)=\delta_{\mathrm{syn}}(e)=1$. No
function of the ordinary parameters can therefore give the relative distances, and at $t=1$ the
first pair satisfies $\mathrm I_5$ and $\mathrm I_6$ while the second does not.
\end{example}

\section{Linear Interfaces: Splitting, Logical Cost and One-Sidedness}
\label{sec:linear}

Corollary~\ref{cor:optimal-rigidity} rules out linear interfaces at a nontrivial idempotent on a
Singleton-optimal pair with $m\le n$ and both sector distances at least two. This section characterizes
the structural constraints imposed on a code when such interfaces exist: an ambient projected-syndrome
interface in both sectors is equivalent to a tensor decomposition across the projector, an ambient
quotient-valued interface incurs a dimensional penalty equal to the logical dimension of the projected branch, and a design in which the two sectors occupy complementary projector images is necessarily one-sided.

\subsection{Compatible check families split}
\label{subsec:splitting}

\begin{theorem}[Splitting of invariant families]
\label{thm:invariant-splitting}
Let $e$ be an $\F_q$-linear idempotent on $\Fqm$, acting componentwise. An $\F_q$-subspace
$S\subseteq\Fqm^n$ satisfies $e\star S\subseteq S$ if and only if
$S=\bigl(S\cap\Imag(e)^n\bigr)\oplus\bigl(S\cap\ker(e)^n\bigr)$. Consequently, if $e=e_1$ is
self-adjoint of rank $r$, $V=\Imag(e_1)$, $W=\Imag(e_2)=V^{\ptr}$, and the trace-orthogonal check
spans $S_X$, $S_Z$ are both $e_1$-invariant, equivalently both admit a linear projected-syndrome
post-processing map by Theorem~\ref{thm:projected-recoverability}, then in a \emph{split}
$\F_q$-basis of $\Fqm=V\oplus W$, that is, one that is the union of a basis of $V$ and a basis of
$W$, the stabilizer group is the direct product of a group supported on the $nr$ qudits
$A=\{(j,\ell):\ell\le r\}$ and a group supported on the remaining $n(m-r)$, the code is a
tensor product $Q_A\otimes Q_{\bar A}$ with $K_q=K_A+K_{\bar A}$, and
$d^{R}_X=\min(d^{R,A}_X,d^{R,\bar A}_X)$, $d^{R}_Z=\min(d^{R,A}_Z,d^{R,\bar A}_Z)$, a factor without
nontrivial logical operators contributing $+\infty$ in accordance with the convention of
Section~\ref{subsec:rank-metric}. The converse of the displayed consequence also holds: if, in a
split basis, the realized stabilizer group is the direct product of two subgroups supported on
complementary qudit sets $A$ and $\bar A$, then $S_X$ and $S_Z$ are $e_1$-invariant. Hence, for a
trace-self-adjoint idempotent, the existence of a linear ambient projected-syndrome interface
\emph{in both sectors} is equivalent to the code being a tensor product across the $e_1$-split; a
linear interface in one sector alone does not force the splitting of the other check space.
\end{theorem}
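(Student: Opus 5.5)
The plan has three parts: the invariance characterization, the tensor splitting with its distance formula, and the converse.

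\emph{Invariance characterization.} Suppose $e\star S\subseteq S$. Every $x\in S$ decomposes as $x=e\star x+(\mathrm{id}-e)\star x$, where $e\star x\in S\cap\Imag(e)^n$. Also $(\mathrm{id}-e)\star x=x-e\star x\in S$, and $e\star\bigl((\mathrm{id}-e)\star x\bigr)=0$ because $e^2=e$, so this component lies in $S\cap\ker(e)^n$. The sum is direct because $\Imag(e)\cap\ker(e)=\{0\}$ componentwise. Conversely, if $S$ is such a direct sum, then $e$ acts as the identity on the first summand and as zero on the second, so $e\star S\subseteq S$. The parenthetical equivalence with linear post-processing maps is Theorem~\ref{thm:projected-recoverability} together with $e_1^{\dagger}=e_1$.

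\emph{Splitting.} Let $e_1$ be self-adjoint, with $V=\Imag(e_1)$ and $W=\ker(e_1)$. Self-adjointness gives $\Tr(v\,w)=\Tr(e_1v\,w)=\Tr(v\,e_1w)=0$ for $v\in V$, $w\in W$, so $W\subseteq V^{\ptr}$; equality follows by comparing dimensions, hence $W=\Imag(e_2)$ with $e_2=\mathrm{id}-e_1$. The trace Gram matrix of a split basis is therefore block diagonal, and under the embedding of Proposition~\ref{prop:sector-split-embedding} labels in $V^n$, whether raw or Gram-weighted, are supported on the qudit set $A$, while labels in $W^n$ are supported on $\bar A$. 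The first part, applied to $S_X$ and $S_Z$, writes each as $S^A\oplus S^{\bar A}$. Hence the realized group is the direct product of the groups generated by $(S_X^A,S_Z^A)$ and by $(S_X^{\bar A},S_Z^{\bar A})$, and these are trace-orthogonal pairs inside $V^n$ and $W^n$ respectively. Disjoint supports give the tensor product, and the dimension count gives $K_q=K_A+K_{\bar A}$.

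For the distances I would check that $S_Z^{\ptr}=(S_Z^A)^{\ptr_V}\oplus(S_Z^{\bar A})^{\ptr_W}$, using $(V^n)^{\ptr}=W^n$ from Theorem~\ref{thm:projected-recoverability}. A logical $x=x_A+x_{\bar A}$ is nontrivial if and only if some component lies outside the corresponding stabilizer space. Because $V\cap W=\{0\}$, the column space of $\Phi_B(x)$ is the direct sum of the column spaces of the two components, so $\wt_R(x)=\wt_R(x_A)+\wt_R(x_{\bar A})\ge\wt_R(x_{A})$. The minimum is therefore attained by taking one component nontrivial and the other zero, which gives the $\min$ formula with the $+\infty$ convention. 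I expect this rank-additivity step, namely checking that the mixed component cannot lower rank, to be the main point needing care.

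\emph{Converse.} Suppose the group is a direct product of subgroups supported on $A$ and $\bar A$. Then every $X$-label in $S_X$ is a sum of a label supported on $A$ and one supported on $\bar A$, and both lie in $S_X$. Supported on $A$ means, in the split basis with block-diagonal Gram matrix, that every coordinate lies in $V$, and similarly $\bar A$ corresponds to $W$. So $S_X=(S_X\cap V^n)\oplus(S_X\cap W^n)$, the same holds for $S_Z$, and the first part gives invariance. For the final remark that one sector need not split, Example~\ref{ex:hierarchy-strict}-type one-sided instances would serve. An instance with $S_X=\{0\}$ (which is trivially invariant) and $S_Z$ non-invariant gives the interface in the $X$-check sector only, namely $\mathrm{I}_1$ in the $Z$ sector; I would cite or quickly construct such a pair.
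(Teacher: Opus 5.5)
Your three-part structure is the same as the paper's: the decomposition $x=e\star x+(x-e\star x)$, the block-diagonal Gram matrix of a split basis that puts both raw and Gram-weighted coordinates of $V^n$ on $A$ and of $W^n$ on $\bar A$, and the converse obtained by splitting elements of the direct product. The one step that fails is the one you flagged. From $V\cap W=\{0\}$ you conclude that the column space of $\Phi_B(x)$ is the direct sum of those of $\Phi_B(x_A)$ and $\Phi_B(x_{\bar A})$, and hence $\wt_R(x)=\wt_R(x_A)+\wt_R(x_{\bar A})$. But that column space is $\mathrm{span}_{\F_q}\{x_{A,j}+x_{\bar A,j}\}_j$, which is only \emph{contained} in $\mathrm{span}_{\F_q}\{x_{A,j}\}_j\oplus\mathrm{span}_{\F_q}\{x_{\bar A,j}\}_j$. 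For example, take $x=(v+w,0,\dots,0)$ with $0\neq v\in V$ and $0\neq w\in W$: then $\wt_R(x)=1$, whereas $\wt_R(x_A)+\wt_R(x_{\bar A})=2$. The paper explicitly warns that rank is not additive across disjoint row supports.

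What does hold, and is all the $\min$ formula needs, is $\wt_R(x)\ge\max\{\wt_R(x_A),\wt_R(x_{\bar A})\}$. You can get it from Lemma~\ref{lem:rank-non-expansion}, since $x_A=e_1\star x$ and $x_{\bar A}=e_2\star x$. Equivalently, in the split basis the nonzero rows of $\Phi_{B'}(x_A)$ and $\Phi_{B'}(x_{\bar A})$ are row submatrices of $\Phi_{B'}(x)$. With that substitution the rest of your distance argument is exactly the paper's: a nontrivial class has a nontrivial component, whose factor distance bounds the rank from below, and a minimum-rank factor logical extended by zero attains the minimum.

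Two smaller points. In the converse, your assertion that both parts of an $X$-type stabilizer have labels in $S_X$ needs the observation the paper supplies. With disjoint supports, the Pauli type at each qudit comes from a single factor, so the $A$- and $\bar A$-factors of an $X$-type element are themselves $X$-type elements of the group. Also, Example~\ref{ex:hierarchy-strict} does not illustrate the final remark: there $S_X=\langle b_2\rangle$ and $S_Z=\langle b_3\rangle$ both lie in $\ker(e)^n$ and are therefore both invariant. Your alternative with $S_X=\{0\}$ does work. For instance, take $n=1$ and $S_Z=\langle v+w\rangle_{\F_q}$ with $v,w$ as above. This $S_Z$ is not $e_1$-invariant, because every eigenvector of the idempotent $e_1$ lies in $V\cup W$, while $S_X=\{0\}$ trivially is.
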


\begin{proof}
If $S$ splits as stated and $x=a+b$, then $e\star x=a\in S$; conversely for $x\in S$,
$e\star x\in S\cap\Imag(e)^n$ and $x-e\star x\in S\cap\ker(e)^n$ by idempotence, the sum being
direct. Applying this decomposition to $S_X$ and $S_Z$, in a split basis $B'$ the Gram matrix $G_{B'}$ is block
diagonal, so coordinates of $V$-parts occupy the first $r$ intra-carrier positions and those of
$W$-parts the last $m-r$, before and after multiplication by $G_{B'}$; under either convention of
Proposition~\ref{prop:sector-split-embedding}, the realized operators of the two parts are therefore
supported on the disjoint sets $A$ and $\bar A$. Consequently, the stabilizer group is the direct product of two
commuting subgroups on disjoint qudits, the code decomposes as the tensor product of the factors, and the
logical count is additive.

For the distances, write $\Phi_{B'}(\ell)=\left(\begin{smallmatrix}\ell_A\\
\ell_{\bar A}\end{smallmatrix}\right)$, the first $r$ rows carrying the $A$-supported part. A row
submatrix has rank at most the rank of the matrix, so
\begin{equation}
\label{eq:row-submatrix-bound}
\wt_R(\ell)\ge\max\bigl\{\rank(\ell_A),\rank(\ell_{\bar A})\bigr\},
\end{equation}
where we emphasize that rank is not additive across disjoint row supports. The stabilizer
group being the direct product of the factor groups, $\ell=\ell_A+\ell_{\bar A}$ descends to
logical cosets: a class is nontrivial exactly when some factor component is, independently of the
representative, a stabilizer shift altering each component by a factor stabilizer. Hence by
\eqref{eq:row-submatrix-bound} every nontrivial representative has rank at least
$\min(d^{R,A},d^{R,\bar A})$ in the relevant sector, and conversely a minimum-rank nontrivial
representative of the minimizing factor, extended by zero, attains it.

For the converse, suppose the group is the direct product of subgroups $G_A$ and $G_{\bar A}$
supported on $A$ and $\bar A$. Every element factors as $g=g_Ag_{\bar A}$, and since the supports
are disjoint the Pauli type at each qudit is contributed by exactly one factor; hence an $X$-type
element has $X$-type factors. In the split basis the label of an $A$-supported $X$-type element
lies in $\Imag(e_1)^n$ and that of an $\bar A$-supported one in $\ker(e_1)^n$, so
$S_X=(S_X\cap\Imag(e_1)^n)\oplus(S_X\cap\ker(e_1)^n)$, which is $e_1$-invariance by the first
assertion; the same argument applies to $S_Z$. The final equivalence combines this with
Theorem~\ref{thm:projected-recoverability} and Corollary~\ref{cor:linearity-dichotomy}.
\end{proof}

\subsection{Aligned families and trace-self-adjoint idempotents}
\label{subsec:aligned}

The criterion of Theorem~\ref{thm:projected-recoverability} is met in the simplest way by families
whose checks are fixed by the projector.

\begin{definition}[Aligned check family]
\label{def:aligned-check}
Let $e$ be an $\F_q$-linear idempotent. A check vector $h$ is \emph{$e$-aligned} if $e\star h=h$,
and a family is \emph{aligned} if all of its members are. For self-adjoint $e$ one has
$\Fix(e^{\dagger})=\Imag(e)$, so the aligned families are exactly those with all checks in
$\Imag(e)^n$.
\end{definition}

\begin{corollary}[The exact class of post-processing-free families]
\label{cor:aligned-canonical}
Let $e$ be a self-adjoint $\F_q$-linear map. Then for all $h$ and $E$,
\begin{equation}
\label{eq:projected-trace-syndrome}
s(e\star h;E)=s(h;e\star E),
\end{equation}
and if $h$ is $e$-aligned then $s(h;E)=s(h;e\star E)$. Conversely the measured syndrome of a family
$H$ coincides with the projected syndrome check by check exactly for the families contained in
$\Fix(e)^n$, and if $e\star S_H\not\subseteq S_H$ then no function of the measured syndrome returns
the projected syndrome on the whole label space. The admissible families are closed under change of
basis, reordering and adjunction of redundant checks within $\Fix(e)^n$. If $e$ is moreover
idempotent, an aligned family is obtained from an arbitrary one, for instance from the rows of a
Gabidulin parity-check matrix, by $e$-projection, since $e\star(e\star h)=e\star h$; some projected
checks may vanish or become dependent, so an $\F_q$-independent basis of their span is measured,
and that choice affects only the measurement schedule, two equally weighted measured bases of one
check span generating the same stabilizer group with syndromes related by a fixed invertible
matrix.
\end{corollary}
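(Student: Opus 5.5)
The identity \eqref{eq:projected-trace-syndrome} is the componentwise adjoint identity \eqref{eq:adjoint-star} specialised to $e=e^{\dagger}$. It gives
\[
s(e\star h;E)=\langle e\star h,E\rangle_{\Tr}=\langle h,e^{\dagger}\star E\rangle_{\Tr}=s(h;e\star E).
\]
If $h$ is $e$-aligned, then $e\star h=h$. Substituting into the identity yields $s(h;E)=s(h;e\star E)$.

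The converse assertions are specialisations of Theorem~\ref{thm:projected-recoverability} at $U=\Fqm$, where $W^{\ptr}=\{0\}$. First, check-by-check coincidence of the measured and projected syndromes is the case $A=I$ of that theorem. It is equivalent to $e^{\dagger}\star h^{(i)}=h^{(i)}$ for every $i$, i.e.\ every check lies in $\Fix(e^{\dagger})^n=\Fix(e)^n$ by self-adjointness. Second, if $e\star S_H=e^{\dagger}\star S_H\not\subseteq S_H$, then criterion (d) fails. By the equivalence (a)$\Leftrightarrow$(d), no function $f$ of the measured syndrome returns the projected syndrome on the whole label space.

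The closure properties follow because the defining condition is membership of each check in the $\F_q$-subspace $\Fix(e)^n$. Any $\F_q$-linear recombination, reordering, or added redundant check built from members of $\Fix(e)^n$ stays in it. Hence the alignment condition is preserved, and it holds for the new family check by check.

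Now suppose $e$ is also idempotent. For any check $h$, the projection $e\star h$ satisfies $e\star(e\star h)=e^2\star h=e\star h$, so it lies in $\Fix(e)^n$ and is aligned. Projecting every check of an arbitrary family, for instance the rows of a Gabidulin parity-check matrix, therefore gives an aligned family. The projected checks may vanish or become dependent, so one passes to an $\F_q$-basis of their span, which is again aligned by closure under change of basis.

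For the schedule statement, take two bases $H,H'$ of the same span with $H'=TH$ and $T$ invertible over $\F_q$. Then $\sigma_{H'}=T\sigma_H$, since the trace syndrome is $\F_q$-linear in the check. The realised groups coincide because both generate the realisation of the same $\F_q$-span, by Proposition~\ref{prop:sector-split-embedding}(iv).

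No step is a genuine obstacle; everything reduces to Theorem~\ref{thm:projected-recoverability}. The only point needing care is the identification $\Fix(e^{\dagger})=\Fix(e)$, which uses only $e=e^{\dagger}$. The further equality with $\Imag(e)$ requires idempotence and is not needed here.
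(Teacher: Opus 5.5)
Your proposal is correct and follows the paper's own argument: the identity is \eqref{eq:adjoint-star} with $e^{\dagger}=e$, and the converse statements are the $A=I$ case and the equivalence (a)$\Leftrightarrow$(d) of Theorem~\ref{thm:projected-recoverability} at $U=\Fqm$, with closure following because $\Fix(e)^n$ is an $\F_q$-subspace. You also spell out the idempotent-projection and measurement-schedule claims, via $\sigma_{H'}=T\sigma_H$ and the fact that the realized group depends only on the span, which the paper's short proof leaves implicit.
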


\begin{proof}
Equation~\eqref{eq:projected-trace-syndrome} is \eqref{eq:adjoint-star} with $e^{\dagger}=e$, and
the special case follows from $e\star h=h$. The converse statements are
Theorem~\ref{thm:projected-recoverability} at $U=\Fqm$, respectively
\eqref{eq:relative-alignment} and (a)$\Leftrightarrow$(d), with $e^{\dagger}=e$; the closure
properties hold because $\Fix(e)^n$ is an $\F_q$-subspace and \eqref{eq:relative-alignment}
constrains each member separately.
\end{proof}

Two configurations recur below. A CSS pair is \emph{trace-aligned and complementary} at a
nontrivial trace-self-adjoint idempotent $e_1$, with $e_2:=\mathrm{id}-e_1$, if
$S_X\subseteq\Imag(e_1)^n$ and $S_Z\subseteq\Imag(e_2)^n$, and it is \emph{saturated} if the
first containment is an equality.

Corollary~\ref{cor:aligned-canonical} directs the checks into projector images. For the result to be
a stabilizer code the two images must be trace-orthogonal, which is what self-adjointness provides.
Let $e_1$ be an $\F_q$-linear self-adjoint idempotent and $e_2:=\mathrm{id}-e_1$; then $e_2$ is also
a self-adjoint idempotent and $e_1\circ e_2=e_2\circ e_1=0$.

\begin{lemma}[Trace orthogonality of complementary images]
\label{lem:complementary-image-orthogonality}
$\Imag(e_1)\ptr\Imag(e_2)$, that is, $\Tr(uv)=0$ for all $u\in\Imag(e_1)$ and $v\in\Imag(e_2)$.
\end{lemma}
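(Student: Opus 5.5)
The argument is a one-line use of the adjoint identity \eqref{eq:trace-adjoint} together with $e_1\circ e_2=0$. Take $u\in\Imag(e_1)$ and $v\in\Imag(e_2)$. Idempotence gives $u=e_1(u)$ and $v=e_2(v)$. Then
\[
\Tr(uv)=\Tr\bigl(e_1(u)\,e_2(v)\bigr)=\Tr\bigl(u\,e_1^{\dagger}(e_2(v))\bigr)=\Tr\bigl(u\,e_1(e_2(v))\bigr)=0.
\]
The second equality is \eqref{eq:trace-adjoint} applied with $x=u$ and $y=e_2(v)$. The third uses self-adjointness $e_1^{\dagger}=e_1$. The last uses $e_1\circ e_2=0$.

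The only preliminary to record is why $e_2=\mathrm{id}-e_1$ is a self-adjoint idempotent with $e_1e_2=e_2e_1=0$, which the text asserts just before the lemma. This is routine. We have $(\mathrm{id}-e_1)^2=\mathrm{id}-2e_1+e_1^2=\mathrm{id}-e_1$, and $e_1(\mathrm{id}-e_1)=e_1-e_1^2=0$. The adjoint is $\F_q$-linear in $e$, and $\mathrm{id}^{\dagger}=\mathrm{id}$ by symmetry of the trace form, so $e_2^{\dagger}=\mathrm{id}-e_1^{\dagger}=e_2$.

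Beyond the orthogonality, one can also show $\Imag(e_2)=\Imag(e_1)^{\ptr}$, matching the phrase $W=V^{\ptr}$ in Theorem~\ref{thm:invariant-splitting}. This uses the dimension count \eqref{eq:trace-dual-dimension} in the case $n=1$: $\dim\Imag(e_1)+\dim\Imag(e_2)=m$ because $\Fqm=\Imag(e_1)\oplus\Imag(e_2)$ for complementary idempotents. The containment $\Imag(e_2)\subseteq\Imag(e_1)^{\ptr}$ together with equal dimensions then gives equality.

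There is no real obstacle. The only point needing care is that self-adjointness of $e_1$ is genuinely used, since $e_1e_2=0$ alone fails, as the $\F_4$ example in the introduction shows. The proof must therefore invoke $e_1^{\dagger}=e_1$ explicitly rather than only operator complementarity.
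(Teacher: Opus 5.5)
Your proof is correct and is essentially the paper's argument: the paper writes $u=e_1(x)$, $v=e_2(y)$ for arbitrary preimages instead of using $u=e_1(u)$, $v=e_2(v)$, and then runs the same chain $\Tr\bigl(e_1(x)e_2(y)\bigr)=\Tr\bigl(x\,e_1(e_2(y))\bigr)=0$ via $e_1^{\dagger}=e_1$ and $e_1\circ e_2=0$. Your additional remarks (self-adjointness of $e_2$ and $\Imag(e_2)=\Imag(e_1)^{\ptr}$ by the dimension count) are also correct.
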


\begin{proof}
Write $u=e_1(x)$, $v=e_2(y)$. By \eqref{eq:trace-adjoint} and $e_1^{\dagger}=e_1$,
$\Tr(e_1(x)e_2(y))=\Tr\bigl(x\,e_1(e_2(y))\bigr)=0$ because $e_1\circ e_2=0$.
\end{proof}

\subsection{Existence without basis hypotheses}
\label{subsec:idempotent-existence}

For an $\F_q$-subspace $V\subseteq\Fqm$ write $V^{\perp}$ for its orthogonal complement under
$B(x,y):=\Tr(xy)$, and call $V$ \emph{nondegenerate} if $V\cap V^{\perp}=\{0\}$; then
$\Fqm=V\oplus V^{\perp}$ orthogonally.

\begin{proposition}[Idempotent--subspace correspondence]
\label{prop:idempotent-subspace}
The assignment $V\mapsto e_V$, where $e_V$ is the projection onto $V$ along $V^{\perp}$, is a
bijection between the rank-$r$ nondegenerate $\F_q$-subspaces of $\Fqm$ and the $\F_q$-linear
self-adjoint idempotents of rank $r$, with inverse $e\mapsto\Imag(e)$; moreover $\Imag(e_V)=V$ and
$\ker(e_V)=V^{\perp}$.
\end{proposition}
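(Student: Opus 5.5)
The plan is to verify the two directions of the bijection separately, using only nondegeneracy of the trace form $B(x,y)=\Tr(xy)$ on $\Fqm$ and the adjoint identity \eqref{eq:trace-adjoint}.

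\emph{From subspaces to idempotents.} Fix a nondegenerate $V$ of dimension $r$. Nondegeneracy of $B$ on $\Fqm$ gives $\dim V^{\perp}=m-r$ by \eqref{eq:trace-dual-dimension} with $n=1$. Together with $V\cap V^{\perp}=\{0\}$ this yields $\Fqm=V\oplus V^{\perp}$, so $e_V$ is a well-defined $\F_q$-linear idempotent with $\Imag(e_V)=V$, $\ker(e_V)=V^{\perp}$ and rank $r$.

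For self-adjointness, write $x=x_1+x_2$ and $y=y_1+y_2$ with $x_1,y_1\in V$ and $x_2,y_2\in V^{\perp}$. Then
\[
\Tr(e_V(x)\,y)=\Tr(x_1y_1)+\Tr(x_1y_2)=\Tr(x_1y_1),
\]
and symmetrically $\Tr(x\,e_V(y))=\Tr(x_1y_1)$. Uniqueness of the adjoint in \eqref{eq:trace-adjoint} then gives $e_V^{\dagger}=e_V$.

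\emph{From idempotents to subspaces.} Let $e$ be a self-adjoint idempotent of rank $r$, and put $V:=\Imag(e)$, so $\dim V=r$. I claim $\ker(e)=V^{\perp}$. By Lemma~\ref{lem:complementary-image-orthogonality}, $\Imag(\mathrm{id}-e)=\ker(e)$ is orthogonal to $V$, so $\ker(e)\subseteq V^{\perp}$. Both spaces have dimension $m-r$, so they are equal.

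Since $e$ is idempotent, $\Fqm=\Imag(e)\oplus\ker(e)=V\oplus V^{\perp}$. Hence $V\cap V^{\perp}=\{0\}$, so $V$ is nondegenerate, and $e$ is the projection onto $V$ along $V^{\perp}$, that is, $e=e_V$.

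\emph{Bijectivity.} The first part gives $\Imag(e_V)=V$, which shows that $e\mapsto\Imag(e)$ is a left inverse of $V\mapsto e_V$. The second part gives $e=e_{\Imag(e)}$, which is the right-inverse property. The two assignments are therefore mutually inverse bijections, and both preserve rank.

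No step is a real obstacle. The only point needing care is the dimension count $\dim V^{\perp}=m-\dim V$, which comes from nondegeneracy of the trace form on the whole of $\Fqm$ and not from any property of $V$. That count is what turns the one-sided inclusion $\ker(e)\subseteq V^{\perp}$ into an equality.
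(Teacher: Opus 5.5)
Your proof is correct and follows the paper's argument: the forward direction is identical, and in the converse you get $\ker(e)=V^{\perp}$ from the inclusion supplied by Lemma~\ref{lem:complementary-image-orthogonality} plus a dimension count. The paper instead cites the general identity $\Imag(T)^{\perp}=\ker(T^{\dagger})$ at $T=e=e^{\dagger}$, which is the same fact in slightly different packaging.
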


\begin{proof}
If $V$ is nondegenerate then $\Fqm=V\oplus V^{\perp}$, so $e_V$ is a well-defined idempotent with
the stated image and kernel, and it is self-adjoint: for $x=v_1+w_1$, $y=v_2+w_2$ with $v_i\in V$,
$w_i\in V^{\perp}$, one has $B(e_Vx,y)=B(v_1,v_2)=B(x,e_Vy)$. Conversely let $e$ be a self-adjoint
idempotent of rank $r$ and $V:=\Imag(e)$. Nondegeneracy of $B$ gives
$\Imag(T)^{\perp}=\ker(T^{\dagger})$ for every $\F_q$-linear $T$; with $T=e=e^{\dagger}$ this yields
$V^{\perp}=\ker(e)$, and idempotence gives $\Fqm=V\oplus V^{\perp}$, so $V$ is nondegenerate and
$e=e_V$.
\end{proof}

\begin{lemma}[Unconditional existence of self-adjoint idempotents]
\label{lem:idempotent-existence}
For every prime power $q$, every $m\ge1$ and every $0\le r\le m$ there is an $\F_q$-linear
self-adjoint idempotent $e:\Fqm\to\Fqm$ with $\dim_{\F_q}\Imag(e)=r$. Equivalently, $(\Fqm,B)$
admits nondegenerate $\F_q$-subspaces of every dimension.
\end{lemma}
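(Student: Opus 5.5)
By Proposition~\ref{prop:idempotent-subspace} the two formulations are equivalent. So I will only construct, for each $0\le r\le m$, a nondegenerate $\F_q$-subspace $V\subseteq\Fqm$ of dimension $r$ for $B(x,y)=\Tr(xy)$; then $e_V$ is the required idempotent. The cases $r=0$ and $r=m$ are immediate: $\{0\}$ is trivially nondegenerate, and $\Fqm$ is nondegenerate by nondegeneracy of the trace form. The plan is to prove that $(\Fqm,B)$ has an \emph{orthogonal basis} $b_1,\dots,b_m$, meaning $B(b_i,b_j)=0$ for $i\neq j$ and $B(b_i,b_i)\neq0$. Then $V:=\mathrm{span}(b_1,\dots,b_r)$ has diagonal invertible Gram matrix and is nondegenerate for every $r$.

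First, I will show that $B$ is not alternating, i.e.\ some $x$ has $\Tr(x^2)\neq0$. For $q$ odd: if $\Tr(x^2)=0$ for all $x$, polarization gives $2\Tr(xy)=\Tr((x+y)^2)-\Tr(x^2)-\Tr(y^2)=0$, contradicting nondegeneracy. For $q$ even: squaring commutes with $x\mapsto x^{q}$, so $\Tr(x^2)=\Tr(x)^2$, and $\Tr$ is surjective. In both cases, pick $b_1$ with $B(b_1,b_1)\neq0$. Then $\langle b_1\rangle$ is nondegenerate, so $\Fqm=\langle b_1\rangle\oplus b_1^{\perp}$ with $b_1^{\perp}$ nondegenerate of dimension $m-1$.

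Next, I will argue by induction on the dimension of a nondegenerate symmetric space $(X,B)$ that is not alternating, and find an orthogonal basis of it. Choose $x$ with $B(x,x)\ne0$ and split $X=\langle x\rangle\perp x^{\perp}$. If $B|_{x^{\perp}}$ is not alternating (or $x^\perp=0$), the induction hypothesis applies to $x^\perp$. The remaining case is the main obstacle. It arises only in characteristic two, since in odd characteristic a nondegenerate form on a nonzero space is never alternating by the polarization argument above. There $x^{\perp}\neq0$ is nondegenerate alternating, so it contains a hyperbolic pair $w,w'$ with $B(w,w)=B(w',w')=0$ and $B(w,w')=1$, and $x^\perp=\langle w,w'\rangle\perp Y$.

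In that case I would replace $x$ by a new anisotropic vector whose orthogonal complement is non-alternating. Scaling $x$ by $c$ with $c^2B(x,x)=1$ (squares are surjective in $\F_q$, $q$ even), I may assume $B(x,x)=1$. Then $y:=x+w$ has $B(y,y)=1\neq0$, and $y^{\perp}$ contains $x+w'$ (since $B(x+w,x+w')=1+1=0$). Moreover $B(x+w',x+w')=1\neq0$, so $y^{\perp}$ is non-alternating and the induction proceeds with $y$ in place of $x$. Equivalently, the three vectors $x+w$, $x+w'$, $x+w+w'$ form an orthogonal basis of $\langle x,w,w'\rangle$ with all self-pairings equal to $1$. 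Checking these few pairings explicitly is the key computation I expect to verify. With the orthogonal basis in hand, the spans of its initial segments give nondegenerate subspaces of every dimension, and Proposition~\ref{prop:idempotent-subspace} converts them to self-adjoint idempotents of every rank.
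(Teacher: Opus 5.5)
Your proof is correct, but its key step differs from the paper's. Both arguments reduce to nondegenerate subspaces via Proposition~\ref{prop:idempotent-subspace}, and both show $\Tr(x^2)\not\equiv0$ the same way. They part ways when the complement of an anisotropic vector is alternating.

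The paper keeps the vector $v$ fixed and uses the symplectic structure of $v^{\perp}$. Sums of hyperbolic planes give nondegenerate subspaces of every even dimension, and adjoining $\langle v\rangle$ gives the odd ones, so no basis is ever produced.

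You instead change the anisotropic vector. After normalizing $B(x,x)=1$, which is possible because squaring is bijective on $\F_q$ for even $q$, you replace $x$ by $x+w$. The identities $B(x+w,x+w)=1$, $B(x+w,x+w')=B(x,x)+B(w,w')=0$ and $B(x+w',x+w')=1$ all hold in characteristic two, so the new complement is non-alternating and the induction continues. This proves a stronger statement: the trace form is diagonalizable, i.e.\ $\Fqm$ has a trace-orthogonal basis over $\F_q$. That is the classical result of \cite{seroussi1980}, whose self-dual case the paper quotes.

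What your route buys is a single flag $V_1\subset V_2\subset\cdots\subset V_m$ of nondegenerate subspaces. Hence you get a nested, pairwise commuting family of self-adjoint idempotents of every rank, all diagonal in one basis with diagonal Gram matrix. The paper's route is slightly shorter, relying on the standard hyperbolic-plane decomposition of a symplectic space, at the cost of a case split on the parity of $r$.
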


\begin{proof}
By Proposition~\ref{prop:idempotent-subspace} it suffices to produce nondegenerate subspaces of
every dimension. We first show that $B$ is non-alternating, that is, that $Q(x):=\Tr(x^2)$ is not identically zero. If
$\mathrm{char}\,\F_q\neq2$ and $Q\equiv0$, polarization gives
$B(x,y)=\tfrac12\bigl(Q(x+y)-Q(x)-Q(y)\bigr)=0$ for all $x,y$, contradicting nondegeneracy. If
$\mathrm{char}\,\F_q=2$, the Frobenius identity gives $Q(x)=\Tr(x)^2$, which is nonzero for any $x$
with $\Tr(x)\neq0$, and such $x$ exists because the trace is surjective.

A nondegenerate non-alternating symmetric form on a space $V$ admits nondegenerate subspaces of
every dimension $0\le r\le\dim V$, by induction on $\dim V$, the case $\dim V\le1$ being
immediate. Choose $v$ with $B(v,v)\neq0$, so that $V=\langle v\rangle\oplus v^{\perp}$ with
$B|_{v^{\perp}}$ nondegenerate. If $B|_{v^{\perp}}$ is non-alternating, the inductive hypothesis
realizes every dimension up to $\dim V-1$ inside $v^{\perp}$, and $V$ itself realizes $\dim V$. If
it is alternating it is symplectic, so $\dim V-1$ is even and $v^{\perp}$ has nondegenerate
subspaces of every even dimension up to $\dim V-1$; for a target $r$ take such a subspace when $r$
is even, the orthogonal sum of $\langle v\rangle$ with one of dimension $r-1$ when $r$ is odd and
$r<\dim V$, and $V$ when $r=\dim V$, an orthogonal sum of nondegenerate subspaces being
nondegenerate.
\end{proof}

If $\Fqm/\F_q$ admits a self-dual basis then $G_B=I$ and a self-adjoint idempotent is a symmetric
idempotent matrix. Such bases exist exactly when $q$ is even, or $q$ and $m$ are both odd
\cite{seroussi1980}. Self-dual \emph{normal} bases are subject to more restrictive conditions, existing for even $q$ exactly when $m$ is odd or $m\equiv2\pmod4$ \cite{lempel1988,bayerlenstra1990}, and it is that stronger notion which \cite{delfosse2024stacked} and Section~\ref{subsec:qgab} use. A symmetric idempotent need
not be orthogonally equivalent to a coordinate projector; what holds for every self-adjoint
idempotent is the split-basis form
$G_{B'}=G_V\oplus G_W$.

\begin{remark}[Explicit construction and the qubit instance]
\label{rem:explicit-idempotents}
A rank-one self-adjoint idempotent is obtained for any $v$ with $\Tr(v^2)\neq0$ by
$e_1(x)=\Tr(vx)\,v/\Tr(v^2)$, so that $\Imag(e_1)=\langle v\rangle$. For $q=2$, $m=5$ the degree is
odd, so $\Tr(1)=1$ and $v=1$ is admissible, giving the trace projector
$e_1(x)=\Tr_{\F_{2^5}/\F_2}(x)\cdot1$ with $\Imag(e_1)=\langle1\rangle$ and $\Imag(e_2)=\ker\Tr$ of
dimensions $1$ and $4$.
\end{remark}

\subsection{The trace-compatible CSS split}
\label{subsec:css-split}

\begin{theorem}[CSS admissibility of complementary idempotent images]
\label{thm:css-admissibility}
Let $e_1$ be an $\F_q$-linear self-adjoint idempotent and $e_2:=\mathrm{id}-e_1$. Define
$C_X:=\Imag(e_1)^n$ and $C_Z:=\Imag(e_2)^n$. Then $C_Z=C_X^{\ptr}$ and $C_X=C_Z^{\ptr}$; in
particular any implemented families with $X$-checks in $C_X$ and $Z$-checks in $C_Z$ commute under
the trace metric \cite{ashikhmin2001,ketkar2006}.
\end{theorem}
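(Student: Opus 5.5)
The plan is to first establish the single-symbol identity $\Imag(e_2)=\Imag(e_1)^{\perp}$ inside $\Fqm$ under $B(x,y)=\Tr(xy)$, and then lift it componentwise. By Lemma~\ref{lem:complementary-image-orthogonality} we already have $\Imag(e_2)\subseteq\Imag(e_1)^{\perp}$. For equality we compare dimensions. Since $e_1$ is idempotent, $\Fqm=\Imag(e_1)\oplus\ker(e_1)$ and $\ker(e_1)=\Imag(e_2)$. The latter holds because $e_2=\mathrm{id}-e_1$ is the complementary idempotent: $e_1e_2=0$ gives $\Imag(e_2)\subseteq\ker(e_1)$, and any $x\in\ker(e_1)$ satisfies $x=e_2(x)$. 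Hence $\dim\Imag(e_2)=m-\dim\Imag(e_1)$, which equals $\dim\Imag(e_1)^{\perp}$ by nondegeneracy of the trace form. Equality follows. Alternatively, we could invoke Proposition~\ref{prop:idempotent-subspace} directly: $\ker(e_V)=V^{\perp}$ with $V=\Imag(e_1)$, together with $\ker(e_1)=\Imag(e_2)$.

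Next, we lift to length $n$ using the identity $W^{\ptr}=(U^{\ptr})^n$ for $W=U^n$. This identity was proved at the start of the proof of Theorem~\ref{thm:projected-recoverability}: a vector $y$ annihilates $U^n$ if and only if each $y_j\in U^{\perp}$, tested on the vectors $u\varepsilon_j$. With $U=\Imag(e_1)$ this gives $C_X^{\ptr}=(\Imag(e_1)^{\perp})^n=\Imag(e_2)^n=C_Z$. The second identity $C_X=C_Z^{\ptr}$ then follows from the involutivity $(C^{\ptr})^{\ptr}=C$ in \eqref{eq:trace-dual-dimension}. It could also be obtained symmetrically by swapping the roles of $e_1$ and $e_2$, since $e_2$ is again a self-adjoint idempotent with $\mathrm{id}-e_2=e_1$.

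Finally, the commutation claim is immediate. Implemented spans with $S_X\subseteq C_X$ and $S_Z\subseteq C_Z=C_X^{\ptr}\subseteq S_X^{\ptr}$ satisfy \eqref{eq:trace-css-commutation}, because $(\cdot)^{\ptr}$ is inclusion-reversing. Proposition~\ref{prop:sector-split-embedding}(i)--(ii) then gives physical commutation.

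No step is a real obstacle. The only point needing care is the dimension count, specifically the identification $\ker(e_1)=\Imag(e_2)$, which uses idempotence. Self-adjointness is used only through Lemma~\ref{lem:complementary-image-orthogonality}; the $\F_4$ example in the introduction shows that this hypothesis cannot be dropped.
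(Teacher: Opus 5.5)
Your proof is correct and follows essentially the paper's route. Both use orthogonality from Lemma~\ref{lem:complementary-image-orthogonality}, a dimension count resting on $\Fqm=\Imag(e_1)\oplus\Imag(e_2)$, and involutivity of $(\cdot)^{\ptr}$ for the second identity. The only difference is cosmetic: the paper counts directly in $\Fqm^n$ via $\dim_{\F_q}C_X^{\ptr}=nm-nr=\dim_{\F_q}C_Z$, whereas you count within a single symbol and then lift with $(U^n)^{\ptr}=(U^{\ptr})^n$.
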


\begin{proof}
Lemma~\ref{lem:complementary-image-orthogonality} gives $C_Z\subseteq C_X^{\ptr}$. Every $a\in\Fqm$
decomposes as $a=e_1(a)+e_2(a)$, and $\Imag(e_1)\cap\Imag(e_2)=\{0\}$, since $u=e_1(a)=e_2(b)$
implies $u=e_1(u)=e_1(e_2(b))=0$; hence
\begin{equation}
\label{eq:field-direct-sum}
\Fqm=\Imag(e_1)\oplus\Imag(e_2)\quad\text{as $\F_q$-vector spaces},
\end{equation}
so that with $r:=\dim_{\F_q}\Imag(e_1)$ one has $\dim_{\F_q}\Imag(e_2)=m-r$ and
\begin{equation}
\label{eq:CX-CZ-dims}
\dim_{\F_q}C_X=nr,\qquad\dim_{\F_q}C_Z=n(m-r).
\end{equation}
By \eqref{eq:trace-dual-dimension}, $\dim_{\F_q}(C_X^{\ptr})=nm-nr=\dim_{\F_q}C_Z$, so
$C_Z=C_X^{\ptr}$; taking trace duals again gives $C_X=C_Z^{\ptr}$.
\end{proof}

Every statement below is invariant under the simultaneous exchange $X\leftrightarrow Z$ and
$e_1\leftrightarrow e_2$, by symmetry of the trace pairing, of Theorem~\ref{thm:css-admissibility}
in $(e_1,e_2)$, and of the two conventions of Proposition~\ref{prop:sector-split-embedding}; we
protect the $X$ sector throughout and call this sector interchange.

\subsection{The logical cost of an ambient quotient-valued interface}
\label{subsec:quotient-cost}

Definition~\ref{def:quotient-interface} asks only for the projected error modulo a subspace of
harmless labels, which for stabilizer correction is the check space of the corrected sector. The
next theorem determines exactly when such an interface exists on the whole label space, and the following corollary quantifies its requirement in terms of logical dimension.

\begin{theorem}[Characterization of ambient quotient-valued interfaces]
\label{thm:quotient-characterization}
Let $e$ be an $\F_q$-linear trace-self-adjoint idempotent on $\Fqm$ of rank $\rho$, acting
componentwise, and let $(S_X,S_Z)$ be trace-orthogonal $\F_q$-linear check spans, so that
$S_X\subseteq S_Z^{\ptr}$. Put
\[
K_{\Imag(e)}:=n\rho-\dim_{\F_q}\bigl(S_X\cap\Imag(e)^n\bigr)-\dim_{\F_q}\bigl(S_Z\cap\Imag(e)^n\bigr).
\]
The following are equivalent.
\begin{enumerate}
\item[(a)] The measured $Z$-check family admits an $S_X$-valued projected interface for $e$ on the
whole label space, that is, by Theorem~\ref{thm:quotient-interface}(a),
\begin{equation}
\label{eq:quotient-collapse-crit}
e\star S_Z^{\ptr}\subseteq S_X.
\end{equation}
\item[(b)] $S_X$ and $S_Z$ are both $e$-invariant and $K_{\Imag(e)}=0$.
\end{enumerate}
Under either condition the code splits along $e$ as in Theorem~\ref{thm:invariant-splitting} and
$K_{\Imag(e)}$ is the logical dimension of the branch supported on $\Imag(e)^n$; thus (a) holds
exactly when the projected branch carries no logical qudit. If moreover $K_q>0$, all logical
operators are supported on the $\ker(e)$-branch, to which Theorem~\ref{thm:rank-singleton} then applies on
its own layout. For a trace-aligned complementary
pair with $e=e_2$ one has $\rho=m-r$, so in the saturated regime condition~(a) is incompatible
with $K_q>0$; the mirrored statement holds by sector interchange.
\end{theorem}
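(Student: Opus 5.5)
The plan is to work in $\Fqm^n$ with the componentwise decomposition $\Fqm^n=V^n\oplus W^n$, where $V:=\Imag(e)$ and $W:=\ker(e)=V^{\ptr}$. This is available because $e$ is trace-self-adjoint (Proposition~\ref{prop:idempotent-subspace}), and it gives $(V^n)^{\ptr}=W^n$ by Theorem~\ref{thm:css-admissibility}. Throughout I write $\cD_X:=S_Z^{\ptr}$.

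\emph{(a)$\Rightarrow$(b).} Since $S_X\subseteq\cD_X$, the hypothesis gives $e\star S_X\subseteq e\star\cD_X\subseteq S_X$, so $S_X$ is invariant. By Theorem~\ref{thm:quotient-interface}(a) with $T=S_X$, the hypothesis is equivalent to $e^{\dagger}\star S_X^{\ptr}\subseteq S_Z$, that is, $e\star S_X^{\ptr}\subseteq S_Z$. Because $S_Z\subseteq S_X^{\ptr}$, this yields $e\star S_Z\subseteq S_Z$, so $S_Z$ is invariant as well. Theorem~\ref{thm:invariant-splitting} then splits both spans as $S=S^V\oplus S^W$ with $S^V:=S\cap V^n$. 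A short orthogonality check shows that $\cD_X\cap V^n=(S_Z^V)^{\perp}\cap V^n$, the orthogonal of $S_Z^V$ inside $V^n$, since $W^n$ is orthogonal to $V^n$. Hence $\dim(\cD_X\cap V^n)=n\rho-\dim S_Z^V$. Condition~(a) says $e\star\cD_X\subseteq S_X$. Restricted to $V^n$, where $e$ acts as the identity, this becomes $\cD_X\cap V^n\subseteq S_X^V$. The reverse inclusion is automatic, so $\dim S_X^V=n\rho-\dim S_Z^V$, which is exactly $K_{\Imag(e)}=0$.

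\emph{(b)$\Rightarrow$(a).} By invariance and splitting, $\cD_X$ also splits, because its trace dual $S_Z$ splits and $V^n\perp W^n$. Therefore $e\star\cD_X=\cD_X\cap V^n$. The dimension count above, together with $K_{\Imag(e)}=0$ and $S_X^V\subseteq\cD_X\cap V^n$, forces $\cD_X\cap V^n=S_X^V\subseteq S_X$.

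\emph{Consequences.} Under either condition both spans are invariant, so Theorem~\ref{thm:invariant-splitting} gives the tensor decomposition. The $V$-branch has logical dimension $n\rho-\dim S_X^V-\dim S_Z^V=K_{\Imag(e)}$, and the total logical dimension is additive. If $K_{\Imag(e)}=0$, every logical class lives on the $W$-branch. Via a split basis and Lemma~\ref{lem:gram-reduction}, that branch is a CSS pair on a $(m-\rho)\times n$ layout, so Theorem~\ref{thm:rank-singleton} applies to it with $a=m-\rho$. For a saturated trace-aligned complementary pair with $e=e_2$, we have $S_X=\Imag(e_1)^n$. Since $S_Z\subseteq\Imag(e_2)^n$ and $\Imag(e_1)^n\cap\Imag(e_2)^n=\{0\}$, the $\Imag(e_2)$-branch has $S_X\cap\Imag(e_2)^n=\{0\}$. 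Hence $K_{\Imag(e_2)}=n(m-r)-\dim S_Z$, while the $\Imag(e_1)$-branch contributes $nr-nr-0=0$, so $K_q=K_{\Imag(e_2)}$. Condition~(a) therefore forces $K_q=0$. The mirrored statement follows by sector interchange.

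I expect the main obstacle to be justifying that $S_Z$, not only $S_X$, is invariant in (a)$\Rightarrow$(b). That step relies on the dual form of the criterion and on self-adjointness ($e^{\dagger}=e$), and Corollary~\ref{cor:quotient-linearity} shows it genuinely fails without self-adjointness. I would check it carefully through the duality step in the proof of Theorem~\ref{thm:quotient-interface}(a).
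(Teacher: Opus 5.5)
Your proof is correct and follows the paper's argument. Both rest on invariance of the two spans, the splitting of Theorem~\ref{thm:invariant-splitting}, and the count $\dim(S_Z^{\ptr}\cap\Imag(e)^n)=n\rho-\dim(S_Z\cap\Imag(e)^n)$ inside the nondegenerate space $\Imag(e)^n$. The differences are minor. You obtain invariance of $S_Z$ from the dual form $e\star S_X^{\ptr}\subseteq S_Z$ of the criterion, rather than by dualizing the invariance of $S_Z^{\ptr}$. In the saturated case you note $K_q=K_{\Imag(e_2)}$ directly, rather than showing that (a) forces $S_Z=\Imag(e_2)^n$.
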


\begin{proof}
(a)$\Rightarrow$(b). CSS commutation gives $S_X\subseteq S_Z^{\ptr}$, so by
\eqref{eq:quotient-collapse-crit}, $e\star S_X\subseteq e\star S_Z^{\ptr}\subseteq S_X$: the space
$S_X$ is $e$-invariant. Also $e\star S_Z^{\ptr}\subseteq S_X\subseteq S_Z^{\ptr}$, so $S_Z^{\ptr}$
is $e$-invariant. If $A$ is $e$-invariant and $y\in A^{\ptr}$, the adjoint identity
\eqref{eq:adjoint-star} with $e=e^{\dagger}$ gives
$\langle a,e\star y\rangle_{\Tr}=\langle e\star a,y\rangle_{\Tr}=0$ for every $a\in A$, so
$A^{\ptr}$ is $e$-invariant; applying this to $A=S_Z^{\ptr}$ and using
$(S_Z^{\ptr})^{\ptr}=S_Z$ from \eqref{eq:trace-dual-dimension} shows that $S_Z$ is $e$-invariant.

For $K_{\Imag(e)}=0$: by Theorem~\ref{thm:invariant-splitting} an $e$-invariant subspace $A$
satisfies $A=(A\cap\Imag(e)^n)\oplus(A\cap\ker(e)^n)$, whence $e\star A=A\cap\Imag(e)^n$. Applied
to $A=S_Z^{\ptr}$, hypothesis \eqref{eq:quotient-collapse-crit} reads
$S_Z^{\ptr}\cap\Imag(e)^n\subseteq S_X\cap\Imag(e)^n$; the reverse inclusion is CSS commutation, so
\begin{equation}
\label{eq:branch-equality}
S_Z^{\ptr}\cap\Imag(e)^n=S_X\cap\Imag(e)^n .
\end{equation}
The trace form is nondegenerate on $\Imag(e)^n$, of $\F_q$-dimension $n\rho$, by
Proposition~\ref{prop:idempotent-subspace}; and since $S_Z$ is $e$-invariant with
$\Imag(e)^n\ptr\ker(e)^n$ (Lemma~\ref{lem:complementary-image-orthogonality} applied to $e$ and
$\mathrm{id}-e$), an element of $\Imag(e)^n$ is trace-orthogonal to $S_Z$ exactly when it is
trace-orthogonal to $S_Z\cap\Imag(e)^n$. Hence $S_Z^{\ptr}\cap\Imag(e)^n$ is the orthogonal
complement of $S_Z\cap\Imag(e)^n$ inside $\Imag(e)^n$ and
\begin{equation}
\label{eq:branch-dim}
\dim_{\F_q}\bigl(S_Z^{\ptr}\cap\Imag(e)^n\bigr)=n\rho-\dim_{\F_q}\bigl(S_Z\cap\Imag(e)^n\bigr).
\end{equation}
Combining \eqref{eq:branch-equality} with \eqref{eq:branch-dim} gives
$\dim_{\F_q}(S_X\cap\Imag(e)^n)+\dim_{\F_q}(S_Z\cap\Imag(e)^n)=n\rho$, that is
$K_{\Imag(e)}=0$.

(b)$\Rightarrow$(a). Because $S_Z$ is $e$-invariant, so is $S_Z^{\ptr}$ by the adjoint argument
above, and Theorem~\ref{thm:invariant-splitting} gives
$e\star S_Z^{\ptr}=S_Z^{\ptr}\cap\Imag(e)^n$. The identity \eqref{eq:branch-dim} holds as before,
using only $e$-invariance of $S_Z$, so
$\dim_{\F_q}(S_Z^{\ptr}\cap\Imag(e)^n)=n\rho-\dim_{\F_q}(S_Z\cap\Imag(e)^n)$, which by
$K_{\Imag(e)}=0$ equals $\dim_{\F_q}(S_X\cap\Imag(e)^n)$. CSS commutation gives the inclusion
$S_X\cap\Imag(e)^n\subseteq S_Z^{\ptr}\cap\Imag(e)^n$, and equal dimensions turn it into
\eqref{eq:branch-equality}. Hence $e\star S_Z^{\ptr}=S_X\cap\Imag(e)^n\subseteq S_X$, which is
\eqref{eq:quotient-collapse-crit}.

\emph{The remaining assertions.} Under (b), the splitting and the branch decomposition follow from
Theorem~\ref{thm:invariant-splitting}, and $K_{\Imag(e)}$ is by construction the logical count of
the branch supported on the $n\rho$ qudits of $\Imag(e)^n$. If $K_q>0$ then $K_q=K_{\Imag(e)}+K_{\ker(e)}=K_{\ker(e)}>0$, so all logical operators are supported on the
single branch $\ker(e)$, of intra-carrier dimension $m-\rho$. In the saturated trace-aligned
complementary pair with $e=e_2$ one has $S_X=\Imag(e_1)^n$ and $S_Z\subseteq\Imag(e_2)^n$; then
$e_2\star S_Z^{\ptr}\subseteq S_X\cap\Imag(e_2)^n=\{0\}$ by \eqref{eq:field-direct-sum}, so
$S_Z^{\ptr}\subseteq\ker(e_2\star\cdot)=\Imag(e_1)^n$ and therefore
$S_Z\supseteq\Imag(e_1)^{n\,\ptr}=\Imag(e_2)^n$, forcing $S_Z=\Imag(e_2)^n$ and $K_q=0$.
\end{proof}

\begin{corollary}[Cost of an ambient quotient-valued interface]
\label{cor:interface-cost}
Let $S_X,S_Z$ be trace-orthogonal and $e$-invariant with $K_{\Imag(e)}>0$. If $S'_X\supseteq S_X$ is
trace-orthogonal to $S_Z$ and the pair $(S'_X,S_Z)$ satisfies \eqref{eq:quotient-collapse-crit},
then $\dim_{\F_q}S'_X-\dim_{\F_q}S_X\ge K_{\Imag(e)}$. The minimum is attained by
$S'_X:=(S_Z\cap\Imag(e)^n)^{\perp_{\Imag(e)^n}}\oplus(S_X\cap\ker(e)^n)$, the complement being
taken inside $\Imag(e)^n$, and the resulting code has $K'_q=K_q-K_{\Imag(e)}=K_{\ker(e)}$. The
$K_{\Imag(e)}$ additional checks are independent new stabilizer generators, not redundant
measurements of the existing group: the existence of the interface is achieved at the expense of modifying the code, incurring a reduction in logical dimension equal to that of the projected branch.
\end{corollary}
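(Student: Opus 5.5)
The plan is to use Theorem~\ref{thm:quotient-characterization} to turn the interface condition into a dimension count inside the branch $\Imag(e)^n$.

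\emph{Lower bound.} Assume $(S'_X,S_Z)$ is trace-orthogonal, $S'_X\supseteq S_X$, and \eqref{eq:quotient-collapse-crit} holds. Then (a)$\Rightarrow$(b) gives two facts: $S'_X$ is $e$-invariant, and $K'_{\Imag(e)}=0$ for the new pair. The $Z$ check space is unchanged, so $n\rho=\dim(S'_X\cap\Imag(e)^n)+\dim(S_Z\cap\Imag(e)^n)$. For the old pair, $K_{\Imag(e)}=n\rho-\dim(S_X\cap\Imag(e)^n)-\dim(S_Z\cap\Imag(e)^n)$. Subtracting,
\[
\dim(S'_X\cap\Imag(e)^n)-\dim(S_X\cap\Imag(e)^n)=K_{\Imag(e)}.
\]
Both $S_X$ and $S'_X$ are $e$-invariant, so Theorem~\ref{thm:invariant-splitting} splits each into its $\Imag(e)^n$ and $\ker(e)^n$ parts. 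Since $S_X\subseteq S'_X$, we have $S_X\cap\ker(e)^n\subseteq S'_X\cap\ker(e)^n$. Adding the two parts gives $\dim S'_X-\dim S_X\ge K_{\Imag(e)}$.

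\emph{Attainment.} Put $Y:=(S_Z\cap\Imag(e)^n)^{\perp_{\Imag(e)^n}}$, with the complement taken inside $\Imag(e)^n$. The trace form is nondegenerate there by Proposition~\ref{prop:idempotent-subspace}, so $\dim Y=n\rho-\dim(S_Z\cap\Imag(e)^n)$. Three checks remain.
\begin{itemize}
\item[(1)] $S'_X:=Y\oplus(S_X\cap\ker(e)^n)$ contains $S_X$. The inclusion $S_X\cap\Imag(e)^n\subseteq Y$ is CSS commutation.
\item[(2)] $S'_X$ is trace-orthogonal to $S_Z$. Since $S_Z$ is $e$-invariant, it splits. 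Elements of $\Imag(e)^n$ and $\ker(e)^n$ are trace-orthogonal by Lemma~\ref{lem:complementary-image-orthogonality}. So $Y$ is orthogonal to $S_Z$ by construction, and $S_X\cap\ker(e)^n$ is orthogonal to $S_Z$ because $S_X$ is.
\item[(3)] The new pair satisfies (b). $S'_X$ is $e$-invariant because it splits, $S_Z$ is $e$-invariant by hypothesis, and $\dim Y+\dim(S_Z\cap\Imag(e)^n)=n\rho$ gives $K'_{\Imag(e)}=0$. Then (b)$\Rightarrow$(a) yields \eqref{eq:quotient-collapse-crit}.
\end{itemize}
The dimension increase equals $K_{\Imag(e)}$ exactly, so $K'_q=K_q-K_{\Imag(e)}$. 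The additivity $K_q=K_{\Imag(e)}+K_{\ker(e)}$ from Theorem~\ref{thm:invariant-splitting} then identifies $K'_q$ with $K_{\ker(e)}$.

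\emph{Independence of the added checks.} $\dim S'_X$ strictly exceeds $\dim S_X$, so the new labels lie outside $S_X$. By Proposition~\ref{prop:sector-split-embedding}(iv), a larger span gives a strictly larger realized group. The added elements are therefore new generators and not products of old ones.

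The main obstacle is the lower bound. The step that needs care is applying Theorem~\ref{thm:quotient-characterization} to the modified pair: its hypotheses require only a trace-self-adjoint idempotent and a trace-orthogonal pair, both of which hold, and it is this that yields $e$-invariance of $S'_X$, on which the splitting comparison depends.
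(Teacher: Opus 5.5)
Your proof is correct and follows the paper's argument essentially step for step. You apply (a)$\Rightarrow$(b) of Theorem~\ref{thm:quotient-characterization} to $(S'_X,S_Z)$ to get $e$-invariance and $K'_{\Imag(e)}=0$, compare the $\Imag(e)^n$ and $\ker(e)^n$ components via Theorem~\ref{thm:invariant-splitting}, and certify the explicit $S'_X$ through (b)$\Rightarrow$(a); your group-order justification of the independence of the added checks via Proposition~\ref{prop:sector-split-embedding}(iv) makes explicit what the paper leaves implicit in the logical count.
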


\begin{proof}
By Theorem~\ref{thm:quotient-characterization} applied to $(S'_X,S_Z)$, the space $S'_X$ is
$e$-invariant with $K'_{\Imag(e)}=0$, so by \eqref{eq:branch-dim}
\[
\begin{aligned}
\dim_{\F_q}(S'_X\cap\Imag(e)^n)&=n\rho-\dim_{\F_q}(S_Z\cap\Imag(e)^n)\\
&=\dim_{\F_q}(S_X\cap\Imag(e)^n)+K_{\Imag(e)} .
\end{aligned}
\]
Since $S'_X$ is $e$-invariant, $\dim S'_X=\dim(S'_X\cap\Imag(e)^n)+\dim(S'_X\cap\ker(e)^n)$ and
$S'_X\supseteq S_X$ gives $\dim(S'_X\cap\ker(e)^n)\ge\dim(S_X\cap\ker(e)^n)$, whence
$\dim S'_X\ge\dim S_X+K_{\Imag(e)}$. The displayed $S'_X$ contains $S_X$ by
\eqref{eq:branch-dim} and CSS commutation in the $\Imag(e)$-branch, is trace-orthogonal to $S_Z$
branch by branch, is $e$-invariant by construction, and has $K'_{\Imag(e)}=0$; so it satisfies (b)
and hence (a), and its dimension is $\dim S_X+K_{\Imag(e)}$. The logical count follows from
\eqref{eq:logical-count}.
\end{proof}

\subsection{One-sidedness of trace-aligned complementary designs}
\label{subsec:one-sidedness}

A branch of an invariant splitting is itself a matrix-label CSS pair, so
Theorem~\ref{thm:rank-singleton} bounds its rank distances. The next theorem shows that the bound
collapses to the value one under the additional hypothesis that the two sectors occupy complementary images, which is what the
post-processing-free design of Section~\ref{subsec:aligned} imposes.

\begin{theorem}[One-sidedness of trace-aligned complementary CSS codes]
\label{thm:no-go}
Let $e_1$ be an $\F_q$-linear self-adjoint idempotent of rank $r$, $V=\Imag(e_1)$,
$W=\Imag(e_2)=V^{\ptr}$, and consider a CSS code with $S_X\subseteq V^n$ and $S_Z\subseteq W^n$ and
$K_q=nm-\dim_{\F_q}S_X-\dim_{\F_q}S_Z$. Then (i) $S_X\subsetneq V^n$ implies $d^{R}_X=1$; (ii)
$S_Z\subsetneq W^n$ implies $d^{R}_Z=1$; (iii) $K_q>0$ implies $\min(d^{R}_X,d^{R}_Z)=1$; and (iv)
if both containments are proper then $d^{R}_X=d^{R}_Z=1$. In particular no trace-aligned
complementary CSS code with positive logical dimension protects both Pauli sectors in the rank
metric, for any $q$, $m$, $n$, $r$ and any choice of the two check spaces.
\end{theorem}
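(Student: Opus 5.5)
The plan is to show that the containment $S_Z\subseteq W^n$ already forces the whole branch space $V^n$ into the $X$-normalizer $S_Z^{\ptr}$. Any strict gap between $S_X$ and $V^n$ then contains a rank-one logical representative by Lemma~\ref{lem:rankone-span}. Parts (ii)--(iv) follow by sector interchange and a dimension count.

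For (i), I first apply Theorem~\ref{thm:css-admissibility}, which gives $(W^n)^{\ptr}=V^n$ since $W=\Imag(e_2)$. Because $(\cdot)^{\ptr}$ is inclusion-reversing, $S_Z\subseteq W^n$ yields
\[
V^n\subseteq S_Z^{\ptr}.
\]
Now suppose $S_X\subsetneq V^n$. Then $V^n\neq\{0\}$, so $V\neq\{0\}$. By Lemma~\ref{lem:rankone-span} with this $V$, the proper subspace $S_X$ cannot contain every rank-one vector $c\,b$ with $b\in V$ and $c\in\F_q^n$. Choose such a vector $\ell\notin S_X$. It lies in $V^n\subseteq S_Z^{\ptr}$, hence in $S_Z^{\ptr}\setminus S_X$, which is the set of nontrivial $X$-logical representatives (Lemma~\ref{lem:physical-normalizer}). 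It has rank weight one, and since $d^{R}_X\ge1$ always, we get $d^{R}_X=1$.

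Part (ii) is the same argument after sector interchange ($X\leftrightarrow Z$, $e_1\leftrightarrow e_2$). The ingredients used are $(V^n)^{\ptr}=W^n$, which is the second half of Theorem~\ref{thm:css-admissibility}, and Lemma~\ref{lem:rankone-span} applied to $W$.

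For (iii), I argue by contraposition. If neither containment is proper, then $S_X=V^n$ and $S_Z=W^n$. By \eqref{eq:CX-CZ-dims}, $\dim S_X+\dim S_Z=nr+n(m-r)=nm$, so $K_q=0$. Hence $K_q>0$ forces at least one proper containment, and (i) or (ii) gives $\min(d^{R}_X,d^{R}_Z)=1$. Part (iv) is (i) and (ii) applied together.

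The only step needing care is the degenerate case $r\in\{0,m\}$, where $V$ or $W$ is zero. There, properness of the corresponding containment is impossible, so the hypothesis of (i) or (ii) is vacuous and Lemma~\ref{lem:rankone-span} is only ever invoked for a nonzero subspace. I do not expect a genuine obstacle beyond checking this edge case. The argument makes no assumption on $q$, $m$, $n$, $r$ or the check spaces, which matches the universality claimed in the final sentence of the statement.
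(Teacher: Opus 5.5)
Your proof is correct and follows essentially the paper's argument. You pick a rank-one label in $V^n\setminus S_X$ via Lemma~\ref{lem:rankone-span}, note it lies in $S_Z^{\ptr}$ because $V^n\ptr W^n\supseteq S_Z$, and then use sector interchange and the dimension count \eqref{eq:CX-CZ-dims} for (iii); the only cosmetic difference is that you invoke the full equality $(W^n)^{\ptr}=V^n$ from Theorem~\ref{thm:css-admissibility}, whereas the paper uses only the inclusion given by Lemma~\ref{lem:complementary-image-orthogonality}, which is all that is needed.
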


\begin{proof}
(i) By Lemma~\ref{lem:rankone-span} the rank-one vectors of $V^n$ span $V^n$; for if they were all
contained in $S_X$, it would follow that $V^n\subseteq S_X$. Choose a rank-one $x\in V^n\setminus S_X$. By
Lemmas~\ref{lem:complementary-image-orthogonality} and \ref{lem:physical-normalizer},
$\langle x,z\rangle_{\Tr}=0$ for every $z\in W^n\supseteq S_Z$, so $x\in S_Z^{\ptr}$ and $X(x)$
commutes with every $Z$-stabilizer; since $x\notin S_X$ it represents a nontrivial $X$-logical class
of rank weight one, and rank weights are positive integers. (ii) is the same argument with
$(e_1,V,X)$ and $(e_2,W,Z)$ exchanged. (iii) By \eqref{eq:CX-CZ-dims},
$\dim_{\F_q}V^n+\dim_{\F_q}W^n=nm$; if both containments were equalities then
$\dim S_X+\dim S_Z=nm$ and $K_q=0$. (iv) is immediate from (i) and (ii).
\end{proof}

\subsection{Structure of the saturated codes}
\label{subsec:frozen}

\begin{theorem}[Frozen-direction decomposition]
\label{thm:frozen-product}
Assume the saturated regime $S_X=\Imag(e_1)^n=V^n$ with $\dim_{\F_q}V=r$ and
$S_Z\subseteq W^n=\Imag(e_2)^n$, and let $Q$ be the realized CSS code on $N=nm$ physical
$\F_q$-qudits. Let $B'$ be a split $\F_q$-basis of $\Fqm=V\oplus W$ and
$M\in\mathrm{GL}_m(\F_q)$ the change of basis from $B$ to $B'$, applied identically at every
carrier. Then, after the carrier-wise identical Clifford circuit implementing $M$,
\[
Q=\ket{+}^{\otimes nr}\otimes Q_{\mathrm{one}}(S_Z),
\]
where $Q_{\mathrm{one}}(S_Z)$ is the CSS code on the remaining $n(m-r)$ qudits whose stabilizer
group is generated by $Z$-type operators only, namely the realized operators of
$S_Z\subseteq W^n\cong\F_q^{(m-r)\times n}$. It has $K_q=n(m-r)-\dim_{\F_q}S_Z$ logical qudits,
$X$-rank-distance $d_R(S_Z^{\perp_2})$ with $S_Z^{\perp_2}:=S_Z^{\ptr}\cap W^n$, and
$Z$-rank-distance $1$ whenever $K_q>0$. Moreover $M$ preserves every rank-metric quantity of Corollary~\ref{cor:rank-invariance} and maps
the set of rank-one labels bijectively onto itself.
\end{theorem}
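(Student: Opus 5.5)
The plan is to reduce everything to the split-basis picture already used in Theorem~\ref{thm:invariant-splitting}. The first step is the basis change. Let $M$ be the change of basis from $B$ to $B'$, applied identically at every carrier. By Corollary~\ref{cor:rank-invariance} it acts on coordinate matrices as $\Phi_B(E)\mapsto M\Phi_B(E)$, so it preserves rank, and since it is invertible it maps the rank-one labels bijectively onto themselves. Physically it is a carrier-wise identical Clifford (an $\F_q$-linear symplectic map of the form $X^{x}Z^{z}\mapsto X^{M^{-\top}x}Z^{Mz}$ up to the Gram convention). It therefore conjugates the stabilizer group into the realized group in the embedding built from $B'$, and $K_q$ and all rank quantities are unchanged.

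The second step is to identify the two factors in the basis $B'$. The Gram matrix is block diagonal, $G_{B'}=G_V\oplus G_W$. The $X$-checks $S_X=V^n$ therefore have Gram-weighted coordinates $G_V[x_j]$ supported on the first $r$ intra-carrier positions. Because $G_V$ is invertible (Proposition~\ref{prop:idempotent-subspace}: $V$ is nondegenerate), these coordinates range over all of $\F_q^{r}$ at each carrier. Hence the realized $X$-group is the full group of $X$-Paulis on the $nr$ qudits $A$, whose unique joint $+1$ eigenstate is $\ket{+}^{\otimes nr}$. The $Z$-checks $S_Z\subseteq W^n$ have raw coordinates supported on $\bar A$, so they act only on the last $n(m-r)$ qudits and generate a purely $Z$-type group there. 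The stabilizer group is thus a direct product of a group on $A$ and a group on $\bar A$, which gives the tensor decomposition $Q=\ket{+}^{\otimes nr}\otimes Q_{\mathrm{one}}(S_Z)$. Counting by \eqref{eq:logical-count} yields $K_q=nm-nr-\dim S_Z=n(m-r)-\dim S_Z$.

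The third step computes the distances. I expect the only delicate point to be the $X$-distance, namely checking that restricting to $W^n$ is harmless. The $X$-logicals are $S_Z^{\ptr}\setminus S_X$. By Theorem~\ref{thm:css-admissibility} we have $\Fqm^n=V^n\oplus W^n$, with $V^n\subseteq S_Z^{\ptr}$ (Lemma~\ref{lem:complementary-image-orthogonality}). So $S_Z^{\ptr}=V^n\oplus S_Z^{\perp_2}$, and every $X$-logical $\ell=\ell_V+\ell_W$ has $\ell_W\in S_Z^{\perp_2}\setminus\{0\}$. In the split basis $\ell_W$ is a row submatrix of $\Phi_{B'}(\ell)$, and \eqref{eq:row-submatrix-bound} gives $\wt_R(\ell)\ge\rank\ell_W\ge d_R(S_Z^{\perp_2})$. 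Conversely, any nonzero element of $S_Z^{\perp_2}$ lies outside $S_X=V^n$ and so is itself a logical. This proves $d^{R}_X=d_R(S_Z^{\perp_2})$, with the convention $+\infty$ when $S_Z^{\perp_2}=\{0\}$, i.e.\ when $K_q=0$.

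For the $Z$-distance, note that $K_q>0$ means $S_Z\subsetneq W^n$. Theorem~\ref{thm:no-go}(ii) then gives $d^{R}_Z=1$ directly, since a rank-one element of $W^n\setminus S_Z$ lies in $S_X^{\ptr}=W^n$. The same statements transfer to $Q_{\mathrm{one}}$ on $\bar A$ through the identification $W^n\cong\F_q^{(m-r)\times n}$.
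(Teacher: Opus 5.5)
Your proposal is correct and follows the paper's proof essentially step for step. In both, the block-diagonal Gram matrix $G_{B'}=G_V\oplus G_W$ makes the realized $X$-group the full $X$-group on the $nr$ frozen qudits, the splitting $S_Z^{\ptr}=V^n\oplus S_Z^{\perp_2}$ yields $d^{R}_X=d_R(S_Z^{\perp_2})$, and a rank-one label of $W^n\setminus S_Z$ gives $d^{R}_Z=1$. The only differences are cosmetic: you obtain $\wt_R(\ell)\ge\wt_R(\ell_W)$ from the row-submatrix inequality \eqref{eq:row-submatrix-bound}, where the paper uses $e_2\star(\ell+s)=\ell$ with Lemma~\ref{lem:rank-non-expansion}, and you cite Theorem~\ref{thm:no-go}(ii) where the paper repeats its one-line argument via Lemma~\ref{lem:rankone-span}.
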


\begin{proof}
In the split basis $G_{B'}=G_V\oplus G_W$ is block diagonal with both restrictions nondegenerate
(Proposition~\ref{prop:idempotent-subspace}), so for $x\in V^n$ both $[x_j]_{B'}$ and
$G_{B'}[x_j]_{B'}$ are supported on the first $r$ intra-carrier positions, and every such vector
arises this way. Hence $\varphi_X(V^n)$ is exactly the vectors supported on $A$, the realized
$X$-stabilizer group is generated by the $X_{(j,\ell)}$ with $(j,\ell)\in A$, those $nr$ qudits are
in the state $\ket{+}$ and factor out, and the remaining generators are the realized
$Z$-operators of $S_Z\subseteq W^n$.

For the parameters of the residual code, \eqref{eq:logical-count} gives
$K_q=nm-nr-\dim_{\F_q}S_Z=n(m-r)-\dim_{\F_q}S_Z$. Since $S_X=V^n$,
Lemma~\ref{lem:physical-normalizer}
and Theorem~\ref{thm:css-admissibility} identify the $X$-logical labels with
$S_Z^{\ptr}$ modulo $V^n$. Every $y\in S_Z^{\ptr}$ splits as $y=e_1\star y+e_2\star y$ with
$e_1\star y\in V^n\subseteq S_Z^{\ptr}$ by Lemma~\ref{lem:complementary-image-orthogonality},
whence $e_2\star y\in S_Z^{\perp_2}$ and
\begin{equation}
\label{eq:perp-decomposition}
S_Z^{\ptr}=V^n\oplus S_Z^{\perp_2},
\end{equation}
the sum being direct by \eqref{eq:field-direct-sum}. Thus every $X$-logical coset has a unique
representative in $S_Z^{\perp_2}$, and for $\ell\in S_Z^{\perp_2}$ and $s\in V^n$ one has
$e_2\star(\ell+s)=\ell$, so Lemma~\ref{lem:rank-non-expansion} gives
$\wt_R(\ell)\le\wt_R(\ell+s)$ with equality at $s=0$: the coset minimum is attained at the
$S_Z^{\perp_2}$-representative and $d^{R}_X=d_R(S_Z^{\perp_2})$. Dually the $Z$-logical labels are
$S_X^{\ptr}=W^n$ modulo $S_Z$; if $K_q>0$ then $S_Z\subsetneq W^n$, so by
Lemma~\ref{lem:rankone-span} some rank-one vector of $W^n$ lies outside $S_Z$ and $d^{R}_Z=1$.

Finally, $[x]_{B'}=M[x]_B$ gives $G_{B'}=M^{-\top}G_BM^{-1}$, so realized $Z$-coordinates
transform by $M$ and $X$-coordinates by $M^{-\top}$, each a single invertible matrix applied
identically at every carrier, and Corollary~\ref{cor:rank-invariance} preserves all ranks sector by
sector; a rank-one label $\bar u\,c^{\top}$ maps to $(M\bar u)c^{\top}$, a bijection of the burst
ensemble. The per-carrier label action $(x,z)\mapsto(M^{-\top}x,Mz)$ preserves the symplectic form,
hence is Clifford for every prime power \cite{ketkar2006,gottesman1999higher,hostens2005}, and for
$q=2$ the factorization of $M$ into elementary row operations realizes it by
controlled-\textsc{not} gates in depth $O(m^2)$ \cite{gottesman1997}.
\end{proof}

In particular the uniquely correctable component $e_1\star E_Z$ of a $Z$-error is supported on the
$nr$ frozen directions, which carry no logical information, so no value of $r$ confers logical-$Z$
protection, while these physical qudits incur an unmitigated spatial overhead.

\section{Two Witness Families}
\label{sec:witnesses}

The equality structure of Theorem~\ref{thm:singleton-equality-mrd} has two extreme specializations,
$u=v$ and $v=0$, and both are realized by explicit families. The two-sided one, the quantum
Gabidulin codes of \cite{delfosse2024stacked}, shows that the unrestricted bounded-rank interface
classes are strictly larger than the linear ones, and carries a certifying decoder at the optimal
radius. The one-sided one is generated by projecting a Gabidulin code with a trace-self-adjoint
idempotent, and is the family on which the one-sidedness of
Theorem~\ref{thm:no-go} is exact.

\subsection{Two-sided: quantum Gabidulin codes}
\label{subsec:qgab}

Throughout this section $q=2$, $F:=\F_{2^n}$, $m=n$, and
$\boldsymbol\alpha=(\beta,\beta^2,\dots,\beta^{2^{n-1}})$ is a self-dual normal basis of
$F/\F_2$, existing if and only if $n$ is odd or $n\equiv2\pmod4$ \cite{lempel1988}. For
$i\in\mathbb Z/n\mathbb Z$ let $a_i:=\boldsymbol\alpha^{[2^i]}$, the cyclic shift of
$\boldsymbol\alpha$ by $i$, and for $I\subseteq\mathbb Z/n\mathbb Z$ put
$C_I:=\mathrm{span}_F\{a_i:i\in I\}$. Evaluating at $\alpha_j=\beta^{2^{j}}$ gives
$a_i=\mathrm{ev}(z^{2^i})$ and $C_I=\mathrm{ev}(\mathrm{span}_F\{z^{2^i}:i\in I\})$, exponents
read modulo $n$; in particular $C_{\{0,\dots,k-1\}}=\mathrm{Gab}(\boldsymbol\alpha,k)$.

\begin{lemma}[Orthonormality and duals]
\label{lem:normal-basis-duals}
For all $i,j$ one has $a_i\cdot a_j=\delta_{ij}$, where $x\cdot y:=\sum_kx_ky_k$ denotes the
$F$-valued dot product; hence $(a_i)_{i}$ is an $F$-orthonormal basis of $F^n$. For an $F$-linear
code $C\subseteq F^n$ the trace dual coincides with the dual with respect to the dot product, and
consequently $C_I^{\ptr}=C_{I^{c}}$ for every $I\subseteq\mathbb Z/n\mathbb Z$.
\end{lemma}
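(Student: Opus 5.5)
We have to prove $a_i\cdot a_j=\delta_{ij}$. Writing $\boldsymbol\alpha^{[2^i]}$ for the componentwise Frobenius power, the $k$-th coordinate of $a_i$ is $\alpha_k^{2^i}=\beta^{2^{k+i}}$. Hence
\[
a_i\cdot a_j=\sum_{k}\beta^{2^{k+i}}\beta^{2^{k+j}}=\sum_k\bigl(\beta^{2^{i}}\beta^{2^{j}}\bigr)^{2^k}=\Tr\bigl(\beta^{2^i}\beta^{2^j}\bigr),
\]
the sum over $k\in\mathbb Z/n\mathbb Z$ running over all Galois conjugates. Since $\boldsymbol\alpha$ is a self-dual basis, $\Tr(\alpha_i\alpha_j)=\delta_{ij}$, and this is exactly the claim. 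The $n$ vectors $a_i$ are orthonormal for a symmetric $F$-bilinear form, so they are $F$-independent: dotting a relation $\sum c_ia_i=0$ with $a_j$ gives $c_j=0$. They therefore form an orthonormal basis of $F^n$.

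Next we identify the trace dual with the dot dual for an $F$-linear $C$. If $x\cdot y=0$ for all $x\in C$, then $\langle x,y\rangle_{\Tr}=\Tr(x\cdot y)=0$, so $C^{\perp}\subseteq C^{\ptr}$. Conversely, let $y\in C^{\ptr}$ and $x\in C$. Then $\lambda x\in C$ for every $\lambda\in F$, so $\Tr(\lambda(x\cdot y))=0$ for all $\lambda$, and nondegeneracy of the trace form on $F$ forces $x\cdot y=0$.

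Finally, $C_{I^c}\subseteq C_I^{\perp}$ by orthonormality. The dimensions also match: $\dim_F C_I^{\perp}=n-|I|=\dim_F C_{I^c}$, because the $a_i$ are independent and the dot product is nondegenerate. So $C_I^{\ptr}=C_I^{\perp}=C_{I^c}$.

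No step is a real obstacle. The only point needing care is the index bookkeeping in the first computation: the cyclic-shift convention $a_i=\mathrm{ev}(z^{2^i})$ must match so that the coordinates of $a_i$ are the conjugates $\beta^{2^{k+i}}$, and the sum must run over all $n$ Frobenius powers so that it equals the trace.
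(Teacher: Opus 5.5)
Your proof is correct and follows the paper's argument: the Frobenius-conjugate computation $a_i\cdot a_j=\Tr(\beta^{2^i}\beta^{2^j})=\delta_{ij}$, the scaling argument with $\lambda c\in C$ and nondegeneracy of the trace to identify the trace dual with the dot dual, and orthonormality to get $C_I^{\ptr}=C_{I^{c}}$. Your explicit proof of $F$-independence and your dimension count fill in details that the paper only asserts.
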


\begin{proof}
By definition $a_i\cdot a_j=\sum_{k}\beta^{2^{i+k}}\beta^{2^{j+k}}
=\sum_k\bigl(\beta^{2^i}\beta^{2^j}\bigr)^{2^k}=\Tr\bigl(\beta^{2^i}\beta^{2^j}\bigr)=\delta_{ij}$,
the middle equality using that Frobenius is a field automorphism. If $C$ is $F$-linear and
$y\in C^{\ptr}$, then for every $c\in C$ and every $\lambda\in F$ one has $\lambda c\in C$, so
$\Tr\bigl(\lambda\,(c\cdot y)\bigr)=\langle\lambda c,y\rangle_{\Tr}=0$; nondegeneracy of the trace
form on $F$ forces $c\cdot y=0$. The converse inclusion is immediate from
$\langle c,y\rangle_{\Tr}=\Tr(c\cdot y)$. Since the $a_i$ are $F$-orthonormal, the dot-product dual
of $C_I$ is the $F$-span of the $a_j$ with $j\notin I$, which is $C_{I^{c}}$.
\end{proof}

\begin{lemma}[Rank distance of cyclically consecutive spans]
\label{lem:consecutive-spans}
Let $I=\{s,s+1,\dots,s+k-1\}\subseteq\mathbb Z/n\mathbb Z$ with $1\le k\le n$. Then
$d_R(C_I)=n-k+1$, and the value is attained by $\mathrm{ev}\bigl(P_U^{2^s}\bigr)$ for any
$\F_2$-subspace $U\subseteq F$ of dimension $k-1$, where $P_U(z):=\prod_{u\in U}(z-u)$.
\end{lemma}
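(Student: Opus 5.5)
The plan is to reduce to the ordinary Gabidulin code by applying a Frobenius power. The map $\phi_s:x\mapsto x^{2^{-s}}$, acting componentwise, is an $\F_2$-linear bijection of $F^n$. It preserves rank weight, because it maps the $\F_2$-span of the coordinates bijectively onto an $\F_2$-space of the same dimension. It also carries $a_i=\mathrm{ev}(z^{2^i})$ to $a_{i-s}$: indeed $\alpha_j^{2^i\cdot2^{-s}}=\beta^{2^{j+i-s}}$. Since $\phi_s$ is semilinear over $F$, namely $\phi_s(\lambda x)=\lambda^{2^{-s}}\phi_s(x)$, it maps the $F$-span of $\{a_i:i\in I\}$ onto the $F$-span of $\{a_{i-s}\}$. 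Hence
\[
\phi_s(C_I)=C_{\{0,\dots,k-1\}}=\mathrm{Gab}(\boldsymbol\alpha,k),
\]
and therefore $d_R(C_I)=d_R(\mathrm{Gab}(\boldsymbol\alpha,k))$. The entries of $\boldsymbol\alpha$ form a basis of $F$, so they are $n$ $\F_2$-independent points and $m=n$. The Gabidulin distance $n-k+1$ recalled in Section~\ref{subsec:gabidulin-regimes} then applies, for $1\le k\le n$.

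For a self-contained lower bound I would argue directly. Let $f\in\mathcal L_{<k}$ be nonzero. Its kernel in $F$ is an $\F_2$-space of dimension at most $k-1$, since $f$ has at most $2^{k-1}$ roots. The evaluation $\mathrm{ev}(f)$ has coordinate span $f(F)$, because the $\alpha_j$ span $F$, and $\dim f(F)=n-\dim\ker f\ge n-k+1$. This yields the bound without citing anything.

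For attainment, take $U$ of dimension $k-1$. Then $P_U$ is a linearized polynomial of $2$-degree $k-1$, a standard fact since the subspace polynomial over $\F_2$ is $\F_2$-linear. Its kernel is exactly $U$, so $\mathrm{ev}(P_U)\in\mathrm{Gab}(\boldsymbol\alpha,k)$ has rank $n-(k-1)$. Next, $P_U^{2^s}$ is again linearized, $\sum_i c_i^{2^s}z^{2^{i+s}}$, and its exponents lie in $\{s,\dots,s+k-1\}$ read mod $n$ via $z^{2^n}=z$ on $F$. Hence $\mathrm{ev}(P_U^{2^s})\in C_I$. Its coordinates are the $2^s$-th powers of those of $\mathrm{ev}(P_U)$, so its rank is unchanged, equal to $n-k+1$. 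Equivalently, $\phi_s^{-1}$ maps $\mathrm{ev}(P_U)$ to $\mathrm{ev}(P_U^{2^s})$.

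The main point needing care is the reduction modulo $n$ when the window $I$ wraps around. The consecutive block must be handled as a cyclic interval: one must check that the shift $\phi_s$ sends a cyclically consecutive set to $\{0,\dots,k-1\}$ and not to a non-consecutive one. This holds because index shifting is by $-s$ in $\mathbb Z/n\mathbb Z$. For $k=n$ the code is all of $F^n$, $U$ has dimension $n-1$, and the rank-one claim is consistent with the formula.
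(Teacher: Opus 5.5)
Your proof is correct and is essentially the paper's argument. Your componentwise map $\phi_s:x\mapsto x^{2^{-s}}$ is the paper's passage from $f$ to $g=f^{2^{n-s}}$: since $x^{2^{-s}}=x^{2^{n-s}}$ on $F$, one has $\phi_s(\mathrm{ev}(f))=\mathrm{ev}(g)$. After that, both arguments use the same root count for the lower bound and the same subspace polynomial $P_U^{2^s}$ for attainment. The only difference is presentational: you state the reduction as a rank-preserving semilinear bijection $C_I\to\mathrm{Gab}(\boldsymbol\alpha,k)$, while the paper states it as $\ker g=\ker f$.
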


\begin{proof}
An element of $C_I$ is $\mathrm{ev}(f)$ with $f(z)=\sum_{i\in I}c_iz^{2^i}$ and $c_i\in F$. The
evaluation points form an $\F_2$-basis of $F$, so
$\mathrm{span}_{\F_2}\{f(\alpha_j)\}_j=\Imag(f)$ and
$\wt_R(\mathrm{ev}(f))=\dim_{\F_2}\Imag(f)=n-\dim_{\F_2}\ker f$. Since $z^{2^n}=z$ on $F$, the map
$g(z):=f(z)^{2^{n-s}}=\sum_{j=0}^{k-1}c_{s+j}^{2^{n-s}}z^{2^{j}}$ is an $\F_2$-linearized
polynomial of $q$-degree at most $k-1$ with $\ker g=\ker f$, because raising to the power $2^{n-s}$
is a bijection of $F$. A nonzero such $g$ has at most $2^{k-1}$ roots in $F$, so
$\dim_{\F_2}\ker f\le k-1$ and $\wt_R(\mathrm{ev}(f))\ge n-k+1$.

Conversely, for an $\F_2$-subspace $U\subseteq F$ of dimension $k-1$ the subspace polynomial $P_U$
is $\F_2$-linearized of $q$-degree $k-1$ with $\ker P_U=U$ \cite{gabidulin1985}, so
$f:=P_U^{2^s}$ is $\F_2$-linearized with exponent set contained in $I$ and $\ker f=U$; hence
$\mathrm{ev}(f)\in C_I$ has rank weight $n-(k-1)$.
\end{proof}

\begin{proposition}[Two-sidedness with a nonlinear bounded-rank interface at a nontrivial
idempotent]
\label{prop:separation}
Let $n\ge5$ satisfy $n$ odd or $n\equiv2\pmod4$, let $2\le r<n/2$, and set
$S_X:=C_{\{0,\dots,r-1\}}=\mathrm{Gab}(\boldsymbol\alpha,r)$ and
$S_Z:=C_{\{r,\dots,2r-1\}}=\mathrm{Gab}(\boldsymbol\alpha^{2^r},r)$,
which is the quantum Gabidulin code $\mathrm{QGab}(\boldsymbol\alpha,r,r)$ of
\cite{delfosse2024stacked}. Let $e$ be \emph{any} $\F_2$-linear idempotent on $F$ with
$1\le\rank(e)\le n-1$, self-adjoint or not; trace-self-adjoint idempotents of every rank exist by
Lemma~\ref{lem:idempotent-existence}, and for $n$ odd the rank-one choice $e(x)=\Tr(x)\cdot1$ of
Remark~\ref{rem:explicit-idempotents} is admissible. Write $p$ for whichever of $e$ and
$\mathrm{id}-e$ has the smaller rank, so that $s:=\rank(p)=\min(\rank(e),n-\rank(e))
\le\lfloor n/2\rfloor$. Then:
\begin{enumerate}
\item[(i)] $(S_X,S_Z)$ is trace-orthogonal, $\dim_{\F_2}S_X=\dim_{\F_2}S_Z=nr$,
$K_q=n(n-2r)>0$, and $d^{R}_X=d^{R}_Z=r+1$ exactly;
\item[(ii)] neither check span is invariant under $e$ or under $e^{\dagger}$, so for
trace-self-adjoint $e$ the code does not split along $e$
(Theorem~\ref{thm:invariant-splitting}); neither measured family admits an ambient
projected-syndrome interface at $e$, linear or not
(Theorem~\ref{thm:projected-recoverability} at $U=\Fqm$), and no ambient quotient-valued interface
exists in either sector. The pair attains \eqref{eq:rank-singleton} with equality and has
$d^{R}_X=d^{R}_Z$, so Corollary~\ref{cor:optimal-rigidity}(iii) applies; a direct verification uses
$1\le s\le\lfloor n/2\rfloor<n-r+1=d_R(S_X)=d_R(S_Z)$ and $\wt_R(p\star a_i)=s$, whence
$p\star a_0\notin S_X$; note that $d_R(S_X)=n-r+1$ is the \emph{check-space} distance, whereas the
logical distances are $d^{R}_X=d^{R}_Z=r+1$;
\item[(iii)] $d_R(S_Z^{\ptr})=d_R(S_X^{\ptr})=r+1$, so for every $t\le\lfloor r/2\rfloor$ both
sectors satisfy the minimum-rank certificate and admit $\mathrm{I}_5$ and $\mathrm{I}_6$
(Corollary~\ref{cor:optimal-rigidity}), for every admissible $e$ at once, the
certificate being a property of the code alone; and by
Corollaries~\ref{cor:linearity-dichotomy} and \ref{cor:quotient-linearity} none is given by an
$\F_2$-linear map on the ambient label space.
\end{enumerate}
In the terminology of Definition~\ref{def:interfaces}, conditions $\mathrm{I}_5$ and
$\mathrm{I}_6$ hold in both sectors at every nontrivial idempotent for a two-sided code that is not a
tensor product across the $e$-split, while $\mathrm{I}_1,\dots,\mathrm{I}_4$ all fail. The
unrestricted bounded-rank classes are therefore strictly larger than the linear ones already at
$n=5$, and by Corollary~\ref{cor:linearity-dichotomy} the gap is created by nonlinearity of the
recovery map and not by the restriction to a rank ball. The conclusion is uniform in $\rank(e)$. By
Corollary~\ref{cor:optimal-rigidity} this behavior is not a feature of the quantum
Gabidulin family but of Singleton optimality itself.
\end{proposition}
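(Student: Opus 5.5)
Part (i) is essentially a computation with the orthonormal basis $(a_i)$. Lemma~\ref{lem:normal-basis-duals} gives $S_X^{\ptr}=C_{\{0,\dots,r-1\}^c}$, and this contains $C_{\{r,\dots,2r-1\}}=S_Z$ because $2r<n$ makes the two index sets disjoint; so the pair is trace-orthogonal. Both spans are $F$-linear of $F$-dimension $r$, so each has $\F_2$-dimension $nr$, and \eqref{eq:logical-count} gives $K_q=n^2-2nr=n(n-2r)>0$. Lemma~\ref{lem:consecutive-spans} applies to the cyclically consecutive sets $\{0,\dots,r-1\}$ and $\{r,\dots,2r-1\}$ with $k=r$, giving $d_R(S_X)=d_R(S_Z)=n-r+1$. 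Since these spans are MRD of dimension $Mr$ with $M=L=n$ and $u=v=r$, $u+v=2r<n$, Theorem~\ref{thm:singleton-equality-mrd}(ii) yields $d^{R}_X=d^{R}_Z=r+1$ and equality in \eqref{eq:rank-singleton}. (After Gram reduction there is no Gram factor, since the basis is self-dual.)

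For (iii), the duals $S_Z^{\ptr}=C_{\{r,\dots,2r-1\}^c}$ and $S_X^{\ptr}=C_{\{r,\dots,n-1\}}$ are cyclically consecutive of size $n-r$. Lemma~\ref{lem:consecutive-spans} therefore gives $d_R=r+1$ for both, which is also Corollary~\ref{cor:purity}. For $t\le\lfloor r/2\rfloor$ we have $2t\le r<r+1$, so the minimum-rank certificate of Theorem~\ref{thm:rank-restricted} holds. Lemma~\ref{lem:relative-criterion} with $d_R(\cD)\le\delta_{\mathrm{quot}}\le\delta_{\mathrm{syn}}$ then yields $\mathrm I_5$ and $\mathrm I_6$ in both sectors for every $e$. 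The nonlinearity claim follows from (ii): Corollaries~\ref{cor:linearity-dichotomy} and \ref{cor:quotient-linearity} show that a linear bounded-rank map would force the ambient criteria, and (ii) refutes these.

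For (ii) I will invoke Corollary~\ref{cor:optimal-rigidity}(i)--(iii) with $m=n$ and both sector distances $r+1\ge3\ge2$. This gives non-invariance of $\cD_X,\cD_Z$ and hence of $S_Z,S_X$ under $e$ and $e^{\dagger}$. It also gives the absence of ambient syndrome interfaces by Theorem~\ref{thm:projected-recoverability}(d), and of quotient interfaces by Theorem~\ref{thm:radius-dichotomy}. Non-splitting follows via Theorem~\ref{thm:invariant-splitting}. I also want the direct check. Fix an $F$-linear-free fact: $a_0=\boldsymbol\alpha$ has entries forming an $\F_2$-basis of $F$, so $\wt_R(p\star a_0)=\dim p(F)=s$. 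If $e\star S_X\subseteq S_X$, then also $(\mathrm{id}-e)\star S_X\subseteq S_X$, so $p\star a_0\in S_X$. But $p\star a_0$ is nonzero of rank $s\le\lfloor n/2\rfloor<n-r+1$, contradicting $d_R(S_X)=n-r+1$; here $r<n/2$ gives $n-r+1>n/2+1$. The same argument applies to $S_Z$ with $a_r$, and to $e^{\dagger}$, which is an idempotent of the same rank. The final summary claims are just the conjunction of (i)--(iii). The only delicate step is keeping the check-space distance $n-r+1$ distinct from the logical distances. I expect no real obstacle beyond confirming that consecutive index sets wrap cyclically, which Lemma~\ref{lem:consecutive-spans} allows.
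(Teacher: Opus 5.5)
Your proposal is correct and follows essentially the paper's proof. Part (i) uses the same disjoint-index-set argument and Lemma~\ref{lem:consecutive-spans}, part (ii) the same appeal to Corollary~\ref{cor:optimal-rigidity} together with the direct $p\star a_0$ rank check, and part (iii) the same certificate and linearity corollaries.

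The only variation is how you obtain $d^{R}_X=d^{R}_Z=r+1$. You take it from the sufficiency part of Theorem~\ref{thm:singleton-equality-mrd}(ii), and hence from Delsarte duality. The paper instead computes $d_R(S_Z^{\ptr})=d_R(S_X^{\ptr})=r+1$ directly from Lemma~\ref{lem:consecutive-spans}, then uses $n-r+1>r+1$ to see that a minimum-rank dual element is not a stabilizer. Both routes are valid.
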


\begin{proof}
(i) The index sets $\{0,\dots,r-1\}$ and $\{r,\dots,2r-1\}$ are disjoint because $2r<n$, so
$S_Z\subseteq S_X^{\ptr}$ by Lemma~\ref{lem:normal-basis-duals}. Each span has $F$-dimension $r$,
hence $\F_2$-dimension $nr$, and \eqref{eq:logical-count} gives $K_q=n(n-2r)>0$.

By Lemma~\ref{lem:normal-basis-duals}, $S_Z^{\ptr}=C_{I}$ with
$I=\{2r,\dots,n-1\}\cup\{0,\dots,r-1\}$ and $S_X^{\ptr}=C_{\{r,\dots,n-1\}}$, both cyclically
consecutive of length $n-r$, so Lemma~\ref{lem:consecutive-spans} gives
\begin{equation}
\label{eq:qgab-dual-distance}
d_R(S_Z^{\ptr})=d_R(S_X^{\ptr})=n-(n-r)+1=r+1,
\end{equation}
and likewise $d_R(S_X)=d_R(S_Z)=n-r+1$. Since $n>2r$ we have $n-r+1>r+1$, so an element of
$S_Z^{\ptr}$ of rank weight $r+1$, which exists by Lemma~\ref{lem:consecutive-spans}, cannot lie in
$S_X$ and labels a nontrivial $X$-logical class by Lemma~\ref{lem:physical-normalizer}; with $d^{R}_X\ge d_R(S_Z^{\ptr})$, valid because every $X$-logical label lies in $S_Z^{\ptr}$, this gives
$d^{R}_X=r+1$, and by symmetry, $d^{R}_Z=r+1$.

(ii) By (i) the pair attains \eqref{eq:rank-singleton} with equality, $n^2-2nr=n(n-2r)=K_q$, with
$d^{R}_X=d^{R}_Z=r+1\ge3$, so Corollary~\ref{cor:optimal-rigidity} gives every assertion, its
hypothesis $m\le n$ holding with $m=n$. The direct verification is equally short: the coordinates
of $a_i$ form an $\F_2$-basis of $F$, so those of $p\star a_i$ span $\Imag(p)$ and
$\wt_R(p\star a_i)=s\le\lfloor n/2\rfloor<n-r+1=d_R(S_X)$, whence $p\star a_0\neq0$ cannot lie in $S_X$ although $a_0\in S_X$, and an identical argument using $a_r\in S_Z$ shows that $p\star a_r\notin S_Z$; invariance under $e$ and under $\mathrm{id}-e$ are equivalent, and $e^{\dagger}$ is again an idempotent of rank $\rank(e)$.

(iii) The distances are \eqref{eq:qgab-dual-distance}, in agreement with
Corollary~\ref{cor:optimal-rigidity}, and $2t\le r<r+1$ for
$t\le\lfloor r/2\rfloor$, so that corollary supplies both interfaces in both sectors. Were either
given by an $\F_2$-linear map on the ambient space,
Corollary~\ref{cor:linearity-dichotomy} respectively Corollary~\ref{cor:quotient-linearity} would
force the ambient criterion, contradicting (ii).
\end{proof}

The family of Proposition~\ref{prop:separation} also admits an efficient certifying decoder at the
optimal radius, which we record because \cite{delfosse2024stacked} lists an efficient decoder as an
open requirement for these codes. The statement is a code-capacity
result for the parent family, and uses nothing beyond Lemmas~\ref{lem:normal-basis-duals} and
\ref{lem:consecutive-spans}.

\begin{proposition}[Certifying decoder for the quantum Gabidulin codes]
\label{prop:qgab-decoder}
Let $(S_X,S_Z)=\mathrm{QGab}(\boldsymbol\alpha,r,r)$ be as in Proposition~\ref{prop:separation}.
\begin{enumerate}
\item[(i)] The $X$-sector kernel is
$D_X=S_Z^{\ptr}=C_{\{2r,\dots,n-1\}\cup\{0,\dots,r-1\}}
=\mathrm{Gab}(\boldsymbol\alpha^{2^{2r}},n-r)$, an $F$-linear Gabidulin code of length $n$,
$F$-dimension $n-r$ and minimum rank distance $r+1$, whose parity-check matrix is the $r\times n$
Moore matrix with rows $a_r,\dots,a_{2r-1}$. Symmetrically, $D_Z=S_X^{\ptr}$ has a parity-check matrix
with rows $a_0,\dots,a_{r-1}$.
\item[(ii)] Writing $S_j:=a_{s+j}\cdot E$ for $0\le j<r$, with $s=r$ in the $X$ sector and $s=0$ in
the $Z$ sector, the measured record of $E$ against the $\F_2$-basis
$\{\beta^{2^{\ell}}a_{s+j}\}_{j,\ell}$ of the corresponding check space is
$\bigl(\Tr(\beta^{2^{\ell}}S_j)\bigr)_{j,\ell}$, and self-duality of the basis recovers each $S_j$
as $S_j=\sum_{\ell}\Tr(\beta^{2^{\ell}}S_j)\,\beta^{2^{\ell}}$, with no matrix inversion.
\item[(iii)] For every $t\le\lfloor r/2\rfloor$ the following procedure is a certifying radius-$t$
decoder in the sense of Definition~\ref{def:certifying-decoder}, in either sector, and returns $E$
for every label of rank at most $t$:
\begin{enumerate}
\item solve over $F$ the $(r-t)\times t$ linear system
\begin{equation}
\label{eq:qgab-key}
\sum_{l=1}^{t}\lambda_l\,S_{j-l}^{\,2^{l}}=S_j,\qquad j=t,\dots,r-1;
\end{equation}
return $\bot$ if \eqref{eq:qgab-key} is inconsistent, and otherwise select one solution by setting
the free unknowns to zero and put $\Lambda(x):=x+\sum_{l=1}^{t}\lambda_lx^{2^{l}}$;
\item compute an $\F_2$-basis $u_1,\dots,u_{\kappa}$ of $\ker\Lambda$; in exact arithmetic
$\kappa\le t$ automatically, since the coefficient of $x$ in $\Lambda$ is $1$, so that $\Lambda$ is
a nonzero $\F_2$-linearized polynomial of $2$-degree at most $t$ and has at most $2^{t}$ roots;
the test $\kappa\le t$ is therefore a consistency check on the implementation rather than a
rejection mechanism;
\item solve over $\F_2$ the $rn$ syndrome equations for the coefficients $c_{ik}$ of
$\widehat E_k=\sum_ic_{ik}u_i$, returning $\bot$ if they are inconsistent;
\item return $\widehat E$ if $\wt_R(\widehat E)\le t$ and $a_{s+j}\cdot\widehat E=S_j$ for all $j$,
and $\bot$ otherwise.
\end{enumerate}
The procedure is defined on the whole syndrome domain $F^{r}$ and returns $\bot$ on every input
that is not the syndrome of a label of rank at most $t$: it rejects whenever either linear system,
that of step~(a) or that of step~(c), is inconsistent, or the final certification in step~(d)
fails. Step~(a) costs $O(r^2t)$ operations over $F$, and step~(c) requires solving a linear system of size
$\kappa n$ over $\F_2$. The procedure is a reduction of the measured binary trace syndrome to a
Gabidulin syndrome together with a certification, and not a new rank-decoding algorithm: any
Gabidulin syndrome decoder may be substituted for steps (a)--(c), at $O(n^2)$ operations over $F$
\cite{gabidulin1985,silva2009}.
\item[(iv)] The radius $\lfloor r/2\rfloor$ is the largest at which unique decoding is possible,
since $d^{R}_X=d^{R}_Z=r+1$ by Proposition~\ref{prop:separation}(i).
\end{enumerate}
\end{proposition}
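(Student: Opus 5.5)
I would prove the four parts in order, relying on Lemmas~\ref{lem:normal-basis-duals} and \ref{lem:consecutive-spans}, Theorem~\ref{thm:exact-unique} and the classical Gabidulin key equation.

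For (i), Lemma~\ref{lem:normal-basis-duals} gives $D_X=S_Z^{\ptr}=C_{I}$ with $I=\{2r,\dots,n-1\}\cup\{0,\dots,r-1\}$. This index set is cyclically consecutive of length $n-r$, starting at $2r$. Since $a_{2r+j}=\mathrm{ev}(z^{2^{2r+j}})=(\boldsymbol\alpha^{2^{2r}})^{[2^j]}$, the span is $\mathrm{Gab}(\boldsymbol\alpha^{2^{2r}},n-r)$, and Lemma~\ref{lem:consecutive-spans} gives $d_R=r+1$. $F$-linearity together with the same lemma shows that the trace dual equals the dot dual. By orthonormality, the dot dual of $C_I$ is $C_{\{r,\dots,2r-1\}}$, so the rows $a_r,\dots,a_{2r-1}$ form a parity-check matrix; these rows are Frobenius powers of one vector, hence a Moore matrix. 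The $Z$ sector is handled symmetrically.

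For (ii), the measured parity of $E$ against the check $\beta^{2^\ell}a_{s+j}$ is $\langle\beta^{2^\ell}a_{s+j},E\rangle_{\Tr}=\Tr(\beta^{2^\ell}S_j)$. Because $\{\beta^{2^\ell}\}$ is a self-dual basis, the coordinate expansion $S_j=\sum_\ell\Tr(\beta^{2^\ell}S_j)\beta^{2^\ell}$ holds, so no matrix inversion is needed. I also need that these $rn$ elements form an $\F_2$-basis of the check space, which follows from $F$-independence of the $a_{s+j}$.

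For (iii), I would proceed as follows.

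\emph{Correctness on rank $\le t$ labels.} Write $E=\sum_{i\le w}u_ic_i$ with $w\le t$, the $u_i\in F$ being $\F_2$-independent and $c_i\in\F_2^n$. Then $S_j=\sum_i u_i^{2^{s+j}}\,y_i$, where $y_i$ depends only on $c_i$ and $\boldsymbol\alpha$, via $a_{s+j}\cdot c_i=(\sum_k c_{ik}\beta^{2^k})^{2^{s+j}}$. The syndromes are therefore of the standard Gabidulin form $S_j=\sum_i x_i^{2^j}z_i$ after absorbing the shift $s$. The subspace polynomial of the error span, normalized to have coefficient $1$ on $x$ (possible since $0$ is a simple root, because the derivative of a linearized polynomial is its $x$-coefficient), satisfies the key equation~\eqref{eq:qgab-key}; the indexing and powers must be checked against the $q^{-1}$-form convention. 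The main obstacle is uniqueness: when the system is underdetermined, the choice of setting free unknowns to zero must still produce a $\Lambda$ whose kernel contains the error span. I would handle this by first observing that any solution gives a $\Lambda$ annihilating the error span in the sense of the standard argument, namely that $\Lambda$ composed appropriately kills the syndrome sequence and $2t\le r$ makes the error locator recoverable. If that route is delicate, I would fall back on certification: whatever candidate step~(c) produces, step~(d) checks membership, the rank bound and the syndrome, and Theorem~\ref{thm:exact-unique} with $d_R(D_X)=r+1>2t$ ensures that any certified output equals $E$. It then only remains to show that the true $E$ is reached, i.e. that the column span of $E$ lies in $\ker\Lambda$, after which step~(c) has $E$ as a solution. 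I would argue this via the standard fact that the minimal-degree solution is unique for $w\le t$ and that every solution of the key equation kills the error span when $2w\le r$.

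\emph{Rejection and the remaining claims.} On syndromes with no rank-$\le t$ preimage, every output passes through step~(d) and is therefore $\bot$. The bound $\kappa\le t$ follows from the root count. The cost claims are direct counts.

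For (iv), Theorem~\ref{thm:exact-unique} together with $d_R(D_X)=r+1$ and Lemma~\ref{lem:rank-halving} shows that unique decoding fails at $t=\lfloor r/2\rfloor+1$, since $2t\ge r+1$.
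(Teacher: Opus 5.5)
Your treatment of (i), (ii) and (iv) matches the paper. The gap is in (iii), at the step you yourself flag as the main obstacle, and your syndrome expansion is also wrong. From $E=\sum_i u_ic_i$ one gets $S_j=a_{s+j}\cdot E=\sum_i u_i\,(a_{s+j}\cdot c_i)=\sum_i u_i\,\gamma_i^{2^{s+j}}$ with $\gamma_i:=\sum_k c_{ik}\beta^{2^k}$. The Frobenius power sits on the locator $\gamma_i$, not on the error-span element $u_i$ as in your formula $\sum_i u_i^{2^{s+j}}y_i$. This matters. The key equation \eqref{eq:qgab-key} raises $S_{j-l}$ to $2^l$, and it acts on the $u_i$ through $\Lambda$ only because the $u_i$ carry no Frobenius power. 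With your expansion it would not do so. You also leave the match with the statement's convention unchecked, so even the existence of a solution in step~(a) is not established.

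The claim that the zero-filled solution of step~(a) gives $\mathrm{span}_{\F_2}\{u_i\}\subseteq\ker\Lambda$ is deferred to a ``standard fact''. Your accompanying appeal to uniqueness of the minimal-degree solution is beside the point, since the procedure never selects a minimal-degree solution. Certification does not replace this step either: it rejects wrong candidates but does not ensure that $E$ is produced. The missing argument is one identity. With $\lambda_0:=1$ and $t\le j\le r-1$,
\[
S_j+\sum_{l=1}^{t}\lambda_l S_{j-l}^{2^l}=\sum_{l=0}^{t}\lambda_l\sum_{i}u_i^{2^l}\gamma_i^{2^{s+j}}=\sum_{i=1}^{w}\gamma_i^{2^{s+j}}\,\Lambda(u_i).
\]
The $\gamma_i$ are $\F_2$-independent, because the $c_i$ are and $(\beta^{2^k})_k$ is a basis of $F$. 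Hence the $(r-t)\times w$ Moore matrix $(\gamma_i^{2^{s+j}})_{j,i}$ has full column rank. The relevant count is $r-t\ge w$, which involves $t$ and holds because $r-t\ge t\ge w$; ``$2w\le r$'' alone would not suffice for larger $t$. So every solution of \eqref{eq:qgab-key} satisfies $\Lambda(u_i)=0$ for all $i$. The normalized subspace polynomial of any $t$-dimensional $U\supseteq\mathrm{span}_{\F_2}\{u_i\}$ is a solution, so step~(a) does not return $\bot$.

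With this in place, your handling of steps~(c)--(d) goes through. Every solution of (c) has coordinates in $\ker\Lambda$, hence rank at most $\kappa\le t$, and it has the correct syndrome. Theorem~\ref{thm:exact-unique} then identifies it with $E$.
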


\begin{proof}
(i) By Lemma~\ref{lem:normal-basis-duals}, $S_Z^{\ptr}=C_{I^{c}}$ with
$I^{c}=\{2r,\dots,n-1\}\cup\{0,\dots,r-1\}$, cyclically consecutive of length $n-r$ starting at
$2r$. For $f(z)=\sum_{i\in\{s,\dots,s+k-1\}}c_iz^{2^{i}}$ one has $f(z)=g(z^{2^{s}})$ with
$g(y)=\sum_{j<k}c_{s+j}y^{2^{j}}$, so $C_{\{s,\dots,s+k-1\}}=\mathrm{Gab}(\boldsymbol\alpha^{2^{s}},k)$;
Lemma~\ref{lem:consecutive-spans} gives $d_R(C_{I^{c}})=r+1$. Its dot-product dual is
$C_I=\mathrm{span}_F\{a_r,\dots,a_{2r-1}\}$ by Lemma~\ref{lem:normal-basis-duals}, and
$(a_{r+i})_k=(\beta^{2^{r+k}})^{2^{i}}$ exhibits a Moore matrix.

(ii) $\Tr(\beta^{2^{\ell}}(a_{s+j}\cdot E))=\langle\beta^{2^{\ell}}a_{s+j},E\rangle_{\Tr}$ is a
measured parity, and for a self-dual basis $(b_{\ell})$ every $S\in F$ satisfies
$S=\sum_{\ell}\Tr(Sb_{\ell})b_{\ell}$.

(iii) Let $\wt_R(E)=\tau\le t$, write $E_k=\sum_{i=1}^{\tau}c_{ik}\eta_i$ with $c_{ik}\in\F_2$ and
$\eta_1,\dots,\eta_{\tau}$ an $\F_2$-basis of $\mathrm{span}_{\F_2}\{E_k\}$, and put
$\gamma_i:=\sum_kc_{ik}\beta^{2^{k}}$. Since $c_{ik}\in\F_2$,
\[
\begin{aligned}
S_j&=\sum_k\beta^{2^{s+j+k}}E_k
=\sum_{i=1}^{\tau}\eta_i\Bigl(\sum_kc_{ik}\beta^{2^{k}}\Bigr)^{2^{s+j}}\\
&=\sum_{i=1}^{\tau}\eta_i\gamma_i^{2^{s+j}} .
\end{aligned}
\]
Raising $S_{j-l}$ to the power $2^{l}$ gives
$S_{j-l}^{2^{l}}=\sum_i\eta_i^{2^{l}}\gamma_i^{2^{s+j}}$, so with $\lambda_0:=1$,
\[
\sum_{l=0}^{t}\lambda_lS_{j-l}^{2^{l}}
=\sum_{i=1}^{\tau}\gamma_i^{2^{s+j}}\,\Lambda(\eta_i),\qquad j=t,\dots,r-1 .
\]
The $\gamma_i$ are $\F_2$-independent, because the coefficient vectors $c_i$ are and
$(\beta^{2^{k}})_k$ is a basis of $F$; hence the $(r-t)\times\tau$ Moore matrix
$(\gamma_i^{2^{s+j}})_{j,i}$ has full column rank, having $r-t\ge t\ge\tau$ rows
\cite{gabidulin1985}. Therefore every solution $\lambda$ of \eqref{eq:qgab-key} satisfies
$\Lambda(\eta_i)=0$ for all $i$. A solution exists: for any $\F_2$-subspace $U$ of dimension $t$
containing $\mathrm{span}_{\F_2}\{\eta_i\}$, the subspace polynomial $P_U$ is $\F_2$-linearized of
$2$-degree $t$ with nonzero coefficient of $x$, and its normalization has $\lambda_0=1$.
Consequently $\ker\Lambda\supseteq\mathrm{span}_{\F_2}\{\eta_i\}$ and $\kappa\le t$, so step~(b)
does not return $\bot$; step~(c) is consistent, the true coefficients being a solution; and two
solutions of step~(c) differ by an element of $D_X$ whose coordinates lie in $\ker\Lambda$, hence
of rank at most $\kappa\le t<r+1=d_R(D_X)$, so they coincide. Step~(d) verifies membership, the
rank bound and syndrome consistency, which are the three conditions of
Definition~\ref{def:certifying-decoder}; uniqueness is guaranteed a priori by
$d_R(D_X)=r+1>2t$ through Theorem~\ref{thm:exact-unique}, so those checks certify the output. The
substitution of a Gabidulin syndrome decoder is legitimate because, by (i), the $S_j$ are the
syndrome coordinates of $E$ for the parity-check matrix of
$\mathrm{Gab}(\boldsymbol\alpha^{2^{2r}},n-r)$.

(iv) Claim~(iv) follows directly from $d^{R}_X=d^{R}_Z=r+1$ and Theorem~\ref{thm:exact-unique}.
\end{proof}

For $r=2$ and $t=1$ the rejection branches are exact and not vacuous: the key equation
\eqref{eq:qgab-key} is inconsistent when $S_0=0\neq S_1$, and when $S_0\neq0=S_1$ it forces
$\lambda_1=0$, so that step~(c) is inconsistent unless the error vanishes; hence exactly
$1+(2^{n}-1)^{2}$ of the $2^{2n}$ sector records are decoded, namely those of the labels of rank at
most one, and the remaining $2(2^{n}-1)$ are rejected, $962$ against $62$ at $n=5$.

\begin{example}[Explicit separation instances]
\label{ex:separation}
Integers denote $\F_2$-coordinates in the basis $(1,x,\dots,x^{n-1})$, least significant bit first.
For $n=5$, $r=2$ take $F=\F_2[x]/(x^5+x^2+1)$, $\beta=1+x$,
$\boldsymbol\alpha=(3,5,17,12,26)$; then $\Tr(1)=1$, the rank-one trace projector is admissible,
$e\star a_0=e\star a_2=\mathbf1$, and the code is $\llbracket25,5,3^{R}/3^{R}\rrbracket_2$ with
$t=1$. For $n=6$, $r=2$, $F=\F_2[x]/(x^6+x+1)$ with $\beta=x^3+x^5$ gives
$\llbracket36,12,3^{R}/3^{R}\rrbracket_2$, using the existence case $n\equiv2\pmod4$
\cite{lempel1988}. Both attain
\eqref{eq:rank-singleton} with equality, and in the $n=5$ instance the degree and the
length are odd yet $d^{R}_X=d^{R}_Z=3$, the all-ones label being the obstruction to
$e$-invariance rather than a logical operator or a stabilizer.
\end{example}

\subsection{One-sided: idempotent-generated Gabidulin codes}
\label{subsec:igg}

The family of this subsection attains Theorem~\ref{thm:rank-singleton} with equality in its
logical-carrying branch and realizes the one-sided endpoint $v=0$ of
Theorem~\ref{thm:singleton-equality-mrd}.

\emph{Standing hypotheses for this subsection.} Throughout this subsection we take $n=m$, so the evaluation points form an $\F_q$-basis of $\Fqm$; $e_1$ is
a trace-self-adjoint idempotent with $V:=\Imag(e_1)$ of dimension $r$, $e_2:=\mathrm{id}-e_1$ and
$W:=\Imag(e_2)$ of dimension $a:=m-r$. With $0\le k_X,k_Z\le m$ and $\delta_X:=m-k_X$,
$\delta_Z:=m-k_Z$ in $[0,m]$ we set
$S_X:=\mathrm{span}_{\F_q}\{e_1\star\mathrm{ev}(f):f\in\mathcal L_{<\delta_X}\}$ and
$S_Z:=\mathrm{span}_{\F_q}\{e_2\star\mathrm{ev}(g):g\in\mathcal L_{<\delta_Z}\}$, aligned by
Corollary~\ref{cor:aligned-canonical} and trace-orthogonal by
Theorem~\ref{thm:css-admissibility}; the code is an \emph{idempotent-generated Gabidulin} (IGG)
code. The assumption $n=m$ is essential: \eqref{eq:sx-dimension} does not hold in general when
$n<m$, as demonstrated by the counterexample $q=2$, $m=3$, $n=2$, $r=1$, with evaluation points $(1,x)$ and $\delta_X=\delta_Z=1$, where $\dim_{\F_2}S_X=2$ while \eqref{eq:sx-dimension} would yield $3>\dim_{\F_2}\Imag(e_1)^n=2$.

\begin{theorem}[Projected dimensions, logical dimension and logical operators]
\label{thm:igg-parameters}
Under the assumptions above the following hold, with no genericity condition on $e_1$.
\begin{enumerate}
\item[(i)] \emph{Projected dimensions.}
\begin{equation}
\label{eq:sx-dimension}
\begin{split}
\dim_{\F_q}S_X&=\min(\delta_Xm,\,mr),\\
\dim_{\F_q}S_Z&=\min(\delta_Zm,\,m(m-r)),
\end{split}
\end{equation}
and $g\mapsto e_2\star\mathrm{ev}(g)$ is injective on $\mathcal L_{<\delta_Z}$ when
$\delta_Z\le m-r$, dually for $e_1$ when $\delta_X\le r$.
\item[(ii)] \emph{Logical dimension.} The IGG code has $N=nm$ physical $\F_q$-qudits and
$K_q=nm-\min(\delta_Xm,mr)-\min(\delta_Zm,m(m-r))$ logical ones; in the saturated regime
$\delta_Xm\ge nr$, $\delta_Zm\le n(m-r)$ this is $K_q=nm-nr-\delta_Zm$, and with $n=m$,
\begin{equation}
\label{eq:logical-dimension-saturated}
K_q=m(k_Z-r),\qquad K_{q^m}=k_Z-r.
\end{equation}
\item[(iii)] \emph{Logical operators.} In the saturated regime $S_X=\Imag(e_1)^n$,
$\dim_{\F_q}S_Z=\delta_Zm$ with $K_q>0$, set $S_Z^{\perp_2}:=S_Z^{\ptr}\cap\Imag(e_2)^n$. There are
exactly $K_q$ independent $X$-logical generators, with unique representatives in $S_Z^{\perp_2}$,
of dimension $n(m-r)-\dim_{\F_q}S_Z=K_q$ and minimum rank weight
$d^{R}_X=d_R(S_Z^{\perp_2})$, and exactly $K_q$ independent $Z$-logical generators with
representatives in $\Imag(e_2)^n$ modulo $S_Z$ and minimum rank weight $d^{R}_Z=1$. The saturation
hypothesis is needed for the equality $d^{R}_X=d_R(S_Z^{\perp_2})$: if
$S_X\subsetneq\Imag(e_1)^n$, then by Lemma~\ref{lem:rankone-span} a rank-one label of
$\Imag(e_1)^n$ outside $S_X$ is a nontrivial $X$-logical, so $d^{R}_X=1$ while
$d_R(S_Z^{\perp_2})$ is unconstrained.
\end{enumerate}
\end{theorem}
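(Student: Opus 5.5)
The main task is part (i); parts (ii) and (iii) then follow from the logical count \eqref{eq:logical-count} and from Theorem~\ref{thm:frozen-product}. For (i), take $n=m$ and use that the evaluation points $\alpha_1,\dots,\alpha_m$ form an $\F_q$-basis of $\Fqm$. The evaluation map $f\mapsto\mathrm{ev}(f)$ is then an $\F_q$-isomorphism from $\mathcal L_{<m}=\mathrm{End}_{\F_q}(\Fqm)$ onto $\Fqm^m$, since a linear map is determined by its values on a basis.

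Under this identification, $e_1\star\mathrm{ev}(f)=\mathrm{ev}(e_1\circ f)$. Hence $S_X$ is the image of $\mathcal L_{<\delta_X}$ under the $\F_q$-linear map $f\mapsto e_1\circ f$ on $\mathrm{End}_{\F_q}(\Fqm)$. Its dimension is therefore $\delta_Xm-\dim\{f\in\mathcal L_{<\delta_X}:\Imag f\subseteq\ker e_1\}$.

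The key step is to compute this kernel. Since $\ker e_1=W$ has dimension $m-r$, I need
\[
\dim_{\F_q}\{f\in\mathcal L_{<\delta}:\Imag f\subseteq W\}=\max(0,\,m(\delta-r)).
\]
For the lower bound I count dimensions. The condition $\Imag f\subseteq W$ is the vanishing of $r$ independent functionals, for example $\Tr(v_i f(\alpha_j))$ with $v_i$ a basis of $V$. That gives $rm$ linear conditions on a space of dimension $\delta m$, so the kernel has dimension at least $m(\delta-r)$.

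For the upper bound, and in particular for injectivity when $\delta\le r$, I plan to use the MRD property. Every nonzero $f\in\mathcal L_{<\delta}$ has $\dim\ker f\le\delta-1$, so its rank is at least $m-\delta+1$. Take the space of $f$ with $\Imag f\subseteq W$ and view it as a rank-metric code inside $\F_q^{(m-r)\times m}$. Its minimum rank distance is at least $m-\delta+1$, so the matrix Singleton bound \eqref{eq:matrix-singleton} gives dimension at most $m\bigl((m-r)-(m-\delta+1)+1\bigr)=m(\delta-r)$. When $\delta\le r$ the distance exceeds $m-r$, so the space must be zero.

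Combining the two bounds, $\dim S_X=\delta_Xm-\max(0,m(\delta_X-r))=\min(\delta_Xm,mr)$. The same argument with $e_2$, whose kernel is $V$, gives $\dim S_Z$. Injectivity in the stated ranges is exactly the vanishing of the kernel.

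For (ii), I substitute these dimensions into \eqref{eq:logical-count}. In the saturated regime $\dim S_X=mr$ and $S_X\subseteq\Imag(e_1)^n$, which has dimension $mr$, so $S_X=\Imag(e_1)^n$. With $n=m$ this gives $K_q=m^2-mr-\delta_Zm=m(k_Z-r)$.

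For (iii), I apply Theorem~\ref{thm:frozen-product} directly. The decomposition \eqref{eq:perp-decomposition} gives unique representatives in $S_Z^{\perp_2}$, of dimension $n(m-r)-\dim S_Z=K_q$, together with $d^{R}_X=d_R(S_Z^{\perp_2})$ and $d^{R}_Z=1$. The closing remark about the unsaturated case is Theorem~\ref{thm:no-go}(i).

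I expect the main obstacle to be the upper bound on the kernel dimension. There I must check that restricting the matrix code to $W$-valued maps correctly yields an $(m-r)\times m$ layout to which \eqref{eq:matrix-singleton} applies. Concretely, I would express $f$ in a split basis so that the last $r$ rows vanish, and then verify that the case $\delta<r$ produces no spurious negative count.
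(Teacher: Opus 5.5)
Your proposal is correct and follows the paper's proof. The paper likewise identifies $\ker\Phi_X$ with the $\ker(e_1)$-valued maps in $\mathcal L_{<\delta_X}$, obtains the lower bound $m(\delta_X-r)$ by the same dimension count, and obtains the upper bound from $\rank f\ge m-\delta_X+1$. The only difference is phrasing: the paper closes with $\ker\Phi_X\cap\mathcal L_{<r}=\{0\}$ rather than applying \eqref{eq:matrix-singleton} to the $(m-r)\times m$ subcode, which is the same one-line argument. It also reduces parts (ii) and (iii) to \eqref{eq:logical-count} and Theorem~\ref{thm:frozen-product} exactly as you plan.
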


\begin{proof}
(i) We prove the formula for $S_X$; that for $S_Z$ follows on exchanging $(e_1,r)$ with
$(e_2,m-r)$. Let $\Phi_X(f):=e_1\star\mathrm{ev}(f)$ on $\mathcal L_{<\delta_X}$, with image $S_X$.
The $\alpha_j$ form an $\F_q$-basis, so $\mathrm{ev}$ is injective and
$\ker\Phi_X=\{f:\Imag(f)\subseteq V^{\ptr}\}$, since $e_1\star\mathrm{ev}(f)=0$ says that $f$ maps
a basis, hence all of $\Fqm$, into $\ker(e_1)=V^{\ptr}$
(Proposition~\ref{prop:idempotent-subspace}). If $\delta_X\le r$: a nonzero
$f\in\mathcal L_{<\delta_X}$ has kernel of dimension at most $\delta_X-1$, so
$\rank(f)\ge m-\delta_X+1>m-r$ \cite{gabidulin1985}, contradicting
$\Imag(f)\subseteq V^{\ptr}$; hence $\ker\Phi_X=0$ and $\dim_{\F_q}S_X=\delta_Xm$. If
$\delta_X>r$: the same argument gives $\ker\Phi_X\cap\mathcal L_{<r}=\{0\}$, so
$\dim\ker\Phi_X\le(\delta_X-r)m$; and
$\ker\Phi_X=\mathcal L_{<\delta_X}\cap\mathrm{Hom}_{\F_q}(\Fqm,V^{\ptr})$ intersects subspaces of
dimensions $\delta_Xm$ and $m(m-r)$ inside $\mathrm{End}_{\F_q}(\Fqm)$ of dimension $m^2$, so
$\dim\ker\Phi_X\ge(\delta_X-r)m$. Hence $\dim_{\F_q}S_X=mr$.

(ii) The realized group is abelian by Theorem~\ref{thm:css-admissibility} and scalar-free of order
$q^{\dim_{\F_q}S_X+\dim_{\F_q}S_Z}$ by Proposition~\ref{prop:sector-split-embedding}(iv), so
\eqref{eq:logical-count} holds for every prime power $q$; substituting \eqref{eq:sx-dimension}
gives \eqref{eq:logical-dimension-saturated}, and the saturated values $\dim_{\F_q}S_X=nr$,
$\dim_{\F_q}S_Z=\delta_Zm$ complete the verification.

(iii) In the saturated regime the hypotheses of Theorem~\ref{thm:frozen-product} hold, and its
proof establishes \eqref{eq:perp-decomposition}, the identification of the $X$-logical labels with
$S_Z^{\perp_2}$, the equality $d^{R}_X=d_R(S_Z^{\perp_2})$ and the value $d^{R}_Z=1$. The count
$\dim_{\F_q}S_Z^{\perp_2}=n(m-r)-\dim_{\F_q}S_Z=K_q$ follows from
\eqref{eq:perp-decomposition}, \eqref{eq:trace-dual-dimension} and (ii).
\end{proof}

\begin{theorem}[MRD property of the protected branch]
\label{thm:analytic-mrd}
Assume the standing hypotheses of Section~\ref{subsec:igg} with $\dim_{\F_q}S_Z=(m-k_Z)m$ and $r<k_Z<m$. Then $S_Z$, as an $a\times m$ matrix code, is MRD with minimum rank distance $a-(m-k_Z)+1$, and its restricted trace dual $S_Z^{\perp_2}$ is MRD with $d_R(S_Z^{\perp_2})=n-k_Z+1$. If moreover the pair is \emph{saturated}, that is $S_X=\Imag(e_1)^n$, then $d^{R}_X=d_R(S_Z^{\perp_2})=n-k_Z+1$ and $K_q=m(k_Z-r)$; for $k_Z=m$, $S_Z=\{0\}$ and $d^{R}_X=1=n-k_Z+1$, and for $k_Z=r$ the restricted trace dual is zero and $K_q=0$.

The saturation hypothesis is essential for the last two assertions and is not merely an artifact of the proof technique: without it, the restricted-dual distance need not coincide with the logical distance. For $q=2$, $m=n=4$, $r=2$ and $k_X=k_Z=3$, \eqref{eq:sx-dimension} gives $\dim_{\F_2}S_X=\dim_{\F_2}S_Z=4$, both properly inside the branch spaces of dimension $8$; then $K_q=8$ and $d_R(S_Z^{\perp_2})=2=m-k_Z+1$, whereas Theorem~\ref{thm:no-go} gives $d^{R}_X=d^{R}_Z=1$.
\end{theorem}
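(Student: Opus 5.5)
The plan is to realize $S_Z$ concretely as a matrix code on the $W$-branch, bound its minimum rank from below by a kernel count for linearized polynomials, and then let the matrix Singleton bound \eqref{eq:matrix-singleton} force equality. Write $\delta_Z=m-k_Z$. The hypothesis $k_Z>r$ gives $\delta_Z<m-r=a$, so by Theorem~\ref{thm:igg-parameters}(i) the map $g\mapsto e_2\star\mathrm{ev}(g)$ is injective on $\mathcal L_{<\delta_Z}$. Its image is exactly $S_Z$, not merely a spanning set, because the map is $\F_q$-linear. Fixing a basis of $W$, each element of $W^n$ becomes an $a\times m$ matrix, using $n=m$. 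Since $a\le m$, the bound \eqref{eq:matrix-singleton} reads $\dim S_Z\le m(a-d_R(S_Z)+1)$, and with $\dim S_Z=\delta_Zm$ this yields $d_R(S_Z)\le a-\delta_Z+1$.

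For the reverse inequality, take $0\ne g\in\mathcal L_{<\delta_Z}$. The evaluation points form an $\F_q$-basis of $\Fqm$, so the rank weight of $e_2\star\mathrm{ev}(g)$ equals $\dim_{\F_q}e_2(\Imag g)$. The kernel of $g$ has dimension at most $\delta_Z-1$, so $\dim\Imag g\ge m-\delta_Z+1$. Moreover $\ker e_2=V$ has dimension $r$. Hence
\[
\dim e_2(\Imag g)\ge m-\delta_Z+1-r=a-\delta_Z+1\ge2 ,
\]
which gives $d_R(S_Z)=a-(m-k_Z)+1$, so $S_Z$ is MRD.

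For the restricted dual, I would use Proposition~\ref{prop:idempotent-subspace}: the trace form is nondegenerate on $W$, so $S_Z^{\perp_2}$ is the orthogonal complement of $S_Z$ inside $W^n$, of dimension $am-\delta_Zm$. In a basis of $W$ the pairing carries a symmetric invertible Gram factor $G_W$. Lemma~\ref{lem:gram-reduction}, applied via left multiplication by $G_W$, which preserves rank and the MRD property, reduces this to the Delsarte dual in $\F_q^{a\times m}$. The Delsarte dual of an MRD code is MRD \cite{gabidulin1985,delsarte1978}, and the dimension $m(a-\delta_Z)$ forces $d_R(S_Z^{\perp_2})=\delta_Z+1=n-k_Z+1$. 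I expect this Gram bookkeeping to be the only delicate point: one must check that the complement taken inside $W^n$, rather than in $\Fqm^n$, matches a standard-pairing dual on the $a\times m$ layout.

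Under saturation, Theorem~\ref{thm:frozen-product} (equivalently Theorem~\ref{thm:igg-parameters}(iii)) gives $d^{R}_X=d_R(S_Z^{\perp_2})$, and Theorem~\ref{thm:igg-parameters}(ii) gives $K_q=m(k_Z-r)$. The endpoints follow directly. At $k_Z=m$ we have $\delta_Z=0$, so $S_Z=\{0\}$, $S_Z^{\perp_2}=W^n$ and $d^{R}_X=1$. At $k_Z=r$ we have $\delta_Z=a$, so $S_Z=W^n$ and the restricted dual is zero.

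For the counterexample with $q=2$, $m=n=4$, $r=2$, $k_X=k_Z=3$, we have $\delta_X=\delta_Z=1\le2$, and \eqref{eq:sx-dimension} gives both dimensions equal to $4<8$. Then $K_q=16-8=8$. The MRD part above applies, since $r<k_Z<m$ and $\dim S_Z=(m-k_Z)m$, and gives $d_R(S_Z^{\perp_2})=2$. Finally, Theorem~\ref{thm:no-go}(iv) gives $d^{R}_X=d^{R}_Z=1$, because both containments are proper.
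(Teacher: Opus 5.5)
Your proposal is correct and follows essentially the same route as the paper: a lower bound on $d_R(S_Z)$ from the root count of $g$ plus rank--nullity for $e_2$ with $\ker e_2=V$, equality forced by \eqref{eq:matrix-singleton}, the restricted dual handled by Gram reduction and Delsarte duality, and the saturated and endpoint claims read off from Theorems~\ref{thm:frozen-product} and \ref{thm:igg-parameters}. The only omission is minor: at $k_Z=r$ you should also state $K_q=0$, which follows from Theorem~\ref{thm:igg-parameters}(ii), or directly from $S_X=V^n$ and $S_Z=W^n$.
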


\begin{proof}
Expanding each codeword $e_2\star\mathrm{ev}(g)$ in an $\F_q$-basis of $\Imag(e_2)$ realizes
$S_Z\subseteq\F_q^{a\times m}$, and the rank weight of a codeword is the rank of its matrix because
the evaluation points form a basis, so $\mathrm{span}_{\F_q}\{g(\alpha_j)\}_j=\Imag(g)$. We assume
$r<k_Z<m$ throughout the proof until the discussion of endpoints.

\emph{Lower bound.} A nonzero $g\in\mathcal L_{<m-k_Z}$ has $\rank(g)\ge k_Z+1$
\cite{gabidulin1985}, and rank--nullity for $e_2$ on $\Imag(g)$, with
$\dim\ker(e_2)=r$, gives $\rank(e_2\circ g)\ge\rank(g)-r\ge a-(m-k_Z)+1$; the map
$g\mapsto e_2\star\mathrm{ev}(g)$ is injective there, a nonzero $g$ in its kernel having
$\rank(g)\le r\le k_Z$.

\emph{Singleton equality.} By Theorem~\ref{thm:igg-parameters}(i), $\dim_{\F_q}S_Z=(m-k_Z)m$. The
bound \eqref{eq:matrix-singleton} for $a\times m$ codes with $a\le m$ reads $\dim_{\F_q}C\le m(a-d_R+1)$, giving $d_R(S_Z)\le a-(m-k_Z)+1$. Combining this with the lower bound proves that $S_Z$ is an MRD code.

\emph{The restricted trace dual.} Identify $\Imag(e_2)^n\cong\F_q^{a\times m}$ through the same
basis. Writing $A_{\cdot j}$ for the $j$-th column, the restricted trace form is
$\langle A,B\rangle_{\Tr}=\sum_jA_{\cdot j}^{\top}G\,B_{\cdot j}=\mathrm{tr}(A^{\top}GB)$, where
$G:=\bigl(\Tr(c_ic_k)\bigr)_{i,k}$ is symmetric and invertible because the trace form is
nondegenerate on $\Imag(e_2)$ (Proposition~\ref{prop:idempotent-subspace}). By
Lemma~\ref{lem:gram-reduction},
\begin{equation}
\label{eq:delsarte-transport}
S_Z^{\perp_2}=(GS_Z)^{\pstd}.
\end{equation}
Left multiplication by the invertible $G$ is a rank-preserving row operation, so $GS_Z$ is MRD with
the same minimum rank distance. The Delsarte dual of an MRD code is MRD
\cite{gabidulin1985,delsarte1978}, and $\dim_{\F_q}S_Z^{\perp_2}=am-(m-k_Z)m=(k_Z-r)m$, so
Singleton equality for the dual gives $d_R(S_Z^{\perp_2})=a-(k_Z-r)+1=(m-k_Z)+1$. With
Theorem~\ref{thm:igg-parameters}(iii) and $n=m$ this is $d^{R}_X=n-k_Z+1$.

\emph{Endpoints.} For $k_Z=m$, $S_Z=\{0\}$ and $S_Z^{\perp_2}=\Imag(e_2)^n$ contains rank-one
vectors (Lemma~\ref{lem:rankone-span}); for $k_Z=r$ the dimension count gives
$S_Z^{\perp_2}=\{0\}$ and $K_q=0$.
\end{proof}

\begin{corollary}[Equality in the Singleton bound]
\label{cor:singleton-equality}
In the saturated square IGG family with $0\le r<k_Z<m$ and $n=m$, the branch code
$Q_{\mathrm{one}}(S_Z)$ of Theorem~\ref{thm:frozen-product}, defined on $\F_q^{a\times n}$ with
$a=m-r$, has $K=K_q=m(k_Z-r)$, $d^{R}_X=m-k_Z+1$ and $d^{R}_Z=1$, and attains
\eqref{eq:rank-singleton} with equality for every admissible $(m,r,k_Z)$, since
$an-\max(a,n)(d^{R}_X+d^{R}_Z-2)=m(m-r)-m(m-k_Z)=m(k_Z-r)$; the ambient application to the same
code, with $a=m$, is strictly weaker whenever $r>0$, giving $K_q\le mk_Z$, and equality does not hold on the full label space. The codes $\mathrm{QGab}(\boldsymbol\alpha,r,r)$ of
Proposition~\ref{prop:separation} attain \eqref{eq:rank-singleton} as well, since
$n^2-n(2r+2-2)=n(n-2r)=K_q$. Both realize Theorem~\ref{thm:singleton-equality-mrd}: the branch
code is its $v=0$ specialization, with $S_X=\{0\}$ in the branch and $S_Z$ MRD of dimension
$m(m-k_Z)$ and minimum rank distance $(m-r)-(m-k_Z)+1$ (Theorem~\ref{thm:analytic-mrd}), and the
quantum Gabidulin codes are its $u=v$ specialization, with both spaces MRD of dimension $nr$ and
minimum rank distance $n-r+1$ (Lemma~\ref{lem:consecutive-spans}). The one-sided and the two-sided
witness are therefore two specializations of one structure.
\end{corollary}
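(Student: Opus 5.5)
The corollary collects facts already established, so the plan is to assemble them and verify the two Singleton computations.

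\emph{Branch code.} Start with the saturated square IGG code with $0\le r<k_Z<m$ and $n=m$. Theorem~\ref{thm:frozen-product} applies, since $S_X=\Imag(e_1)^n$ and $S_Z\subseteq\Imag(e_2)^n$. It factors the code as $\ket{+}^{\otimes nr}\otimes Q_{\mathrm{one}}(S_Z)$. The branch code lives on $W^n\cong\F_q^{a\times n}$ with $a=m-r$, has trivial $X$-check space, and has $Z$-check space $S_Z$. Its logical dimension is $n(m-r)-\dim S_Z=K_q$, its $X$-distance is $d_R(S_Z^{\perp_2})$, and its $Z$-distance is $1$.

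\emph{Branch parameters.} Theorem~\ref{thm:analytic-mrd}, in the range $r<k_Z<m$, gives $\dim S_Z=m(m-k_Z)$, $K_q=m(k_Z-r)$ and $d^{R}_X=m-k_Z+1$. For the Singleton comparison, $a=m-r\le n=m$, so $\max(a,n)=m$. Then
\[
an-m\bigl(d^{R}_X+d^{R}_Z-2\bigr)=m(m-r)-m(m-k_Z)=m(k_Z-r)=K_q ,
\]
which is equality in \eqref{eq:rank-singleton}. The endpoint $k_Z=m$ gives $S_Z=\{0\}$ and $d^{R}_X=1$, and the same formula still holds.

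\emph{Ambient layout.} With $a=m$ the bound reads $K_q\le m^2-m(m-k_Z)=mk_Z$. This exceeds $m(k_Z-r)$ by $mr>0$ whenever $r>0$. The ambient distances agree with the branch distances, because the frozen factor has no logical operators (the $+\infty$ convention in Theorem~\ref{thm:invariant-splitting}). Hence equality fails on the full label space.

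\emph{Quantum Gabidulin codes.} Use Proposition~\ref{prop:separation}(i) with $a=n$, so $M=n$. The right-hand side is $n^2-n(2r)=n(n-2r)=K_q$, again equality.

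\emph{Placement in Theorem~\ref{thm:singleton-equality-mrd}.} For the branch code, $d^{R}_Z=1$ gives $v=0$ and $S_X=\{0\}$, while $u=m-k_Z$. The theorem then predicts that $S_Z$ is MRD of dimension $Mu=m(m-k_Z)$ with distance $L-u+1=(m-r)-(m-k_Z)+1$, exactly as Theorem~\ref{thm:analytic-mrd} provides. For the QGab codes, $u=v=r$, and Lemma~\ref{lem:consecutive-spans} gives both spaces dimension $nr$ and distance $n-r+1=L-r+1$.

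The only delicate point is the bookkeeping between the ambient and branch layouts. One must confirm that $M=\max(a,n)=m$ on the branch even though $a<n$, and that passing to the branch changes neither the sector distances nor $K$. Both follow from Theorem~\ref{thm:frozen-product}.
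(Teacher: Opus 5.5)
Your proposal is correct and follows essentially the same route as the paper. You assemble Theorem~\ref{thm:frozen-product} with Theorems~\ref{thm:igg-parameters} and \ref{thm:analytic-mrd} for the branch parameters, and Proposition~\ref{prop:separation}(i) with Lemma~\ref{lem:consecutive-spans} for $\mathrm{QGab}(\boldsymbol\alpha,r,r)$; you then verify the displayed identities by direct algebra and read off the $v=0$ and $u=v$ specializations of Theorem~\ref{thm:singleton-equality-mrd}. The one detail the paper states explicitly and you leave implicit is that the self-dual normal basis has identity trace Gram matrix (Lemma~\ref{lem:normal-basis-duals}), so the quantum Gabidulin check spans are already matrix codes under the standard pairing when Theorem~\ref{thm:singleton-equality-mrd} is invoked.
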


\begin{proof}
The parameter values of the branch code are established by Theorems~\ref{thm:igg-parameters} and
\ref{thm:analytic-mrd} together with Theorem~\ref{thm:frozen-product}, which identifies the branch
check spaces as $S_X\cap W^n=\{0\}$ and $S_Z$ and the branch logical count as $K_q$; those of
$\mathrm{QGab}(\boldsymbol\alpha,r,r)$ follow from Proposition~\ref{prop:separation}(i) applied in
$\F_2^{n\times n}$, whose trace Gram matrix in the self-dual normal basis is the identity by Lemma~\ref{lem:normal-basis-duals}. The displayed identities follow by direct algebraic simplification, and the final assertion follows from Theorem~\ref{thm:singleton-equality-mrd} applied to both pairs.
\end{proof}

The statements above are specific to the square aligned regime: for $n<m$ the projected dimensions
are not determined here and \eqref{eq:sx-dimension} may fail, whereas
Theorem~\ref{thm:rank-singleton} holds for all $a$ and $n$.

\begin{example}[The primary instance]
\label{ex:primary-instance}
Let $q=2$, $(m,n,k_X,k_Z,r)=(5,5,0,3,1)$ over $\F_{2^5}=\F_2[x]/(x^5+x^2+1)$ with the polynomial
basis $B=(1,x,x^2,x^3,x^4)$, evaluation points $\alpha_j=b_j$, and the trace projector
$e_1(z)=\Tr(z)\cdot1$ of Remark~\ref{rem:explicit-idempotents}. Then $a=4$ and $\delta_X=m=5>r$,
so $S_X=\Imag(e_1)^n$ by Theorem~\ref{thm:igg-parameters}(i) and the pair is saturated;
$\dim_{\F_2}S_Z=(m-k_Z)m=10$, and Theorem~\ref{thm:analytic-mrd} gives
$d_R(S_Z)=a-(m-k_Z)+1=3$ and $d^{R}_X=d_R(S_Z^{\perp_2})=m-k_Z+1=3$, whence
$\llbracket25,10,3^{R}/1^{R}\rrbracket_2$ with $K_q=m(k_Z-r)=10$ and radius $t=1$ admissible by
Theorem~\ref{thm:exact-unique}. Direct computation in this field gives $\Tr(b_{\ell})\neq0$ exactly
for $\ell\in\{1,4\}$ and $G_B^2\neq G_B$, the instance announced in
Proposition~\ref{prop:sector-split-embedding}.
\end{example}

The evaluation construction of this subsection does not extend to layouts with $n>m$, and the
obstruction is again a rank-one label.

\begin{remark}[Dependent evaluation points]
\label{rem:dependent-points}
The construction requires evaluation points that are $\F_q$-linearly independent, and therefore
does not extend to $n>m$. Indeed, let the points be $\F_q$-linearly dependent, as is forced
whenever $n>m$, choose $0\neq c\in\F_q^n$ with $\sum_jc_j\alpha_j=0$ and $0\neq\beta\in W$, and put
$E:=(c_1\beta,\dots,c_n\beta)$, of rank weight one and lying in $W^n$. Every generator of $S_Z$ has
$j$-th coordinate $h_j=e_2\bigl(g(\alpha_j)\bigr)$ with $g$ and $e_2$ both $\F_q$-linear and
$c_j\in\F_q$, so $\sum_jc_jh_j=e_2\bigl(g(0)\bigr)=0$ and
$\langle h,E\rangle_{\Tr}=\Tr\bigl(\beta\sum_jc_jh_j\bigr)=0$. Hence
$0\neq E\in\Kker=S_Z^{\ptr}\cap W^n$ with $\wt_R(E)=1$, so $d_R(\Kker)=1$ and the certifying
decoder of Definition~\ref{def:certifying-decoder} returns $\bot$ on the fault-free record for
every $t\ge1$; the mirrored statement holds in the $Z$ sector. Theorems~\ref{thm:exact-unique} and
\ref{thm:rank-singleton} are unaffected, being statements about an arbitrary matrix syndrome map
respectively an arbitrary $a\times n$ CSS pair. This does not assert that rank-metric CSS codes of
length $n>m$ are impossible; rather, it precludes this specific evaluation-based construction, for which the
dependency is inherited by every projected check.
\end{remark}
\section{Discussion and Conclusion}
\label{sec:discussion}

\subsection{Relation to prior work}
\label{subsec:prior-work}

Theorem~\ref{thm:rank-singleton} is the rank-metric form of the asymmetric quantum Singleton bound
for Hamming-metric CSS codes \cite{sarvepalli2009asymmetric}, specializing to
\eqref{eq:matrix-singleton} \cite{gabidulin1985,delsarte1978} when one sector is trivial. What the
rank metric requires is a shortening on row supports rather than on coordinate positions, so that
the shortened quantity is the rank of a matrix, together with the equality analysis
\eqref{eq:singleton-equality} it makes available and the transposition step, which has no
Hamming-metric counterpart. Row shortening alone yields \eqref{eq:row-erasure}, the rank-metric
specialization of the entropic quantum Singleton argument \cite{cerf1997,grassl2022entropic}, and
carrier shortening yields \eqref{eq:col-erasure}.

In the Hamming metric the correspondence between optimal asymmetric CSS codes and nested MDS pairs
is classical \cite{sarvepalli2009asymmetric}, with the parameters classified conditionally on the
MDS conjecture \cite{ezerman2013aqmds}; the rank-metric statements here are unconditional, and
purity is a conclusion rather than a hypothesis. The non-invariance of nontrivial MRD codes under projectors is similarly well understood \cite{lunardon2018kernels,csajbok2020idealizers}; Lemmas~\ref{lem:mrd-fullrank}--\ref{lem:mrd-generation} provide a self-contained treatment that deduces the required properties directly from \eqref{eq:matrix-singleton} and Delsarte duality. The two relative distances of
Definition~\ref{def:relative-distances} are first relative generalized matrix weights in the sense
of \cite{martinezpenas2018rgmw}; our primary contribution lies in identifying the two nested pairs
that a projector and the quantum syndrome semantics produce, and providing the exact evaluation of both on
every layout (Theorem~\ref{thm:radius-dichotomy}).

Among quantum rank-metric constructions, \cite{delfosse2024stacked} gives the CSS family used here
as the two-sided witness, identifying a fault-tolerant extraction circuit and an efficient decoder as essential
prerequisites; its minimum rank distance is bounded below by $r+1$ there, and the same value is
reported in \cite{nizuka2026}, whereas Proposition~\ref{prop:separation}(i) establishes the
sector-by-sector equality that Corollary~\ref{cor:optimal-rigidity} and
Proposition~\ref{prop:qgab-decoder} require. We address the code-capacity part of the second
prerequisite for the parent family at the optimal unique-decoding radius; the code measured after
the Clifford layer of the stacked protocol is a row-wise relabeling of that family, so rank
weights, and with them the decoding radius, are unchanged, but faulty extraction is not treated
here. The decoder is a reduction to Gabidulin syndrome decoding with certification, not a new rank-decoding algorithm; some algebraic structure is unavoidable, as minimum-rank syndrome decoding is NP-complete
\cite{bussfrandsen1999}, generic $\Fqm$-linear rank-metric decoding would place NP in ZPP if solvable
in polynomial time \cite{gaboritzemor2016}, and list-decoding sizes are exponential beyond half the
minimum distance in general \cite{wachterzeh2013,raviv2016}. The codes of \cite{nizuka2026} lie outside the CSS class and
supply the achievability that turns Corollary~\ref{cor:css-penalty-fixed} into a separation at
fixed resources. Projecting a Gabidulin code onto an $\F_q$-subspace is standard
\cite{ndiaye2023subspace}; the contribution in
Section~\ref{subsec:igg} is that one projector, dictated by trace self-adjointness, simultaneously
makes the projected checks commute with the
complementary sector, admits a post-processing-free projected syndrome, and realizes the restricted
trace dual as the logical quotient.

\subsection{Intended regime and limitations}
\label{subsec:limitations}

The intended regime is one in which the dominant physical faults are correlated and aligned along intra-carrier directions; we do not assert uniform superiority over standard baseline codes across arbitrary noise environments. In strictly local layouts with weak correlations, local codes remain the natural choice \cite{fowler2012}. A hybrid arrangement follows, in which an inner code suppresses local
stochastic noise while an outer layer targets the residual correlated component. Three limitations
remain. First, the one-sided witness of Section~\ref{subsec:igg} restricts its primary applicability to strongly biased settings, since Theorem~\ref{thm:no-go} forces $d^{R}_Z=1$ there. Second, syndrome extraction in the presence of measurement faults is not analyzed in this work, nor do we evaluate the physical overhead of
measuring high-weight check operators or the circuit-level implications of the one-sidedness established in
Theorem~\ref{thm:no-go}. Third, while
Theorem~\ref{thm:optimal-existence} realizes every admissible Singleton-optimal parameter triple,
it does not supply an efficient decoder for an arbitrary optimal MRD pair: the decoder of
Section~\ref{subsec:qgab} is specific to the quantum Gabidulin family, and the fixed
evaluation-and-projection construction of Section~\ref{subsec:igg} does not extend to the regime
in which the evaluation points become dependent (Remark~\ref{rem:dependent-points}).

\subsection{Finite verifications and reproducibility}
\label{subsec:verification}

The statements above are accompanied by finite computations, all exhaustive at the stated
parameters unless marked as sampled: the enumeration of the $92\,881$ trace-orthogonal pairs with
$K_q>0$ in $\F_2^{3\times2}$, of which $97$ attain \eqref{eq:rank-singleton} and all satisfy
Theorem~\ref{thm:singleton-equality-mrd} and Corollary~\ref{cor:purity}; the two instances of
\cite{nizuka2026} and the $3\times2$ non-CSS code of Proposition~\ref{prop:css-specific}, with the
$168$ elements of $\mathrm{GL}_3(\F_2)$ tested there; the relative distances of
$\mathrm{QGab}(\boldsymbol\alpha,2,2)$ over $\F_{32}$ at three idempotents and at randomly drawn
non-idempotent maps; the $882\,912$ idempotents of rank at most three tested against the $6\times3$ pair of
Proposition~\ref{prop:stacking-sharpness}, of which $72$ are invariant
(Remark~\ref{rem:dichotomy}); the decoder of
Proposition~\ref{prop:qgab-decoder} on all $2^{10}$ records per sector at $n=5$, and on $300$ labels
per sector at $n=9$, $t=2$; the construction of Theorem~\ref{thm:optimal-existence} on
the layouts $3\times4$, $3\times5$, $4\times5$, $3\times6$ and their transposes, at $(u,v)=(1,1)$
and $(2,1)$, each instance being checked for equality in \eqref{eq:rank-singleton}, for the
structure of Theorem~\ref{thm:singleton-equality-mrd}, for purity, and for the two relative
distances of Theorem~\ref{thm:radius-dichotomy} at randomly drawn maps; the erasure optimization
of Proposition~\ref{prop:erasure-bounds}, enumerated against the closed form of
\eqref{eq:erasure-general} at all $472$ feasible triples with $a\le16$, $n\le8$ and $D\le9$; and
the stacking of Lemma~\ref{lem:row-stacking} on the $8\times2$ layout, where $K=8$ and no label of
stacked rank one lies in $N(S)\setminus S$. The scripts that produce these
numbers, the field and basis specifications they use, and the expected outputs are provided as
supplementary material; they are self-contained and depend on no external library. All numerical values reported throughout the text were generated and verified by these scripts.

\subsection{Conclusion and open problems}
\label{subsec:open}

Optimality in the asymmetric rank-metric Singleton bound is a structural property: it forces both
check spaces to be MRD, it is attainable at every admissible parameter triple, it forces purity,
and it fixes exactly how far a projector-based partial recovery can go. This resolves the central question posed in Section~\ref{sec:introduction}: relaxing the recovery target from the
full error to the projected error modulo stabilizer labels fails to enlarge the worst-case radius on any
layout, whereas relaxing it further to the projected syndrome either removes the radius restriction
altogether or leaves it unchanged, according to a single invariance condition, which at a
nontrivial idempotent requires $a>n$ or $d^{R}_X=1$. Corollary~\ref{cor:css-penalty-fixed} settles
the attainability of the column bound \eqref{eq:col-erasure} for every odd distance on the layouts
$a=2\ell n$; four questions remain open: whether it is attained at even distances, where the parity
term $\iota=1$ makes the two erasure bounds differ, and on layouts outside that family; a
classification of the invariant case of
Theorem~\ref{thm:radius-dichotomy} for $a>n$, of which
Proposition~\ref{prop:stacking-sharpness} is one instance; what replaces Theorem~\ref{thm:radius-dichotomy} for codes that do not attain
\eqref{eq:rank-singleton}, given that Example~\ref{ex:nonoptimal-pair} rules out any formula in
the ordinary parameters; and the dimension-sensitive analogue of
Theorem~\ref{thm:rank-singleton} in the sum-rank metric \cite{byrne2021sumrank}. Fault-tolerant syndrome extraction for the stacked architecture of \cite{delfosse2024stacked} remains an essential
open problem.

\appendices

\section{Realization over a Nonprime Base Field}
\label{app:nonprime}

This appendix supplies the two ingredients that Proposition~\ref{prop:sector-split-embedding} and
Lemma~\ref{lem:physical-normalizer} defer for $q=p^{\nu}$ with $\nu>1$: the measurement protocol
recovering the $\F_q$-valued syndrome, and the group accounting behind \eqref{eq:logical-count}.
Both rest on the fact that
\begin{equation}
\label{eq:absolute-trace-nondeg}
(\alpha,\beta)\mapsto\Tr_{\F_q/\F_p}(\alpha\beta)\ \text{is nondegenerate on }\F_q.
\end{equation}

\begin{proposition}[Recovery of the $\F_q$-valued syndrome over a nonprime base field]
\label{prop:fq-syndrome}
Let $q=p^{\nu}$ with $p$ prime, let $\{\lambda_1,\dots,\lambda_{\nu}\}$ be an $\F_p$-basis of
$\F_q$, and adopt the embedding of Proposition~\ref{prop:sector-split-embedding}. For a check label
$h\in\Fqm^n$ and an error label $E$ in the complementary sector, measuring $\lambda_ih$ returns
$\Tr_{\F_q/\F_p}(\lambda_i\,s(h;E))$, and the $\nu$ outcomes determine $s(h;E)\in\F_q$ through a
fixed $\F_p$-linear map computed offline. The $\F_q$-valued trace syndrome is therefore measurable
for every prime power $q$, at a $\nu$-fold increase in measured generators and with no change to
the stabilizer group, since $\lambda_ih\in S_H$.
\end{proposition}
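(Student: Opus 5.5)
The proof is a direct computation from the embedding of Proposition~\ref{prop:sector-split-embedding}, followed by an inversion argument based on \eqref{eq:absolute-trace-nondeg}.

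\emph{Step 1 (outcome of one measurement).} The label $\lambda_ih$ is an $\F_q$-multiple of $h$, so it lies in the $\F_q$-span $S_H$. Its realized operator is therefore an element of the stabilizer group already present, and measuring it adds no new generator to the group. By Proposition~\ref{prop:sector-split-embedding}(ii), the physical commutation phase between the realized check and the realized error is $\Tr_{\F_q/\F_p}$ of the exponent pairing \eqref{eq:exponent-identity}. That pairing equals $\langle\lambda_ih,E\rangle_{\Tr}=\lambda_i\langle h,E\rangle_{\Tr}=\lambda_i\,s(h;E)$, because the trace pairing is $\F_q$-bilinear. So the outcome is $\Tr_{\F_q/\F_p}(\lambda_i s(h;E))$.

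\emph{Step 2 (inversion).} Consider the map $\F_q\to\F_p^{\nu}$ given by $s\mapsto\bigl(\Tr_{\F_q/\F_p}(\lambda_is)\bigr)_{i}$. It is $\F_p$-linear. It is injective by \eqref{eq:absolute-trace-nondeg}: if $s$ lies in the kernel, then $\Tr_{\F_q/\F_p}(\lambda s)=0$ for every $\lambda$, since the $\lambda_i$ span $\F_q$ over $\F_p$, and this forces $s=0$. Source and target both have $\F_p$-dimension $\nu$, so the map is bijective. Its inverse can be written explicitly with the trace-dual basis $\{\mu_i\}$ of $\{\lambda_i\}$ over $\F_p$, namely $s=\sum_i\Tr_{\F_q/\F_p}(\lambda_is)\mu_i$. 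This inverse depends only on the chosen basis, so it is computed offline.

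\emph{Step 3 (accounting).} The $\nu$ labels $\lambda_ih$ require $\nu$ measured generators per check, and all of them lie in $S_H$, so the stabilizer group is unchanged.

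No step is a genuine obstacle. The only point needing care is Step~1: one must identify the physical phase as the absolute trace of the $\F_q$-valued pairing, and this uses the $\F_q$-linearity of $\varphi_X$ and $\varphi_Z$ in the embedding.
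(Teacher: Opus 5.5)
Your proposal is correct and follows the paper's proof. It identifies the outcome as $\Tr_{\F_q/\F_p}(\lambda_i\,s(h;E))$ via Proposition~\ref{prop:sector-split-embedding}(ii) and $\F_q$-linearity, inverts the $\F_p$-isomorphism $\alpha\mapsto(\Tr_{\F_q/\F_p}(\lambda_i\alpha))_i$ using \eqref{eq:absolute-trace-nondeg}, and notes $\lambda_ih\in S_H$; your explicit inverse $s=\sum_i\Tr_{\F_q/\F_p}(\lambda_is)\,\mu_i$ through the trace-dual basis is a small, correct addition that makes the offline map concrete.
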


\begin{proof}
$s(\lambda_ih;E)=\lambda_i\,s(h;E)$ by $\F_q$-linearity, and the physical phase is
$\Tr_{\F_q/\F_p}\bigl(s(\lambda_ih;E)\bigr)$ by Proposition~\ref{prop:sector-split-embedding}(ii). By
\eqref{eq:absolute-trace-nondeg}, $\alpha\mapsto(\Tr_{\F_q/\F_p}(\lambda_i\alpha))_i$ is an
$\F_p$-isomorphism $\F_q\to\F_p^{\nu}$; and $\lambda_ih\in S_H$, so only the measured generating
set is enlarged.
\end{proof}

\begin{lemma}[$\F_p$-accounting of the realized CSS group]
\label{lem:fp-accounting}
Let $q=p^{\nu}$ with $p$ prime, let $S_X,S_Z\subseteq\Fqm^n$ be trace-orthogonal $\F_q$-linear check
spans with $\F_q$-bases $(x_1,\dots,x_{d_X})$ and $(z_1,\dots,z_{d_Z})$, let
$\lambda_1,\dots,\lambda_{\nu}$ be an $\F_p$-basis of $\F_q$, and adopt the embedding of
Proposition~\ref{prop:sector-split-embedding}. Then:
(i) $\{\lambda_jx_i\}_{i,j}$ and $\{\lambda_jz_i\}_{i,j}$ are $\F_p$-bases of $S_X$ and $S_Z$, so
the enlarged measured set of Proposition~\ref{prop:fq-syndrome} consists of $\nu(d_X+d_Z)$
generators of the same group; (ii) the realized group $G$ generated by
$\{X^{\varphi_X(x)}\}\cup\{Z^{\varphi_Z(z)}\}$ is scalar-free and
$(x,z)\mapsto X^{\varphi_X(x)}Z^{\varphi_Z(z)}$ is an isomorphism $(S_X\times S_Z,+)\to G$, so $G$
is elementary abelian of order $q^{\,d_X+d_Z}$; (iii) the code space has dimension $q^{K_q}$ with
$K_q=N-d_X-d_Z$ and $N=nm$, which is \eqref{eq:logical-count}.
\end{lemma}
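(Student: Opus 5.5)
The plan is to prove the three parts in order, since (iii) rests on (i) and (ii).

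\emph{Part (i).} Because each $\lambda_j$ lies in $\F_q$ and $S_X$ is $\F_q$-linear, every $\lambda_jx_i$ lies in $S_X$. For spanning, write an arbitrary $x\in S_X$ as $\sum_i c_ix_i$ with $c_i\in\F_q$. Expanding each $c_i=\sum_j\mu_{ij}\lambda_j$ over $\F_p$ gives an $\F_p$-combination of the $\lambda_jx_i$. For independence, a vanishing $\F_p$-combination regroups as $\sum_i\bigl(\sum_j\mu_{ij}\lambda_j\bigr)x_i=0$. Independence of the $x_i$ over $\F_q$ then kills every coefficient $\sum_j\mu_{ij}\lambda_j$, and independence of the $\lambda_j$ over $\F_p$ kills every $\mu_{ij}$. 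The same argument applies to $S_Z$. Measuring $\lambda_jx_i$ realizes an element of the group already generated, so the enlarged set from Proposition~\ref{prop:fq-syndrome} generates the same group.

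\emph{Part (ii).} Consider the map $\Psi(x,z):=X^{\varphi_X(x)}Z^{\varphi_Z(z)}$ on $S_X\times S_Z$.
\begin{itemize}
\item The embeddings $\varphi_X,\varphi_Z$ are $\F_q$-linear, hence additive. Same-type Paulis multiply by adding exponents, so $X^{a}X^{a'}=X^{a+a'}$ and likewise for $Z$.
\item To compose $\Psi(x,z)\Psi(x',z')$ we must move $Z^{\varphi_Z(z)}$ past $X^{\varphi_X(x')}$. This produces a phase $\omega^{\Tr_{\F_q/\F_p}(\langle x',z\rangle_{\Tr})}$ by \eqref{eq:exponent-identity}.
\item Trace orthogonality gives $\langle x',z\rangle_{\Tr}=0$, so the phase is $1$. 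Hence $\Psi$ is a homomorphism, and its image is exactly $G$.
\end{itemize}
Injectivity is the key point, because it is what gives scalar-freeness. Suppose $\Psi(x,z)$ is a scalar multiple of the identity. Then the $X$- and $Z$-exponent vectors must both vanish. The map $\varphi_X$ is injective because $G_B$ is invertible and coordinate maps are bijective, and $\varphi_Z$ is injective for the same reason. So $x=z=0$. No nontrivial scalar lies in $G$, and $G\cong S_X\times S_Z$. The latter is an elementary abelian $p$-group of order $q^{d_X+d_Z}$.

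\emph{Part (iii).} Since $G$ is abelian and contains no $-I$-type scalars, the standard stabilizer counting applies. The code space is the common $+1$ eigenspace, and its projector is $|G|^{-1}\sum_{g\in G}g$. Its trace is $q^{N}$ times the number of group elements with zero Pauli label, divided by $|G|$. By the injectivity in (ii), only the identity has zero label. This gives dimension $q^{N}/q^{d_X+d_Z}=q^{K_q}$.

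The main obstacle is the phase bookkeeping in (ii) for $\nu>1$, where the physical phase is only $\Tr_{\F_q/\F_p}$ of the trace pairing. This is resolved by using that the pairing itself vanishes, rather than trying to invert the absolute trace.
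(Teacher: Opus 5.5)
Your proof is correct and matches the paper's argument in all three parts. You use the same regrouping to get the $\F_p$-bases, the same appeal to trace orthogonality to make the reordering phase vanish so that $(x,z)\mapsto X^{\varphi_X(x)}Z^{\varphi_Z(z)}$ is an injective homomorphism onto a scalar-free $G$, and the standard stabilizer count for (iii). The differences are minor. You obtain exponent $p$ directly from the isomorphism with $(S_X\times S_Z,+)$ instead of computing $p$-th powers as the paper does, and you prove (iii) yourself via the trace of the averaging projector $|G|^{-1}\sum_{g\in G}g$ where the paper cites the count.
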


\begin{proof}
(i) A vanishing combination $\sum_{i,j}c_{ij}\lambda_jx_i=0$ reads
$\sum_i(\sum_jc_{ij}\lambda_j)x_i=0$, and independence of the $x_i$ over $\F_q$ and of the
$\lambda_j$ over $\F_p$ forces $c_{ij}=0$; the set has $\nu d_X=\dim_{\F_p}S_X$ elements, so it forms
a basis; an identical argument applies to $S_Z$. (ii) Generators commute by trace orthogonality and
Proposition~\ref{prop:sector-split-embedding}(ii), so reordering a word into the normal form
$X^{\varphi_X(x)}Z^{\varphi_Z(z)}$ produces only trivial phases; the same computation gives
multiplicativity in $(x,z)$, and injectivity follows from that of the physical realization on labels. Every element has order dividing $p$: for odd $p$, $(X^{a}Z^{b})^{p}=\omega_p^{\binom{p}{2}\Tr_{\F_q/\F_p}(a\cdot b)}I$ with $\binom{p}{2}\equiv0$, and for $p=2$, $(X^{a}Z^{b})^{2}=(-1)^{\Tr_{\F_q/\F_2}(a\cdot b)}I$ with
$a\cdot b=0$ by \eqref{eq:exponent-identity}. (iii) A scalar-free abelian subgroup of order
$p^{\varsigma}$ of the generalized Pauli group on $p^{\nu N}$ dimensions fixes a subspace of
dimension $p^{\nu N-\varsigma}$ \cite{ashikhmin2001,ketkar2006,gottesman1997}; with
$\varsigma=\nu(d_X+d_Z)$ that is $q^{N-d_X-d_Z}$.
\end{proof}

\section{A Singleton-Optimal Pair Carrying an Invariant Projector}
\label{app:sharpness}

This appendix exhibits a Singleton-optimal pair on a tall layout carrying an invariant projector
(Proposition~\ref{prop:stacking-sharpness}), the witness of Remark~\ref{rem:dichotomy}.

The construction stacks a square Singleton-optimal pair on disjoint row blocks. The resulting
tall pair is again Singleton-optimal and is invariant under each block projector, which is what
Lemma~\ref{lem:mrd-noninvariance} excludes on layouts with $a\le n$.

The next proposition shows that the hypothesis $a\le n$ of
Lemma~\ref{lem:mrd-noninvariance}, and hence of
Corollary~\ref{cor:optimal-rigidity}, cannot be dropped either.

\begin{proposition}[Singleton-optimal pairs with an invariant projector when $a>n$]
\label{prop:stacking-sharpness}
Let $(C_X,C_Z)\subseteq\F_q^{n\times n}$ be a square pair attaining equality in
\eqref{eq:rank-singleton} with sector distances $d^{R}_X,d^{R}_Z$ and logical dimension
$K_0=n^2-n(d^{R}_X+d^{R}_Z-2)>0$, and let $\ell\ge2$. Define
$S_X:=\{[X_1;\dots;X_{\ell}]:X_i\in C_X\}\subseteq\F_q^{\ell n\times n}$ by vertical stacking and
$S_Z$ likewise. Then $(S_X,S_Z)$ is a CSS pair on the $\ell n\times n$ layout with $K=\ell K_0$ and
the same sector distances, it attains equality in \eqref{eq:rank-singleton}, and both spans are
invariant under each block projector $P_i=\mathrm{diag}(0,\dots,I_n,\dots,0)$, a symmetric
idempotent of rank $n$. Consequently, at $P_i$ both sectors admit an exact $\F_q$-linear ambient
projected-syndrome interface, so $\mathrm{I}_1$, $\mathrm{I}_2$ and $\mathrm{I}_5(t)$ hold for
every $t\ge1$ even though $2t\ge d^{R}_X$ for $t\ge\lceil d^{R}_X/2\rceil$, while $\mathrm{I}_3$
fails because the branch $\Imag(P_i)^n$ carries $K_0>0$ logical qudits
(Theorem~\ref{thm:quotient-characterization}).
\end{proposition}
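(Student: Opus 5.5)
The argument is a block-by-block verification, followed by invoking the earlier criteria. First I check the CSS and dimension claims. The standard pairing on $\F_q^{\ell n\times n}$ is the sum of the blockwise pairings, $\langle[X_i],[Z_i]\rangle_{\mathrm{std}}=\sum_i\langle X_i,Z_i\rangle_{\mathrm{std}}$. Hence $S_Z\subseteq S_X^{\pstd}$, and indeed
\[
S_Z^{\pstd}=\{[Y_1;\dots;Y_\ell]:Y_i\in C_Z^{\pstd}\},
\]
which is a product space; the same holds for $S_X^{\pstd}$. Dimensions multiply by $\ell$, so $K=\ell n^2-\ell(\dim C_X+\dim C_Z)=\ell K_0$.

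For the sector distances I use the product structure. A label in $S_Z^{\pstd}\setminus S_X$ has some block $Y_i\in C_Z^{\pstd}\setminus C_X$. That block is a row submatrix, so the rank of the whole label is at least $\rank Y_i\ge d^R_X$. Conversely, a minimum-rank $X$-logical of the square pair, padded with zero blocks, attains $d^R_X$. The same argument gives $d^R_Z$; this is the argument of Lemma~\ref{lem:row-stacking} specialized to CSS pairs.

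Singleton equality is then arithmetic: $M=\ell n$ and $L=n$, so the right-hand side of \eqref{eq:rank-singleton} is
\[
\ell n^2-\ell n(d^R_X+d^R_Z-2)=\ell K_0=K.
\]

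Invariance is immediate. $P_i$ keeps block $i$ and zeroes the others, and the zero matrix lies in $C_X$ and $C_Z$, so $P_iS_X\subseteq S_X$ and $P_iS_Z\subseteq S_Z$. $P_i$ is symmetric, so its standard adjoint is $P_i^{\top}=P_i$, and it has rank $n$.

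For the interfaces I work directly in the matrix form with the standard pairing. This matches the trace setting through Lemma~\ref{lem:gram-reduction} with a self-dual convention; the componentwise action becomes left multiplication by $P_i$. The ambient criterion of Theorem~\ref{thm:projected-recoverability}(d), $e^{\dagger}\star S_Z\subseteq S_Z$, holds, so (b) gives an $\F_q$-linear map and $\mathrm I_1,\mathrm I_2$ hold. $\mathrm I_5(t)$ holds at every $t$ by Lemma~\ref{lem:interface-hierarchy}, or equivalently via $\delta_{\mathrm{syn}}=+\infty$ from Theorem~\ref{thm:radius-dichotomy}(ii). The $Z$ sector follows symmetrically from the invariance of $S_X$.

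For the failure of $\mathrm I_3$ I apply Theorem~\ref{thm:quotient-characterization} with $e=P_i$, which is a symmetric (hence self-adjoint) idempotent. Here $S_X\cap\Imag(P_i)^n$ consists of the stacks supported on block $i$, which is a copy of $C_X$, and likewise for $S_Z$. So the branch logical count is
\[
n\cdot n-\dim C_X-\dim C_Z=K_0>0,
\]
and condition (b) of that theorem fails, hence so does (a). Alternatively, Theorem~\ref{thm:radius-dichotomy} applies directly, since the stacked pair is Singleton-optimal with $K>0$: it gives $\delta_{\mathrm{quot}}=d^R_X<\infty$, and then $\mathrm I_3$ fails.

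The main obstacle is bookkeeping rather than substance. One must make sure the theorems stated for labels in $\Fqm^n$ with a trace pairing transfer to the abstract matrix layout $\F_q^{\ell n\times n}$. If $\ell n$ is not an admissible $m$ for a self-dual basis, I will argue directly in the matrix model instead. The proofs of Theorem~\ref{thm:projected-recoverability} and Theorem~\ref{thm:quotient-characterization} use only nondegeneracy and the adjoint identity, and both hold for the standard pairing with adjoint $P\mapsto P^{\top}$.
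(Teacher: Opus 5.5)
Your proposal is correct; for the CSS, dimension, sector-distance, Singleton-equality and invariance claims it is essentially the paper's proof, while for the interfaces the paper argues directly in the standard pairing for every $q$ and you instead transfer the general trace-pairing criteria to the matrix model. In the paper's direct argument, a basis of $S_Z$ supported on single row blocks makes $\sigma_H(P_iE)$ a coordinate projection of $\sigma_H(E)$, and $L_0\in C_Z^{\pstd}\setminus C_X$ placed in block $i$ has zero syndrome while $P_iE=E\notin S_X$; it invokes criterion~(d) and Theorem~\ref{thm:quotient-characterization} only where a self-dual basis of $\F_{q^{\ell n}}/\F_q$ exists. This spares it the transfer of those proofs that you rely on, though your transfer is valid, since the standard pairing is nondegenerate on each block, distinct blocks are orthogonal, and $P_i^{\top}=P_i$. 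One small citation slip: $\mathrm I_5(t)$ follows from $\mathrm I_1$ simply by restricting the recovery map to the rank-$t$ ball, not from Lemma~\ref{lem:interface-hierarchy}, which only supplies $\mathrm I_3\Rightarrow\mathrm I_1$ and $\mathrm I_6\Rightarrow\mathrm I_5$.
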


\begin{proof}
The standard pairing splits over row blocks, so $S_Z\subseteq S_X^{\pstd}$ and $S_Z^{\pstd}$ is the
blockwise direct sum of copies of $C_Z^{\pstd}$; hence $\dim S_X=\ell\dim C_X$,
$\dim S_Z=\ell\dim C_Z$ and $K=\ell n^2-\ell(\dim C_X+\dim C_Z)=\ell K_0$.

For the distances, the argument of Theorem~\ref{thm:invariant-splitting} applies verbatim: by
\eqref{eq:row-submatrix-bound} the rank of a block matrix is at least the rank of each block, a
logical class is nontrivial exactly when some block component is, and a minimum-rank nontrivial
representative of one block, extended by zero, attains the minimum. Hence the sector distances of
$(S_X,S_Z)$ are $d^{R}_X$ and $d^{R}_Z$, and
\[
\begin{aligned}
&\ell n\cdot n-\max(\ell n,n)\bigl(d^{R}_X+d^{R}_Z-2\bigr)\\
&\qquad=\ell n^2-\ell n\bigl(d^{R}_X+d^{R}_Z-2\bigr)=\ell K_0=K ,
\end{aligned}
\]
so equality holds. The invariance under $P_i$ is an immediate consequence of the block structure, and the interface statements follow in the standard pairing for every $q$. Let $H$ be a basis of $S_Z$ whose elements are
each supported on a single row block, so that $\sigma_H(E)=\bigl(s_1(E),\dots,s_{\ell}(E)\bigr)$ with
$s_j$ depending only on the $j$-th block of $E$; since $P_iE$ has $i$-th block equal to that of $E$ and all
other blocks zero, $\sigma_H(P_iE)=(0,\dots,s_i(E),\dots,0)$ is an $\F_q$-linear function of
$\sigma_H(E)$ on the whole label space, which establishes condition $\mathrm{I}_1$, and consequently $\mathrm{I}_2$ and $\mathrm{I}_5(t)$ for every $t$. For $\mathrm{I}_3$, choose $L_0\in C_Z^{\pstd}\setminus C_X$,
which exists because $K_0>0$, and let $E$ carry $L_0$ in block $i$ and zero elsewhere; then
$\sigma_H(E)=0=\sigma_H(0)$ while $P_iE=E\notin S_X$, so no function of the syndrome returns
$P_iE+S_X$. Where a self-dual basis exists, for $q$ even and, for $q$ odd, whenever $\ell n$ is odd
\cite{seroussi1980}, realizing the labels in $\F_{q^{\ell n}}^{\,n}$ through it makes the trace
Gram matrix the identity, so $P_i$ is trace-self-adjoint and the same conclusions follow directly from
criterion~(d) of Theorem~\ref{thm:projected-recoverability}, criterion~(b) of
Theorem~\ref{thm:rank-restricted}, and Theorem~\ref{thm:quotient-characterization} with
$K_{\Imag(P_i)}=K_0>0$.
\end{proof}

Realized in $\F_{q^{\ell n}}^{\,n}$ as at the end of the proof, the stacked pair has, by
Theorem~\ref{thm:radius-dichotomy},
$\delta_{\mathrm{quot}}(P_i)=d^{R}_X$ and $\delta_{\mathrm{syn}}(P_i)=+\infty$ at every block
projector, whatever the base pair; so it separates $\mathrm{I}_5$ from $\mathrm{I}_6$: for two
copies of $\mathrm{QGab}(\boldsymbol\alpha,1,1)$ over $\F_8$ on the $6\times3$ layout, the instance
quoted in Remark~\ref{rem:dichotomy}, these values are $2$ and $+\infty$, while two copies of
$\mathrm{QGab}(\boldsymbol\alpha,2,2)$ over $\F_{32}$ give the $10\times5$ layout with $K=10$,
$d^{R}_X=d^{R}_Z=3$ and $\delta_{\mathrm{quot}}(P_i)=3$.

\bibliographystyle{IEEEtran}
\bibliography{references}

\end{document}